\definecolor{shade}{HTML}{F6F6FF}
\newcommand\shademath[1]
\newlength{\mylength}
\newenvironment{frameqn}%
{\setlength{\fboxsep}{4pt}
\setlength{\mylength}{\linewidth}%
\addtolength{\mylength}{-2\fboxsep}%
\addtolength{\mylength}{-2\fboxrule}%
\Sbox
\minipage{\mylength}%
$$}%
{$$\endminipage\endSbox
{
\[\fbox{\TheSbox}\]
}}
\newenvironment{frametxt}%
{
\setlength{\fboxsep}{4pt}
\setlength{\mylength}{\linewidth}%
\addtolength{\mylength}{-2\fboxsep}%
\addtolength{\mylength}{-2\fboxrule}%
\Sbox
\minipage{\mylength}%
}%
{\endminipage\endSbox
{
\[\fbox{\TheSbox}\]
}}
\newcommand\fa{\f{fa}}
\newcommand\bigact[0]{\raisebox{0.3ex}{\scalebox{.5}{$\hspace{1pt}\bullet$}}}
\newcommand\nontriv{\f{nontriv}}
\newcommand\dom{\f{dom}}
\newcommand\img{\f{img}}
\newcommand\nopi{{\scalebox{.6}{\sout{$\pi$}}}}
\newcommand\nopicent{\mathrel{\cent^{\hspace{-.95ex}\raisebox{.5pt}{\nopi}}}} 
\newcommand\nopiment{\mathrel{\ment^{\hspace{-.95ex}\raisebox{1pt}{\nopi}}}} 
\newcommand\holcent{\mathrel{\cent^{\hspace{-.95ex}\raisebox{.5pt}{\scalebox{.55}{$\lambda$}}}}}
\newcommand\holment{\mathrel{\ment^{\hspace{-.95ex}\raisebox{1.5pt}{\scalebox{.55}{$\lambda$}}}}}
\newcommand\ns[1]{\mathsf{#1}^{\scalebox{.5}{$\circlearrowright$}}}
\newcommand\rs[1]{\mathsf{#1}^{\scalebox{.5}{$\rightrightarrows$}}}
\newcommand\ren{\f{ren}}
\newcommand\Ren[1]{\ren(#1)}
\newcommand\supp{\f{supp}}
\def\equiv{=}
\newcommand\raws{{{s}}}
\newcommand\rawr{{{r}}}
\newcommand\rawt{{{t}}}
\newcommand\rawu{{{u}}}
\newcommand\rawphi{{\phi}}
\newcommand\rawpsi{{\psi}}
\newcommand{\hol}[2]{{\lfloor} #2 {\rfloor}^{\hspace{-.1ex}\scalebox{.4}{$#1$}}}
\newcommand{\denot}[3]{\llbracket #3 \rrbracket_{\scalebox{.6}{$#2$}}^{\hspace{-.1ex}\scalebox{.4}{$#1$}}}
\newcommand\GammapX{D'_{\hspace{-.15ex}\scalebox{.6}{$X$}}}
\newcommand\GammaX{D_{\hspace{-.15ex}\scalebox{.6}{$X$}}}
\newcommand\GammaY{D_{\hspace{-.15ex}\scalebox{.6}{$Y$}}}
\newcommand\smtf[1]{{\scalebox{.45}{$\tf{#1}$}}}
\newcommand\hiden{{\scalebox{.4}{$\mathcal H$}}}
\newcommand\iden{{\scalebox{.4}{$\mathcal I$}}}
\newcommand\hden{{\scalebox{.4}{$\mathcal H$}}}
\newcommand\pmss{\f{pmss}}
\newcommand\sort{\f{sort}}
\newcommand\type{\f{type}}
\newcommand\somerel{\mathrel{\mathcal R}}
\newcommand\theory[1]{\ensuremath{\mathsf{#1}}}
\newcommand\Chi{\raisebox{.15em}{\large$\chi$}}    
\newbox\tempa
\newbox\tempb
\newdimen\tempc
\def\mud#1{\hfil $\displaystyle{\mathstrut #1}$\hfil}
\def\rig#1{\hfil $\displaystyle{#1}$}
\def\irulehelp#1#2#3{\setbox\tempa=\hbox{$\displaystyle{\mathstrut #2}$}%
		        \setbox\tempb=\vbox{\halign{##\cr
	\mud{#1}\cr
	\noalign{\vskip\the\lineskip}%
	\noalign{\hrule height 0pt}%
	\rig{\vbox to 0pt{\vss\hbox to 0pt{${\; #3}$\hss}\vss}}\cr
	\noalign{\hrule}%
	\noalign{\vskip\the\lineskip}%
	\mud{\copy\tempa}\cr}}%
		      \tempc=\wd\tempb
		      \advance\tempc by \wd\tempa
		      \divide\tempc by 2 }
\def\irule#1#2#3{{\irulehelp{#1}{#2}{#3}%
		     \hbox to \wd\tempa{\hss \box\tempb \hss}}}
\newcommand\basesort{\tau}
\newcommand\basetype{\mu}
\newcommand{\model}[1]{\denot{\mathcal I}{}{#1}}
\newcommand{\holmodel}[1]{\denot{\mathcal H}{}{#1}}
\newcommand\deffont[1]{{\bf #1}}
\newcommand\tf[1]{{\mathsf{#1}}}
\newcommand\f[1]{\mathit{#1}}
\newcommand{\inter}[1]{\llbracket #1 \rrbracket}
\newcommand\act{{\cdot}}
\newcommand\liff{\mathrel{\Leftrightarrow}}
\newcommand\limp{\Rightarrow}
\newcommand\Forall[1]{\forall #1.}
\newcommand\lam[1]{\lambda #1.}
\newcommand\aeq{\mathrel{=_{\alpha}}}
\newcommand\abeq{\mathrel{=_{\alpha\beta}}}
\newcommand\id{\f{id}}
\newcommand\Id{\f{Id}}
\newcommand\cent{\vdash}
\newcommand\ment{\vDash}
\newcommand\sm{{\mapsto}}
\newcommand\ssm{{{:}{:}{=}}}
\newcommand\mone{{\text{-}1}}
\newcommand\rulefont[1]{\ensuremath{\bf (#1)}\xspace}
\newcommand\atomsup{\mathbb A^{\hspace{-.25ex}\scalebox{.6}{$>$}}}
\newcommand\atomsdown{\mathbb A^{\hspace{-.25ex}\scalebox{.6}{$<$}}}
\newcommand\atoms{{\mathbb A}}
\newtheoremstyle{jamiestyle}
  {4pt}
  {0pt}
  {\it}
  {0pt}
  {\bf}
  {.}
  { }
  {}
\theoremstyle{jamiestyle}
\newtheorem{thrm}{Theorem}[section]
\newtheorem{prop}[thrm]{Proposition}
\newtheorem{lemm}[thrm]{Lemma}
\newtheorem{corr}[thrm]{Corollary}
\newtheoremstyle{jamienfstyle}
  {4pt}
  {0pt}
  {\normalfont}
  {0pt}
  {\bf}
  {.}
  { }
  {}
\theoremstyle{jamienfstyle}
\newtheorem{nttn}[thrm]{Notation}
\newtheorem{defn}[thrm]{Definition}
\newtheorem{xmpl}[thrm]{Example}
\newtheorem{rmrk}[thrm]{Remark}
\newcolumntype{L}[1]{>{$}p{#1}<{$}}
\newcolumntype{C}[1]{>{\centering$}p{#1}<{$}}
\newcolumntype{R}[1]{>{\raggedleft$}p{#1}<{$}}
\newcommand\maketab[2]
\newenvironment{#1}
      {\begin{quote}\noindent\begin{tabular}{#2}}
      {\end{tabular}\end{quote}}
    \newenvironment{#1noquote}{\noindent\begin{tabular}{#2}}{\end{tabular}}
\begin{document}

\title{From nominal sets binding to functions and $\lambda$-abstraction: connecting the logic of permutation models with the logic of functions}
\author{Gilles Dowek}
\address{\href{http://www-roc.inria.fr/who/Gilles.Dowek/}{www-roc.inria.fr/who/Gilles.Dowek}}
\author{Murdoch J. Gabbay}
\address{\href{http://www.gabbay.org.uk}{gabbay.org.uk}}
\begin{abstract}
Permissive-Nominal Logic (PNL) extends first-order predicate logic with term-formers that can bind names in their arguments.
It takes a semantics in (permissive-)nominal sets.
In PNL, the $\forall$-quantifier or $\lambda$-binder are just term-formers satisfying axioms, and their denotation is functions on nominal atoms-abstraction.

Then we have higher-order logic (HOL) and its models in ordinary (i.e. Zermelo-Fraenkel) sets; the denotation of $\forall$ or $\lambda$ is functions on full or partial function spaces.

This raises the following question: how are these two models of binding connected?
What translation is possible between PNL and HOL, and between nominal sets and functions?

We exhibit a translation of PNL into HOL, and from models of PNL to certain models of HOL.
It is natural, but also partial: we translate a restricted subsystem of full PNL to HOL. 
The extra part which does not translate is the symmetry properties of nominal sets with respect to permutations.
To use a little nominal jargon: we can translate names and binding, but not their nominal equivariance properties. 
This seems reasonable since HOL---and ordinary sets---are not equivariant.

Thus viewed through this translation, PNL and HOL and their models do different things, but they enjoy non-trivial and rich subsystems which are isomorphic.
\end{abstract}

\begin{keyword}
Permissive-nominal logic, higher-order logic, nominal sets, nominal renaming sets, mathematical foundations of programming.
\\
\emph{MSC-class:} 03B70 (primary), 68Q55 (secondary)
\\
\emph{ACM-class:} F.3.0; F.3.2
\end{keyword}

\maketitle

\newpage

\tableofcontents

\section{Introduction}

Permissive-Nominal Logic (PNL) extends first-order predicate logic with term-formers that can bind names in their arguments.
For instance, arithmetic, set theory, and functions axiomatise naturally in PNL; their binders are modelled as ordinary PNL term-formers and their axioms look very much like the axioms normally written in informal practice. 
PNL is sound and complete for a first-order style semantics in (permissive-nominal) sets \cite{gabbay:pernl-jv,gabbay:nomtnl}.
This captures the essence of nominal techniques, whose initial motivation has been to handle names and binding in a first-order framework. 

Higher-order logic (HOL) also has binding \cite{miller:logho,farmer:sevvst}. 
This has been used to encode other binders, e.g. the Church encoding of quantifiers as constants of higher type such as $\forall:(\iota{\to} o){\to} o$ \cite{andrews:intmlt,church:forstt}; higher-order abstract syntax (HOAS) encoding term-formers of an encoded syntax with binders as constants of higher type such as $\forall:(\iota{\to} \rho){\to}\rho$ or $\forall:(\nu{\to}\rho){\to}\rho$ (strong vs. weak HOAS)\footnote{A word of clarification here: we take $o$ to be a type of truth-values, $\iota$ to be a type of terms, and $\rho$ to be a type of predicates.  $\forall$-the-quantifier generates truth-values, whence the type headed by $o$, namely $\forall:(\iota{\to} o){\to} o$.  $\forall$-the-syntax-building-constant in HOAS generats \emph{terms}, whence the types headed by $\rho$, namely $\forall:(\iota{\to} \rho){\to}\rho$ or $\forall:(\nu{\to}\rho){\to}\rho$.  Do not confuse a HOL constant for a HOAS-style binder (a way to give meaning to building syntax with binding) with a HOL constant for the corresponding quantifier (a way to give meaning to what that that syntax is intended to denote; namely, actual quantification).} 
 \cite{despeyroux94higherorder,pfenning:hoas}; and higher-order rewrite systems \cite{mayr:horwc}.

This paper is not about how PNL and HOL can be used as meta-mathematical reasoning frameworks, or about what models look like expressed as nominal sets or as functions.
The deeper point is that we have before us two foundations for mathematics. 
The question we address is then as follows: 
\emph{There is a `nominal' model of names and binding which can be applied in various ways, and also a functional model which can also be applied in various ways.
These are captured by two logics---PNL and HOL---and by their nominal and functional denotations respectively.
We observe that these are clearly different, yet their applications just as clearly overlap.
So, what positive and mathematically precise statements we now make about their relationship?
}

Since PNL is first-order and has a sound and complete semantics (so expressivity and models are fairly `small'), whereas HOL is higher-order (so expressivity and models are fairly `large'), the natural direction for a translation is from nominal sets and PNL, to functions and HOL.\footnote{In other words, we want a \emph{shallow embedding} of PNL into HOL.  A \emph{deep embedding} e.g. of HOL in PNL is an answer to a different question; for more on this direction, see \cite{gabbay:unialt}.} 

This raises the question of how PNL translates to HOL, and how PNL models translate to functional models.

In this paper we translate a subsystem of PNL into HOL and prove it sound and complete using arguments on nominal sets and and nominal renaming sets models \cite{gabbay:nomrs}. 
The proof of completeness involves giving a functional semantics to nominal terms, and a nominal semantics to $\lambda$-terms in the spirit of Henkin models \cite{andrews:intmlt,benzmuller:higose}. 
This involves a construction on nominal sets models corresponding to a free extension to \emph{nominal renaming sets}, as previously considered by the second author with Hofmann \cite{gabbay:nomrs}.

The partiality of the translation seems to be inherent and reflects natural differences in structure between nominal and `ordinary' sets.
That is, it is not the case that nominal techniques are `just' a concise presentation of HOL with a weakened $\beta$-equivalence (e.g. higher-order patterns \cite{miller:logpll}).
There is that, but there is also more. 
Thus, the nominal and functional models of names and binding are distinct, but they have non-trivial and rich subsystems which are isomorphic in a sense made precise in this paper.

\subsection{Some background on PNL}

We study PNL for its own sake in this paper, but the interested reader can find example nominal theories in the literature. 
PNL is designed as a first-order logic for denotations with binding.
The reader can find sound and complete nominal algebra theories for substitution, $\beta$-equivalence, and first-order logic \cite{gabbay:capasn,gabbay:nomalc,gabbay:oneaah} (nominal algebra can be viewed as the equality fragment of PNL).
Not all PNL theories are expressed in the equality fragment.
For instance, in the paper which introduced PNL \cite{gabbay:pernl} we included theories of first-order logic and arithmetic which put universal quantification to the left of an implication.
This cannot be done in nominal algebra because it is a purely equational logic.

To give some idea of what this family of logics looks like in practice, assume a name-sort $\nu$ and a base sort $\iota$ and term-formers $\tf{lam}:([\nu]\iota)\iota$,\ \ $\tf{app}:(\iota,\iota)\iota$,\  and $\tf{var}:(\nu)\iota$.
(Full definitions are in the body of the paper.)
We sugar $\tf{lam}([a]r)$ to $\lam{a}r$ and $\tf{app}(r',r)$ to $r'r$ and $\tf{var}(a)$ to $a$.
Atoms in PNL are a form of data and populate their own sort $\nu$; so $\tf{var}$ serves to map them into the sort $\iota$, where they represent object-level variables.

Here is $\eta$-equivalence, written out as it would be informally:
$$
\lam{x}(tx)=t\text{\ \ \ if $x$ is not free in $t$}
$$
Here is a PNL axiom for $\eta$-equivalence, written out formally:
$$
\Forall{Z}(\lam{a}(Za)=Z)\quad (a\not\in\pmss(Z))
$$
(See \cite{gabbay:nomalc} for a detailed study of this axiom in a nominal context.)

$a$ is an \emph{atom} and corresponds to the \emph{object-level variable} $x$; $a$ is not a PNL variable but it \emph{represents} a variable of the object level system being axiomatised. 
$Z$ is an \emph{unknown} and correspond to the \emph{meta-level variable} $t$; $Z$ is a variable in PNL and may be instantiated. 

The reader can see how similar the two axioms look. 
Their status is different in the following sense: whereas $t$ is typically taken to range over terms, $Z$ ranges over elements of nominal sets (via a valuation; see Definition~\ref{defn.valuation}).
This is possible because nominal sets have a notion of \emph{supporting set of atoms} which mirrors the free variables of a term.

The condition $a\not\in\pmss(Z)$ is a \emph{typing condition} in PNL.
The types, or \emph{permission sets} as we call them, restrict the support of denotations associated to $Z$ by a valuation.
They correspond to freshness side-conditions in nominal terms from \cite{gabbay:nomu-jv} and to informal freshness conditions of the form `$x$ not free in $t$' in informal practice.
To see this intuition made formal see a translation from nominal terms to permissive-nominal terms in \cite{gabbay:perntu-jv}.

There is no requirement to axiomatise $\alpha$-equivalence because this is done automatically by the PNL system.

Sugar $(\lam{a}r)r'$ to $r[a\sm r']$.
Then axioms for $\beta$-equivalence are:
$$
\begin{array}{l@{\ }l@{\ =\ }l@{\ }l}
\Forall{Y}&a[a\sm Y]&Y
\\
\Forall{Z,X}&Z[a\sm X]&Z &(a\not\in\pmss(Z))
\\
\Forall{X',X,Y}&(X'X)[a\sm Y]&(X'[a\sm Y])(X[a\sm Y])
\\
\Forall{X,Z}&(\lam{a}X)[b\sm Z]&\lam{a}(X[b\sm Z]) &(a\not\in\pmss(Z))
\\
\Forall{X}&X[a\sm a]&X 
\end{array}
$$

Thus, the design philosophy of PNL is that axioms should look like what we would write informally anyway, where variables map to atoms, meta-variables to unknowns, binding to atoms-abstraction, and capture-avoidance conditions to choice of permission sets.

Note that in the axioms above, $a$ and $b$ cannot be equal because they are distinct atoms, and atoms are data, not variables ($a$ is $a$, and $b$ is $b$, and they are distinct).
More on this and on the use of permutations in the body of the paper.\footnote{The axioms above also have typing constraints, because unknowns are typed with their permission set.  These typing constraints turn out not to be so restrictive, for quite subtle reasons.  The interested reader can find a discussion in \cite[Subsection~2.7]{gabbay:pernl-jv}.  For the purposes of the discussion here, it is not important.}  

Equality reasoning is not necessary to $\alpha$-rename atoms in PNL; we can quotient by $\alpha$-equivalence so that we can rename $\Forall{a}\tf P(a)$ to $\Forall{b}\tf P(b)$ without proving a logical equivalence.
This is unlike other `nominal' reasoning systems, such as Fraenkel-Mostowski set theory as used by the author with Pitts to introduce nominal techniques in \cite{gabbay:newaas-jv}, 
nominal rewriting by Fern\'andez and the second author \cite{gabbay:nomr-jv}, nominal algebra by the second author with Mathijssen \cite{gabbay:noma-nwpt,gabbay:forcie,gabbay:nomuae}, 
$\alpha$Prolog by Cheney and Urban \cite{cheney:nomlp}, and other systems in the same spirit.

\subsection{Map of the paper}

This paper has a lot of technical ground to cover.
This is unavoidable, because we need to deal with two logics (restricted PNL and HOL) and two semantics (nominal sets, and the hand-crafted Henkin models in nominal renaming sets used in the completeness proof), as well as two translations (from logic to logic, and from models to models).

For the reader's convenience, we provide an overview of the main technical points with brief justifications for their design:
\begin{itemize*}
\item
Section~\ref{sect.pnl} introduces permissive-nominal logic.
This comes from previous work into `nominal' axiomatisations of systems with binding \cite{gabbay:pernl,gabbay:pernl-jv}.\footnote{Note that PNL is not only about nominal abstract syntax as considered in e.g. \cite{gabbay:newaas-jv,gabbay:fountl}.  Nominal abstract syntax is a denotation for syntax with binding.  PNL and its models are a (more general) syntax and semantics for denotations with binding in general, which are not all necessarily datatypes of abstract syntax.}

In fact, we need to introduce two logics: full PNL and also a \emph{restricted} version which has a weaker non-equivariant axiom rule.
We write the entailment relations $\cent$ and $\nopicent$ respectively.
It is the restricted version that we will eventually translate to HOL.
\item
Section~\ref{sect.hol} introduces higher-order logic as a theory over the syntax of the simply-typed $\lambda$-calculus.
We write the entailment relation $\holcent$.
\item
Section~\ref{sect.translation.sound} defines the translation from restricted PNL to HOL, and proves it sound using arguments on syntax.
In order to do the translation, we need to introduce a \emph{capture typing} ${D\cent r:A}$ which is a measure of how many functional abstractions are required to translate a given nominal term without losing information; that is, of the functional complexity of a nominal term. 
\item
Our goal is then to prove completeness of the translation.
We do this by transforming models of PNL into models of HOL.
So Section~\ref{sect.semantics} introduces two categories: \theory{PmsPrm} of permissive-nominal sets and \theory{PmsRen} of permissive-nominal renaming sets.
We also give a \emph{free} construction, transforming a permissive-nominal set into a permissive-nominal renaming set.
\item
In Section~\ref{sect.permissive-nominal.sets} we interpret full and restricted PNL in \theory{PmsPrm}.
In Section~\ref{sect.interp.hol} we interpret HOL in \theory{PmsRen}.
\item
Finally, in Section~\ref{sect.pnl.hol.complete} we use the free construction of Section~\ref{sect.semantics} to map a model of PNL in \theory{PmsPrm} to a model in \theory{PmsRen}, and because the free construction does not `make anything equal' this is sufficient to prove completeness.
\item
As one further mathematical note, the results in the literature concern full PNL and not restricted PNL.
So in Appendix~\ref{sect.completeness} we sketch proofs of soundness, cut-elimination, and completeness of restricted PNL with respect to non-equivariant models in \theory{PmsPrm}.
These are modest, if not entirely direct, modifications of the existing definitions and proofs for full PNL and equivariant models in \theory{PmsPrm}.
\end{itemize*}

Quite a number of new ideas are required to make this all work.
The highlights are: permissive-nominal renaming sets and their application to give non-standard `nominal' Henkin models for higher-order logic; restricted PNL and its semantics; the free construction; and the technical arguments as discussed in Section~\ref{sect.pnl.hol.complete}.

\ \\

Given that the proofs and constructions in this paper are non-trivial and involve an effort to extend existing machinery, we should pause to ask again why doing this is justified, even necessary.

Nominal techniques were designed originally to reason on syntax-with-binding (see the original journal paper \cite{gabbay:newaas-jv} or a recent survey paper \cite{gabbay:fountl}).
But since then this remit has expanded to reasoning about denotations with binding more generally (an overview of which is in \cite{gabbay:nomtnl}).
In doing this, we have created a whole new syntax and semantics for meta-mathematics.

We will not argue for or against either the nominal foundation or the higher-order foundation for mathematics.\footnote{There has been more than enough of that already, and anyway, because truth is free, proving theorems is never a zero sum game.}
Our question is: given that these two foundations exist, how do they relate?

In fact, questions have been asked about how nominal names and binding are related to functions, ever since nominal techniques were conceived in the second author's thesis.
Since then, the development of PNL \cite{gabbay:pernl-jv} and nominal renaming sets \cite{gabbay:nomrs} has given us two powerful new tools with which to address these questions: a proof-theory for a logic in which nominal reasoning so far can be formalised, and a visibly nominal semantics which is not based on permutations but on possibly non-bijective renamings on atoms, so that atoms-abstraction can be considered as a function in that semantics. 

In this paper, we leverage this to give a precise, concrete, and mathematically detailed account of how these two worlds really stand in relation to one another---and how they differ.
In conclusion we speculate that there is some potential (not explored in this paper) that our translations might be used to piggyback nominal techniques on the substantial implementational efforts that have gone into developing HOL over the past seventy years.


\section{Permissive-Nominal Logic}
\label{sect.pnl}

Permissive-nominal logic is a first-order logic for nominal terms quotiented by $\alpha$-equivalence.
Doing this is not entirely trivial; the interested reader can find more on this elsewhere \cite{gabbay:nomu-jv,gabbay:pernl,gabbay:pernl-jv,gabbay:nomtnl}.

\subsection{Syntax}

\begin{defn}
\label{defn.sort.sig}
A \deffont{sort-signature} is a pair $(\mathcal A,\mathcal B)$ of \deffont{name} and \deffont{base sorts}.
$\nu$ will range over name sorts; $\basesort$ will range over base sorts.
A \deffont{sort language} is then defined by
\begin{frameqn}
\alpha ::= \nu \mid (\alpha,\dots,\alpha) \mid [\nu]\alpha \mid \basesort 
.
\end{frameqn}
\end{defn}

\begin{rmrk}
Examples of base sorts are: `$\lambda$-terms',\ `formulae',\ `$\pi$-calculus processes',\ and `program environments', `functions', `truth-values', `behaviours',\ and `valuations'.

Examples of name sorts are `variable symbols',\ `channel names',\ or `memory locations'.

$[\nu]\alpha$ is an \emph{abstraction sort}.
This does a similar job to function-types in higher-order logic but note that $\nu$ must always be a name-sort.
The behaviour of a term of sort $[\nu]\alpha$ corresponds to `bind a name of sort $\nu$ in a term of sort $\alpha$'.
Such a term does not denote a function, though later on in our completeness proof we will deliberately undermine that intuition to obtain our completeness result. 
\end{rmrk}

\begin{defn}
\label{defn.term.signature}
A \deffont{term-signature} over a sort-signature $(\mathcal A,\mathcal B)$ is a tuple $(\mathcal F,\mathcal P,\f{ar},\mathcal X)$ where:
\begin{itemize*}
\item
$\mathcal F$ and $\mathcal P$ are disjoint sets of \deffont{term-} 
and \deffont{proposition-formers}.

$\tf f$ will range over term-formers.
$\tf P$ will range over proposition-formers.
\item 
$\f{ar}$ assigns to each ${\tf f\in\mathcal F}$ a
\deffont{term-former arity} $(\alpha)\tau$
and to each $\tf P\in\mathcal P$ a \deffont{proposition-former arity}
$\alpha$, where $\alpha$ and $\tau$ are in the sort-language
determined by $(\mathcal A,\mathcal B)$.

We will write $((\alpha_1,\ldots,\alpha_n))\tau$ just as $(\alpha_1,\ldots,\alpha_n)\tau$.
\item
$\mathcal X$ is a set of \deffont{unknowns} $X$, each of which has a sort $\sort(X)$ and a permission set $\pmss(X)$, such that for each sort $\alpha$ and permission set $S$ the set $\{X\in\mathcal X\mid \sort(X)=\alpha,\ \pmss(X)=S\}$ is countably infinite.
$X,Y,Z$ will range over distinct unknowns.
\end{itemize*}
\label{defn.signature}
A \deffont{signature} $\mathcal S$ is then a tuple $(\mathcal A,\mathcal B,\mathcal F,\mathcal P,\f{ar},\mathcal X)$.
\end{defn}
We write $\tf f:(\alpha)\tau$ for $\f{ar}(\tf f)=(\alpha)\tau$ and similarly we write $\tf P:\alpha$ for $\f{ar}(\tf P)=\alpha$. 

\begin{xmpl}
\label{xmpl.lam.sig}
The signature for the $\lambda$-calculus from the Introduction has a name-sort for $\lambda$-calculus object-level variables, a base sort for $\lambda$-terms, and appropriate term-formers: 
\begin{itemize*}
\item
$\tf{var}:(\nu)\iota$ to form $\lambda$-calculus variables in $\iota$ out of names in $\nu$, 
\item
$\tf{app}$ for application, and 
\item
$\tf{lam}$ taking an abstraction in $[\nu]\iota$ and forming from it a $\lambda$-abstraction term in $\iota$. 
\end{itemize*}
\end{xmpl}

\begin{defn}
\label{defn.atoms}
For each $\nu$ fix a disjoint countably infinite set of \deffont{atoms} $\atoms_\nu$, and an arbitrary bijection $f_\nu$ between $\atoms_\nu$ and the integers $\mathbb Z=\{0,\text{-}1,1,\text{-}2,2,\ldots\}$.
Write 
$$
\atomsdown_\nu=\{f_\nu(i)\mid i<0\}
\qquad
\atomsup_\nu=\{f_\nu(i)\mid i\geq 0\}.
$$
Finally, write 
$$
\atomsdown=\bigcup\atomsdown_\nu
\qquad
\atomsup=\bigcup\atomsup_\nu
\qquad
\mathbb A=\bigcup \mathbb A_\nu
$$ 
$a,b,c,\ldots$ will range over \emph{distinct} atoms (we call this the \deffont{permutative} convention).

A \deffont{permission set} has the form $(\atomsdown \cup A)\setminus B$ where $A\subseteq\atomsup$ and $B\subseteq\atomsdown$ are finite (and a permission set may be finitely represented by the pair $(A,B)$).
$S$, $T$, and $U$ will range over permissions sets. 
\end{defn}
The use of $\atomsdown$ and $\atomsup$ ensures that permission sets are infinite and also co-infinite (their complement is also infinite).

\begin{frametxt}
\begin{defn}
\label{defn.permutation}
A \deffont{permutation} is a bijection $\pi$ on $\mathbb A$ such that $a\in\mathbb A_\nu\liff \pi(a)\in\mathbb A_\nu$ and $\f{nontriv}(\pi)=\{a\mid \pi(a)\neq a\}$ is finite.
Write $\mathbb P$ for the set of permutations.

Given $a,b\in\mathbb A_\nu$ let a \deffont{swapping} $(a\ b)$ be the bijection on atoms that maps $a$ to $b$, $b$ to $a$, and all other $c$ to themselves.
\end{defn}
\end{frametxt}

\begin{nttn}
\label{nttn.permutations}
We use the following notation:
\begin{itemize*}
\item
Write $\pi\circ\pi'$ for \deffont{functional composition}, so $(\pi\circ\pi')(a)=\pi(\pi'(a))$).
\item
Write $\id$ for the \deffont{identity permutation}, so $\id(a)=a$ always. 
\item
Write $\pi^\mone$ for \deffont{inverse}, so $\pi\circ\pi^\mone=\id$.
\end{itemize*}
\end{nttn}

\begin{defn}
For each signature $\mathcal S$, define \deffont{terms} and \deffont{propositions} over $\mathcal S$ by: 
\begin{frameqn}
\begin{array}{c@{\qquad}c@{\qquad}c}
\begin{prooftree}
(a\in\mathbb A_\nu)
\justifies
a:\nu
\end{prooftree}
&
\begin{prooftree}
\rawr_1:\alpha_1 \ \ldots\ \rawr_n:\alpha_n
\justifies
(\rawr_1,\ldots,\rawr_n):(\alpha_1,\ldots,\alpha_n)
\end{prooftree}
&
\begin{prooftree}
\rawr:\alpha\quad (\f{ar}(\tf f)=(\alpha)\tau)
\justifies
\tf f(\rawr):\tau
\end{prooftree}
\\[4ex]
\begin{prooftree}
\rawr:\alpha\quad (a\in\mathbb A_\nu)
\justifies
[a]\rawr:[\nu]\alpha
\end{prooftree}
&
\begin{prooftree}
(\sort(X)=\alpha)
\justifies
\pi\act X:\alpha
\end{prooftree}
\\[4ex]
\begin{prooftree}
\phantom{h}
\justifies
\bot\text{ prop.}
\end{prooftree}
&
\begin{prooftree}
\rawphi\text{ prop.}\ \ \rawpsi\text{ prop.}
\justifies
\rawphi\limp\rawpsi\text{ prop.}
\end{prooftree}
&
\begin{prooftree}
\rawr:\alpha\ \ (\f{ar}(\tf P)=\alpha)
\justifies
\tf P(\rawr)\text{ prop.}
\end{prooftree}
\\[4ex]
\begin{prooftree}
\rawphi\text{ prop.}
\justifies
\Forall{X}\rawphi\text{ prop.}
\end{prooftree}
\end{array}
\end{frameqn}
\end{defn}

\begin{xmpl}
Continuing Example~\ref{xmpl.lam.sig}, we have the following terms and propositions:
\begin{itemize*}
\item
$\tf{var}(a):\iota$ where $a\in\mathbb A_\nu$.
\item
$[a]X:[\nu]\iota$ where $a\in\mathbb A_\nu$ and $\sort(X)=\iota$, and $\tf{lam}([a]X):\iota$. 
\item
$\Forall{X}\tf P(\tf{lam}([a]X),X)$ is a proposition if $\tf P$ is a proposition-former and $\tf P:(\iota,\iota)$.
\end{itemize*}
\end{xmpl}

\subsection{Permutation, substitution, and so on}

These definitions are all needed for the rest of the paper, starting with $\alpha$-equivalence in Subsection~\ref{subsect.aeq}.
We need them at both levels; both for atoms and for unknowns.

\begin{defn}
\label{defn.permutation.action}
Define a (level 1) \deffont{permutation action} on syntax by:
$$
\begin{array}{r@{\ }l@{\qquad}r@{\ }l}
\pi\act a\equiv& \pi(a)
&
\pi\act (\rawr_1,\ldots,\rawr_n) \equiv&  (\pi\act \rawr_1,\ldots,\pi\act \rawr_n)
\\
\pi\act [a]\rawr \equiv&  [\pi(a)]\pi\act \rawr
&
\pi\act(\pi'\act X) \equiv&  (\pi{\circ}\pi')\act X
\\
\pi\act \tf f(\rawr) \equiv&  \tf f(\pi\act \rawr)
\\
\pi\act\bot \equiv&  \bot
&
\pi\act (\rawphi\limp\rawpsi)\equiv&  (\pi\act \rawphi)\limp(\pi\act \rawpsi)
\\
\pi\act \tf P(\rawr)\equiv&  \tf P(\pi\act \rawr)
&
\pi\act (\Forall{X}\rawphi) \equiv&  \Forall{X}\pi\act\rawphi
\end{array}
$$
\end{defn}

\begin{defn}
\label{defn.permutation.action.2}
Let $\Pi$ range over sort- and permission-set-preserving bijections on unknowns 
(so $\sort(\Pi(X)){=}\sort(X)$ and $\pmss(\Pi(X)){=}\pmss(X)$)
such that $\{X\mid \Pi(X)\neq X\}$ is finite.

Write $\Pi\circ\Pi'$ for functional composition,\ $\Id$ for the identity permutation, and $\Pi^\mone$ for inverse, much as in Notation~\ref{nttn.permutations}.

Define a (level 2) \deffont{permutation action} by:
{
$$
\begin{array}{r@{\ }l@{\qquad}r@{\ }l}
\Pi\act a\equiv&  a
&
\Pi\act (\rawr_1,\ldots,\rawr_n) \equiv&  (\Pi\act \rawr_1,\ldots,\Pi\act \rawr_n)
\\
\Pi\act [a]\rawr \equiv&  [a]\Pi\act \rawr
&
\Pi\act(\pi\act X) \equiv&  \pi\act(\Pi(X))
\\
\Pi\act \tf f(\rawr) \equiv&  \tf f(\Pi\act \rawr)
\\
\Pi\act\bot \equiv&  \bot
&
\Pi\act (\rawphi\limp\rawpsi)\equiv&  (\Pi\act \rawphi)\limp(\Pi\act \rawpsi)
\\
\Pi\act \tf P(\rawr)\equiv&  \tf P(\Pi\act \rawr)
&
\Pi\act (\Forall{X}\rawphi) \equiv&  \Forall{\Pi(X)}\Pi\act\rawphi
\end{array}
$$
}
\end{defn}

\begin{defn}
\label{defn.pointwise}
Suppose $A$ is a set of atoms and $\pi$ is a level 1 permutation.
Suppose $U$ is a set of unknowns and $\Pi$ is a level 2 permutation.
Define $\pi\act A$ and $\Pi\act U$ by
$$
\pi\act A = \{\pi(a)\mid a\in A\} \qquad\text{and}\qquad
\Pi\act U = \{\Pi(X)\mid X\in U\}.
$$
This is the standard \deffont{pointwise} permutation action on sets.
\end{defn}

\begin{defn}
\label{defn.fa}
Define \deffont{free atoms} $\fa(\rawr)$ and $\fa(\rawphi)$ by:
$$
\begin{array}{r@{\ }l@{\quad}r@{\ }l@{\quad}r@{\ }l}
\fa(\pi\act X)=& \pi\act\pmss(X) 
&
\fa([a]\rawr)=& \fa(\rawr)\setminus\{a\}
&
\fa(a)=& \{a\}
\\
\fa(\tf f(\rawr)) =&  \fa(\rawr)
&
\fa((\rawr_1,\ldots,\rawr_n)) =& 
\bigcup\fa(\rawr_i)
&&
\\[1.5ex]
\fa(\bot) =& \varnothing
&
\fa(\rawphi\limp\rawpsi)=& \fa(\rawphi)\cup \fa(\rawpsi)
\\
\fa(\tf P(\rawr)) =&  \fa(\rawr)
&
\fa(\Forall{X}\rawphi)=& \fa(\rawphi) 
\end{array}
$$
Define \deffont{free unknowns} $\f{fV}(r)$ and $\f{fV}(\rawphi)$ by:
$$
\begin{array}{r@{\ }l@{\quad}r@{\ }l@{\quad}r@{\ }l}
\f{fV}(a)=& \varnothing
&
\f{fV}(\pi\act X)=& \{X\}
&
\f{fV}(\tf f(\rawr)) =&  \f{fV}(\rawr)
\\
\f{fV}([a]\rawr)=& \f{fV}(\rawr)
&
\f{fV}((\rawr_1,\ldots,\rawr_n)) =&  
\bigcup\f{fV}(\rawr_i)
\\[1.5ex]
\f{fV}(\bot) =& \varnothing
&
\f{fV}(\rawphi\limp\rawpsi)=& \f{fV}(\rawphi)\cup \f{fV}(\rawpsi)
\\
\f{fV}(\tf P(\rawr)) =&  \f{fV}(\rawr)
&
\f{fV}(\Forall{X}\rawphi)=& \f{fV}(\rawphi)\setminus\{X\} 
\end{array}
$$
\end{defn}

\begin{lemm}
\label{lemm.fa.pi.r}
$\fa(\pi\act \rawr)=\pi\act \fa(\rawr)$ and $\fa(\pi\act\rawphi)=\pi\act\fa(\rawphi)$.

Also,
$\f{fV}(\Pi\act \rawr)=\Pi\act \f{fV}(\rawr)$ and $\f{fV}(\Pi\act\rawphi)=\Pi\act\f{fV}(\rawphi)$.
\end{lemm}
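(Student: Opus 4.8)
The plan is to prove all four equations by routine structural inductions following Definition~\ref{defn.fa} and Definitions~\ref{defn.permutation.action} and~\ref{defn.permutation.action.2}. Since terms and propositions are mutually defined I would run the two level-1 statements $\fa(\pi\act\rawr)=\pi\act\fa(\rawr)$ and $\fa(\pi\act\rawphi)=\pi\act\fa(\rawphi)$ as one simultaneous induction, and likewise the two level-2 statements as another.

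For the level-1 claim, the base case $\rawr\equiv a$ is $\fa(\pi\act a)=\{\pi(a)\}=\pi\act\{a\}=\pi\act\fa(a)$. The moderated-unknown case $\rawr\equiv\pi'\act X$ uses $\pi\act(\pi'\act X)\equiv(\pi\circ\pi')\act X$, so $\fa((\pi\circ\pi')\act X)=(\pi\circ\pi')\act\pmss(X)$, and one needs only that the pointwise action of Definition~\ref{defn.pointwise} respects functional composition, i.e. $(\pi\circ\pi')\act S=\pi\act(\pi'\act S)$, which is immediate from the set-builder definition. The abstraction case $\rawr\equiv[a]\rawr'$ uses $\pi\act[a]\rawr'\equiv[\pi(a)]\pi\act\rawr'$, giving $\fa([\pi(a)]\pi\act\rawr')=\fa(\pi\act\rawr')\setminus\{\pi(a)\}$, which by the inductive hypothesis equals $(\pi\act\fa(\rawr'))\setminus\{\pi(a)\}=\pi\act(\fa(\rawr')\setminus\{a\})$ since $\pi$ is a bijection and hence commutes with set difference. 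The remaining cases—$\tf f(\rawr)$, tuples, $\bot$, $\rawphi\limp\rawpsi$, $\tf P(\rawr)$, $\Forall{X}\rawphi$—follow directly from the inductive hypothesis, using that the pointwise action commutes with unions and that both $\fa$ and $\pi\act(\text{-})$ ignore the $\forall$-binder.

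The level-2 claim $\f{fV}(\Pi\act\rawr)=\Pi\act\f{fV}(\rawr)$ and $\f{fV}(\Pi\act\rawphi)=\Pi\act\f{fV}(\rawphi)$ is analogous, via Definition~\ref{defn.permutation.action.2}. The unknown case $\rawr\equiv\pi\act X$ uses $\Pi\act(\pi\act X)\equiv\pi\act(\Pi(X))$, so $\f{fV}(\pi\act\Pi(X))=\{\Pi(X)\}=\Pi\act\{X\}=\Pi\act\f{fV}(\pi\act X)$. The binder case $\rawphi\equiv\Forall{X}\rawpsi$ uses $\Pi\act(\Forall{X}\rawpsi)\equiv\Forall{\Pi(X)}\Pi\act\rawpsi$, so $\f{fV}(\Forall{\Pi(X)}\Pi\act\rawpsi)=\f{fV}(\Pi\act\rawpsi)\setminus\{\Pi(X)\}$, which by the inductive hypothesis is $(\Pi\act\f{fV}(\rawpsi))\setminus\{\Pi(X)\}=\Pi\act(\f{fV}(\rawpsi)\setminus\{X\})$, again because $\Pi$ is a bijection; the abstraction case $[a]\rawr'$ is trivial since $\Pi$ fixes atoms and $\f{fV}$ ignores the abstracted atom. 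All other cases are immediate.

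There is no serious obstacle here: the only two points needing a moment's thought are that the pointwise action respects composition (for the moderated-unknown case of the first statement) and that a bijection commutes with set difference (for the two binding cases), both elementary. The only real care required is bookkeeping—recalling that a bare ``$X$'' occurring as a term abbreviates $\id\act X$, so that the clause $\fa(\pi\act X)=\pi\act\pmss(X)$ of Definition~\ref{defn.fa} applies uniformly and lines up with $\pi\act(\pi'\act X)\equiv(\pi\circ\pi')\act X$.
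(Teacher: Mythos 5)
Your proof is correct and is exactly the ``routine inductions on $\rawr$ (and $\rawphi$)'' that the paper itself gives without further detail; the two points you isolate---that the pointwise action respects composition, and that bijections commute with set difference---are precisely the only non-trivial observations needed. Nothing is missing.
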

\begin{proof}
By routine inductions on $\rawr$.
\end{proof}

\subsection{$\alpha$-equivalence}
\label{subsect.aeq}

The use of permissive-nominal terms allows us to `just quotient' syntax by $\alpha$-equivalence.
We can do this for both level 1 variable symbols (atoms) and level 2 variable symbols (unknowns).

\begin{defn}
Call a relation $\somerel$ on terms and on propositions a \deffont{congruence} when it is closed under the following rules:\footnote{We do not assume a congruence is an equivalence relation.  This is because in a more general context we are interested in rewriting relations, which satisfy the rules below but are not equivalence relations.}
$$
\begin{array}{c@{\qquad}c}
\begin{prooftree}
\rawr_i\somerel \raws_i\quad 1\leq i\leq n
\justifies
(\rawr_1,\ldots,\rawr_n)\somerel (\raws_1,\ldots,\raws_n)
\end{prooftree}
&
\begin{prooftree}
\rawr\somerel \raws\ \ (\tf f:(\alpha)\tau,\ \rawr,\raws:\alpha)
\justifies
\tf f(\rawr)\somerel\tf f(\raws)
\end{prooftree}
\\[3ex]
\begin{prooftree}
\rawr\somerel \raws
\justifies
[a]\rawr\somerel [a]\raws
\end{prooftree}
&
\begin{prooftree}
\rawphi\somerel\rawphi'\quad \rawpsi\somerel\rawpsi'
\justifies
\rawphi\limp\rawpsi\somerel \rawphi'\limp\rawpsi'
\end{prooftree}
\\[3ex]
\begin{prooftree}
\rawr\somerel \raws\quad (\tf P:\alpha,\ \rawr,\raws:\alpha)
\justifies
\tf P(\rawr)\somerel \tf P(\raws)
\end{prooftree}
&
\begin{prooftree}
\rawphi\somerel \rawphi'
\justifies
\Forall{X}\rawphi\somerel \Forall{X}\rawphi'
\end{prooftree}
\end{array}
$$
\end{defn}

\begin{defn}
\label{defn.aeq}
Write $(a\ b)$ for the \deffont{(level 1) swapping} permutation which maps $a$ to $b$ and $b$ to $a$ and all other $c$ to themselves.
Similarly, provided $\sort(X)=\sort(Y)$ and $\pmss(X)=\pmss(Y)$, write $(X\ Y)$ for the \deffont{(level 2) swapping}.

Define \deffont{$\alpha$-equivalence} $\aeq$ on terms and propositions to be the least equivalence relation that is a congruence and is such that:
\begin{frameqn}
\begin{array}{c@{\qquad}c}
\begin{prooftree}
(a,b\not\in\fa(\rawr))
\justifies
(b\ a)\act \rawr \aeq r
\end{prooftree}
&
\begin{prooftree}
(X,Y\not\in\f{fV}(\rawphi))
\justifies
(Y\ X)\act\rawphi\aeq \rawphi
\end{prooftree}
\end{array}
\end{frameqn}
\end{defn}

\begin{xmpl}
We $\alpha$-convert $X$ and $a$ in $\Forall{X}\tf P([a]X)$.

Let $\sort(Y)=\sort(X)$ and $\pmss(Y)=\pmss(X)$.
Suppose $b\not\in\pmss(X)$.
Using $(a\ b)$ and $(X\ Y)$ we deduce:
$$
\begin{array}{r@{\quad}c@{\quad}l}
\Forall{X}\tf P([a]X) &\stackrel{(a\ b)}{\aeq}& \Forall{X}\tf P([b](b\ a)\act X) 
\\
&\stackrel{(X\ Y)}{\aeq}& \Forall{Y}\tf P([b](b\ a)\act Y) . 
\end{array}
$$  
It is routine to convert this sketch into a full derivation-tree.
\end{xmpl}



\begin{frametxt}
\begin{defn}
\label{defn.terms.and.propositions}
For each signature $\mathcal S$, we take terms and propositions quotiented by $\alpha$-equivalence. 
\end{defn}
\end{frametxt}

\subsection{Substitution}

\begin{frametxt}
\begin{defn}
A (level 2) \deffont{substitution} $\theta$ is a function from unknowns to terms such that:
\begin{itemize*}
\item
For all $X$, $\theta(X):\sort(X)$ and $\fa(\theta(X))\subseteq \pmss(X)$.
\item
$\theta(X)\equiv \id\act X$ for all but finitely many $X$.
\end{itemize*}
$\theta$ will range over substitutions.
\end{defn}
\end{frametxt}

\begin{defn}
Define $\f{nontriv}(\theta)$ by:
$$
\f{nontriv}(\theta)\equiv 
\{X\mid \theta(X){\not\equiv} \id\act X \text{ or } X{\in}\f{fV}(\theta(Y))\text{ for some }Y\} 
$$
\end{defn}
$\f{nontriv}(\theta)$ is unknowns that can be produced or consumed by $\theta$, other than in the trivial manner that $\theta(X)\equiv\id\act X$.

\begin{defn}
\label{defn.subst.action}
Define a \deffont{substitution action} by:
\begin{frameqn}
\begin{array}{r@{\ }l@{\qquad}r@{\ }l}
a\theta\equiv&  a
&
(r_1,\ldots,r_n)\theta\equiv&  (r_1\theta,\ldots,r_n\theta)
\\
([a]r)\theta\equiv&  [a](r\theta)
&
(\pi\act X)\theta\equiv&  \pi\act \theta(X)
\\
\tf f(r)\theta\equiv&  \tf f(r\theta)
\\
\bot\theta\equiv& \bot
&
(\phi\limp\psi)\theta\equiv&  (\phi\theta)\limp\psi\theta
\\
(\tf P(r))\theta\equiv&  \tf P(r\theta)
&
(\Forall{X}\phi)\theta \equiv&  \Forall{X}(\phi\theta) \quad (X\not\in\f{nontriv}(\theta))
\end{array}
\end{frameqn}
\end{defn}

\begin{rmrk}
Level 2 substitution $r\theta$ is capturing for level 1 abstraction $[a]\text{-}$.
For example if $\theta(X)=a$ then $([a]X)\theta\equiv [a]a$.
This is the behaviour displayed by the informal meta-level when we write ``take $t$ to be $x$ in $\lam{x}t$''.
\end{rmrk}

\subsection{Sequents and derivability}

\begin{defn}
\label{defn.seq}
$\Phi$ and $\Psi$ will range over sets of propositions.
We may write $\phi,\Phi$ and $\Phi,\phi$ as shorthand for $\{\phi\}\cup\Phi$ (where we do not insist that $\phi\not\in\Phi$, that is, the union need not be disjoint). 
\begin{itemize*}
\item
A \deffont{sequent} of restricted PNL is a pair $\Phi\nopicent\Psi$.
\item
A \deffont{sequent} of full PNL is a pair $\Phi\cent\Psi$.
\end{itemize*}
Write 
$\f{fV}(\Phi,\Psi)=\bigcup\{\f{fV}(\phi)\mid \phi\in\Phi\}\cup\bigcup\{\f{fV}(\psi)\mid\psi\in\Psi\}$.
\end{defn}

\begin{frametxt}
\begin{defn}[Derivable sequents]
Define the \deffont{derivable sequents} of full PNL and restricted PNL by the rules in Figures~\ref{Seq} and~\ref{rSeq} respectively.
\end{defn}
\end{frametxt}
\noindent The sole difference between Figures~\ref{Seq} and~\ref{rSeq} is in the axiom rule, and is highlighted with a light blue rectangle. 

\begin{figure*}[t!]
$$
\begin{array}{c@{\qquad}c}
\shademath{\begin{prooftree}
\phantom{h}
\justifies
\Phi,\,\phi\cent \pi\act\phi,\,\Psi
\using\rulefont{Ax}
\end{prooftree}}
&
\begin{prooftree}
\phantom{h}
\justifies
\Phi,\,\bot\cent \Psi
\using\rulefont{\bot L}
\end{prooftree}
\\[4ex]
\begin{prooftree}
\Phi\cent \phi,\,\Psi
\quad
\Phi,\,\psi\cent \Psi
\justifies
\Phi,\,\phi\limp\psi\cent\Psi
\using\rulefont{{\limp}L}
\end{prooftree}
&
\begin{prooftree}
\Phi,\,\phi\cent \psi,\,\Psi
\justifies
\Phi\cent \phi\limp\psi,\,\Psi
\using\rulefont{{\limp}R}
\end{prooftree}
\\[4ex]
\begin{prooftree}
{
\begin{array}{c}
\Phi,\,\phi[X\ssm r]\cent \Psi
\\
(\fa(r){\subseteq}\pmss(X), 
\ r{:}\sort(X))
\end{array}
}
\justifies
\Phi,\,\Forall{X}\phi\cent \Psi
\using\rulefont{{\forall}L}
\end{prooftree}
&
\begin{prooftree}
\Phi\cent \phi,\,\Psi\quad {\small (X\not\in\f{fV}(\Phi,\Psi))}
\justifies
\Phi\cent \Forall{X}\phi,\,\Psi
\using\rulefont{{\forall}R}
\end{prooftree}
\\[5ex]
\end{array}
$$
\caption{Sequent derivation rules of full Permissive-Nominal Logic}
\label{Seq}
\end{figure*}
\begin{figure*}[t!]
$$
\begin{array}{c@{\qquad}c}
\shademath{\begin{prooftree}
\phantom{h}
\justifies
\Phi,\,\phi\nopicent \phi,\,\Psi
\using\rulefont{Ax^{\nopi}}
\end{prooftree}}
&
\begin{prooftree}
\phantom{h}
\justifies
\Phi,\,\bot\nopicent \Psi
\using\rulefont{\bot L}
\end{prooftree}
\\[4ex]
\begin{prooftree}
\Phi\nopicent \phi,\,\Psi
\quad
\Phi,\,\psi\nopicent \Psi
\justifies
\Phi,\,\phi\limp\psi\nopicent\Psi
\using\rulefont{{\limp}L}
\end{prooftree}
&
\begin{prooftree}
\Phi,\,\phi\nopicent \psi,\,\Psi
\justifies
\Phi\nopicent \phi\limp\psi,\,\Psi
\using\rulefont{{\limp}R}
\end{prooftree}
\\[4ex]
\begin{prooftree}
{
\begin{array}{c}
\Phi,\,\phi[X\ssm r]\nopicent \Psi
\\
(\fa(r){\subseteq}\pmss(X), 
\ r{:}\sort(X))
\end{array}
}
\justifies
\Phi,\,\Forall{X}\phi\nopicent \Psi
\using\rulefont{{\forall}L}
\end{prooftree}
&
\begin{prooftree}
\Phi\nopicent \phi,\,\Psi\quad {\small (X\not\in\f{fV}(\Phi,\Psi))}
\justifies
\Phi\nopicent \Forall{X}\phi,\,\Psi
\using\rulefont{{\forall}R}
\end{prooftree}
\\[5ex]
\end{array}
$$
\caption{Sequent derivation rules of restricted Permissive-Nominal Logic}
\label{rSeq}
\end{figure*}


\begin{nttn}
We may write $\Phi\nopicent\Psi$ as shorthand for `$\Phi\nopicent\Psi$ is a derivable sequent'.
We may write $\Phi\not\nopicent\Psi$ as shorthand for `$\Phi\nopicent\Psi$ is not a derivable sequent'.

Similarly for $\Phi\cent\Psi$ and $\Phi\not\cent\Psi$.
\end{nttn}

Figure~\ref{Seq} is the logic of \cite{gabbay:pernl-jv,gabbay:nomtnl}.
Figure~\ref{rSeq} is the logic we translate to HOL in this paper.
The only difference is the `$\pi$' in the axiom rule: full PNL has it (see \rulefont{Ax}), and restricted PNL does not (see \rulefont{Ax^\nopi}).
Restricted PNL is a subset of full PNL, in the sense that (obviously) $\Phi\nopicent\Psi$ implies $\Phi\cent\Psi$ (this suggests that the models of restricted PNL should be a superset of those of full PNL, which will indeed turn out to be the case; see Appendix~\ref{sect.completeness}).

Why the difference?  
Because the translation to HOL identifies atoms with functional arguments.
Atoms are symmetric up to permutation in full PNL; this is built into \rulefont{Ax} in Figure~\ref{Seq}.
Functional arguments are typically not symmetric.

We might try to translate full PNL to HOL by translating $n!$ permutation instances of each $r$ or $\phi$, where $n$ is some notion of the number of atoms in $r$ or $\phi$ (cf. \emph{capture typings} in Definition~\ref{defn.capture.typing}); but that would be `cheating' in the sense that most of the syntax would then be generated by a meta-level `macro' which does $n!$ amount of work.
The issue here is not whether PNL can be encoded in HOL; the issue is whether it can be cleanly translated into HOL. 
These are related but distinct questions.

To quickly see the difference in derivational power between full and restricted PNL, assume a name sort $\nu$, a proposition-former $\tf P:\nu$, and two atoms $a,b:\nu$.
Then the difference in the entailment relations of PNL and restricted PNL can be summed up as follows:
\begin{itemize*}
\item
$\tf P(a)\cent \tf P(a)$\ and\ $\tf P(a)\nopicent \tf P(a)$.
\item
$\tf P(a)\cent \tf P(b)$\ but\ \sout{$\tf P(a)\nopicent \tf P(b)$}.
\end{itemize*}
In Appendix~\ref{sect.completeness} we see that this difference corresponds in models to proposition-formers being interpreted by equivariant functions (for full PNL) or not necessarily equivariant functions (for restricted PNL).

It has to be this way:
Definition~\ref{defn.translation} translates PNL terms and predicates to HOL terms and predicates.
In Lemma~\ref{lemm.it.has.to.be} we illustrate why only restricted PNL can be translated to HOL by our translation: the derivability of full PNL is too strong for HOL derivability and the translation would not be sound.

Note that this does not prove that other translations to HOL do not exist, but (as the discussion of $n!$ above suggests) we speculate that they would be significantly less natural.

\section{HOL syntax and derivability}
\label{sect.hol}

Higher-order logic (HOL) syntax and derivability should be familiar \cite{miller:logho,farmer:sevvst,andrews:intmlt,church:forstt}.
We give the basics.

\subsection{Syntax}

We present HOL as a derivation system over simply-typed $\lambda$-terms with constants and types for logical reasoning (like a type of truth-values and constant symbols like $\limp$ and $\forall$).
This is all standard.

\begin{defn}
\label{defn.hol.sort.sig}
A \deffont{HOL signature} is a set $\mathcal D$ of \deffont{base types}, which includes a distinguished base type of \deffont{truth-values} $o\in\mathcal D$.
$\basetype$ will range over base types.
A \deffont{type-language} is defined by
\begin{frameqn}
\beta ::= \basetype \mid 
(\beta,\ldots,\beta) \mid \beta\to\beta
.
\end{frameqn}
\end{defn}
It is not necessary to include products $(\beta_1,\ldots,\beta_n)$, but for the purposes of translating PNL into HOL doing this is convenient.

\begin{defn}
\label{defn.hol.term.signature}
A \deffont{term-signature} over a HOL signature $\mathcal D$ is a tuple $(\mathcal G,\type)$ where:
\begin{itemize*}
\item
$\mathcal G$ is a set of \deffont{constants}, which must contain elements $\bot$, $\limp$, and $\forall_\beta$ for every type $\beta$.
\item $\type$ assigns to each ${\tf g\in\mathcal G}$ a
type $\beta$
in the type-language
determined by $\mathcal D$, such that $\type(\bot)=o$, $\type(\limp)=o\to o\to o$, and $\type(\forall_\beta)=(\beta\to o)\to o$.\footnote{The authors deprecate calling this `higher-order abstract syntax' (HOAS), as sometimes happens.  We should reserve that term for inductive types with binding constructed using constants of higher type like $(\Lambda\to\Lambda)\to\Lambda$ (strong HOAS) or $(\nu\to\Lambda)\to\Lambda$ (weak HOAS) \cite{despeyroux94higherorder,pfenning:hoas}.  

A term $\forall_\beta:(\beta\to o)\to o$ (plus axioms) expresses the \emph{meaning} of $\forall$ \cite[Section~2]{church:forstt} and would still have meaning if our syntax was, e.g. combinators.   In contrast, the \emph{syntax} of combinators could be represented without any need for higher-order syntax, since it does not have binders \cite[Section~2]{hindley:lamcci}.} 
\end{itemize*}
A \deffont{signature} $\mathcal T$ is then a tuple $(\mathcal D,\mathcal G,\type)$.
\end{defn}
We write $\tf g:\beta$ for $\type(\tf g)=\beta$. 

\begin{defn}
\label{defn.hol.terms.sorts}
For each signature $\mathcal T=(\mathcal D,\mathcal G,\type)$ and each type $\beta$ over $\mathcal D$ fix a countably infinite set of \deffont{variables} of that type.

$X,Y,Z$ will range over distinct HOL variables.\footnote{This means that if the reader sees `$X$' this could refer either to a HOL variable or---recalling Definition~\ref{defn.term.signature}---to a PNL unknown.  
We will make sure that it is always clear from context which is meant.} 
Write $\type(X)$ for the type of $X$.
\end{defn}

\begin{defn}
For each signature $\mathcal T$ define \deffont{HOL terms} over $\mathcal T$ by
$$
t::= X \mid \lam{X}t \mid tt \mid (t,\ldots,t) \mid \tf g
$$
and a \deffont{typing} relation by: 
\begin{frameqn}
\begin{array}{c@{\qquad}c@{\qquad}c@{\qquad}c}
\begin{prooftree}
\rawt:\beta\ \ (\type(X){=}\beta')
\justifies
\lam{X}\rawt:\beta'{\to}\beta
\end{prooftree}
&
\begin{prooftree}
\rawt':\beta'\quad \rawt:\beta'{\to}\beta
\justifies
\rawt'\rawt:\beta
\end{prooftree}
& 
\begin{prooftree}
\rawt_1:\beta_1 \ \ldots\ \rawt_n:\beta_n
\justifies
(\rawt_1,\ldots,\rawt_n):(\beta_1,\ldots,\beta_n)
\end{prooftree}
&
\begin{prooftree}
(\type(\tf g){=}\mu)
\justifies
\tf g:\mu
\end{prooftree}
\end{array}
\end{frameqn}
\end{defn}

We now define $\alpha$-equivalence.
We would not normally be so detailed about this, but when we map PNL terms and propositions to HOL later, it will be useful to have been precise here:
\begin{defn}
\label{defn.hol.perm}
A \deffont{permutation} of HOL variables is a bijection $\varpi$ such that $\f{nontriv}(\varpi)=\{X\mid \varpi(X)\neq X\}$ is finite.
Give HOL terms a permutation action $\varpi\act t$ defined by:
\maketab{tab4}{R{6.5em}@{}C{1em}@{}L{6em}@{\quad}R{4em}@{}C{1em}@{}L{5em}@{\quad}R{3em}@{}C{1em}@{}L{7em}}
\begin{tab4}
\varpi\act X&=&\varpi(X)
&
\varpi\act \lam{X}t&=&\lam{\varpi(X)}\varpi\act t
&
\varpi\act (t't)&=&(\varpi\act t')(\varpi\act t)
\\
\varpi\act(t_1,\dots,t_n)&=&(\varpi\act t_1,\dots,\varpi\act t_n)
&
\varpi\act \tf g&=&\tf g
\end{tab4}
Free variables are defined by:
\begin{tab4}
\f{fv}(X)&=&\{X\}
&
\f{fv}(\lam{X}t)&=&\f{fv}(t)\setminus\{X\}
&
\f{fv}(t't)&=&\f{fv}(t')\cup\f{fv}(t)
\\
\f{fv}((t_1,\dots,t_n))&=&\bigcup_i\f{fv}(t_i)
&
\f{fv}(\tf g)&=&\varnothing
\end{tab4}
Call a relation $\somerel$ on HOL terms a \deffont{congruence} when it is closed under the following rules:
$$
\begin{prooftree}
t\somerel u
\justifies
\lam{X}t\somerel \lam{X}u
\end{prooftree}
\qquad
\begin{prooftree}
t'\somerel u'\quad t\somerel u
\justifies
t't\somerel u'u
\end{prooftree}
\qquad
\begin{prooftree}
t_i\somerel u_i \quad (1\leq i\leq n)
\justifies
(t_1,\dots,t_n)\somerel (u_1,\dots,u_n)
\end{prooftree}
$$
Define $\alpha$-equivalence to be the least congruence that is an equivalence relation and is such that:
$$
\begin{prooftree}
(X,Y\not\in\f{fv}(t))
\justifies
(Y\ X)\act t\aeq t
\end{prooftree}
$$ 
We quotient terms by $\alpha$-equivalence and define \deffont{capture-avoiding substitution} $\rawt[X\ssm u]$ as usual.
\end{defn}

\begin{defn}
We write $\rawt{\,:\,}\beta$ for \emph{$\rawt$ is a term and has type $\beta$}.
We call $\rawt$ \deffont{typable} when $\rawt:\beta$ for some type $\beta$.

We call a term a \deffont{HOL proposition} when it has type $o$.
$\xi$ and $\chi$ will range over HOL propositions.
\end{defn}
 
\begin{defn}
\label{defn.hol.seq}
$\Xi$ and $\Chi$ will range over sets of HOL propositions. 
We may write $\xi,\Xi$ and $\Xi,\xi$ as shorthand for $\{\xi\}\cup\Xi$. 
 
Write 
$\f{fV}(\Xi,\Chi)=\bigcup\{\f{fV}(\xi)\mid \xi\in\Xi\}\cup\bigcup\{\f{fV}(\chi)\mid\chi\in\Chi\}$.

A \deffont{sequent} is a pair $\Xi\holcent\Chi$.
\end{defn}

\begin{frametxt}
\begin{defn}[Derivable sequents]
The \deffont{derivable sequents} are defined in Figure~\ref{hol.Seq}. 
\end{defn}
\end{frametxt}

\begin{figure*}[t]
$$
\begin{array}{c@{\qquad}c}
\begin{prooftree}
\phantom{h}
\justifies
\Xi,\,\xi\holcent \xi,\,\Chi
\using\rulefont{hAx}
\end{prooftree}
&
\begin{prooftree}
\phantom{h}
\justifies
\Xi,\,\bot\holcent \Chi
\using\rulefont{h\bot L}
\end{prooftree}
\\[4ex]
\begin{prooftree}
\Xi\holcent \xi,\,\Chi
\quad
\Xi,\,\chi\holcent \Chi
\justifies
\Xi,\,\xi\limp\chi\holcent\Chi
\using\rulefont{h{\limp}L}
\end{prooftree}
&
\begin{prooftree}
\Xi,\,\xi\holcent \chi,\,\Chi
\justifies
\Xi\holcent \xi\limp\chi,\,\Chi
\using\rulefont{h{\limp}R}
\end{prooftree}
\\[4ex]
\begin{prooftree}
\Xi,\,\xi[X\ssm \rawt]\holcent \Chi
\quad
(\rawt{:}\type(X))
\justifies
\Xi,\,\Forall{X}\xi\holcent \Chi
\using\rulefont{h{\forall}L}
\end{prooftree}
&
\begin{prooftree}
\Xi\holcent \xi,\,\Chi\quad {\small (X\not\in\f{fV}(\Xi,\Chi))}
\justifies
\Xi\holcent \Forall{X}\xi,\,\Chi
\using\rulefont{h{\forall}R}
\end{prooftree}
\\[5ex]
\end{array}
$$
\caption{Sequent derivation rules of Higher-Order Logic}
\label{hol.Seq}
\end{figure*}

\section{The translation from nominal to functional syntax, and its soundness}
\label{sect.translation.sound}

\subsection{Translation from PNL to higher-order logic}
\label{subsect.pnl.to.hol}

In this subsection we show how to translate a PNL signature $\mathcal S$ and propositions and terms in that signature, to a higher-order logic (HOL) signature and propositions and terms in that signature.
We start by translating a PNL signature $\mathcal S$ to a HOL signature $\mathcal T_{\mathcal S}$.
First, we set up some notation:

\begin{nttn}
\label{nttn.finite.lists}
Let $D$ range over finite lists of distinct atoms.
\begin{itemize*}
\item
Write $a\in D$ when $a$ occurs in $D$.
\item
Write $D'\subseteq D$ when every element in $D'$ occurs in $D$ (disregarding order).
Similarly if $S$ is a set of atoms write $D\subseteq S$ when every element in $D$ occurs in $S$.
\item
If $S$ is a set of atoms write $D\cap S$ for the list obtained by removing from $D$ just those atoms not in $S$.
Also write $D_X$ as shorthand for $D\cap\pmss(X)$.
\item
Write $\pi\act D$ for the list obtained by applying $\pi$ pointwise to the elements of $D$ in order.
\item
Write $D,a$ for the list obtained by appending $a$; when we write this we include an assumption that $a\not\in D$.
\item
Write $\lam{D}t$ for $\lam{d_1}\dots\lam{d_n}t$ where $D=[d_1,\dots,d_n]$.
\end{itemize*}
\end{nttn}

\begin{defn}
\label{defn.TS}
\label{defn.hol.translation}
From a PNL signature $\mathcal S$ determine a HOL signature $\mathcal T_{\mathcal S}$ by the following specification:
\begin{itemize*}
\item
For every atoms-sort $\nu$ in $\mathcal S$ assume a HOL base type $\mu_\nu$.
\item
For every base sort $\tau$ assume a HOL type $\mu_\tau$.
\end{itemize*}
Translate sorts in $\mathcal S$ to types in $\mathcal T_{\mathcal S}$ as follows:
\begin{frameqn}
\begin{array}{r@{\ }l@{\qquad}r@{\ }l@{\qquad}r@{\ }l}
\hol{}{\nu}=&\mu_\nu
&
\hol{}{\tau}=&\mu_\tau
&
\hol{}{(\alpha_1,\ldots,\alpha_n)}=&(\hol{}{\alpha_1},\cdots,\hol{}{\alpha_n})
\\
\hol{}{[\nu]\alpha}=&\nu\to\hol{}{\alpha}
\end{array}
\end{frameqn}
\begin{itemize*}
\item
For every term-former $\tf f:(\alpha)\tau$ assume a HOL constant $\tf g_{\smtf f}:\hol{}{\alpha}\to\tau$.
\item
For every proposition-former $\tf P:\alpha$ assume a HOL constant $\tf g_{\smtf P}:\hol{}{\alpha}\to o$.
\item
For every atom $a:\nu$ assume a HOL variable $a:\nu$.

It is convenient to assume this correspondence is a literal identity; i.e. that $\mathbb A_\nu$ is actually a subset of the set of HOL variables of type $\nu$, and that there are countably infinitely many HOL variables of type $\nu$ that are not atoms. 

In particular, this means that every permutation $\pi$ in the sense of Definition~\ref{defn.permutation} is also a permutation $\varpi$ in the sense of Definition~\ref{defn.hol.perm}.
\item
For every unknown $X:\alpha$ and list $D$ assume a distinct HOL variable $X_D$ that is not an atom\footnote{So $X$ is one of the countably infinitely many HOL variables that are not atoms.} of type $\nu_{\GammaX}\to\hol{}\alpha$ where $\nu_{\GammaX}$ is the sorts of the atoms in $\GammaX$, in order.
\end{itemize*}
\end{defn}

\begin{frametxt}
\begin{defn}
\label{defn.translation}
Given a list $D$ translate PNL terms and propositions in $\mathcal S$ to HOL terms and propositions in $\mathcal T_{\mathcal S}$ (Definition~\ref{defn.TS}) by the rules in Figure~\ref{fig.hol.translation}.
\end{defn}
(The notation $\pi\act\GammaX$ is defined in Notation~\ref{nttn.finite.lists}.)
\end{frametxt}

\begin{figure}[t]
$$
\begin{array}{r@{\ }l@{\qquad}r@{\ }l@{\qquad}r@{\ }l}
\hol{D}{a}=&a
&
\hol{D}{(r_1,\ldots,r_n)}=&(\hol{D}{r_1} ,\ldots,\hol{D}{r_n} )
&
\hol{D}{\tf f(r)}=&\tf g_{\smtf f}\, \hol{D}{r} 
\\
\hol{D}{[a]r}=&\lam{a}\hol{D}{r}
&
\hol{D}{\pi\act X}=&X_D\pi\act\GammaX
\\
\hol{D}{\bot}=&\bot
&
\hol{D}{\phi\limp\psi}=&{\limp}\hol{D}{\phi}\hol{D}{\psi}
&
\hol{D}{\tf P(r)}=&\tf g_{\smtf P}\, \hol{D}{r}
\\
\hol{D}{\Forall{X}\phi}=&\forall\, \lam{X}\hol{D}{\phi}
\end{array}
$$
\caption{Translation from PNL to HOL}
\label{fig.hol.translation}
\end{figure}

\begin{xmpl}
\label{xmpl.why.capturable}
Suppose $\GammaX$ (Notation~\ref{nttn.finite.lists})
is the list $[a]$ and write $X$ for $X_D$.
Assume a proposition-former $\tf{equal}$ of appropriate arity. 
Then: 
$$
\begin{gathered}
\hol{D}{\id\act X}=Xa
\quad
\hol{D}{(b\ a)\act X}=Xb
\quad
\hol{D}{[a]\id\act X}=\lam{a}(Xa)
\quad
\hol{D}{[b](b\ a)\act X}=\lam{b}(Xb)
\\
\hol{D}{\Forall{X}\tf{equal}([a]X,[b](b\ a)\act X)}=\forall\,\lam{X}(\tf{equal}(\lam{a}(Xa))(\lam{b}(Xb)))
\end{gathered}
$$
Assuming appropriate axioms for $\tf{equal}$, we would expect this to be true.
Now assume $\GammaY$ is the list $[a,b]$ and write $Y$ for $Y_{\GammaY}$.
Then:
$$
\begin{gathered}
\hol{D}{\id\act Y}=Yab
\quad
\hol{D}{(b\ a)\act Y}=Yba
\quad
\hol{D}{[a]\id\act Y}=\lam{a}(Yab)
\quad
\hol{D}{[b](b\ a)\act Y}=\lam{b}(Yba)
\\
\hol{D}{\Forall{Y}\tf{equal}([a]Y,[b](b\ a)\act Y)}=\forall\,\lam{Y}(\tf{equal}(\lam{a}(Yab))(\lam{b}(Yba)))
\end{gathered}
$$
We would expect this to be false.  
What has changed with respect to the previous case, is that $b$ is fresh for $X$ but not for $Y$.
\end{xmpl}

\begin{lemm}
\label{lemm.hol.gamma.fa}
\begin{itemize*}
\item
Suppose $a$ is an atom.
Then if $a\in\f{fv}(\hol{D}{r})$ then $a\in\fa(r)$.
\item
$\hol{D}{\pi\act r}=\pi\act\hol{D}{r}$ (for $\pi$ on the right-hand side considered as a permutation of HOL variables).
\end{itemize*}
As a corollary, the translation $\hol{D}{r}$ is well-defined.
That is, if $r$ and $s$ are $\alpha$-equivalent then $\hol{D}{r}=\hol{D}{s}$.
\end{lemm}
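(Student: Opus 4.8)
The plan is to establish the first two items together by a single structural induction on PNL terms and propositions, and then to read off the corollary by induction on the generation of $\aeq$, feeding the first two items into its base cases.

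\emph{First item.} Induct on $r$ and $\phi$. For $\hol{D}{a}=a$ there is nothing to prove; for $\hol{D}{(r_1,\dots,r_n)}$, $\hol{D}{\tf f(r)}$, $\hol{D}{\tf P(r)}$, $\hol{D}{\bot}$ and $\hol{D}{\phi\limp\psi}$ the claim follows from the induction hypothesis together with the fact that a term- or proposition-former translates to a HOL constant, which contributes no free variables. For $\hol{D}{[a]r}=\lam{a}\hol{D}{r}$ we have $\f{fv}(\lam{a}\hol{D}{r})=\f{fv}(\hol{D}{r})\setminus\{a\}$, so any atom $b$ in this set has $b\neq a$ and $b\in\f{fv}(\hol{D}{r})$, hence $b\in\fa(r)$ by induction and $b\in\fa(r)\setminus\{a\}=\fa([a]r)$. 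The only essentially nominal clause is $\hol{D}{\pi\act X}=X_D\,(\pi\act\GammaX)$, whose free HOL variables are $X_D$ together with the atoms in $\pi\act\GammaX$; since $X_D$ is not an atom, any free atom has the form $\pi(d)$ with $d\in\GammaX=D\cap\pmss(X)\subseteq\pmss(X)$, and so lies in $\pi\act\pmss(X)=\fa(\pi\act X)$. Finally $\hol{D}{\Forall{X}\phi}$ binds, inside $\hol{D}{\phi}$, a HOL variable which is not an atom, so its free atoms are among those of $\hol{D}{\phi}$ and the induction hypothesis suffices.

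\emph{Second item.} Induct again on $r$ and $\phi$; every clause of Figure~\ref{fig.hol.translation} is visibly equivariant. Thus $\hol{D}{\pi\act a}=\pi(a)=\pi\act\hol{D}{a}$; the clause for $\tf f$ uses Definition~\ref{defn.permutation.action} on the PNL side and the fact that $\pi$, viewed as a HOL permutation via Definition~\ref{defn.TS}, fixes the constant $\tf g_{\smtf f}$; the clause for $[a]r$ matches $\pi\act[a]r=[\pi(a)]\pi\act r$ against $\pi\act\lam{a}t=\lam{\pi(a)}\pi\act t$; and the clause for $\pi'\act X$ uses $\pi\act(\pi'\act X)=(\pi\circ\pi')\act X$ on the one side and, on the other, that $\pi$ fixes the non-atom variable $X_D$, so that $\pi\act\bigl(X_D\,(\pi'\act\GammaX)\bigr)=X_D\,((\pi\circ\pi')\act\GammaX)$. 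The propositional and quantifier clauses are of the same form; in the $\Forall{X}$ case one again uses that $\pi$ fixes the non-atom variable bound by the translation. The single fact used throughout is that every PNL permutation acts as the identity on all HOL variables that are not atoms.

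\emph{Corollary.} Induct on the derivation of $r\aeq s$ (respectively $\phi\aeq\psi$). The congruence rules are discharged immediately by the induction hypothesis, since $\hol{D}{-}$ is compositional. For the level-$1$ base rule, from $a,b\notin\fa(r)$ the first item gives $a,b\notin\f{fv}(\hol{D}{r})$, so $(b\ a)\act\hol{D}{r}\aeq\hol{D}{r}$ by the definition of HOL $\alpha$-equivalence, while the second item gives $\hol{D}{(b\ a)\act r}=(b\ a)\act\hol{D}{r}$; since HOL terms are taken modulo $\alpha$-equivalence this yields $\hol{D}{(b\ a)\act r}=\hol{D}{r}$. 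For the level-$2$ base rule I would first record two routine auxiliary facts, proved by structural inductions parallel to the ones above: if $X\notin\f{fV}(r)$ then no HOL variable of the form $X_{D'}$ occurs free in $\hol{D}{r}$; and $\hol{D}{\Pi\act r}$ is obtained from $\hol{D}{r}$ by applying $\Pi$ pointwise to the subscripted variables $X_{D'}$ (which acts nontrivially on only finitely many variables of any given term). These combine exactly as in the level-$1$ case. The step I expect to need the most care is precisely this reconciliation of the two notions of $\alpha$-equivalence---the PNL one built from $\fa$ and $\f{fV}$, the HOL one built from $\f{fv}$---inside the binder and quantifier clauses; items one and two are exactly what make the two agree, after which the remaining bookkeeping is routine.
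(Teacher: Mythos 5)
Your proof is correct and follows essentially the same route as the paper, which simply records these facts as "routine inductions on $r$" (with the key observation in the $\pi\act X$ case being $\GammaX\subseteq\pmss(X)$, which you use) and defers the details of the corollary to the literature. Your write-up just fills in the details: structural inductions for the two items, then an induction on the generation of $\aeq$ using those items (and their level-2 analogues) to discharge the base cases.
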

\begin{proof}
By routine inductions on $r$.
The proof that $\fa(\pi\act X)\subseteq\f{fv}(\hol{D}{\pi\act X})$ uses the assumption that $\GammaX\subseteq\pmss(X)$.
The corollary follows; for more details see \cite[Section~8]{gabbay:perntu-jv}.
\end{proof}
 

\subsection{Capture typing}

In order to translate to HOL, some atoms are `important' and others are not.
This is expressed by a \emph{capture typing}, an idea going back to \cite{gabbay:perntu,gabbay:perntu-jv}.
\begin{defn}
\label{defn.capture.typing}
Define \deffont{capture typings} $D\cent r:A$ and $D\cent\phi:A$ inductively by the rules in Figure~\ref{fig.capture.typings}.
Here $D$ ranges over finite lists of distinct atoms as described in Notation~\ref{nttn.finite.lists}, and $A$ ranges over finite sets of atoms.

If $A=\varnothing$ then we may omit the `${:}A$' and write just $D\cent r$ and $D\cent\phi$.
Write $D\cent\Psi$ when $D\cent\psi$ for every $\psi\in\Psi$.
\end{defn}

\begin{figure}
$$
\begin{gathered}
\begin{prooftree}
\phantom{h}
\justifies
D\cent a:A
\end{prooftree}
\qquad\qquad 
\begin{prooftree}
D\cent r:A
\justifies
D\cent \tf f(r):A
\end{prooftree}
\qquad\qquad 
\begin{prooftree}
D\cent r:A,a
\justifies
D\cent [a]r:A
\end{prooftree}
\\[3ex]
\begin{prooftree}
D\cent r_i:A\quad (1{\leq}i{\leq}n)
\justifies
D\cent (r_1,\ldots,r_n):A
\end{prooftree}
\qquad\qquad 
\begin{prooftree}
((\f{nontriv}(\pi)\cup A)\cap\pmss(X)\subseteq D)
\justifies
D\cent \pi\act X:A
\end{prooftree}
\\[3ex]
\begin{prooftree}
D\cent r:A
\justifies
D\cent\tf P(r):A
\end{prooftree}
\quad\qquad 
\begin{prooftree}
D\cent\phi:A\quad D\cent\psi:A
\justifies
D\cent\phi\limp\psi:A
\end{prooftree}
\quad\qquad 
\begin{prooftree}
\phantom{h}
\justifies
D\cent\bot:A
\end{prooftree}
\quad\qquad 
\begin{prooftree}
D\cent \phi:A
\justifies
D\cent\Forall{X}\phi:A
\end{prooftree}
\end{gathered}
$$
\caption{Capture typing}
\label{fig.capture.typings}
\end{figure}

\begin{rmrk}
\label{rmrk.capture.closed}
The interesting case in Figure~\ref{fig.capture.typings} is the rule for $\pi\act X$.
This ensures that $D$ is large enough to record all the important atoms in $\pi$ or abstracted further up in the term---that is, those permitted in $X$---so that we do not lose information when we form $\hol{D}{\pi\act X}=X\pi\act\GammaX$.
This is made formal in Proposition~\ref{prop.capturable.minimal}, which is Theorems~8.12 and~8.14 of \cite{gabbay:perntu-jv}: 
\end{rmrk}

\begin{prop}
\label{prop.capturable.minimal}
\begin{itemize*}
\item
If $D\cent r$ and $D\cent s$ then $\hol{D}{r}=\hol{D}{s}$ implies $r=s$ (note that $=$ denotes $\alpha$-equality, because we quotiented terms by this relation), and similarly for $\phi$ and $\psi$.
\item
If $D\not\cent r$ then there exists $s$ such that $\hol{D}{r}=\hol{D}{s}$ yet $r\neq s$, and similarly for $\phi$. 
\end{itemize*}
\end{prop}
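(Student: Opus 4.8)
The plan is to establish the first item by showing the translation is invertible on capturably-typed terms, and the second by explicitly building a collapsing pair when capture typing fails. Everything reduces to the case of an unknown, since for the remaining term- and proposition-formers the clauses of Figure~\ref{fig.hol.translation} merely relabel a head symbol and recurse into immediate subterms; in particular $\hol D{-}$ is head-injective, because the five clauses produce respectively an atom-variable, a tuple, a $\tf g_{\smtf f}$-headed (or $\tf g_{\smtf P}$-headed) application, a $\lambda$-abstraction, and a non-atom-variable-headed application, and these shapes are pairwise distinguishable using the stipulation of Definition~\ref{defn.TS} that each $X_D$ is not an atom; $\bot$, $\limp$ and $\forall$ fit this pattern as well.

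For the first item I would prove, by induction on $r$ (and simultaneously on $\phi$), that $\hol D r$ together with $D$ determines $r$ up to $\alpha$-equivalence whenever $D\cent r$. Concretely I would exhibit a partial reconstruction map $\f{rec}_D$ from HOL terms to PNL terms --- with $\f{rec}_D(a)=a$, $\f{rec}_D(\tf g_{\smtf f}\,t)=\tf f(\f{rec}_D(t))$, $\f{rec}_D(\lam a t)=[a]\f{rec}_D(t)$ taking a representative whose bound atom avoids $D$ (possible since every permission set is co-infinite), and $\f{rec}_D(X_D\,(b_1,\dots,b_k))=\pi\act X$ for the unique permutation $\pi$ that fixes everything outside $\{d_1,\dots,d_k,b_1,\dots,b_k\}$ and sends $d_i\mapsto b_i$, where $\GammaX=[d_1,\dots,d_k]$ --- and check that $\f{rec}_D(\hol D r)=r$ whenever $D\cent r$. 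The crux is the unknown clause: from $X_D\,(\pi\act\GammaX)$ one reads off $X$, $D$ (hence $\GammaX$), and $\pi\restriction\GammaX$, while the side-condition of the $\pi\act X$ rule of Figure~\ref{fig.capture.typings}, instantiated at $A=\varnothing$, gives $\f{nontriv}(\pi)\cap\pmss(X)\subseteq\GammaX$, so $\pi$ is the identity on $\pmss(X)\setminus\GammaX$; hence $\pi\restriction\pmss(X)$, and therefore $\pi\act X$ itself, is recovered (two suspended permutations over the same unknown being equal exactly when they agree on its permission set). The abstraction clause needs the routine checks that $\f{rec}_D$ respects $\alpha$-equivalence of HOL terms and is independent of the chosen bound-atom representative; both use Lemma~\ref{lemm.hol.gamma.fa} (an atom free in $\hol D r$ is free in $r$, and $\hol D{-}$ commutes with permutations of HOL variables), as in the corresponding argument of \cite{gabbay:perntu-jv}.

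For the second item, suppose $D\not\cent r$. Because the rules of Figure~\ref{fig.capture.typings} are syntax-directed, with $A$ accumulating exactly the atoms abstracted above the current position and only the $\pi\act X$ rule carrying a side-condition, failure localises to an occurrence of some $\pi\act X$ beneath abstractions of a set $A$ with $(\f{nontriv}(\pi)\cup A)\cap\pmss(X)\not\subseteq D$; fix a witnessing atom $c\in\pmss(X)$ with $c\notin D$ and $c\in\f{nontriv}(\pi)\cup A$. Now $\GammaX=D\cap\pmss(X)$ is finite whereas $\pmss(X)$ is infinite, and $\hol D{\pi\act X}=X_D\,(\pi\act\GammaX)$ depends on $\pi$ only through $\pi\restriction\GammaX$; so I would replace that single occurrence of $\pi\act X$ by $\pi'\act X$, where $\pi'$ agrees with $\pi$ on $\GammaX$ but differs from $\pi$ somewhere on $\pmss(X)$ --- concretely obtained by composing $\pi$ with a transposition of two atoms of $\pmss(X)$ outside $D$, one chosen at or near $c$, and, when $c\in A$, $\alpha$-varying the binder of $c$ at the same time --- arranged so that the resulting $s$ differs from $r$ as an $\alpha$-equivalence class; the witnessing atom $c$ is what guarantees that the change is not reabsorbed by $\alpha$-renaming of the atoms bound above the occurrence. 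Then $\hol D s=\hol D r$ by compositionality of the translation, and the other syntactic constructs contribute nothing new.

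The step I expect to be the main obstacle is exactly this distinctness claim $s\neq r$: one must track precisely how altering a suspended permutation at a deep occurrence interacts with the $\alpha$-equivalence quotient on the atoms (and, when $\forall$ binds the relevant unknown, on the unknowns) bound above it, which is where the role of $c$ and the choice of sufficiently fresh atoms for the transposition become essential. In the first item the analogous but milder difficulty is keeping $\f{rec}_D$ well-defined across $\alpha$-equivalence classes in the abstraction clause. Both points follow the arguments for Theorems~8.12 and~8.14 of \cite{gabbay:perntu-jv}, to which the remaining bookkeeping can be delegated.
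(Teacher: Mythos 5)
The paper does not actually prove this proposition: it is imported wholesale as Theorems~8.12 and~8.14 of \cite{gabbay:perntu-jv} (see the sentence immediately preceding the statement), so there is no in-paper argument to measure yours against. Your reconstruction is consistent with what that citation points to, and the two observations doing the real work are the right ones: for injectivity, the side-condition of the $\pi\act X$ rule at $A=\varnothing$ forces $\pi$ to be the identity on $\pmss(X)\setminus\GammaX$, so the applied list $\pi\act\GammaX$ determines $\pi$ on all of $\pmss(X)$ and hence determines $\pi\act X$; for non-injectivity, $\hol{D}{\pi\act X}$ only sees $\pi$ on $\GammaX$, so perturbing $\pi$ at atoms of $\pmss(X)$ outside $D$ (transposing the binder as well when the offending atom is abstracted above) produces a distinct $\alpha$-class with the same translation. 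Your worked example pattern $[a]X$ versus $[a'](a'\ a)\act X$ with $a'\in\pmss(X)\setminus D$ is exactly the canonical collapsing pair.

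Two small repairs to the first item. First, the permutation you read off from $\f{rec}_D(X_D\,(b_1,\dots,b_k))$ is not unique when $\{b_i\}\setminus\{d_i\}$ has two or more elements (its values on $\{b_i\}\setminus\{d_i\}$ are unconstrained beyond bijectivity); what saves you is that under the capture typing no atom of $\{b_i\}\setminus\{d_i\}$ can lie in $\pmss(X)$ --- such an atom would be in $\f{nontriv}(\pi)\cap\pmss(X)$ but outside $D$ --- so all candidate permutations agree on $\pmss(X)$ and yield the same term $\pi\act X$. Second, in the abstraction clause, choosing the bound atom merely to avoid $D$ is not enough: for $[b]\f{rec}_D(\cdots)$ to be $\alpha$-equivalent to the intended $[a]\f{rec}_D(\cdots)$ on the PNL side, $b$ must also avoid the free atoms of the reconstructed body, hence the permission sets of the unknowns occurring under the binder (finitely many, each co-infinite, so such a $b$ exists --- but ``$b\notin D$'' alone permits $b\in\pmss(X)$, which breaks the $\alpha$-conversion). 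With those adjustments the remaining bookkeeping is indeed exactly the content of the cited theorems.
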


Definition~\ref{defn.translation} maps PNL terms and predicates to typable HOL terms:
\begin{prop}
\label{prop.typable.hol.gamma.r}
If $r:\alpha$ then for any $D$,\ $\hol{D}{r}:\hol{}{\alpha}$,
\ and $\hol{D}{\phi}:o$.
\end{prop}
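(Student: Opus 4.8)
The plan is to prove both halves of the statement by a single simultaneous induction on the structure of the PNL term $r$ and the PNL proposition $\phi$; since each syntactic form has a unique formation/typing rule, this is the same as inducting on the derivations of $r:\alpha$ and `$\phi$ prop'. The list $D$ is passive: every clause of Figure~\ref{fig.hol.translation} hands the \emph{same} $D$ down to the immediate subterms, so I would simply fix $D$ and check clause by clause that the HOL term produced by $\hol{D}{\cdot}$ is given the claimed type by the HOL typing rules. Because $\hol{D}{\cdot}$ is already known to be well-defined on $\alpha$-equivalence classes (Lemma~\ref{lemm.hol.gamma.fa}), the induction may be run on any chosen representative.

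Most cases are a direct match between a translation clause and a HOL typing rule. For an atom $a:\nu$ we have $\hol{D}{a}=a$, a HOL variable of type $\hol{}{\nu}$ by the identification fixed in Definition~\ref{defn.TS}. For a tuple $(r_1,\dots,r_n):(\alpha_1,\dots,\alpha_n)$ the induction hypothesis gives $\hol{D}{r_i}:\hol{}{\alpha_i}$, so $(\hol{D}{r_1},\dots,\hol{D}{r_n}):(\hol{}{\alpha_1},\dots,\hol{}{\alpha_n})=\hol{}{(\alpha_1,\dots,\alpha_n)}$. For $\tf f(r):\tau$ with $\tf f:(\alpha)\tau$, the constant $\tf g_{\smtf f}$ has HOL type $\hol{}{\alpha}\to\hol{}{\tau}$ (Definition~\ref{defn.TS}) and $\hol{D}{r}:\hol{}{\alpha}$, so $\tf g_{\smtf f}\,\hol{D}{r}:\hol{}{\tau}$; the proposition-former case $\tf P(r)$ is identical with target type $o$. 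For $[a]r:[\nu]\alpha$ (so $a\in\mathbb A_\nu$), the HOL variable $a$ has type $\hol{}{\nu}$ and $\hol{D}{r}:\hol{}{\alpha}$, so the HOL $\lambda$-rule gives $\lam{a}\hol{D}{r}:\hol{}{\nu}\to\hol{}{\alpha}=\hol{}{[\nu]\alpha}$. The cases of $\bot$ and $\phi\limp\psi$ are read off from $\type(\bot)=o$ and $\type(\limp)=o\to o\to o$ together with the induction hypothesis.

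The only two clauses with real content are the moderated unknown and the quantifier. For $\pi\act X:\alpha$ (so $\sort(X)=\alpha$) the translation is $X_D\,(\pi\act\GammaX)$, where $X_D$ is the HOL variable declared in Definition~\ref{defn.TS} whose type expects, as arguments, atoms with exactly the name-sorts of the atoms of $\GammaX$ in order, and which then returns something of type $\hol{}{\sort(X)}$. Since a permutation preserves name-sorts ($a\in\mathbb A_\nu\liff\pi(a)\in\mathbb A_\nu$, Definition~\ref{defn.permutation}), the list $\pi\act\GammaX$ carries exactly the same sequence of name-sorts as $\GammaX$, so this application type-checks and has type $\hol{}{\sort(X)}=\hol{}{\alpha}$; note that the capture typing of Definition~\ref{defn.capture.typing} plays no role here, since well-typedness of the translation holds for any $D$. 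For $\Forall{X}\phi$, recall from Example~\ref{xmpl.why.capturable} that the `$\lam{X}$' written in Figure~\ref{fig.hol.translation} abbreviates $\lam{X_D}$; writing $\beta$ for the type of $X_D$, the induction hypothesis gives $\hol{D}{\phi}:o$, hence $\lam{X_D}\hol{D}{\phi}:\beta\to o$, and applying the constant $\forall_\beta:(\beta\to o)\to o$ (which exists for every type $\beta$) yields $\forall_\beta\,\lam{X_D}\hol{D}{\phi}:o$.

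I do not expect a genuine obstacle: the proposition is essentially bookkeeping, confirming that the translation respects types. The one spot that requires a moment's attention is the moderated-unknown clause, where the declared domain of $X_D$ must be matched against the name-sorts appearing in $\pi\act\GammaX$ --- which works precisely because permutations are sort-preserving --- together with the accompanying point that the bound variable written `$X$' in the $\forall$-clause is in fact the function-typed variable $X_D$.
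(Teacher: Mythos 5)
Your proof is correct and follows the same route as the paper's: a structural induction on $r$ and $\phi$, checking each translation clause of Figure~\ref{fig.hol.translation} against the corresponding HOL typing rule. The paper only sketches three cases (atom, abstraction, moderated unknown); your fuller treatment --- in particular the observation that sort-preservation of $\pi$ is what makes the application $X_D\,(\pi\act\GammaX)$ type-check, and that no capture-typing hypothesis on $D$ is needed --- supplies exactly the details the paper leaves as routine.
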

\begin{proof}
By inductions on $r$ and $\phi$.
\begin{itemize*}
\item
\emph{The case $a\in\mathbb A_\nu$.}\quad
$a:\nu$ by definition.
\item
\emph{The case $[a]r$ where $a\in\mathbb A_\nu$.}\quad
By inductive hypothesis $\hol{D}{r}:\beta$ for some type $\beta$.
It follows that $\hol{D}{[a]r}=\lam{a}\hol{D}{r}:\nu\to\beta$.
\item
\emph{The case $\pi\act X$.}\quad
Suppose $D\cent \pi\act X$.
It is routine to check that $X_\Gamma\pi\act\GammaX:\hol{}{\sort(X)}$.
\qedhere\end{itemize*}
\end{proof}

\subsection{Re-indexing capture contexts}

When we prove soundness of the translation (Theorem~\ref{thrm.soundness}) there will be a problem, because we are interested in proving soundness of translating a sequent $\Phi\nopicent\Psi$ but because we work by induction on derivations $\Pi$ we may have to translate all sequents in $\Pi$, some of which might have `extra' capturable atoms.

We need to translate using a large $\Gamma'$ and then re-index to $\Gamma$:

\begin{defn}
\label{defn.thetaC}
Define a substitution $\inter{\Gamma'\sm\Gamma}$ by:
$$
\begin{array}{l@{\ =\ }l@{\qquad}l}
\inter{\Gamma'\sm\Gamma}(X_{\Gamma'}) & \lam{\GammapX}(X_{\Gamma}\GammaX)
\\
\inter{\Gamma'\sm\Gamma}(Y)&Y &\text{all other}\ Y
\end{array}
$$
\end{defn}

\maketab{tab3}{@{\hspace{-2em}}R{10em}@{\ }L{10em}@{\quad}L{17em}}
\maketab{tab7}{@{\hspace{-2em}}R{10em}@{\ }L{12em}L{14em}}

\begin{thrm}
\label{thrm.sub.composition}
If $D'\cent r:A$ then $\hol{D}{r}\abeq \hol{D'}{r}\inter{D'\sm D}$.

Similarly, if $D'\cent \phi:A$ then $\hol{D}{\phi}\abeq\hol{D'}{\phi}\inter{D'\sm D}$. 
\end{thrm}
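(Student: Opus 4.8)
The plan is to argue by a simultaneous induction on the derivation of the capture typing $D'\cent r:A$ (respectively $D'\cent\phi:A$), with $D$ universally quantified throughout. Since the rules of Figure~\ref{fig.capture.typings} follow the structure of $r$ and $\phi$, inversion on the last rule supplies capture typings for the immediate subterms, with the atom set possibly enlarged — which is exactly what the induction needs, as $A$ grows under abstraction. The congruence-style cases — tuples, term-former and proposition-former applications, implications, and $\bot$ — are routine: $\inter{D'\sm D}$ fixes HOL constants and $\limp$ and commutes with the corresponding HOL term-former, so the conclusion follows from the inductive hypotheses on subterms together with the fact that $\abeq$ is a congruence. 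The case $r=a$ is immediate because $\inter{D'\sm D}$ fixes every HOL variable that is an atom, so $\hol{D'}{a}\inter{D'\sm D}=a=\hol{D}{a}$. In the quantifier case $\Forall{X}\phi$, whose translation is $\forall\,\lam{X}\hol{D}{\phi}$, I would first $\alpha$-rename the bound unknown $X$ (using that there are countably infinitely many unknowns of each sort and permission set) so that the HOL variable it induces lies outside the domain and the free variables of the image of $\inter{D'\sm D}$; then the substitution passes the binder unchanged and the inductive hypothesis (with $D'\cent\phi:A$) applies.

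The two informative cases are $[a]r'$ and $\pi\act X$, and both hinge on extracting information from the hypothesis $D'\cent(-):A$. For $[a]r'$, inversion gives $D'\cent r':A\cup\{a\}$, and since reading a capture-typing derivation upward only enlarges the atom set, every judgement in this derivation carries a set containing $a$; hence at each leaf $\pi\act X$ in it the side condition $(\nontriv(\pi)\cup A'')\cap\pmss(X)\subseteq D'$ forces $a\in D'$ whenever $a\in\pmss(X)$. Consequently, for every unknown $X$ whose induced variable $X_{D'}$ is free in $\hol{D'}{r'}$, the atom $a$ is either bound in $\inter{D'\sm D}(X_{D'})=\lam{D'_X}(X_D\,D_X)$ (when $a\in\pmss(X)$, since then $a\in D'_X$) or absent from it entirely (when $a\notin\pmss(X)$, since then $a\notin D_X$). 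So $a$ is not free in any relevant term of the range of $\inter{D'\sm D}$, whence $(\lam{a}\hol{D'}{r'})\inter{D'\sm D}=\lam{a}(\hol{D'}{r'}\inter{D'\sm D})$ with no $\alpha$-renaming, and the inductive hypothesis closes the case.

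For $\pi\act X$ I compute, using that $\inter{D'\sm D}$ fixes the atoms in $\pi\act D'_X$ and sends $X_{D'}$ to $\lam{D'_X}(X_D\,D_X)$, that $\hol{D'}{\pi\act X}\inter{D'\sm D}=(\lam{D'_X}(X_D\,D_X))\,(\pi\act D'_X)$. This is a $\beta$-redex; after $\alpha$-renaming the $D'_X$-bound variables to be fresh (so that the chain of $\beta$-steps realises the \emph{simultaneous} rather than a sequential substitution) it reduces to $X_D\,\sigma(D_X)$, where $\sigma$ sends $d\mapsto\pi(d)$ for $d\in D'_X$ and fixes every other atom. It remains to check $X_D\,\sigma(D_X)\abeq X_D\,(\pi\act D_X)=\hol{D}{\pi\act X}$, i.e.\ that $\sigma$ agrees with $\pi$ on every atom of $D_X$: the only possible discrepancy is at some $d\in D_X\setminus D'_X$, but the hypothesis $D'\cent\pi\act X:A$ gives $\nontriv(\pi)\cap\pmss(X)\subseteq D'$, hence $D_X\cap\nontriv(\pi)\subseteq\pmss(X)\cap\nontriv(\pi)\subseteq D'\cap\pmss(X)=D'_X$, so any such $d$ is fixed by $\pi$ and $\sigma(d)=d=\pi(d)$. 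Thus $\sigma(D_X)=\pi\act D_X$ and the case is done.

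The main obstacle will be this $\pi\act X$ case: getting the $\beta$-reduction of $(\lam{D'_X}(X_D\,D_X))\,(\pi\act D'_X)$ exactly right — verifying that the nested reductions genuinely implement the simultaneous substitution $\{d\mapsto\pi(d)\}$ even though $\pi\act D'_X$ need not be fresh for $D'_X$ — and lining this up with the observation that the capture-typing side condition on $D'$ is precisely what forces $\sigma$ and $\pi$ to agree on $D_X$ (and, in the $[a]$ case, precisely what stops $\inter{D'\sm D}$ from capturing the abstracted atom). Everything else is bookkeeping over the structure of the capture typing.
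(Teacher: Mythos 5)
Your proof is correct and takes essentially the same route as the paper's: a structural induction whose only substantive cases are $\pi\act X$ and $[a]r$, with the capture-typing side condition $(\nontriv(\pi)\cup A)\cap\pmss(X)\subseteq D'$ doing exactly the work you assign it --- it is what makes $(\lam{D'_X}(X_D\,D_X))\,(\pi\act D'_X)$ reduce to $X_D\,(\pi\act D_X)$ (the paper delegates this computation to the ``fact of $\alpha\beta$-conversion'' recorded as Lemma~\ref{lemm.hol.pi}). The only divergence is minor: in the $[a]r$ case the paper simply $\alpha$-converts so that $a\notin D\cup D'$, whereas you derive non-capture of $a$ directly from the observation that the atom set in a capture-typing derivation only grows upward; both work.
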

\begin{proof}
By inductions on $r$ and $\phi$.
We consider a selection of cases:
\begin{itemize*}
\item 
The case $\pi \act X$.\quad 
We reason as follows:
\begin{tab3}
\hol{D'}{\pi\act X}\inter{D'\sm D}=&(X_{D'}\pi\act\GammapX)\inter{D'\sm D}
& \text{Definition~\ref{defn.translation}}
\\
=&(\lam{\GammapX}(X_{D}\GammaX))\pi\act\GammapX 
&\text{Definition~\ref{defn.thetaC}}
\\
=&X_{D}\pi\act\GammaX &\nontriv(\pi)\cap\pmss(X)\subseteq\GammapX 
\end{tab3}
\item
The case $[a]r$.\quad
We reason as follows:
\begin{tab3}
\hol{D'}{[a]r}\inter{D'\sm D}
=&(\lam{a}\hol{D'}{r})\inter{D'\sm D} &\text{Definition~\ref{defn.translation}}
\\
=&\lam{a}(\hol{D'}{r}\inter{D'\sm D}) &\text{taking $a\not\in D,D'$}
\\
=&\lam{a}(\hol{D}{r}) 
&\text{ind. hyp.}
\\
=&\hol{D}{\lam{a}r} 
&\text{Definition~\ref{defn.translation}}
\end{tab3}
\item
The case $\Forall{X}\phi$.\quad
We reason as follows:
\begin{tab3}
\hol{D'}{\Forall{X}\phi}\inter{D'\sm D}
=&(\forall \lam{X}\hol{D'}{\phi})\inter{D'\sm D}
&\text{Definition~\ref{defn.translation}}
\\
=&\forall \lam{X}(\hol{D'}{\phi}\inter{D'\sm D})
&\text{fact} 
\\
=&\forall \lam{X}\hol{D}{\phi}
&\text{ind. hyp.}
\\
=&\hol{D}{\Forall{X}\phi}
&\text{Definition~\ref{defn.translation}}
\end{tab3}
\end{itemize*}
\end{proof}

\subsection{Soundness of the translation}

Recall that HOL terms have a permutation action $\pi\act t$ given by considering $\pi$ as a permutation on HOL variables and using Definition~\ref{defn.hol.perm}.
Then:
\begin{lemm}
\label{lemm.hol.pi}
If $\f{nontriv}(\pi)\cap\f{fv}(t)\subseteq D$ then $(\lam{D}t)\pi\act D\abeq\pi\act t$ (see Notation~\ref{nttn.finite.lists}).
\end{lemm}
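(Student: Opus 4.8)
The plan is to induct on the length of the list $D$, peeling abstractions and their arguments off from the \emph{right-hand} end, so that the only $\beta$-step ever performed explicitly is of the harmless form ``substitute a variable for itself''.

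For the base case $D=[\,]$ the term $(\lam{D}t)\pi\act D$ is just $t$, and the side-condition degenerates to $\f{nontriv}(\pi)\cap\f{fv}(t)=\varnothing$; that is, $\pi$ fixes every free variable of $t$. Then $\pi\act t\aeq t$, by the standard fact (a routine induction on $t$, in the spirit of Lemma~\ref{lemm.fa.pi.r}) that a permutation which is trivial on the free variables of a term acts as the identity up to $\alpha$-equivalence. Hence $(\lam{D}t)\pi\act D=t\aeq\pi\act t$, which is \emph{a fortiori} $\abeq$.

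For the inductive step write $D=D',a$, so that $\lam{D}t=\lam{D'}(\lam{a}t)$ and $\pi\act D=(\pi\act D'),\pi(a)$, and therefore
$$(\lam{D}t)\pi\act D \;=\; \bigl((\lam{D'}(\lam{a}t))\,\pi\act D'\bigr)\,\pi(a).$$
Since $\f{fv}(\lam{a}t)=\f{fv}(t)\setminus\{a\}$ and $a\notin D'$, the hypothesis $\f{nontriv}(\pi)\cap\f{fv}(t)\subseteq D$ yields $\f{nontriv}(\pi)\cap\f{fv}(\lam{a}t)\subseteq D'$, so the inductive hypothesis applies to $\lam{a}t$ and $D'$ and gives $(\lam{D'}(\lam{a}t))\,\pi\act D'\abeq\pi\act(\lam{a}t)=\lam{\pi(a)}(\pi\act t)$. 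As $\abeq$ is a congruence we obtain $(\lam{D}t)\pi\act D\abeq\bigl(\lam{\pi(a)}(\pi\act t)\bigr)\,\pi(a)$, and one $\beta$-step $\bigl(\lam{\pi(a)}(\pi\act t)\bigr)\,\pi(a)\to_\beta(\pi\act t)[\pi(a)\sm\pi(a)]=\pi\act t$ closes the case.

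The only place real care is needed is the bookkeeping on free-variable sets at each step, together with checking that no reduction triggers a genuine capture; the latter is automatic here because we reduce only the trivial redex $(\lam{X}u)\,X$ at the leaf of the recursion, while the entire block of ``inner'' reductions is handled opaquely by the inductive hypothesis through $\abeq$. This is precisely why the right-to-left order is chosen: the naive left-to-right route --- $\beta$-reducing $(\lam{D}t)\pi\act D$ by first substituting $[d_1\sm\pi(d_1)]$ into $\lam{d_2}\cdots\lam{d_n}t$ --- would force one to confront possible clashes between $\pi(d_1)$ and the remaining atoms of $D$, which is the only point at which a difficulty could arise.
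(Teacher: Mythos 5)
Your proof is correct. It is worth noting that the paper does not actually prove this lemma in situ: its ``proof'' is a one-line deferral to Lemma~9.2 of the permissive-nominal-terms paper \cite{gabbay:perntu-jv}, so you have supplied an argument the paper leaves external. Your argument is a clean, self-contained one, and the organising idea --- peel $D$ from the \emph{right}, so that the induction hypothesis (applied to $\lam{a}t$ and $D'$) absorbs all the inner reductions into an opaque $\abeq$, and the only explicit $\beta$-step is the trivial redex $(\lam{\pi(a)}(\pi\act t))\,\pi(a)\to_\beta(\pi\act t)[\pi(a)\ssm\pi(a)]=\pi\act t$ --- is exactly what makes the capture bookkeeping disappear; a left-to-right reduction would have to confront $\pi(d_1)$ colliding with later atoms of $D$, as you observe. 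The two points that carry the weight are both in order: the base case is the standard support fact that a permutation fixing $\f{fv}(t)$ acts as the identity up to $\alpha$ (decompose $\pi$ into transpositions of atoms it moves, all outside $\f{fv}(t)$, and apply the $\alpha$-rule of Definition~\ref{defn.hol.perm}); and the side-condition propagates because $\f{nontriv}(\pi)\cap(\f{fv}(t)\setminus\{a\})\subseteq D\setminus\{a\}=D'$, using that the atoms of $D$ are distinct. What your route buys is that the statement becomes verifiable inside this paper without chasing the citation; what the paper's route buys is brevity and consistency with the source of the capture-typing machinery.
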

\begin{proof}
A fact of $\alpha\beta$-conversion \cite[Lemma~9.2]{gabbay:perntu-jv}.
\end{proof}

\begin{defn}
Write $r':X$ when $r':\sort(X)$ and $\fa(r')\subseteq\pmss(X)$.
\end{defn}

\begin{lemm}
\label{lemm.hol.sub}
Suppose $D\cent r$ and $D\cent\phi$.
Suppose $r':X$.
Then:
\begin{itemize*}
\item
$\hol{D}{r[X\ssm r']} \abeq \hol{D}{r}[X\ssm \lam{\GammaX}\hol{D}{r'}]$.
\item
$\hol{D}{\phi[X\ssm r']} \abeq \hol{D}{\phi}[X\ssm \lam{\GammaX}\hol{D}{r'}]$.
\end{itemize*}
\end{lemm}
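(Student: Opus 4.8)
The plan is to prove, by simultaneous induction on $r$ and $\phi$, the slightly more general claim that whenever $D\cent r:A$ (for \emph{any} finite set of atoms $A$, not merely $A=\varnothing$) and $r':X$, we have $\hol{D}{r[X\ssm r']}\abeq\hol{D}{r}[X\ssm\lam{\GammaX}\hol{D}{r'}]$, and similarly for $\phi$. The generalisation to arbitrary $A$ is forced: the capture-typing rule for $[a]r$ passes from $D\cent[a]r:A$ to $D\cent r:A,a$, so the inductive hypothesis must tolerate nonempty $A$; since $\hol{D}{-}$ does not depend on $A$, the statement to be proved is unaffected. Throughout we use that $\abeq$ is a congruence, that $\hol{D}{-}$ is $\alpha$-invariant, and that $a\in\f{fv}(\hol{D}{r})$ implies $a\in\fa(r)$ (Lemma~\ref{lemm.hol.gamma.fa}); note also that $D$ is fixed throughout a translation, so every occurrence of the unknown $X$ in $r$ translates to an application of the single HOL variable $X_D$, and the substitution $[X\ssm-]$ in the statement is substitution for $X_D$.

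The base cases $a$ and $\bot$, and the congruence cases $\tf f(r)$, $(r_1,\dots,r_n)$, $\tf P(r)$ and $\phi\limp\psi$, are immediate, since level-2 substitution and $\hol{D}{-}$ both commute with these formers, so one applies the inductive hypotheses and congruence of $\abeq$. For $[a]r$ one first $\alpha$-renames so that the bound atom $a$ is fresh for $r'$ (legitimate by $\alpha$-invariance of the translation and of capture typing); then $a\notin\f{fv}(\lam{\GammaX}\hol{D}{r'})$ by Lemma~\ref{lemm.hol.gamma.fa}, and since $([a]r)[X\ssm r']=[a](r[X\ssm r'])$ and $\hol{D}{[a]r}=\lam{a}\hol{D}{r}$, the substitution slides under $\lam{a}$ and we invoke the inductive hypothesis for $r$ at $A,a$ (using $D\cent r:A,a$). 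The case $\Forall{X'}\phi$ with $X'\neq X$ is handled the same way after $\alpha$-renaming $X'$ away from $X$ and from $\f{fV}(r')$, so that $(\Forall{X'}\phi)[X\ssm r']=\Forall{X'}(\phi[X\ssm r'])$ and the HOL binder $\lam{X'}$ is fresh for $\lam{\GammaX}\hol{D}{r'}$. For $\pi\act Y$ with $Y\neq X$ both sides are literally $Y_D\,\pi\act\GammaY$, since the substitution for $X_D$ touches neither $Y_D$ nor any atom.

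The crux is the case $\pi\act X$, where $D\cent\pi\act X:A$ means $(\f{nontriv}(\pi)\cup A)\cap\pmss(X)\subseteq D$, hence $\f{nontriv}(\pi)\cap\pmss(X)\subseteq D\cap\pmss(X)=\GammaX$. On the left, $(\pi\act X)[X\ssm r']=\pi\act r'$, so $\hol{D}{(\pi\act X)[X\ssm r']}=\hol{D}{\pi\act r'}=\pi\act\hol{D}{r'}$ by Lemma~\ref{lemm.hol.gamma.fa}. On the right, recalling $\hol{D}{\pi\act X}=X_D\,\pi\act\GammaX$, we get $\hol{D}{\pi\act X}[X\ssm\lam{\GammaX}\hol{D}{r'}]=(X_D\,\pi\act\GammaX)[X_D\ssm\lam{\GammaX}\hol{D}{r'}]=(\lam{\GammaX}\hol{D}{r'})(\pi\act\GammaX)$. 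So it remains to show $(\lam{\GammaX}\hol{D}{r'})(\pi\act\GammaX)\abeq\pi\act\hol{D}{r'}$, which is exactly Lemma~\ref{lemm.hol.pi} applied with the list $\GammaX$ and the term $\hol{D}{r'}$, provided $\f{nontriv}(\pi)\cap\f{fv}(\hol{D}{r'})\subseteq\GammaX$. Here is where the two hypotheses combine: $\f{nontriv}(\pi)$ consists of atoms, and the atoms in $\f{fv}(\hol{D}{r'})$ lie in $\fa(r')$ by Lemma~\ref{lemm.hol.gamma.fa}, while $\fa(r')\subseteq\pmss(X)$ because $r':X$; hence $\f{nontriv}(\pi)\cap\f{fv}(\hol{D}{r'})\subseteq\f{nontriv}(\pi)\cap\pmss(X)\subseteq\GammaX$ by the capture typing.

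I expect the only genuine work to be this verification of the side-condition of Lemma~\ref{lemm.hol.pi} in the $\pi\act X$ case, together with the bookkeeping around $\alpha$-renaming bound atoms and unknowns in the $[a]r$ and $\Forall{X'}\phi$ cases so that the substitutions distribute through the binders; everything else is routine structural induction, and the use of $\abeq$ rather than $=$ is needed precisely because the $\pi\act X$ case produces a $\beta$-redex $(\lam{\GammaX}\hol{D}{r'})(\pi\act\GammaX)$.
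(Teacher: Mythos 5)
Your proof is correct and follows essentially the same route as the paper's: a structural induction whose only non-routine case is $\pi\act X$, discharged by reducing to $(\lam{\GammaX}\hol{D}{r'})(\pi\act\GammaX)\abeq\pi\act\hol{D}{r'}$ and verifying the side-condition of Lemma~\ref{lemm.hol.pi} via Lemma~\ref{lemm.hol.gamma.fa}, the hypothesis $\fa(r')\subseteq\pmss(X)$, and the capture-typing constraint. The paper leaves the rest as ``routine''; your explicit strengthening of the inductive statement to arbitrary $A$ (needed for the $[a]r$ case) is a correct and worthwhile piece of bookkeeping that the paper glosses over.
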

\begin{proof}
By routine inductions on $r$ and $\phi$.
We sketch two cases:
\begin{itemize}
\item
\emph{The case $(\pi\act X)[X\ssm r']$.}\quad
We must prove that 
$$
\hol{D}{\pi\act r'}\abeq \bigl(\lam{\GammaX}\hol{D}{r'}\bigr)\pi\act\GammaX.
$$
This follows by Lemmas~\ref{lemm.hol.gamma.fa} and~\ref{lemm.hol.pi}.
\item
\emph{The case $\tf P(r)[X\ssm r']$.}\quad
We must prove that 
$$
\hol{D}{\tf P(r[X\ssm r'])}\abeq \tf g_{\smtf P}(\hol{D}{r})[X\ssm\lam{\GammaX}\hol{D}{r'}].
$$
This follows directly from the first part.
\qedhere\end{itemize}
\end{proof}

\begin{prop}
\label{prop.hol.forall.sound}
Suppose $D\cent\phi$ and $D\cent r':X$. 
Then $\hol{D}{\Forall{X}\phi}\holcent \hol{D}{\phi[X\ssm r']}$.
\end{prop}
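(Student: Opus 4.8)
The plan is to unfold the translation on the left-hand side of the sequent, apply the HOL left-rule for $\forall$ with a suitable witness, and then reconcile the right-hand side with $\hol{D}{\phi[X\ssm r']}$ via Lemma~\ref{lemm.hol.sub}. By Figure~\ref{fig.hol.translation} we have $\hol{D}{\Forall{X}\phi}=\forall\,\lam{X_D}\hol{D}{\phi}$, where $X_D$ is the HOL variable associated with the unknown $X$ and the list $D$ by Definition~\ref{defn.hol.translation}, of type $\nu_{\GammaX}\to\hol{}{\sort(X)}$; up to the forced choice of $\forall$-constant this is the HOL proposition $\Forall{X_D}\hol{D}{\phi}$. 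As witness I would take $t=\lam{\GammaX}\hol{D}{r'}$. Since $r':X$ gives in particular $r':\sort(X)$, Proposition~\ref{prop.typable.hol.gamma.r} yields $\hol{D}{r'}:\hol{}{\sort(X)}$, and as the atoms listed in $\GammaX$ have sorts $\nu_{\GammaX}$ we get $t:\nu_{\GammaX}\to\hol{}{\sort(X)}=\type(X_D)$, so $t$ is a legal instantiation for $X_D$ in $\rulefont{h{\forall}L}$.

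Next I would build the HOL derivation. From the instance $\hol{D}{\phi}[X_D\ssm t]\holcent\hol{D}{\phi}[X_D\ssm t]$ of $\rulefont{hAx}$, one application of $\rulefont{h{\forall}L}$ with witness $t$ gives $\Forall{X_D}\hol{D}{\phi}\holcent\hol{D}{\phi}[X_D\ssm t]$, i.e. $\hol{D}{\Forall{X}\phi}\holcent\hol{D}{\phi}[X_D\ssm\lam{\GammaX}\hol{D}{r'}]$. Finally, Lemma~\ref{lemm.hol.sub}, whose hypotheses $D\cent\phi$ and $r':X$ are exactly our assumptions, gives $\hol{D}{\phi}[X_D\ssm\lam{\GammaX}\hol{D}{r'}]\abeq\hol{D}{\phi[X\ssm r']}$; replacing the right-hand side of the derived sequent, using that $\holcent$ is closed under $\abeq$, delivers $\hol{D}{\Forall{X}\phi}\holcent\hol{D}{\phi[X\ssm r']}$.

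The argument carries little mathematical weight; the care is all bookkeeping. One must see that the $\forall$ produced by translating $\Forall{X}\phi$ is a genuine HOL quantifier over $X_D$ of type $\nu_{\GammaX}\to\hol{}{\sort(X)}$, so that $\rulefont{h{\forall}L}$ applies with the partially $\lambda$-abstracted witness $\lam{\GammaX}\hol{D}{r'}$---this is the one place the full hypothesis $r':X$, rather than merely $D\cent\phi$, is used, namely to get the witness well-typed. The \emph{main obstacle}, such as it is, is the last step: passing from $\hol{D}{\phi}[X_D\ssm\lam{\GammaX}\hol{D}{r'}]$ to $\hol{D}{\phi[X\ssm r']}$ only holds up to $\abeq$, because the substitution turns each subterm $X_D\pi\act\GammaX$ of $\hol{D}{\phi}$ into the $\beta$-redex $(\lam{\GammaX}\hol{D}{r'})\pi\act\GammaX$, and it is exactly the contraction of these redexes that Lemma~\ref{lemm.hol.sub} packages. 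So the one real prerequisite is that HOL derivability respects $\alpha\beta$-conversion of the propositions occurring in a sequent; granting that---as the treatment of HOL provides---the proof goes through routinely.
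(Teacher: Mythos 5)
Your proof is correct and follows exactly the route the paper takes: its own proof is the one-liner ``Using Lemma~\ref{lemm.hol.sub} and \rulefont{h{\forall}L}'', which is precisely your instantiation of $X_D$ with the witness $\lam{\GammaX}\hol{D}{r'}$ followed by the $\abeq$-reconciliation. Your explicit remark that the final step needs derivability to be stable under $\alpha\beta$-conversion is a point the paper leaves implicit, but it is the same argument.
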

\begin{proof}
Using Lemma~\ref{lemm.hol.sub} and \rulefont{h\forall L} from Figure~\ref{hol.Seq}.
\end{proof}

\begin{frametxt}
\begin{thrm}
\label{thrm.soundness}
The interpretation is sound: if $\Phi\nopicent\Psi$ and $D\cent\Phi$ and $D\cent\Psi$ then $\hol{D}{\Phi}\holcent\hol{D}{\Psi}$.
\end{thrm}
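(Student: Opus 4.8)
The plan is to proceed by induction on the derivation of $\Phi\nopicent\Psi$ in restricted PNL, following Figure~\ref{rSeq}. For each rule, I would translate the conclusion sequent using the given capture context $D$, and translate the premise sequents using the \emph{same} $D$ where possible, invoking the corresponding HOL rule from Figure~\ref{hol.Seq}. The two subtleties are: (i) the premises of a rule need not satisfy a capture typing with the same $D$ as the conclusion, so I will need Theorem~\ref{thrm.sub.composition} to re-index between a larger $D'$ and $D$; and (ii) the $\rulefont{\forall L}$ rule instantiates an unknown with a term $r$, which must be handled by the substitution lemmas.

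First I would dispatch the easy structural cases. For $\rulefont{Ax^\nopi}$, the conclusion $\Phi,\phi\nopicent\phi,\Psi$ translates to $\hol{D}{\Phi},\hol{D}{\phi}\holcent\hol{D}{\phi},\hol{D}{\Psi}$, which is exactly $\rulefont{hAx}$; note this is precisely where full PNL's $\rulefont{Ax}$ with its `$\pi$' would fail, since $\hol{D}{\pi\act\phi}=\pi\act\hol{D}{\phi}$ by Lemma~\ref{lemm.hol.gamma.fa} need not be $\alpha$-equal to $\hol{D}{\phi}$. For $\rulefont{\bot L}$ the translation gives $\hol{D}{\Phi},\bot\holcent\hol{D}{\Psi}$, which is $\rulefont{h\bot L}$. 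For $\rulefont{{\limp}L}$, $\rulefont{{\limp}R}$, $\rulefont{{\forall}R}$, the translation commutes with the connectives by Definition~\ref{defn.translation} ($\hol{D}{\phi\limp\psi}={\limp}\hol{D}{\phi}\hol{D}{\psi}$ and $\hol{D}{\Forall{X}\phi}=\forall\,\lam{X}\hol{D}{\phi}$), and the side-condition $X\notin\f{fV}(\Phi,\Psi)$ transfers because $\f{fV}$ of a translated proposition is contained in (indeed computed from) the unknowns of the original; so these reduce directly to $\rulefont{h{\limp}L}$, $\rulefont{h{\limp}R}$, $\rulefont{h{\forall}R}$, modulo checking that $D\cent\Phi$ and $D\cent\Psi$ propagate to the premises, which is immediate from Figure~\ref{fig.capture.typings} since the capture-typing rules for $\limp$, $\forall$ just pass $D$ and $A=\varnothing$ to subformulas.

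The real work is the $\rulefont{{\forall}L}$ case: from $\Phi,\phi[X\ssm r]\nopicent\Psi$ with $\fa(r)\subseteq\pmss(X)$, $r:\sort(X)$, conclude $\Phi,\Forall{X}\phi\nopicent\Psi$. Here $D\cent\Forall{X}\phi$ gives $D\cent\phi$, but we have no control over a capture typing for $r$, hence none for $\phi[X\ssm r]$ with context $D$. So I would choose a sufficiently large $D'\supseteq D$ with $D'\cent\phi$, $D'\cent r$, $D'\cent\Phi$, $D'\cent\Psi$ (enlarging $D$ to absorb $\nontriv$ of any permutations occurring in $r$ intersected with the relevant permission sets — such a finite $D'$ exists). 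By the inductive hypothesis applied at $D'$, $\hol{D'}{\Phi},\hol{D'}{\phi[X\ssm r]}\holcent\hol{D'}{\Psi}$. Lemma~\ref{lemm.hol.sub} rewrites $\hol{D'}{\phi[X\ssm r]}$ as $\hol{D'}{\phi}[X\ssm\lam{\GammaX}\hol{D'}{r}]$, so by $\rulefont{h{\forall}L}$ we get $\hol{D'}{\Phi},\hol{D'}{\Forall{X}\phi}\holcent\hol{D'}{\Psi}$ (using Proposition~\ref{prop.typable.hol.gamma.r} to see $\lam{\GammaX}\hol{D'}{r}$ has the type $\type(X_{D'})$). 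Finally I apply the substitution $\inter{D'\sm D}$ throughout: by Theorem~\ref{thrm.sub.composition}, each $\hol{D'}{\chi}\inter{D'\sm D}\abeq\hol{D}{\chi}$ for $\chi\in\Phi\cup\{\Forall{X}\phi\}\cup\Psi$, and since HOL derivability is closed under substitution into free variables and under $\alpha\beta$-conversion of propositions, this yields $\hol{D}{\Phi},\hol{D}{\Forall{X}\phi}\holcent\hol{D}{\Psi}$, as required. The main obstacle I anticipate is precisely this re-indexing bookkeeping: making sure the enlarged $D'$ simultaneously capture-types every formula in the premise (including $r$), that $\inter{D'\sm D}$ acts as the identity on all unknowns other than those of the form $Z_{D'}$ actually appearing, and that closure of $\holcent$ under $\inter{D'\sm D}$ together with $\abeq$ is legitimate — which needs a small lemma that $\holcent$ is preserved by capture-avoiding substitution and by replacing propositions with $\alpha\beta$-equivalent ones (the latter is standard for HOL but should be stated).
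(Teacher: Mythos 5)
Your proof is correct and takes essentially the same route as the paper's: induction on the derivation, with Theorem~\ref{thrm.sub.composition} supplying the re-indexing between capture contexts and Lemma~\ref{lemm.hol.sub} (packaged in the paper as Proposition~\ref{prop.hol.forall.sound}) doing the work in the \rulefont{{\forall}L} case. The only organisational difference is that the paper fixes a single $D'$ capture-typing every sequent of the whole derivation and applies $\inter{D'\sm D}$ once at the root, whereas you re-index locally at each \rulefont{{\forall}L} step (which additionally requires $D'\cent\phi[X\ssm r]$ for the inductive hypothesis, arrangeable by taking $D'$ large enough); the closure of $\holcent$ under substitution and $\abeq$-conversion that you rightly flag as needing to be stated is likewise left implicit in the paper's proof.
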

\end{frametxt}
\begin{proof}
Choose $D'$ such that $D'\cent\Psi'$ and $D'\cent\Phi'$ for every sequent $\Psi'\cent\Phi'$ appearing in $\Pi$---it is not hard to verify that some such $D'$ must exist.

It is routine to verify by induction on $\Pi$ that $\hol{D'}{\Phi'}\holcent\hol{D'}{\Psi'}$ is derivable; the case of \rulefont{\forall R} uses Proposition~\ref{prop.hol.forall.sound}.
So in particular $\hol{D'}{\Phi}\holcent\hol{D}{\Psi'}$.

It follows, applying the substitution $\inter{D'\sm D}$ to both sides and using Theorem~\ref{thrm.sub.composition}, that $\hol{D}{\Phi}\holcent\hol{D}{\Psi}$.
\end{proof}

\begin{lemm}
\label{lemm.it.has.to.be}
The interpretation for full PNL (Figure~\ref{Seq}, with the stronger axiom rule) would not be sound.
That is, there exist $\Phi$ and $\Psi$ and $D$ such that $D\cent\Phi$, $D\cent\Psi$, and $\Phi\cent\Psi$, but $\hol{D}{\Phi}\not\holcent\hol{D}{\Psi}$.
\end{lemm}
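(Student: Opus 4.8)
The plan is to re-use the counterexample already flagged in Section~\ref{sect.pnl}. I would take the PNL signature with one name-sort $\nu$, two atoms $a,b\in\mathbb A_\nu$, and a proposition-former $\tf P:\nu$, and set $\Phi=\{\tf P(a)\}$, $\Psi=\{\tf P(b)\}$, with $D$ the empty list of atoms. The point is that the permutation $(a\ b)$ makes $\tf P(a)$ and $\tf P(b)$ interderivable in full PNL, but no such symmetry is available once $a$ and $b$ have been translated to (distinct, unrelated) HOL variables.

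First I would discharge the three side-conditions. $D\cent\tf P(a)$ and $D\cent\tf P(b)$ hold because an atom has capture type $A$ for every $D$ and $A$ (the leftmost rule of Figure~\ref{fig.capture.typings}) and the rule for $\tf P(-)$ then applies; hence $D\cent\Phi$ and $D\cent\Psi$. And $\tf P(a)\cent\tf P(b)$ is an instance of \rulefont{Ax} from Figure~\ref{Seq} with $\pi=(a\ b)$, since $(a\ b)\act\tf P(a)=\tf P((a\ b)\act a)=\tf P(b)$ by Definition~\ref{defn.permutation.action}. Next I would unwind the translation: by Figure~\ref{fig.hol.translation}, $\hol{D}{\tf P(a)}=\tf g_{\smtf P}\, a$ and $\hol{D}{\tf P(b)}=\tf g_{\smtf P}\, b$, where by Definition~\ref{defn.TS} $a$ and $b$ are \emph{distinct} HOL variables and $\tf g_{\smtf P}$ is a constant. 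So the lemma reduces to showing $\tf g_{\smtf P}\, a\not\holcent\tf g_{\smtf P}\, b$.

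For that I would appeal to soundness of the sequent calculus of Figure~\ref{hol.Seq} for ordinary classical (truth-value) semantics: every rule preserves validity of a sequent under an arbitrary valuation into a standard set-theoretic model, so it suffices to exhibit one model and valuation falsifying $\tf g_{\smtf P}\, a\holcent\tf g_{\smtf P}\, b$. Interpret the base type associated to $\nu$ as $\{0,1\}$, interpret $o$ as $\{0,1\}$ with $\bot,\limp,\forall_\beta$ interpreted as usual, send $a\mapsto 0$ and $b\mapsto 1$, and interpret $\tf g_{\smtf P}$ as the function $0\mapsto 1$, $1\mapsto 0$; then $\tf g_{\smtf P}\, a$ denotes $1$ and $\tf g_{\smtf P}\, b$ denotes $0$, so the sequent is not valid and hence not derivable. (Once the interpretation of HOL in \theory{PmsRen} is available in Section~\ref{sect.interp.hol} one could quote its soundness instead; the two-element model is just the most economical witness, and one could also argue proof-theoretically, noting that a cut-free HOL derivation of $\tf g_{\smtf P}\, a\holcent\tf g_{\smtf P}\, b$ is impossible since neither side matches \rulefont{hAx} nor is headed by $\bot$, $\limp$ or $\forall$ — but that route needs cut-elimination for HOL, which is not developed here.)

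The only step with any real content is this last non-derivability claim, and even it is routine given soundness of HOL; the rest is bookkeeping. The moral, which mirrors the discussion after Figure~\ref{rSeq}, is precisely the gap between \rulefont{Ax^\nopi} and \rulefont{Ax}: the permutation $(a\ b)$ that witnesses $\tf P(a)\cent\tf P(b)$ has no image under the translation, where $a$ and $b$ become unrelated free variables, so the extra derivational strength of full PNL cannot survive — which is exactly why Theorem~\ref{thrm.soundness} had to be stated for restricted PNL.
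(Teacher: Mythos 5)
Your proposal is correct and uses exactly the counterexample in the paper's own proof: $\tf P(a)\cent\tf P(b)$ via \rulefont{Ax} with $\pi=(a\ b)$, translated to $\tf g_{\smtf P}\,a\not\holcent\tf g_{\smtf P}\,b$. The only difference is that you spell out the non-derivability of the HOL sequent with an explicit two-element countermodel, a detail the paper simply asserts.
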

\begin{proof}
Consider a name sort $\nu$ and a unary predicate $\tf P:\nu$.
Then $\tf P(a)\cent\tf P(b)$ in full PNL, but it is not the case that $\tf g_{\tf P}a \cent\tf g_{\tf P}b$ in HOL. 
\end{proof}

\section{Semantics}
\label{sect.semantics}

For the reader's convenience we will clarify one aspect of the coming notation now: if the reader sees $\ns X$ this is a set with a permutation action; if the reader sees $\rs X$ this is a set with a renaming action.
There is no particular connection between $\ns X$ and $\rs X$.

A typical renaming is $[a\ssm b]$ (instead of a typical permutation $(a\ b)$).
Formal definitions are in Definition~\ref{defn.permutation} and~\ref{defn.renaming}.

The reader may not be surprised by the use of sets with a permutation action---nominal techniques are based on these \cite{gabbay:newaas-jv}.
But why the renaming action?
We need renamings to make a function out of an atoms-abstraction, mirroring the clause $\hol{D}{[a]r}=\lam{a}\hol{D}{r}$ in Definition~\ref{defn.translation}.

In PNL models, an abstraction $[a]r$ is modelled as Gabbay-Pitts atoms-abstraction $[a]x$, a sets-based construction from \cite{gabbay:newaas-jv} (Definition~\ref{defn.abstraction.sets}, in this paper).
This is constructed like a pair, from $a$ and $x$, but destructed like a \emph{partial function} the graph of which is evident in Definition~\ref{defn.abstraction.sets}.
It is defined for fresh $b$ but not for $b\in\supp(x)\setminus\{a\}$.

When we translate $[a]r$ to HOL we interpret $[a]r$ as a function using $\lambda$-abstraction.
This suggests of our models that we translate a \emph{partial} function $[a]x$ to a total function. 
But then we have to give meaning to $[a]x$ applied to $b$ where $b$ is not fresh.
This is where renaming sets are used.

We can then conclude by noting that every model of PNL can be transformed into a model of HOL, and in a compositional manner (Lemma~\ref{lemm.commuting.square}).
Completeness quickly follows.

\subsection{Categories of finitely-supported permutation and renaming sets}

\subsubsection{Permutation and renaming sets}

\begin{defn}
\label{defn.renaming}
Suppose $\rho$ is a map from $\mathbb A$ to $\mathbb A$.
Define $\dom(\rho)$ and $\img(\rho)$ by
$$
\dom(\rho)=\{a\mid \rho(a)\neq a\}
\quad\text{and}\quad
\img(\rho)=\{\rho(a)\mid a\in\dom(\rho)\} .
$$
Echoing Definition~\ref{defn.permutation}, a \deffont{renaming} is a map $\rho$ from $\mathbb A$ to $\mathbb A$ such that $a\in\mathbb A_\nu\liff \rho(a)\in\mathbb A_\nu$ and 
$\f{nontriv}(\rho)=\dom(\rho)\cup\img(\rho)$ is finite. 
Write $\mathbb R$ for the set of renamings.

For $a,b\in\mathbb A_\nu$ let an \deffont{atomic renaming} $[a\ssm b]$ map $a$ to $b$, $b$ to $b$, and other $c$ to themselves.

$\rho$ will range over renamings.
\end{defn}

\begin{frametxt}
\begin{defn}
\label{defn.perm.set}
\begin{itemize*}
\item
A \deffont{permutation set} is a pair $\ns X=(|\ns X|,\act)$ of an \deffont{underlying set} $|\ns X|$ and a \deffont{permutation action} $(\mathbb P\times|\ns X|)\to |\ns X|$ which is a group action; write it infix.

(So $\id\act x=x$ and $\pi\act(\pi'\act x)=(\pi\circ\pi')\act x$.) 
\item
A \deffont{renaming set} is a pair $\rs X=(|\rs X|,\act)$ of an \deffont{underlying set} $|\rs X|$ and a \deffont{renaming action} $(\mathbb R\times|\rs X|)\to |\rs X|$ which is a monoid action; write it infix.

(So $\id\act x=x$ and $\rho\bigact(\rho'\bigact x)=(\rho\circ\rho')\bigact x$.) 
\end{itemize*}
\end{defn}
\end{frametxt}

\begin{defn}
\label{defn.finsupp}
\begin{itemize*}
\item
Suppose $\ns X$ is a permutation set.
Say that $A\subseteq \mathbb A$ \deffont{supports} $x\in|\ns X|$ when for all $\pi,\pi'\in\mathbb P$, if $\Forall{a\in A}\pi(a)=\pi'(a)$ then $\pi\act x=\pi'\act x$.
\item
Suppose $\rs X$ is a renaming set.
Say that $A\subseteq \mathbb A$ \deffont{supports} $x\in|\rs X|$ when for all $\rho,\rho'\in\mathbb P$, if $\Forall{a\in A}\rho(a)=\rho'(a)$ then $\rho\bigact x=\rho'\bigact x$.
\end{itemize*}
\end{defn}

\begin{lemm}
If $x\in |\ns X|/|\rs X|$ has a supporting permission set (Definition~\ref{defn.atoms}) then it has a unique least supporting set which is equal to the intersection of all permission sets supporting $x$.
We call this the \deffont{support} of $x$ when it exists, and write it $\supp(x)$.
\end{lemm}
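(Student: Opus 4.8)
The plan is to show two things: first that the family of permission sets supporting a given $x$ (assumed nonempty) is closed under finite intersection — or at least that finite intersections of supporting permission sets are again supporting — and second that the intersection over all of them is again supporting, at which point it is automatically the least supporting set and we are done. The underlying set-theoretic engine is the standard ``intersection of supports is a support'' argument from nominal techniques, adapted to permissive-nominal sets where supporting sets are infinite but of the restricted shape $(\atomsdown\cup A)\setminus B$.

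First I would prove the binary case: if $S_1$ and $S_2$ are permission sets supporting $x$, then $S_1\cap S_2$ supports $x$. Note $S_1\cap S_2$ is again a permission set, since the intersection of two sets of the form $(\atomsdown\cup A_i)\setminus B_i$ has that same form (finitely many atoms added to $\atomsdown$, finitely many removed). To see $S_1\cap S_2$ supports $x$, take $\pi,\pi'$ agreeing on $S_1\cap S_2$; I want $\pi\act x=\pi'\act x$, equivalently $(\pi'^{\mone}\circ\pi)\act x=x$, so it suffices to show: if $\tau\in\mathbb P$ fixes $S_1\cap S_2$ pointwise then $\tau\act x=x$. Here one uses the classical trick: choose a permutation $\tau'$ that agrees with $\tau$ on $S_1$ and with $\id$ on $S_2$. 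Such a $\tau'$ exists because $\tau$ moves only finitely many atoms, $S_1\setminus S_2$ is (relevant part) finite off $\atomsdown$ — more carefully, $\nontriv(\tau)\cap S_2\subseteq S_1$, so the atoms where we need $\tau'$ to disagree with $\id$ all lie outside $S_2$, and we have infinitely many fresh atoms available (permission sets are co-infinite, by the remark after Definition~\ref{defn.atoms}) to route the cycles of $\tau$ through, keeping $\tau'$ a finite permutation. Then $\tau'$ agrees with $\tau$ on $S_1$, so $\tau'\act x=\tau\act x$; and $\tau'$ agrees with $\id$ on $S_2$, so $\tau'\act x=x$; hence $\tau\act x=x$.

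By induction the intersection of any finite collection of supporting permission sets supports $x$. For the infinite case, let $S=\bigcap\{T\mid T\text{ a permission set supporting }x\}$ (this is nonempty by hypothesis that some supporting permission set exists — well, $S$ need not be a permission set, but that is fine, we only need it to be a set of atoms). Suppose $\pi,\pi'$ agree on $S$; I must show $\pi\act x=\pi'\act x$. As before it suffices to show any $\tau$ fixing $S$ pointwise fixes $x$. Since $\nontriv(\tau)$ is finite, $\nontriv(\tau)\cap S=\varnothing$ means each atom in $\nontriv(\tau)$ lies outside $S$, hence for each such atom $a$ there is a supporting permission set $T_a$ with $a\notin T_a$; taking $T=\bigcap_{a\in\nontriv(\tau)}T_a$, a finite intersection, $T$ is a supporting permission set with $\nontriv(\tau)\cap T=\varnothing$, so $\tau$ fixes $T$ pointwise, so $\tau\act x=x$. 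This shows $S$ supports $x$; being contained in every supporting permission set, it is the least such set, and equals the stated intersection. Uniqueness is immediate since any two least supporting sets contain each other.

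The main obstacle is the finite-permutation bookkeeping in the binary case: constructing $\tau'$ agreeing with $\tau$ on $S_1$ and with $\id$ on $S_2$ while remaining in $\mathbb P$ (finite support, sort-preserving). One must check that the atoms on which $\tau'$ is forced to differ from $\id$ — namely those in $\nontriv(\tau)$ together with the auxiliary fresh atoms used to break up cycles that straddle $S_1$ and the complement of $S_2$ — form a finite set, which they do because $\nontriv(\tau)$ is finite and only finitely many auxiliary atoms are needed; and that enough fresh atoms of the right sort exist, which holds because each $\mathbb A_\nu$ is countably infinite and permission sets are co-infinite. Everything else is routine. (In the paper's setting this is essentially Lemma-level folklore, so I would expect the authors to cite the corresponding fact from the permissive-nominal literature rather than rederive it, but the argument above is the content.)
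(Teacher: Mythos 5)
The paper states this lemma without giving a proof (it is treated as folklore adapted from the nominal-sets literature), so the only question is whether your argument is actually correct. Your overall strategy---binary intersections first, then the arbitrary intersection via finiteness of $\nontriv(\tau)$---is the right one, and the second half of your argument (any $\tau$ fixing $\bigcap T$ pointwise already fixes pointwise some \emph{finite} intersection of supporting permission sets, which is again a supporting permission set) is fine.

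The gap is in the binary case. The permutation $\tau'$ you want---agreeing with $\tau$ on $S_1$ and with $\id$ on $S_2$---need not exist, and no amount of routing cycles through fresh atoms repairs this, because the obstruction lies in the image rather than the domain. Concretely: take $a\in S_1\setminus S_2$ and $b\in S_2\setminus S_1$ of the same sort and let $\tau=(a\ b)$, which fixes $S_1\cap S_2$ pointwise. Agreement with $\tau$ on $S_1$ forces $\tau'(a)=b$, while agreement with $\id$ on $S_2$ forces $\tau'(b)=b$; so $\tau'$ cannot be injective and no such permutation exists. (Relatedly, your inclusion $\nontriv(\tau)\cap S_2\subseteq S_1$ is not what follows from $\tau$ fixing $S_1\cap S_2$ pointwise; what holds is $\nontriv(\tau)\cap S_1\cap S_2=\varnothing$.) The step is fixable by either of two standard devices. (i) Reduce to swappings: the permutations fixing $S_1\cap S_2$ pointwise are generated by swappings $(a\ b)$ with $a,b\notin S_1\cap S_2$; in the mixed case $a\notin S_1$, $b\notin S_2$ choose $c\notin S_1\cup S_2$ of the same sort (possible because $S_1\cup S_2$ is again a permission set, hence co-infinite in each $\mathbb A_\nu$) and write $(a\ b)=(a\ c)\circ(b\ c)\circ(a\ c)$, where each factor fixes $x$ since it moves only atoms outside $S_1$ or only atoms outside $S_2$. (ii) Conjugate: choose $\rho$ fixing $S_1$ pointwise that maps $\nontriv(\tau)\setminus S_1$ to atoms fresh for $S_1\cup S_2\cup\nontriv(\tau)$; then $\rho\act x=x$ and $\rho\circ\tau\circ\rho^{\mone}$ fixes $S_2$ pointwise, whence $\tau\act x=x$. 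With either repair the rest of your argument goes through unchanged.
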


\begin{frametxt}
\begin{defn}
\begin{itemize*}
\item
Call $x\in |\ns X|/|\rs X|$ \deffont{supported} when $\supp(x)$ exists.
\item
Call $\ns X$/$\rs X$ \deffont{supported} when every element $x\in|\ns X|/|\rs X|$ is supported. 
\end{itemize*}
\end{defn}
\end{frametxt}


\begin{lemm}
\label{lemm.supp.subsets}
\begin{itemize*}
\item
If $x\in|\ns X|$ then $\supp(\pi\act x)=\pi\act\supp(x)$.
\item
If $x\in|\rs X|$ then $\supp(\rho\bigact x)\subseteq\rho\bigact\supp(x)$.

As a corollary, if $\rho$ is injective on $\supp(x)$ then $\supp(\rho\bigact x)=\rho\bigact\supp(x)$.
\end{itemize*}
\end{lemm}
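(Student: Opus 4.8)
The plan is to prove the two clauses separately. For the first clause I will use the symmetric observation that, for any $A\subseteq\mathbb A$ and $\pi\in\mathbb P$, $A$ supports $x\in|\ns X|$ if and only if $\pi\act A$ supports $\pi\act x$: the forward direction is immediate from the group action (if $\pi_1,\pi_2$ agree on $\pi\act A$ then $\pi_1\circ\pi$ and $\pi_2\circ\pi$ agree on $A$, so $\pi_1\act(\pi\act x)=(\pi_1\circ\pi)\act x=(\pi_2\circ\pi)\act x=\pi_2\act(\pi\act x)$), and the converse is this applied to $\pi^{-1}$ and $\pi\act x$. Since $x$ is supported it has a supporting permission set $S$, and $\pi\act S$ is again a permission set (it agrees with $S$ off the finite set $\f{nontriv}(\pi)$) supporting $\pi\act x$, so $\supp(\pi\act x)$ is defined. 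Now $\pi\act\supp(x)$ supports $\pi\act x$, hence $\supp(\pi\act x)\subseteq\pi\act\supp(x)$ because the support is contained in every supporting set (the lemma just stated); applying this inclusion to $\pi^{-1}$ and $\pi\act x$ gives $\supp(x)\subseteq\pi^{-1}\act\supp(\pi\act x)$, i.e.\ $\pi\act\supp(x)\subseteq\supp(\pi\act x)$, and the two inclusions give equality.

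For the second clause only one direction of this bookkeeping survives, because renamings need not be invertible. To see that $\rho\bigact\supp(x)$ supports $\rho\bigact x$, suppose $\rho_1,\rho_2$ agree on $\rho\bigact\supp(x)=\{\rho(a)\mid a\in\supp(x)\}$; then $\rho_1\circ\rho$ and $\rho_2\circ\rho$ agree on $\supp(x)$, so $\rho_1\bigact(\rho\bigact x)=(\rho_1\circ\rho)\bigact x=(\rho_2\circ\rho)\bigact x=\rho_2\bigact(\rho\bigact x)$, using that $\supp(x)$ supports $x$ (Definition~\ref{defn.finsupp}). As before, $\rho\bigact S$ is a permission set for any supporting permission set $S$ of $x$ (it agrees with $S$ off $\f{nontriv}(\rho)$) and supports $\rho\bigact x$ by the same computation, so $\supp(\rho\bigact x)$ is defined and minimality of support yields $\supp(\rho\bigact x)\subseteq\rho\bigact\supp(x)$.

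For the corollary I will reduce to a permutation. If $\rho$ is injective on $\supp(x)$ then, since $\rho$ moves only finitely many atoms, the finite sort-respecting partial injection given by $\rho$ on $\supp(x)\cap\f{nontriv}(\rho)$ extends to a permutation $\pi\in\mathbb P$, and one can moreover arrange $\f{nontriv}(\pi)\subseteq\f{nontriv}(\rho)$, so that $\pi$ agrees with $\rho$ on \emph{all} of $\supp(x)$ (on the co-finite part of $\supp(x)$ lying outside $\f{nontriv}(\rho)$ both maps are the identity). Then $\rho\bigact x=\pi\bigact x$ since $\supp(x)$ supports $x$, and $\pi\bigact\supp(x)=\rho\bigact\supp(x)$ since $\pi$ and $\rho$ agree on $\supp(x)$. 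Applying the second clause to $\pi$ gives $\supp(\pi\bigact x)\subseteq\pi\bigact\supp(x)$, and applying it to $\pi^{-1}$ and $\pi\bigact x$ gives $\pi\bigact\supp(x)\subseteq\supp(\pi\bigact x)$; hence $\supp(\rho\bigact x)=\supp(\pi\bigact x)=\pi\bigact\supp(x)=\rho\bigact\supp(x)$.

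I expect the corollary to be the delicate part: one must check that the extension of the finite injection can be taken with $\f{nontriv}(\pi)\subseteq\f{nontriv}(\rho)$, which matters precisely because $\supp(x)$ is in general not a permission set but is co-finite in the relevant region, so $\pi$ must be forced to be the identity exactly where $\rho$ already is. Everything else is routine manipulation of the group and monoid actions together with the minimality clause of the preceding support lemma.
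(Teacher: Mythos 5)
Your proof is correct, and it fills in exactly the ``routine calculations using the group/monoid action'' that the paper leaves implicit: the direct verification that $\pi\act A$ (resp.\ $\rho\bigact\supp(x)$) supports the acted-on element, followed by minimality of support, with invertibility supplying the reverse inclusion only in the permutation case. Your treatment of the corollary---extending $\rho$ restricted to $\supp(x)$ to a permutation $\pi$ with $\f{nontriv}(\pi)\subseteq\f{nontriv}(\rho)$ so that $\pi$ and $\rho$ agree on all of the (possibly infinite) set $\supp(x)$---is the one genuinely delicate point, and you handle it correctly.
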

\begin{proof}
By routine calculations using the group/monoid action.
\end{proof}

\begin{xmpl}
The reverse subset inclusion in Lemma~\ref{lemm.supp.subsets} would not work.
For instance, consider $\mathbb A\times\mathbb A\cup\{\ast\}$ with the `exploding' renaming action such that:
\begin{itemize*}
\item
$\rho(\ast)=\ast$.
\item
$\rho\bigact(a,a)=(\rho(a),\rho(a))$.
\item
$\rho\bigact(a,b)=(\rho(a),\rho(b))$ if $\rho(a)\neq\rho(b)$.\footnote{Recall from Definition~\ref{defn.atoms} that by convention $a$ and $b$ are distinct.}
\item
$\rho\bigact(a,b)=\ast$ if $\rho(a)=\rho(b)$.
\end{itemize*}
Then $\supp([a\ssm b]\bigact (a,b))=\varnothing\subsetneq \{a\}=[a\ssm b]\bigact\supp((a,b))$.
\end{xmpl}

\subsubsection{Equivariant elements and maps}

\begin{defn} 
\label{defn.equivariant.element}
Call an element $x$ in $|\ns X|/|\rs X|$ \deffont{equivariant} when $\supp(x)=\varnothing$.
\end{defn}
$x$ is equivariant when $\pi\act x=x$ for all $\pi$, or $\rho\bigact x=x$ for all $\rho$, respectively.

\begin{defn}
\label{defn.equivariant}
\begin{itemize*}
\item
Call a function $F\in |\ns X|\to|\ns Y|$ \deffont{equivariant} when 
$$
\Forall{\pi{\in}\mathbb P}\Forall{x{\in}|\ns X|}F(\pi\act x)=\pi\act F(x).
$$ 
\item
Call a function $G\in |\rs X|\to|\rs Y|$ \deffont{equivariant} when 
$$
\Forall{\rho{\in}\mathbb R}\Forall{x{\in}|\rs X|}G(\rho\bigact x)=\rho\bigact G(x). 
$$
\end{itemize*}
$F$ and $G$ will range over equivariant functions between pairs of permutation and renaming sets respectively.
\end{defn}

\begin{lemm}
\label{lemm.equivar.reduces.supp}
\begin{enumerate*}
\item
Suppose $F\in |\ns X|\to|\ns Y|$ is equivariant.
Then $\supp(F(x))\subseteq \supp(x)$ for every $x\in|\ns X|$. 
\item
Suppose $G\in |\rs X|\to|\rs Y|$ is equivariant.
Then $\supp(G(x))\subseteq \supp(x)$ for every $x\in|\rs X|$. 
\end{enumerate*}
\end{lemm}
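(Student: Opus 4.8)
The plan is to prove each part by the standard "$\supp(x)$ supports $F(x)$" argument, exploiting equivariance of $F$ (resp. $G$) to push a permutation (resp. renaming) past the function.

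For part (1), fix $x \in |\ns X|$ and let $A = \supp(x)$; this exists since $\ns X$ is supported, and I want to show $A$ supports $F(x)$, which by the least-support lemma gives $\supp(F(x)) \subseteq A$. So suppose $\pi, \pi' \in \mathbb P$ agree on $A$; I must show $\pi \act F(x) = \pi' \act F(x)$. Since $A$ supports $x$, we have $\pi \act x = \pi' \act x$. Applying $F$ to both sides and using equivariance twice, $\pi \act F(x) = F(\pi \act x) = F(\pi' \act x) = \pi' \act F(x)$, as required. Hence $A$ supports $F(x)$, so $\supp(F(x))$ exists and is contained in $A = \supp(x)$.

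For part (2) the argument is essentially identical, but I must be careful about the subtlety that renaming sets only enjoy the weaker inclusion $\supp(\rho \bigact x) \subseteq \rho \bigact \supp(x)$ (Lemma~\ref{lemm.supp.subsets}). However, the definition of "supports" for renaming sets in Definition~\ref{defn.finsupp} is stated in terms of permutations $\rho, \rho' \in \mathbb P$ agreeing on $A$, so the argument goes through unchanged: fix $x \in |\rs X|$, let $A = \supp(x)$, suppose $\rho, \rho' \in \mathbb P$ agree on $A$; then $\rho \bigact x = \rho' \bigact x$ by hypothesis, and applying $G$ and using its equivariance gives $\rho \bigact G(x) = G(\rho \bigact x) = G(\rho' \bigact x) = \rho' \bigact G(x)$. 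So $A$ supports $G(x)$ and $\supp(G(x)) \subseteq \supp(x)$.

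I do not expect any real obstacle here: this is the archetypal equivariance-shrinks-support fact, and the only point requiring a moment's attention is checking that the weaker support behaviour of renaming sets does not bite — which it does not, precisely because the notion of "supporting set" in Definition~\ref{defn.finsupp} is phrased via permutations in both the permutation-set and renaming-set cases. The proof is therefore two short applications of the definitions plus equivariance, and I would simply write "By routine calculations using equivariance and the definition of support" if brevity were wanted, but the explicit two-line chase above is worth spelling out once.
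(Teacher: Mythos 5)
Your proof is correct and follows essentially the same route as the paper's: show that any set supporting $x$ also supports $F(x)$ (resp.\ $G(x)$) by pushing the two agreeing permutations/renamings through the function via equivariance, then invoke leastness of support. The paper writes out only the renaming-set case and in the same two lines; your extra remark that the weaker inclusion $\supp(\rho\bigact x)\subseteq\rho\bigact\supp(x)$ is irrelevant here is accurate, since the notion of \emph{supports} is what drives the argument, not Lemma~\ref{lemm.supp.subsets}.
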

\begin{proof}
We consider only the second part.
Suppose $S$ supports $x$ so that for all $\rho$ and $\rho'$, if $\Forall{a\in S}\rho(a)=\rho'(a)$ then $\rho\bigact x=\rho'\bigact x$.
The result follows if we note that $\rho\bigact G(x)=G(\rho\bigact x)$ and $\rho'\bigact G(x)=G(\rho'\bigact x)$.
\end{proof}

\begin{frametxt}
\begin{defn}
\label{defn.fps}
\begin{itemize*}
\item
Write \theory{PmsPrm} for the category with objects supported permutation sets and arrows equivariant functions between them.

Henceforth, $\ns X$ and $\ns Y$ will range over objects in \theory{PmsPrm}.
\item
Write \theory{PmsRen} for the category with objects supported renaming sets and arrows equivariant functions between them.

Henceforth, $\rs X$ and $\rs Y$ will range over objects in \theory{PmsPrm}.
\end{itemize*}
\end{defn} 
\end{frametxt}

\subsection{The exponential in \theory{PmsRen}}
\label{subsect.exp}

\theory{PmsPrm} and \theory{PmsRen} are both cartesian closed, but we only discuss exponentials for \theory{PmsRen} in this paper.
The reader can find the constructions for \theory{PmsPrm} e.g. in \cite[Section~9]{gabbay:fountl}.

\theory{PmsPrm} is used to give denotation to PNL only, while \theory{PrmRen} is used to give a denotation to PNL and also to HOL. 
For this reason, the exponentials of \theory{PmsRen} are of specific and immediate importance to us, but not those of \theory{PmsPrm}.

\subsubsection{Functions}

Recall the definitions of $\dom$ and $\img$ from Definition~\ref{defn.renaming}.
\begin{frametxt}
\begin{defn}
\label{defn.exp.ren}
\begin{itemize*}
\item
Suppose $\ns X,\ns Y\in\theory{PmsPrm}$.
Suppose $f\in |\ns X|\to|\ns Y|$ ($f$ is not necessarily equivariant).

Call $f$ \deffont{supported} when there exists a permission set $S_f\subseteq\mathbb A$ such that for every $x\in |\ns X|$ and permutation $\pi\in\mathbb P$, if $\nontriv(\pi)\cap S_f=\varnothing$ then
$$
\pi\bigact(f(x)) = f(\pi\bigact x) .
$$
\item
Suppose $\rs X,\rs Y\in\theory{PmsRen}$.
Suppose $f\in |\rs X|\to|\rs Y|$ ($f$ is not necessarily equivariant).

Call $f$ \deffont{supported} when there exists a permission set $S_f\subseteq\mathbb A$ such that for every $x\in |\rs X|$ and renaming $\rho\in\mathbb R$, if $\dom(\rho)\cap S_f=\varnothing$ then
$$
\rho\bigact(f(x)) = f(\rho\bigact x) .
$$
\end{itemize*}
\end{defn}
\end{frametxt}

\begin{rmrk}
Definition~\ref{defn.exp.ren} uses a word `supported' for $f$, suggestive of Definition~\ref{defn.finsupp}, even though $f$ has no permutation/renaming action.
It \emph{will} have a permutation/renaming action (Remark~\ref{rmrk.conj.action} and Definition~\ref{defn.exp.ren.action}), and then the terminologies will coincide (see Lemma~\ref{lemm.OK}).
\end{rmrk}

\begin{rmrk}
\label{rmrk.conj.action}
It is a fact that \theory{PmsPrm} is cartesian closed and functions have the \emph{conjugation action} 
$$
\label{Conjugation action} (\pi\act f)(x)=\pi\act(f(\pi^\mone\act x)).
$$
and $f$ is supported in the sense of Definition~\ref{defn.exp.ren} if and only if it is supported as an element of $|\ns X|\to|\ns Y|$ with the conjungation action.
For more on this see \cite{gabbay:fountl,gabbay:newaas-jv}.

Renamings $\rho$ are not invertible, so we must work a little harder to define a renaming action.
This is Definition~\ref{defn.exp.ren.action}.
However, the end result is similar to the conjugation action, in a sense made formal in Lemma~\ref{lemm.renaming.distribute} which is similar to an immediate corollary of the conjugation action that $\pi\act f(x) =(\pi\act f)(\pi\act x)$.
\end{rmrk}

\begin{lemm}
\label{lemm.supp.supported.f.bound}
If $f$ is supported then $\supp(f(x))\subseteq S_f\cup\supp(x)$ for every $x\in|\rs X|$.
\end{lemm}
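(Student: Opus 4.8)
\emph{Proof proposal.} The plan is to show that $S_f\cup\supp(x)$ is a supporting set for $f(x)$; this suffices because $f(x)\in|\rs Y|$ with $\rs Y\in\theory{PmsRen}$ supported (Definition~\ref{defn.fps}), so $\supp(f(x))$ exists and is the least supporting set. To match the way support was characterised earlier (as the intersection of all \emph{permission sets} supporting an element), I would not argue about $S_f\cup\supp(x)$ directly but rather prove the slightly stronger statement: for every permission set $S$ supporting $x$, the permission set $S_f\cup S$ supports $f(x)$. Granting that, and writing $S$ for such permission sets, $\supp(f(x))\subseteq\bigcap_S(S_f\cup S)=S_f\cup\bigcap_S S=S_f\cup\supp(x)$, where the last equality uses distributivity of $\cup$ over $\cap$ together with the characterisation $\supp(x)=\bigcap_S S$.

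So fix a permission set $S$ supporting $x$ and set $T=S_f\cup S$, which is again a permission set. Suppose $\pi,\pi'$ agree on $T$, i.e. $\Forall{a\in T}\pi(a)=\pi'(a)$; we must show $\pi\bigact f(x)=\pi'\bigact f(x)$. Applying $\pi^\mone\bigact(-)$ to both sides and using the monoid-action laws ($\pi^\mone\bigact(\pi\bigact y)=(\pi^\mone\circ\pi)\bigact y=\id\bigact y=y$), and noting the step is reversible since $\pi$ is invertible, this reduces to showing $\sigma\bigact f(x)=f(x)$ for $\sigma=\pi^\mone\circ\pi'$. Since $\pi$ and $\pi'$ agree on $T$, $\sigma$ fixes $T$ pointwise, so $\dom(\sigma)\cap T=\varnothing$. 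In particular $\dom(\sigma)\cap S_f=\varnothing$, so the defining property of a supported $f$ (Definition~\ref{defn.exp.ren}) gives $\sigma\bigact f(x)=f(\sigma\bigact x)$. Also $\dom(\sigma)\cap S=\varnothing$, so $\sigma$ and $\id$ agree on $S$, and since $S$ supports $x$ we get $\sigma\bigact x=\id\bigact x=x$. Hence $\sigma\bigact f(x)=f(x)$, as needed. This also fits with, and refines, the inclusion $\supp(\rho\bigact x)\subseteq\rho\bigact\supp(x)$ of Lemma~\ref{lemm.supp.subsets}.

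The step I expect to require the most care is the very reduction via $\pi^\mone$: it relies on the testing maps in the definition of \emph{supports} being invertible. If supportedness for renaming sets is in fact to be tested against arbitrary renamings $\rho,\rho'$ (which are not invertible), then one cannot in general factor $\rho'$ through $\rho$, and this argument breaks. In that case I would argue directly: given $\rho,\rho'$ agreeing on $T$, first replace them by a single renaming $\tau$ with $\tau(a)=\rho(a)$ for $a\in T$ and $\dom(\tau)\subseteq T$ (a legitimate renaming), and then account for the `excess' behaviour of $\rho,\rho'$ outside $T$ by choosing fresh atoms — available because $T$, being a permission set, is co-infinite — to undo any atoms that $\rho$ or $\rho'$ collapse outside $T$, reducing to the case already handled. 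This is precisely the point at which the renaming setting genuinely differs from the permutation/conjugation setting of \theory{PmsPrm} alluded to in Remark~\ref{rmrk.conj.action}, where the analogous fact is immediate.
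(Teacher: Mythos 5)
Your overall strategy (exhibit $S_f\cup S$ as a supporting set for $f(x)$ and invoke leastness of $\supp$) is reasonable, and the intersection bookkeeping in your first paragraph is fine. But the heart of the proof --- establishing that $S_f\cup S$ actually satisfies Definition~\ref{defn.finsupp} for $f(x)$ --- is not there. As you yourself suspect, support in $\theory{PmsRen}$ must be tested against pairs of \emph{renamings} (the `$\mathbb P$' in Definition~\ref{defn.finsupp} is evidently $\mathbb R$; the well-definedness proof for Definition~\ref{defn.exp.ren.action} uses it with non-injective $\rho$), so your reduction via $\pi^\mone$ only covers the permutation case. The fallback you sketch does not close the gap: non-injective renamings cannot be ``undone'' once they have collapsed atoms, and a renaming agreeing with $\rho'$ on $T$ can still map atoms outside $T$ onto atoms of $T$ or onto images of atoms of $T$, so there is no clean factorisation $\rho=\upsilon\circ\tau$ with $\tau$ determined by $\rho|_T$ and $\dom(\upsilon)\cap S_f=\varnothing$. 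Worse, your intermediate renaming $\tau$ has $\dom(\tau)\subseteq T\supseteq S_f$, so Definition~\ref{defn.exp.ren} cannot be applied to it at all, and justifying $\rho\bigact f(x)=\tau\bigact f(x)$ would already require knowing a supporting set for $f(x)$ --- which is what you are trying to prove. (The clean way to get the two-renaming statement is via $\rho\bigact f(x)=(\rho\bigact f)(\rho\bigact x)$ and Lemma~\ref{lemm.OK}, but both of those come later and their proofs cite this very lemma, so that route is circular here.)

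The paper avoids all of this with a short contradiction argument that never verifies the two-renaming condition for any candidate set. Suppose $a\in\supp(f(x))\setminus(S_f\cup\supp(x))$ and choose $b$ fresh for $\supp(f(x))\cup S_f\cup\supp(x)$. Since $\dom((b\ a))\cap S_f=\varnothing$, Definition~\ref{defn.exp.ren} gives $(b\ a)\bigact f(x)=f((b\ a)\bigact x)$, and since $a,b\notin\supp(x)$ the argument $(b\ a)\bigact x$ equals $x$; so $(b\ a)\bigact f(x)=f(x)$. By the equality case of Lemma~\ref{lemm.supp.subsets} this forces $(b\ a)\bigact\supp(f(x))=\supp(f(x))$, which is impossible because the swap moves $a\in\supp(f(x))$ to $b\notin\supp(f(x))$. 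Only a single swapping --- which is injective, hence unproblematic --- is ever applied. If you want to keep your ``exhibit a supporting set'' architecture, you would first need to prove the auxiliary fact that, for co-infinite $T$, ``every renaming with domain disjoint from $T$ fixes $y$'' implies ``$T$ supports $y$''; that is a genuine lemma requiring a freshening-pair argument, not a remark.
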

\begin{proof}
By contradiction.
Suppose there exists $a\in \supp(f(x))\setminus(S_f\cup\supp(x))$.
Choose $b$ fresh (so $b\not\in\supp(f(x))\cup S_f\cup\supp(x)$).
Then $(b\ a)\bigact (f(x))=f((b\ a)\bigact x)$ since $a,b\not\in S_f$
and $f((b\ a)\bigact x)=f(x)$ since $b,a\not\in\supp(x)$.
It follows by Lemma~\ref{lemm.supp.subsets} that $(b\ a)\bigact\supp(f(x))=\supp(f(x))$, which is impossible.
\end{proof}

\begin{defn}
\label{defn.freshening.pair}
Suppose $S\subseteq\mathbb A$ is a permission set and $A\subseteq\mathbb A$ is finite.
Call $\rho_1$ and $\rho_2$ a \deffont{freshening pair} of renamings for $A$ with respect to $S$ when:
\begin{itemize*}
\item
$\dom(\rho_1)=A$ and $\dom(\rho_2)=\img(\rho_1)$.
\item
$(\rho_2\circ\rho_1)(a)=a$ for all $a\in A$.
\item
$\dom(\rho_2)\cap (S\cup A)=\varnothing$.
\end{itemize*}
\end{defn}
In words, $\rho_1$ maps the atoms in $A$ to be outside $S$ (and $A$), and $\rho_2$ is an `inverse' to $\rho_1$ that puts them back.

\subsubsection{Renaming action}

\begin{defn}
\label{defn.exp.ren.action}
(We continue the notation of Definition~\ref{defn.exp.ren}.)
If $f$ is supported then define $\rho\bigact f$ by
\begin{frameqn}
(\rho\bigact f)(x) = (\rho_2\circ\rho)\bigact f(\rho_1\bigact x)
\end{frameqn}
for some/any freshening pair of renamings $\rho_1$ and $\rho_2$ for $\nontriv(\rho)$ (which is finite), with respect to $\supp(x)\cup S_f$. 
\end{defn}

\begin{lemm}
Definition~\ref{defn.exp.ren.action} is well-defined.
That is, it does not matter which freshening pair of renamings we choose.
\end{lemm}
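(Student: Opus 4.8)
\textbf{The plan} is a ``freshen, commute, and put back'' argument. Fix $\rho$ and $x$, and write $A=\nontriv(\rho)$ and $S=\supp(x)\cup S_f$. First note that a freshening pair for $A$ with respect to $S$ exists at all: $S$ is a co-infinite permission set and $A$ is finite, so $\atoms\setminus(S\cup A)$ is infinite and we may inject $A$ into it sort-respectingly to get $\rho_1$, and take $\rho_2$ to be the induced partial inverse on $\img(\rho_1)$. Now let $(\rho_1,\rho_2)$ and $(\rho_1',\rho_2')$ be any two freshening pairs for $A$ with respect to $S$; we must show $(\rho_2\circ\rho)\bigact f(\rho_1\bigact x)=(\rho_2'\circ\rho)\bigact f(\rho_1'\bigact x)$. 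Throughout I would use the elementary facts that $\img(\rho)\subseteq\nontriv(\rho)=A$; that $\rho_1$ and $\rho_1'$ are injective on $A$ (from $(\rho_2\circ\rho_1)(a)=a$ and $(\rho_2'\circ\rho_1')(a)=a$); and that $\img(\rho_1)=\dom(\rho_2)$ and $\img(\rho_1')=\dom(\rho_2')$ are disjoint from $S\cup A$ by Definition~\ref{defn.freshening.pair}. The device that connects the two pairs is the renaming $\sigma$ defined to be $\rho_1'\circ\rho_1^{\mone}$ on $\img(\rho_1)$ (partial inverse, well-defined by injectivity of $\rho_1$) and the identity elsewhere: then $\sigma$ is sort-preserving, $\nontriv(\sigma)\subseteq\img(\rho_1)\cup\img(\rho_1')$ is finite and disjoint from $S\cup A$, and $(\sigma\circ\rho_1)(a)=\rho_1'(a)$ for all $a\in A$.

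Next I would run the computation. The renamings $\rho_1'$ and $\sigma\circ\rho_1$ agree on $\supp(x)$: on $a\in\supp(x)\cap A$ both give $\rho_1'(a)$, and on $a\in\supp(x)\setminus A$ both give $a$ (using $a\notin\dom(\rho_1)=\dom(\rho_1')=A$ and $a\notin\dom(\sigma)\subseteq\img(\rho_1)$, which is disjoint from $\supp(x)$). Hence by Definition~\ref{defn.finsupp} and the monoid action of Definition~\ref{defn.perm.set} we get $\rho_1'\bigact x=(\sigma\circ\rho_1)\bigact x=\sigma\bigact(\rho_1\bigact x)$. Since $\dom(\sigma)\subseteq\img(\rho_1)$ is disjoint from $S_f$, Definition~\ref{defn.exp.ren} gives $f(\rho_1'\bigact x)=f(\sigma\bigact(\rho_1\bigact x))=\sigma\bigact f(\rho_1\bigact x)$, and therefore, using the monoid action again, $(\rho_2'\circ\rho)\bigact f(\rho_1'\bigact x)=(\rho_2'\circ\rho\circ\sigma)\bigact f(\rho_1\bigact x)$. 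So it remains to show that $\rho_2'\circ\rho\circ\sigma$ and $\rho_2\circ\rho$ agree on a supporting set of $f(\rho_1\bigact x)$; then Definition~\ref{defn.finsupp} finishes the argument by chaining this with the previous display. By Lemmas~\ref{lemm.supp.supported.f.bound} and~\ref{lemm.supp.subsets} one such supporting set is $S_f\cup\rho_1(\supp(x))$, which is contained in $S_f\cup(\supp(x)\setminus A)\cup\img(\rho_1)$, so it suffices to compare the two composites on $S\setminus A$, on $S_f\cap A$, and on $\img(\rho_1)$.

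The remaining work is the bookkeeping of which atoms each renaming fixes, and this case split is where I would expect the only real delicacy to lie. On $c\in S\setminus A$ both $\sigma$ and $\rho$ fix $c$, and both $\rho_2,\rho_2'$ fix $S$, so both composites act as the identity on $c$. On $c\in S_f\cap A$ we have $\sigma(c)=c$ (as $c\in S$) and $\rho(c)\in\img(\rho)\subseteq A$, which both $\rho_2$ and $\rho_2'$ fix (their domains $\img(\rho_1),\img(\rho_1')$ are disjoint from $A$), so both composites send $c$ to $\rho(c)$. Finally on $c=\rho_1(a)\in\img(\rho_1)$ with $a\in A$: $\sigma(c)=\rho_1'(a)\in\img(\rho_1')$, which $\rho$ fixes ($\img(\rho_1')$ is disjoint from $A=\nontriv(\rho)$) and which $\rho_2'$ sends back to $a$; while on the other side $\rho$ fixes $\rho_1(a)$ (same reason) and $\rho_2$ sends it back to $a$. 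Both give $a$. Hence the composites agree on the chosen supporting set, which completes the proof. The main obstacle, such as it is, is making sure this last enumeration of cases is exhaustive and that every disjointness claim it invokes is indeed supplied by Definition~\ref{defn.freshening.pair} together with $\img(\rho)\subseteq A$; everything else is a routine application of the monoid action and Lemmas~\ref{lemm.supp.subsets} and~\ref{lemm.supp.supported.f.bound}.
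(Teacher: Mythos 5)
Your proof is correct and follows essentially the same route as the paper's: both introduce a renaming transferring $\img(\rho_1)$ to $\img(\rho_1')$ (your $\sigma$ is the paper's $\rho_1''$), push it through $f$ using $\dom(\sigma)\cap S_f=\varnothing$, commute it past $\rho$, and then cancel against $\rho_2'$ versus $\rho_2$ by agreement on a supporting set of $f(\rho_1\bigact x)$ bounded via Lemmas~\ref{lemm.supp.supported.f.bound} and~\ref{lemm.supp.subsets}. The only difference is presentational: the paper uses a second auxiliary renaming $\rho_2''$ and compresses the final cancellation into terse justifications, whereas you make the supporting set and the three-case atom-by-atom check explicit.
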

\begin{proof}
Consider two freshening pairs of renamings $\rho_1,\rho_2$ and $\rho_1',\rho_2'$.

Let $\rho_1''$ map $\img(\rho_1)$ to $\img(\rho_1')$ and $\rho_2''$ map $\dom(\rho_2')=\img(\rho_1')$ to $\dom(\rho_2)=\img(\rho_1)$ in such a way that 
\begin{itemize*}
\item
$\rho_1'(a)=(\rho_1''\circ\rho_1)(a)$ for all $a\in\dom(\rho_1')$, 
\item
$\rho_2'(a)=(\rho_2\circ\rho_2'')(a)$ for all $a\in\dom(\rho_2')$, and 
\item
$\nontriv(\rho_1'')=\img(\rho_1)\cup\img(\rho_1')$ and $\nontriv(\rho_2'')=\dom(\rho_2')\cup\dom(\rho_2)$.
\end{itemize*}
We reason as follows:
\begin{tab7}
(\rho_2'\circ\rho)\bigact f((\rho_1'\circ\rho)\bigact x)=&
(\rho_2\circ\rho_2''\circ\rho)\bigact f((\rho_1''\circ\rho_1\circ\rho)\bigact x)
&\text{Lems.~\ref{lemm.supp.supported.f.bound} \& \ref{lemm.supp.subsets}, Def.~\ref{defn.finsupp}}
\\
=&(\rho_2\circ\rho_2''\circ\rho\circ\rho_1'')\bigact f((\rho_1\circ\rho)\bigact x)
&\dom(\rho_1'')\cap S_f=\varnothing
\\
=&(\rho_2\circ\rho_2''\circ\rho_1''\circ\rho)\bigact f((\rho_1\circ\rho)\bigact x)
&\nontriv(\rho_1'')\cap\nontriv(\rho)=\varnothing 
\\
=&(\rho_2\circ\rho)\bigact f((\rho_1\circ\rho)\bigact x)
&\text{Lems.~\ref{lemm.supp.supported.f.bound} \& \ref{lemm.supp.subsets}, Def.~\ref{defn.finsupp}}
\end{tab7}
\end{proof}

\begin{lemm}
\label{lemm.renaming.distribute}
Suppose $x\in|\rs X|$ and $\rho$ is a renaming.
Suppose $f\in|\rs X|\to|\rs Y|$ is supported.

Then $\rho\bigact(f(x))=(\rho\bigact f)(\rho\bigact x)$.
\end{lemm}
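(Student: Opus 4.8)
The plan is to unfold the renaming action on $f$ at the argument $\rho\bigact x$ via Definition~\ref{defn.exp.ren.action}, and then transport the result back to $\rho\bigact f(x)$ by sliding the auxiliary ``freshening'' renamings through $f$ — which is allowed precisely because those renamings have domains disjoint from $S_f$. So first, by Definition~\ref{defn.exp.ren.action},
$$
(\rho\bigact f)(\rho\bigact x)\;=\;(\rho_2\circ\rho)\bigact f\bigl((\rho_1\circ\rho)\bigact x\bigr)
$$
for any freshening pair $\rho_1,\rho_2$ for $\nontriv(\rho)$ with respect to $\supp(\rho\bigact x)\cup S_f$. I would in fact take $\rho_1,\rho_2$ to be a freshening pair with respect to $\supp(x)\cup S_f$: by Lemma~\ref{lemm.supp.subsets} we have $\supp(\rho\bigact x)\subseteq\supp(x)\cup\nontriv(\rho)$, so such a pair still satisfies Definition~\ref{defn.freshening.pair} relative to $\supp(\rho\bigact x)\cup S_f$, and well-definedness of Definition~\ref{defn.exp.ren.action} lets me compute with it. The facts I will use repeatedly: $\rho_1$ restricts to a bijection $\nontriv(\rho)\to\img(\rho_1)$ with $\rho_2\circ\rho_1=\id$ on $\nontriv(\rho)$; $\dom(\rho_2)=\img(\rho_1)$ is disjoint from $S_f\cup\supp(x)\cup\nontriv(\rho)$; hence $\rho_2\circ\rho_1=\id$ on $\supp(x)$ as well, and $\rho_2$ commutes with $f$, i.e. $\rho_2\bigact f(y)=f(\rho_2\bigact y)$ for all $y$ (Definition~\ref{defn.exp.ren}).

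The core step is to rewrite $f\bigl((\rho_1\circ\rho)\bigact x\bigr)$. Let $\hat\rho$ be the renaming that fixes every atom outside $\img(\rho_1)$ and sends $\rho_1(a)\mapsto\rho_1(\rho(a))$ for each $a\in\nontriv(\rho)$; this is well-defined since $\rho_1$ is injective on $\nontriv(\rho)$, and $\dom(\hat\rho)\subseteq\img(\rho_1)$, so $\dom(\hat\rho)\cap S_f=\varnothing$. A one-line check gives $\hat\rho\circ\rho_1=\rho_1\circ\rho$ on $\supp(x)$, hence $(\rho_1\circ\rho)\bigact x=(\hat\rho\circ\rho_1)\bigact x$, and then $f\bigl((\rho_1\circ\rho)\bigact x\bigr)=\hat\rho\bigact f(\rho_1\bigact x)$, using that $f$ is supported by $S_f$ and $\dom(\hat\rho)\cap S_f=\varnothing$. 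So $(\rho\bigact f)(\rho\bigact x)=(\rho_2\circ\rho\circ\hat\rho)\bigact f(\rho_1\bigact x)$. Next I would show that $\rho_2\circ\rho\circ\hat\rho$ and $\rho\circ\rho_2$ act identically on $\supp(f(\rho_1\bigact x))$, which by Lemmas~\ref{lemm.supp.supported.f.bound} and~\ref{lemm.supp.subsets} lies inside $S_f\cup(\supp(x)\setminus\nontriv(\rho))\cup\img(\rho_1)$: this is a routine case split according to whether an atom lies in $\img(\rho_1)$, in $S_f$, or is a residual atom of $\supp(x)$, each case using only that $\img(\rho_1)$ is fresh (disjoint from $\nontriv(\rho)\cup S_f$) and $\rho_2\circ\rho_1=\id$ on $\nontriv(\rho)$. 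Therefore $(\rho\bigact f)(\rho\bigact x)=(\rho\circ\rho_2)\bigact f(\rho_1\bigact x)=\rho\bigact f(\rho_2\bigact\rho_1\bigact x)$ (commuting $\rho_2$ past $f$ once more), and since $\rho_2\circ\rho_1=\id$ on $\supp(x)$ this is $\rho\bigact f(x)$, as required.

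The step I expect to be the main obstacle is the non-injectivity of $\rho$: one cannot simply ``conjugate $\rho$ by $\rho_1$'', because $\rho_1$ and $\rho$ do not commute and a naive conjugate would move atoms of $\dom(\rho)$ — which may lie in $S_f$ — thereby obstructing the passage through $f$. The device of replacing $\rho_1\circ\rho$ on $\supp(x)$ by $\hat\rho\circ\rho_1$, where $\hat\rho$ touches only the fresh atoms $\img(\rho_1)$, together with the bookkeeping identity $\rho_2\circ\rho\circ\hat\rho=\rho\circ\rho_2$ on the relevant support, is exactly what compensates; verifying these two ``agree-on-support'' facts is where the elementary-but-fiddly work lies. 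A related point to treat with care is that $\nontriv(\rho)$ may genuinely meet $S_f$, so $\rho\bigact f\neq f$ in general — this is precisely why the ``inverse'' renaming $\rho_2$ appears at the end of Definition~\ref{defn.exp.ren.action}, and why it survives as the $\rho_2$ remaining on the right at the very end of the computation.
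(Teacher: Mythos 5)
Your proof is correct and follows essentially the same route as the paper's: you unfold the action via a freshening pair, commute $\rho_1\circ\rho$ past $f$ using an auxiliary renaming supported on $\img(\rho_1)$ (your $\hat\rho$ is the paper's $\rho'$ with $\hat\rho\circ\rho_1=\rho_1\circ\rho$ on $\supp(x)$), then collapse $\rho_2\circ\rho\circ\hat\rho$ to $\rho\circ\rho_2$ on $\supp(f(\rho_1\bigact x))$ and cancel $\rho_2\circ\rho_1$ on $\supp(x)$. The only difference is that you make explicit the support bookkeeping --- the choice of freshening pair relative to $\supp(x)\cup S_f$ rather than $\supp(\rho\bigact x)\cup S_f$, and the verification that the two composite renamings agree on the relevant support --- which the paper leaves as one-line justifications.
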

\begin{proof}
Let $\rho_1$ and $\rho_2$ be a freshening pair of renamings of $\nontriv(\rho)$ with respect to $S_f\cup\supp(x)$.

Let $\rho'$ be a renaming with $\nontriv(\rho')=\img(\rho_1)$ such that $\rho_1\circ\rho=\rho'\circ\rho_1$; this exists since $\rho_1$ is injective on $\nontriv(\rho)$ and `freshens' this set to some fresh set of atoms.

We reason as follows: 
\begin{tab7}
(\rho\bigact f)(\rho\bigact x)=&(\rho_2\circ\rho)\bigact f((\rho_1\circ\rho)\bigact x)
&\text{Definition~\ref{defn.exp.ren.action}}
\\
=&(\rho_2\circ\rho)\bigact f((\rho'\circ\rho_1)\bigact x)
&\text{Definition~\ref{defn.finsupp}}
\\
=&(\rho_2\circ\rho\circ\rho')\bigact f(\rho_1\bigact x)
&\nontriv(\rho')\cap S_f=\varnothing
\\
=&(\rho\circ\rho_2)\bigact f(\rho_1\bigact x)
&\text{Lem.~\ref{lemm.supp.supported.f.bound}, Def.~\ref{defn.finsupp}}
\\
=&\rho\bigact f((\rho_2\circ\rho_1)\bigact x)
&\dom(\rho_2)\cap S_f=\varnothing
\\
=&\rho\bigact f(x)
&\text{Definition~\ref{defn.finsupp}}
\end{tab7}
\end{proof}

\subsubsection{Definition of the exponential}

\begin{frametxt}
\begin{defn}
\label{defn.frs.exp}
Write $\rs X\Rightarrow\rs Y$ for the renaming set with underlying set those $f\in|\rs X|\to|\rs Y|$ that are supported in the sense of Definition~\ref{defn.exp.ren}, and renaming action as defined in Definition~\ref{defn.exp.ren.action}.
\end{defn} 
\end{frametxt}
 
\begin{lemm}
\label{lemm.OK}
If $f$ is supported in the sense of Definition~\ref{defn.exp.ren} then it is supported by $S_f$ in the sense of Definition~\ref{defn.finsupp}.
Thus, $\rs X\Rightarrow\rs Y$ is indeed a permissive-nominal renaming set.
\end{lemm}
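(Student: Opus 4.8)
The plan is to establish the two things that make $\rs X\Rightarrow\rs Y$ a permissive-nominal renaming set: that the renaming action of Definition~\ref{defn.exp.ren.action} is a monoid action, and (the substance) that every $f$ in its carrier is supported, with $S_f$ among its supporting permission sets. The monoid part is quick: for $\id$ we may take $\rho_1=\rho_2=\id$ as the freshening pair (since $\nontriv(\id)=\varnothing$), so $(\id\bigact f)(x)=f(x)$; the composition law $\rho\bigact(\rho'\bigact f)=(\rho\circ\rho')\bigact f$ is another freshening-pair manipulation of the same flavour as the proof that Definition~\ref{defn.exp.ren.action} is well-defined, and I would relegate it to a routine check. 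Given the support claim, the lemma following Definition~\ref{defn.finsupp} then yields that $\supp(f)$ exists for every $f$, so $\rs X\Rightarrow\rs Y$ is supported, hence permissive-nominal. Note that, unlike the permutation case (cf. Remark~\ref{rmrk.conj.action}), one cannot reduce ``$\rho$ and $\rho'$ agree on $S_f$'' to ``a renaming fixing $S_f$'' by writing $\rho'=\tau\circ\rho$, because renamings may collapse atoms; this is why the argument needs freshening pairs.

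So fix $f$ supported in the sense of Definition~\ref{defn.exp.ren} with witness $S_f$, and suppose $\rho,\rho'$ agree on $S_f$; I must show $(\rho\bigact f)(x)=(\rho'\bigact f)(x)$ for every $x\in|\rs X|$. Fix such an $x$, put $S=\supp(x)\cup S_f$ (a permission set, being a union of two such) and $B=\nontriv(\rho)\cup\nontriv(\rho')$ (finite). The first step is a mild strengthening of the well-definedness lemma for Definition~\ref{defn.exp.ren.action}: $\rho\bigact f$ may be computed using a freshening pair for \emph{any} finite superset of $\nontriv(\rho)$, not just $\nontriv(\rho)$ itself, since the extra atoms are sent by $\rho_1$ into $\img(\rho_1)$, fixed there by $\rho$ (as $\img(\rho_1)$ is disjoint from the superset), and returned by $\rho_2$; the verification copies the well-definedness proof. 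Using this I pick a single freshening pair $\rho_1,\rho_2$ for $B$ with respect to $S$, and compute \emph{with the same pair on both sides}: $(\rho\bigact f)(x)=(\rho_2\circ\rho)\bigact f(\rho_1\bigact x)$ and $(\rho'\bigact f)(x)=(\rho_2\circ\rho')\bigact f(\rho_1\bigact x)$.

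It remains to see $(\rho_2\circ\rho)\bigact y=(\rho_2\circ\rho')\bigact y$, where $y=f(\rho_1\bigact x)$. By Lemma~\ref{lemm.supp.supported.f.bound}, $\supp(y)\subseteq S_f\cup\supp(\rho_1\bigact x)$; since $\rho_1$, being part of a freshening pair, is injective, hence injective on $\supp(x)$, Lemma~\ref{lemm.supp.subsets} gives $\supp(\rho_1\bigact x)=\rho_1\bigact\supp(x)$, so $\supp(y)\subseteq S_f\cup\rho_1\bigact\supp(x)$. Now I check that $\rho_2\circ\rho$ and $\rho_2\circ\rho'$ agree on $S_f\cup\rho_1\bigact\supp(x)$: on $S_f$ because $\rho$ and $\rho'$ agree there; and on $\rho_1\bigact\supp(x)$ by splitting an atom $a\in\supp(x)$ into the case $a\notin B$ (then $\rho_1(a)=a$, which lies outside $\nontriv(\rho)\cup\nontriv(\rho')$, so $\rho$ and $\rho'$ both fix it) and the case $a\in B$ (then $\rho_1(a)\in\img(\rho_1)$, disjoint from $S\cup B\supseteq\nontriv(\rho)\cup\nontriv(\rho')$, so again both fix it), and composing with $\rho_2$ afterwards. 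Two renamings agreeing on a supporting set of $y$ act identically on $y$ (Definition~\ref{defn.finsupp}), so $(\rho_2\circ\rho)\bigact y=(\rho_2\circ\rho')\bigact y$; hence $\rho\bigact f=\rho'\bigact f$, i.e. $S_f$ supports $f$.

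The main obstacle is purely bookkeeping: establishing the ``common freshening pair for the union $B$'' form of Definition~\ref{defn.exp.ren.action}, so that the two sides can be compared term by term, and lining up the disjointness and injectivity conditions on $\rho_1,\rho_2,\rho,\rho'$ so that every appeal to Lemmas~\ref{lemm.supp.subsets} and~\ref{lemm.supp.supported.f.bound} is licensed. There is no conceptual difficulty beyond the care already exercised in the well-definedness proof and in Lemma~\ref{lemm.renaming.distribute}.
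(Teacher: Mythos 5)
Your proof is correct, but it takes a genuinely different route from the paper's. The paper's own proof is a two-line sketch: it asserts that it suffices to show $([a\ssm b]\bigact f)(x)=f(x)$ whenever $a\not\in S_f$, and declares the rest ``routine calculations'' --- that is, it reduces the two-renaming support condition of Definition~\ref{defn.finsupp} to the triviality of atomic renamings off $S_f$, leaving both the reduction and the atomic case unspelled. You instead verify the support condition directly for an arbitrary pair $\rho,\rho'$ agreeing on $S_f$, by computing $\rho\bigact f$ and $\rho'\bigact f$ with a single freshening pair for $\nontriv(\rho)\cup\nontriv(\rho')$ and comparing the composites $\rho_2\circ\rho$ and $\rho_2\circ\rho'$ on $\supp(f(\rho_1\bigact x))\subseteq S_f\cup\rho_1\bigact\supp(x)$ via Lemmas~\ref{lemm.supp.supported.f.bound} and~\ref{lemm.supp.subsets}; that comparison is carried out correctly. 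This buys a self-contained argument that avoids decomposing a renaming into atomic ones (delicate for non-injective maps), at the cost of the auxiliary claim that Definition~\ref{defn.exp.ren.action} may be computed with a freshening pair for any finite superset $B\supseteq\nontriv(\rho)$. You also check the monoid laws, which the paper's proof omits but which are genuinely part of ``$\rs X\Rightarrow\rs Y$ is a renaming set''.

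One caution on the auxiliary claim. Your justification (``the extra atoms are sent by $\rho_1$ into $\img(\rho_1)$, fixed there by $\rho$, and returned by $\rho_2$'') reads as if those atoms merely pass through, but $f$ is applied in between and sees the freshened argument; since the extra atoms $C=B\setminus\nontriv(\rho)$ may meet $S_f$, you cannot commute the extra freshening $\sigma=\rho_1|_C$ past $f$ directly using supportedness. The claim is nevertheless true: writing $\sigma^\mone$ for the renaming returning $\img(\sigma)$ to $C$, one has $\dom(\sigma^\mone)\cap S_f=\varnothing$, so supportedness yields $f(\tilde\rho_1\bigact x)=\sigma^\mone\bigact f(\rho_1\bigact x)$, after which the two candidate values differ only by composites agreeing on $S_f\cup\rho_1\bigact\supp(x)$. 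This is exactly the manoeuvre of the paper's well-definedness proof, so your remark that the verification ``copies'' it is accurate --- just make sure the renaming you commute past $f$ is the one with fresh domain, not $\sigma$ itself.
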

\begin{proof}
It suffices to show that if $a\not\in S_f$ then $([a\ssm b]\bigact f)(x)=f(x)$.
This follows by routine calculations. 
\end{proof}

\begin{lemm}
\theory{PmsRen} (Definition~\ref{defn.fps}) is cartesian closed: 
\begin{itemize*}
\item
The exponential is $\rs X\Rightarrow\rs Y$ from Definition~\ref{defn.frs.exp}. 
\item
Products are given pointwise as in Definition~\ref{defn.times}.
\item
The terminal object $\rs 1$ is the singleton set $\{0\}$ with the trivial action $\rho\bigact 0=0$. 
\end{itemize*}
\end{lemm}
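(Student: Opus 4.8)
The plan is to handle the three claims separately; the terminal object and products are routine, and the exponential carries all the content. For the terminal object $\rs 1=(\{0\},\bigact)$ with $\rho\bigact 0=0$, the monoid laws are immediate and $0$ is equivariant, so $\rs 1\in\theory{PmsRen}$, and the constant function $x\mapsto 0$ is visibly the unique (necessarily equivariant) arrow $\rs X\to\rs 1$. For products I would take the pointwise construction of Definition~\ref{defn.times}: one checks that $\rho\bigact(x,y)=(\rho\bigact x,\rho\bigact y)$ is a monoid action, that $\supp(x,y)=\supp(x)\cup\supp(y)$ so that the pointwise product is a supported renaming set, that the two projections are equivariant, and that for equivariant $f\colon\rs Z\to\rs X$ and $g\colon\rs Z\to\rs Y$ the pairing $\langle f,g\rangle(z)=(f(z),g(z))$ is equivariant and is the unique mediating arrow. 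None of this uses the exponential machinery.

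For the exponential I would take $\rs X\Rightarrow\rs Y$ from Definition~\ref{defn.frs.exp}, which is already an object of \theory{PmsRen} by Lemma~\ref{lemm.OK}. Evaluation $\f{ev}\colon(\rs X\Rightarrow\rs Y)\times\rs X\to\rs Y$ is $\f{ev}(f,x)=f(x)$; after unpacking the pointwise action on the product, its equivariance $\f{ev}(\rho\bigact(f,x))=(\rho\bigact f)(\rho\bigact x)=\rho\bigact(f(x))=\rho\bigact\f{ev}(f,x)$ is exactly Lemma~\ref{lemm.renaming.distribute}. Given a supported renaming set $\rs Z$ and an equivariant $h\colon\rs Z\times\rs X\to\rs Y$, I would set $\Lambda(h)(z)=(x\mapsto h(z,x))$. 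First I would verify $\Lambda(h)(z)\in|\rs X\Rightarrow\rs Y|$: taking for $S_f$ any permission set supporting $z$ (one exists since $\rs Z$ is supported), if $\dom(\rho)\cap S_f=\varnothing$ then $\rho$ agrees with $\id$ on $S_f$, so $\rho\bigact z=z$, so $\rho\bigact h(z,x)=h(z,\rho\bigact x)$ by equivariance of $h$, which is precisely the support condition of Definition~\ref{defn.exp.ren}. The universal property is then immediate: $\f{ev}\circ(\Lambda(h)\times\id_{\rs X})=h$ holds pointwise by construction, and any $k\colon\rs Z\to(\rs X\Rightarrow\rs Y)$ with $\f{ev}\circ(k\times\id_{\rs X})=h$ satisfies $k(z)(x)=h(z,x)=\Lambda(h)(z)(x)$ for all $z$ and $x$, so $k=\Lambda(h)$.

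The one genuinely non-routine step is equivariance of $\Lambda(h)$, i.e. $\rho\bigact\Lambda(h)(z)=\Lambda(h)(\rho\bigact z)$, and this is where non-invertibility of renamings bites: the action on $\rs X\Rightarrow\rs Y$ is defined only indirectly, via freshening pairs (Definition~\ref{defn.exp.ren.action}). My plan is to unwind that definition: fixing $x'\in|\rs X|$ and a freshening pair $\rho_1,\rho_2$ for $\nontriv(\rho)$ with respect to $\supp(x')\cup S_f$, equivariance of $h$ gives
\[
(\rho\bigact\Lambda(h)(z))(x')=(\rho_2\circ\rho)\bigact\bigl(h(z,\rho_1\bigact x')\bigr)=h\bigl((\rho_2\circ\rho)\bigact z,\ (\rho_2\circ\rho\circ\rho_1)\bigact x'\bigr).
\]
One then shows $(\rho_2\circ\rho)\bigact z=\rho\bigact z$, using Lemma~\ref{lemm.supp.subsets} and the fact that $\img(\rho_1)=\dom(\rho_2)$ is chosen disjoint from $S_f$ and from $\nontriv(\rho)$; and $(\rho_2\circ\rho\circ\rho_1)\bigact x'=x'$, because $\rho_1$ carries $\supp(x')\cap\nontriv(\rho)$ to fresh atoms so $\rho$ acts trivially on $\rho_1\bigact x'$, while $\rho_2\circ\rho_1$ restricts to the identity on $\supp(x')$. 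Hence the right-hand side equals $h(\rho\bigact z,x')=\Lambda(h)(\rho\bigact z)(x')$, as required. I expect this freshening-pair bookkeeping, rather than any conceptual difficulty, to be the main obstacle; it is the renaming-set counterpart of the trivial identity $\pi\act f(x)=(\pi\act f)(\pi\act x)$ that holds for the conjugation action in \theory{PmsPrm} (see Remark~\ref{rmrk.conj.action}).
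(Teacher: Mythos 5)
Your proposal is correct and follows essentially the same route as the paper: the paper's proof is a short sketch that exhibits the currying/uncurrying bijection and checks that $x\mapsto\lam{y}G(x,y)$ is supported by $\supp(x)$, exactly as you do. The additional work you carry out --- equivariance of evaluation via Lemma~\ref{lemm.renaming.distribute}, and especially the freshening-pair verification that $\rho\bigact\Lambda(h)(z)=\Lambda(h)(\rho\bigact z)$ --- is material the paper leaves implicit under ``currying and uncurrying as usual,'' and your bookkeeping for that step is sound.
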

\begin{proof}
The bijection between $(\rs X\times\rs Y)\to\rs Z$ and $\rs X\to (\rs X\Rightarrow\rs Y)$ is given by currying and uncurrying as usual.
Thus $G:(\rs X\times\rs Y)\to\rs Z$ maps to $x\mapsto \lam{y}G(x,y)$.
It is not hard to verify that if $\dom(\rho)\cap \supp(x)=\varnothing$ then
$$
(\rho\bigact \lam{y}F(x,y))(y) = \rho\bigact F(x,y) = F(x,\rho\bigact y) = (\lam{y}F(x,y))(\rho\bigact y) .
$$
Thus $\lam{y}G(x,y)$ is supported by $\supp(x)$ and is in $\rs Y\Rightarrow\rs Z$.
\end{proof}


We take a moment to build a particular exponential which will be useful later.
\begin{defn}
\label{defn.lambda.a}
Suppose $x\in|\rs X|$ and $a\in\mathbb A_\nu$.
Write $\lam{a}x\in|\mathbb A_\nu|\to|\rs X|$ for the function mapping $a$ to $x$ and $b$ to $[a\ssm b]\bigact x$.
\end{defn}

\begin{lemm}
$\lam{a}x\in |\mathbb A_\nu\Rightarrow\rs X|$.
\end{lemm}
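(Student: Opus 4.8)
The plan is to verify directly that $\lam{a}x$ is \emph{supported} in the sense of Definition~\ref{defn.exp.ren}: it is already a total function $|\mathbb A_\nu|\to|\rs X|$ by Definition~\ref{defn.lambda.a}, so by Definitions~\ref{defn.frs.exp} and~\ref{defn.exp.ren} supportedness is precisely the content of $\lam{a}x\in|\mathbb A_\nu\Rightarrow\rs X|$ (using that $\mathbb A_\nu$ is a renaming set with action $\rho\bigact a=\rho(a)$).

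First I would record the uniform description $(\lam{a}x)(y)=[a\ssm y]\bigact x$ for every atom $y\in\mathbb A_\nu$: for $y$ distinct from $a$ this is Definition~\ref{defn.lambda.a}, and for $y=a$ it holds because $[a\ssm a]$ unfolds (Definition~\ref{defn.renaming}) to the identity renaming, so $[a\ssm a]\bigact x=\id\bigact x=x$. Next I would pick the witnessing permission set: since $\rs X\in\theory{PmsRen}$, the element $x$ is supported, so by the least-support lemma following Definition~\ref{defn.finsupp} there is a permission set $S_f$ supporting $x$ with $\supp(x)\subseteq S_f$, and I claim this same $S_f$ witnesses that $\lam{a}x$ is supported.

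Then comes the computation. Fix $y\in\mathbb A_\nu$ and a renaming $\rho$ with $\dom(\rho)\cap S_f=\varnothing$. Using the uniform formula and the monoid action, $\rho\bigact((\lam{a}x)(y))=(\rho\circ[a\ssm y])\bigact x$ while $(\lam{a}x)(\rho\bigact y)=[a\ssm\rho(y)]\bigact x$, so by Definition~\ref{defn.finsupp} it suffices to check that $\rho\circ[a\ssm y]$ and $[a\ssm\rho(y)]$ agree on every $d\in\supp(x)$. This is an immediate case split: if $d=a$, both send $d$ to $\rho(y)$; if $d\neq a$, then an atomic renaming fixes $d$, so $[a\ssm y](d)=d$ and $[a\ssm\rho(y)](d)=d$, and moreover $\rho(d)=d$ since $d\in\supp(x)\subseteq S_f$ and $\dom(\rho)\cap S_f=\varnothing$, so both send $d$ to $d$. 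Hence the two renamings act identically on $x$, which is the required equation $\rho\bigact((\lam{a}x)(y))=(\lam{a}x)(\rho\bigact y)$.

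There is no real obstacle here; the only points worth a moment's thought are the uniform formula (which collapses the $y=a$ and $y\neq a$ clauses of Definition~\ref{defn.lambda.a} into a single line) and the passage from ``agree on $\supp(x)$'' to ``same action on $x$'', which is exactly Definition~\ref{defn.finsupp}. It is worth noting that $a$ itself need not lie in $S_f$: only $\supp(x)\subseteq S_f$ is used.
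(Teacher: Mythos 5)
Your proof is correct and follows essentially the same route as the paper's: both verify supportedness directly from Definition~\ref{defn.exp.ren} by showing that $\rho\circ[a\ssm y]$ and $[a\ssm\rho(y)]$ act identically on $x$ when $\dom(\rho)$ misses the support of $x$ (the paper routes this through an auxiliary renaming $\rho\text{-}a$, you do a direct case split on $d=a$ versus $d\neq a$, which is the same computation). Your explicit choice of a permission set $S_f\supseteq\supp(x)$ is a slightly more careful reading of Definition~\ref{defn.exp.ren} than the paper's ``supported by $\supp(x)$'', but this is a cosmetic difference.
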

\begin{proof} 
It suffices to show that $\lam{a}x$ is supported by $\supp(x)$ (in fact, it is also supported by $\supp(x){\setminus}\{a\}$).
Suppose $\dom(\rho)\cap\supp(x)=\varnothing$ and $z\in\mathbb A_\nu$ ($z$ is not necessarily distinct from $a$).
Write $\rho\text{-}a$ for the renaming such that $(\rho\text{-}a)(b)=\rho(b)$ and $(\rho\text{-}a)(a)=a$.
We sketch the relevant reasoning:
$$
\rho\bigact((\lam{a}x)z) = (\rho\circ [a\ssm z])\bigact x
=([a\ssm\rho(z)]\circ(\rho\text{-}a))\bigact x=[a\ssm\rho(z)]\bigact x=(\lam{a}x)(\rho\bigact z) 
$$
\end{proof}

\subsection{Atoms, products, atoms-abstraction, and functions out of atoms}
\label{subsect.atoms.ren.example}

\subsubsection{Atoms}

\begin{defn}
\label{defn.bool}
Write $\mathbb B$ for the nominal set and the permutation/renaming set with underlying set $\{0,1\}$ and the \deffont{trivial} permutation/renaming action such that $\pi\act x=x$/$\rho\bigact x=x$ always.

We will be lax and write $x\in\mathbb B$ for $x\in|\mathbb B|$.

Write $\mathbb A_\nu$ for the permutation set and the renaming set with underlying set $\mathbb A_\nu$ and the natural permutation/renaming action such that $\pi\act x=\pi(x)$/$\rho\bigact x=\rho(x)$ always.

We will be lax and write $x\in\mathbb A_\nu$ for $x\in|\mathbb A_\nu|$.
\end{defn}

\subsubsection{Atoms-abstraction in permutation and renaming sets}

\begin{defn}
\label{defn.abstraction.sets}
Suppose $\ns X$ is a supported permutation set.
Suppose $x\in |\ns X|$ and $a\in\mathbb A_\nu$.
Define \deffont{atoms-abstraction} $[a]x$ and $[\mathbb A_\nu]\ns X$ by:
\begin{frameqn}
\begin{array}{r@{\ }l}
[a]x =& \{(a,x)\}\cup \{(b,(b\ a)\act x)\mid b\in\mathbb A_\nu{\setminus} \f{supp}(x)\}
\\
|[\mathbb A_\nu]\ns X| =& \{[a]x\mid a\in\mathbb A_\nu,\ x\in|\ns X|\} 
\\
\pi\act [a]x =& [\pi(a)]\pi\act x
\end{array}
\end{frameqn}
\end{defn}

\begin{lemm}
\label{lemm.supp.abstraction}
Suppose $\ns X$ is a supported permutation set.
\begin{enumerate*}
\item
$[\mathbb A_\nu]\ns X$ is a supported permutation set.
\item
$[a]x{=}[a]x'$ if and only if $x{=}x'$, for $a{\in}\mathbb A_\nu$ and $x{\in} |\ns X|$.
\item
$[a]x{=}[a']x'$ if and only if $a'{\not\in}\f{supp}(x)$ and $(a'\, a)\act x{=}x'$, for $a,a'{\in}\mathbb A_\nu$ and $x,x'{\in}|\ns X|$.
\end{enumerate*}
\end{lemm}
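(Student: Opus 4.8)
The plan is to establish parts~2, 3, and~1 in that order, since the well-definedness of the permutation action needed for part~1 follows cleanly from the equality characterisation of part~3, which itself is read off directly from the description of $[a]x$ as a set of ordered pairs in Definition~\ref{defn.abstraction.sets}. For part~2: by the permutative convention the only pair in $[a]x'$ whose first component is $a$ is $(a,x')$ (every $(b,(b\ a)\act x')$ has $b\neq a$). Hence if $[a]x=[a]x'$ then $(a,x)\in[a]x'$, so $(a,x)=(a,x')$ and $x=x'$; the converse is immediate.

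For part~3 I would take $a\neq a'$ (the permutative convention, and the only interesting case). Left-to-right is again read off from the set description: the unique pair in $[a']x'$ with first component $a'$ is $(a',x')$, whereas the set of pairs in $[a]x$ with first component $a'$ is $\{(a',(a'\ a)\act x)\}$ when $a'\notin\supp(x)$ and $\varnothing$ otherwise; so $[a]x=[a']x'$ forces both $a'\notin\supp(x)$ and $(a'\ a)\act x=x'$. For right-to-left, assuming $a'\notin\supp(x)$ and $x'=(a'\ a)\act x$, I would compare the two sets of pairs component by component. Here I would use Lemma~\ref{lemm.supp.subsets} to get $\supp(x')=(a'\ a)\act\supp(x)$ (so in particular $a\notin\supp(x')$), together with the elementary consequence of Definition~\ref{defn.finsupp} that $(b\ c)\act y=y$ whenever $b,c\notin\supp(y)$; the one genuinely fiddly point is checking $(b\ a')\act x'=(b\ a)\act x$ for $b\notin\supp(x)\cup\{a,a'\}$, which comes down to rearranging a composite of swappings.

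For part~1 there are three things to verify. First, the action $[a]x\mapsto[\pi(a)]\pi\act x$ is well-defined on $|[\mathbb A_\nu]\ns X|$: if $[a]x=[a']x'$ with $a\neq a'$, then by part~3 we have $a'\notin\supp(x)$ and $(a'\ a)\act x=x'$; applying $\pi$ and using $\supp(\pi\act x)=\pi\act\supp(x)$ (Lemma~\ref{lemm.supp.subsets}) yields $\pi(a')\notin\supp(\pi\act x)$ and $(\pi(a')\ \pi(a))\act(\pi\act x)=\pi\act x'$, so part~3 gives $[\pi(a)]\pi\act x=[\pi(a')]\pi\act x'$ (the case $a=a'$ uses part~2). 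Second, it is a group action: $\id\act[a]x=[a]x$ and $\pi\act(\pi'\act[a]x)=[\pi(\pi'(a))]\pi\act(\pi'\act x)=(\pi\circ\pi')\act[a]x$ are one-liners. Third, every element is supported: $\supp(x)$ supports $[a]x$ in $[\mathbb A_\nu]\ns X$ — when $a\in\supp(x)$ this is immediate from $\supp(x)$ supporting $x$, and when $a\notin\supp(x)$ one checks, for $\pi,\pi'$ agreeing on $\supp(x)$, that $\pi(a),\pi'(a)\notin\supp(\pi\act x)$ and hence $[\pi(a)]\pi\act x=[\pi'(a)]\pi\act x=[\pi'(a)]\pi'\act x$ by part~3 and the swapping fact above. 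Since $\supp(x)$ is contained in some permission set (as $\ns X$ is supported) and supporting sets are upward closed, $[a]x$ has a supporting permission set, so $\supp([a]x)$ exists; in fact $\supp([a]x)=\supp(x)\setminus\{a\}$, though this is not needed here.

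The main obstacle is the right-to-left direction of part~3: the rest is essentially bookkeeping, but that step requires keeping track of several swappings and invoking the support lemmas at exactly the right moments, and it is the natural place for a calculation to go astray.
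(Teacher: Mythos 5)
Your proof is correct, and the one genuinely delicate step --- checking $(b\ a')\act x'=(b\ a)\act x$ for $b\notin\supp(x)\cup\{a,a'\}$, which reduces to the fact that $(b\ a')\circ(a'\ a)$ and $(b\ a)$ agree on $\supp(x)$ --- does go through. The paper states this lemma without proof, treating it as a standard fact about Gabbay--Pitts atoms-abstraction (cf.\ the references given for Lemma~\ref{lemm.properties.of.support}), and your argument is essentially the standard one: read the equality characterisations directly off the graph description in Definition~\ref{defn.abstraction.sets}, then derive well-definedness and supportedness of the permutation action from part~3 together with Lemma~\ref{lemm.supp.subsets}.
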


We do not need Definition~\ref{defn.abstraction.sets'} for the completeness proof but we include it for the interested reader to compare and constrast with Definition~\ref{defn.abstraction.sets}.
\begin{defn}
\label{defn.abstraction.sets'}
Suppose $\rs X$ is a supported renaming set.
Suppose $x\in |\rs X|$ and $a\in\mathbb A_\nu$.
Define \deffont{atoms-abstraction} $[a]x$ and $[\mathbb A_\nu]\rs X$ by:
\begin{frameqn}
\begin{array}{r@{\ }l}
[a]x =& \{(a,x)\}\cup \{(b,[a\ssm b]\bigact x)\mid b\in\mathbb A_\nu{\setminus} \supp(x)\}
\\
|[\mathbb A_\nu]\rs X| =& \{[a]x\mid a\in\mathbb A_\nu,\ x\in|\rs X|\} 
\\
\rho\bigact [a]x =& [a]\rho\bigact x\quad (a\not\in\f{nontriv}(\rho))
\end{array}
\end{frameqn}
\end{defn}

\begin{rmrk}
\label{rmrk.total-partial}
Definitions~\ref{defn.abstraction.sets} and~\ref{defn.abstraction.sets'} look similar; both define graphs of partial functions defined on $\supp(x)\setminus\{a\}$.
However, the critical difference is that in renaming sets, this partial function can be extended to a total function in $\mathbb A_\nu\to\rs X$.

That is, $[a]x\in[\mathbb A_\nu]\rs X$ determines a total function which we could write $\lam{a}x$, mapping $a$ to $x$ and any other $b$ to $[a\ssm b]\bigact x$.
We return to this in Lemma~\ref{lemm.non-iso} where we show that the natural map from $[\mathbb A_\nu]\rs X$ to $\mathbb A_\nu\Rightarrow\rs X$ is not surjective; so Definition~\ref{defn.abstraction.sets'} identifies a `small' and `well-behaved' subset of the function space. 
\end{rmrk}

A cognate of Lemma~\ref{lemm.supp.abstraction} also holds for $[\mathbb A_\nu]\rs X$:
\begin{lemm}
\label{lemm.supp.abstraction'}
Suppose $\rs X$ is a finitely-supported permutation set.
\begin{enumerate*}
\item
$[\mathbb A_\nu]\rs X$ is a permissive-nominal set.
\item
$[a]x{=}[a]x'$ if and only if $x{=}x'$, for $a{\in}\mathbb A_\nu$ and $x{\in} |\rs X|$.
\item
$[a]x{=}[a']x'$ if and only if $a'{\not\in}\f{supp}(x)$ and $(a'\, a)\bigact x{=}x'$ (or equivalently $[a\ssm a']\bigact x{=}x'$), for $a,a'{\in}\mathbb A_\nu$ and $x,x'{\in}|\ns X|$.
\end{enumerate*}
\end{lemm}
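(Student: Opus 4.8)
The plan is to establish parts (2) and (3) first, reading each $[a]x$ as in Definition~\ref{defn.abstraction.sets'} as the graph of a partial function on $\mathbb A_\nu$, and then to deduce part (1) from them. The whole argument runs parallel to Lemma~\ref{lemm.supp.abstraction}; the one genuinely new ingredient is that a renaming need not be injective, so support can shrink under $\bigact$, and I will flag exactly where this forces me to appeal to Lemma~\ref{lemm.supp.subsets} in its sharper, equality form.

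For part (2) I would evaluate the graph at the atom $a$. The only pair in $[a]x$ with first component $a$ has second component $x$ (when $a\notin\supp(x)$ the index $b=a$ merely reproduces $(a,[a\ssm a]\bigact x)=(a,x)$, and when $a\in\supp(x)$ no other listed pair has first component $a$). The same holds for $[a]x'$ with value $x'$, so $[a]x=[a]x'$ as sets forces $x=x'$; the converse is immediate. For part (3), take $a\neq a'$ by the permutative convention. In the forward direction I evaluate at $a'$: since $[a']x'$ always contains $(a',x')$, the equality $[a]x=[a']x'$ puts $a'$ in the domain of $[a]x$, which (as $a'\neq a$) means $a'\notin\supp(x)$; then the value of $[a]x$ at $a'$, namely $[a\ssm a']\bigact x$, must equal $x'$, and since $a'\notin\supp(x)$ the renamings $[a\ssm a']$ and $(a'\ a)$ agree off $a'$ hence on $\supp(x)$, giving $[a\ssm a']\bigact x=(a'\ a)\bigact x=x'$ and simultaneously the claimed equivalence of the two formulations. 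In the backward direction I assume $a'\notin\supp(x)$ and $x'=(a'\ a)\bigact x$, first compute $\supp(x')=(a'\ a)\act\supp(x)$ — here I use that a swapping is injective, so Lemma~\ref{lemm.supp.subsets} yields equality, not mere inclusion — and then check that the two graphs coincide by comparing their values at $a$, at $a'$, and at each remaining atom $c$, in each case reducing a composite renaming to one that agrees with $[a\ssm c]$ (or with the identity) on $\supp(x)$.

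For part (1), the underlying set is fixed by Definition~\ref{defn.abstraction.sets'}, so what remains is to verify that $[\mathbb A_\nu]\rs X$ is a permissive-nominal set as part~(1) asserts: that the renaming action $\rho\bigact[a]x=[a](\rho\bigact x)$, defined only for $a\notin\nontriv(\rho)$, is total, well-defined, a monoid action, and that every element is supported. Totality is arranged by $\alpha$-renaming: by part (3) I may write $[a]x=[c]\bigl((c\ a)\bigact x\bigr)$ for some $c\notin\supp(x)\cup\nontriv(\rho)$, which exists since $\supp(x)$ is co-infinite and $\nontriv(\rho)$ is finite, and then the clause applies. Support of $[a]x$ by $\supp(x)\setminus\{a\}$ follows by checking that renamings agreeing on $\supp(x)\setminus\{a\}$ act equally on the graph, again via part (3); the monoid laws $\id\bigact[a]x=[a]x$ and $\rho\bigact(\rho'\bigact[a]x)=(\rho\circ\rho')\bigact[a]x$ follow by choosing a single representative whose abstracted atom avoids $\nontriv$ of all renamings in play.

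The hard part is the well-definedness of this action: because a general $\rho$ is not injective, the two freshened representatives $[c]\bigl((c\ a)\bigact x\bigr)$ and $[c']\bigl((c'\ a)\bigact x\bigr)$ must be shown to yield the same abstraction after applying $\rho$, and for this Lemma~\ref{lemm.supp.subsets} supplies only the inclusion $\supp(\rho\bigact y)\subseteq\rho\act\supp(y)$. I would resolve this by keeping $c,c'$ outside both $\supp(x)$ and $\nontriv(\rho)$ and reconciling the two outcomes through a further application of part (3). The single place where the equality form of Lemma~\ref{lemm.supp.subsets} is indispensable is the support computation $\supp\bigl((a'\ a)\bigact x\bigr)=(a'\ a)\act\supp(x)$ in the backward direction of (3), which holds precisely because the swapping $(a'\ a)$ is injective.
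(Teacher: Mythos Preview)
Your proposal is correct. The paper itself gives no proof of this lemma at all: it is introduced only by the sentence ``A cognate of Lemma~\ref{lemm.supp.abstraction} also holds for $[\mathbb A_\nu]\rs X$'', and Lemma~\ref{lemm.supp.abstraction} is likewise stated without proof. So there is nothing to compare against at the level of argument; you have supplied a genuine proof where the paper merely gestures at one.

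Your write-up correctly isolates the one point where the renaming case departs from the permutation case, namely that $\supp(\rho\bigact x)\subseteq\rho\bigact\supp(x)$ is in general only an inclusion, and you invoke the equality case of Lemma~\ref{lemm.supp.subsets} exactly where it is available (for the bijective swapping $(a'\ a)$) and only the inclusion where it is needed (in the well-definedness check for the action). The graph-evaluation arguments for parts~(2) and~(3) are sound, and the reduction of well-definedness to part~(3) via a common fresh representative with $c,c'\notin\supp(x)\cup\nontriv(\rho)$ goes through because $c,c'\notin\nontriv(\rho)$ makes $(c'\ c)$ commute with $\rho$, after which the two composites $(c'\ c)\circ(c\ a)$ and $(c'\ a)$ agree on $\supp(x)$. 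One small addendum worth making explicit: to cover the case where the \emph{original} abstracted atom $a$ already satisfies $a\notin\nontriv(\rho)$, you should either note that your freshened-versus-freshened comparison subsumes this by taking $c=a$ as one of the representatives, or observe that the general case reduces to yours by transitivity through a single freshened representative.
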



\subsubsection{Product}

\begin{defn}
\label{defn.times}
If $\ns X_i$ and $\rs X_i$ are supported permutation sets for $1\leq i\leq n$ then define $\ns X_1\times\ldots\times \ns X_n$ and $\rs X_1\times\ldots\times\rs X_n$ by:
$$
\begin{array}{r@{\ }l}
|\ns X_1\times\ldots\times\ns X_n|=&|\ns X_1|\times\ldots\times|\ns X_n|
\\
\pi\act (x_1,\ldots,x_n)=&(\pi\act x_1,\ldots,\pi\act x_n)
\end{array}
\quad\quad
\begin{array}{r@{\ }l}
|\rs X_1\times\ldots\times\rs X_n|=&|\rs X_1|\times\ldots\times|\rs X_n|
\\
\rho\bigact (x_1,\ldots,x_n)=&(\rho\bigact x_1,\ldots,\rho\bigact x_n)
\end{array}
$$
\end{defn}

\begin{lemm}
\label{lemm.properties.of.support}
\begin{itemize*}
\item
$\f{supp}(a)=\{a\}$.
\item
$\f{supp}([a]x)=\f{supp}(x)\setminus\{a\}$.
\item
$\f{supp}((x_1,\ldots,x_n))=\bigcup\{\f{supp}(x_i)\mid 1\leq i\leq n\}$.
\end{itemize*}
\end{lemm}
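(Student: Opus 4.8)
The plan is to treat each of the three bullets in the same way: exhibit a set that \emph{supports} the element in the sense of Definition~\ref{defn.finsupp}, then show that no atom of that set may be dropped, so that by the least-support characterisation the claimed set equals $\f{supp}$. Existence of the support in each case is immediate, because the claimed set is contained in a permission set: $\{a\}\subseteq\atomsdown\cup\{a\}$; $\supp(x)\setminus\{a\}\subseteq S$ for any permission set $S\supseteq\supp(x)$; and $\bigcup_i\supp(x_i)$ is contained in a finite union of permission sets $S_i\supseteq\supp(x_i)$, which is again a permission set. I will present the argument for permutation sets; the renaming-set versions are obtained by writing $\rho\bigact$ for $\pi\act$ throughout and replacing Lemma~\ref{lemm.supp.abstraction} by Lemma~\ref{lemm.supp.abstraction'} and Lemma~\ref{lemm.supp.subsets}'s first clause by its second.

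For $\supp(a)=\{a\}$: by Definition~\ref{defn.bool} the action is $\pi\act a=\pi(a)$, so if $\pi,\pi'$ agree on $\{a\}$ then $\pi\act a=\pi(a)=\pi'(a)=\pi'\act a$; hence $\{a\}$ supports $a$. It is least because $\varnothing$ does not support $a$: for any $b\in\mathbb A_\nu$ the permutations $(a\ b)$ and $\id$ agree on $\varnothing$ but $(a\ b)\act a=b\neq a$.

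For $\supp([a]x)=\supp(x)\setminus\{a\}$: first I would show $\supp(x)\setminus\{a\}$ supports $[a]x$. Let $\pi,\pi'$ agree on $\supp(x)\setminus\{a\}$ and put $\sigma={\pi'}^{\mone}\circ\pi$, which fixes $\supp(x)\setminus\{a\}$ pointwise; it suffices to prove $\sigma\act[a]x=[a]x$, i.e.\ $[\sigma(a)]\sigma\act x=[a]x$. If $\sigma(a)=a$ then $\sigma$ fixes all of $\supp(x)$, so $\sigma\act x=x$ and we are done. If $\sigma(a)=b\neq a$, injectivity of $\sigma$ forces $b\notin\supp(x)$ (otherwise $\sigma(b)=b=\sigma(a)$ with $a\neq b$), and one checks directly that $\sigma^{\mone}\circ(b\ a)$ fixes $\supp(x)$ pointwise, so $(b\ a)\act x=\sigma\act x$; by Lemma~\ref{lemm.supp.abstraction}(3) this gives $[a]x=[b](\sigma\act x)=\sigma\act[a]x$. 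For the converse inclusion, suppose $b\in\supp(x)\setminus\{a\}$ but $b\notin\supp([a]x)$; pick $c\in\mathbb A_\nu$ with $c\notin\supp(x)\cup\supp([a]x)\cup\{a,b\}$ (possible since $\supp([a]x)$ is co-infinite, being contained in a permission set). Then $(c\ b)$ and $\id$ agree on $\supp([a]x)$, so $(c\ b)\act[a]x=[a]((c\ b)\act x)=[a]x$, whence $(c\ b)\act x=x$ by Lemma~\ref{lemm.supp.abstraction}(2); but $c\in(c\ b)\act\supp(x)=\supp((c\ b)\act x)$ while $c\notin\supp(x)$, a contradiction. Hence $\supp(x)\setminus\{a\}\subseteq\supp([a]x)$, and with the first part we get equality.

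For $\supp((x_1,\dots,x_n))=\bigcup_i\supp(x_i)$: if $\pi,\pi'$ agree on $\bigcup_i\supp(x_i)$ they agree on each $\supp(x_i)$, so $\pi\act x_i=\pi'\act x_i$ for every $i$ and hence $\pi\act(x_1,\dots,x_n)=\pi'\act(x_1,\dots,x_n)$ by the pointwise action of Definition~\ref{defn.times}; thus $\bigcup_i\supp(x_i)$ supports the tuple. Conversely each $i$th projection is equivariant, so by Lemma~\ref{lemm.equivar.reduces.supp} $\supp(x_i)\subseteq\supp((x_1,\dots,x_n))$; taking the union and using minimality of support gives equality. The main obstacle is the atoms-abstraction bullet, and within it the claim that $\supp(x)\setminus\{a\}$ supports $[a]x$: that is where the interplay between the action on abstractions and the equality test of Lemma~\ref{lemm.supp.abstraction}(3) is genuinely used, and where in the renaming-set version extra care is needed since renamings are not invertible, so the reduction to a single $\sigma$ is unavailable and one must reason with two renamings together with the inclusion $\supp(\rho\bigact x)\subseteq\rho\bigact\supp(x)$ of Lemma~\ref{lemm.supp.subsets}.
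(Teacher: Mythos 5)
Your proof is correct, and it fills in exactly the ``routine arguments'' that the paper's one-line proof delegates to the cited references (\cite{gabbay:newaas-jv} and \cite[Corollary~2.30 \& Theorem~3.11]{gabbay:fountl}): exhibit a supporting set, then show minimality, using the swapping/freshness argument for atoms and abstractions and equivariance of projections for tuples. The only cosmetic remark is that in the first bullet ``least because $\varnothing$ does not support $a$'' silently uses that the least supporting set is contained in the supporting set $\{a\}$, hence is either $\varnothing$ or $\{a\}$ --- which is what you intend and is valid.
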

\begin{proof}
By routine arguments like those in \cite{gabbay:newaas-jv} or \cite[Corollary~2.30 \& Theorem~3.11]{gabbay:fountl}. 
\end{proof}

\subsection{The free extension of a permutation set to a renaming set}
\label{subsect.free.ext}

\begin{nttn}
If $\sim$ is an equivalence relation, $[\text{-}]_\sim$ will denote the equivalence class of $\text{-}$ in $\sim$. 
\end{nttn}

\begin{defn}
\label{defn.free.ren}
We define a functor $\Ren{\text{-}}$ from \theory{PmsPrm} to \theory{PmsRen} as follows:
\begin{itemize*}
\item
\emph{Action of $\Ren{\text{-}}$ on objects.}

$\ns X$ maps to $\Ren{\ns X}=((\mathbb R_{\text{fin}}\times|\ns X|)/{\sim},\bigact)$ where $\rho\bigact [(\rho',x)]_\sim = [(\rho\circ\rho',x)]_\sim$ and $\sim$ is the least equivalence relation such that:
\begin{frametxt}
\begin{enumerate*}
\item
If $\rho(a)=\rho'(a)$ for every $a\in\supp(x)$ then $(\rho,x)\sim (\rho',x)$.
\item
$(\rho\circ\pi,x)\sim(\rho,\pi\act x)$.
\end{enumerate*} 
\end{frametxt}
For convenience we will write $[(\rho,x)]_\sim$ as $\rho\bigact x$.
\item
\emph{Action of $\Ren{\text{-}}$ on arrows.}

An arrow $F:\ns X\longrightarrow\ns Y$ maps to $\Ren F:\Ren{\ns X}\longrightarrow\Ren{\ns Y}$ given by:
$$
\Ren F(\rho\bigact x)=\rho\bigact F(x)
$$
\end{itemize*}
\end{defn}

\begin{lemm}
$\Ren F$ is well-defined; that is, that if $(\rho,x)\sim(\rho',x')$ then $\Ren F((\rho,x))\sim \Ren F((\rho',x'))$.
\end{lemm}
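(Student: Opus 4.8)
The plan is to exploit that $\sim$ is, by Definition~\ref{defn.free.ren}, the \emph{least} equivalence relation on $\mathbb R_{\text{fin}}\times|\ns X|$ closed under clauses~(1) and~(2); so it is enough to produce one equivalence relation that is closed under those two clauses and that witnesses the claim. I would set
\[
(\rho,x)\mathrel{R}(\rho',x') \quad\stackrel{\text{def}}{\iff}\quad (\rho,F(x))\sim(\rho',F(x')) .
\]
Since $\sim$ is reflexive, symmetric and transitive, $R$ inherits all three and so is an equivalence relation on $\mathbb R_{\text{fin}}\times|\ns X|$. Once I check that $R$ contains every pair generated by clauses~(1) and~(2), minimality of $\sim$ forces ${\sim}\subseteq R$, and ${\sim}\subseteq R$ says exactly that $(\rho,x)\sim(\rho',x')$ implies $(\rho,F(x))\sim(\rho',F(x'))$ --- i.e.\ that $\Ren F$ descends to the quotient and is well-defined.

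The two checks then go as follows. For clause~(1): assume $\rho(a)=\rho'(a)$ for all $a\in\supp(x)$ and aim to show $(\rho,F(x))\sim(\rho',F(x))$; since $F$ is an arrow of \theory{PmsPrm} it is equivariant, so Lemma~\ref{lemm.equivar.reduces.supp} gives $\supp(F(x))\subseteq\supp(x)$, whence $\rho$ and $\rho'$ already agree on $\supp(F(x))$ and clause~(1) applied at $F(x)\in|\ns Y|$ delivers the required relation. For clause~(2): the generated pair is $\bigl((\rho\circ\pi,x),(\rho,\pi\act x)\bigr)$, so I need $(\rho\circ\pi,F(x))\sim(\rho,F(\pi\act x))$; equivariance of $F$ rewrites $F(\pi\act x)$ as $\pi\act F(x)$, and then this is literally an instance of clause~(2) for $F(x)$. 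Both generators lie in $R$, and the argument closes.

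The one genuinely load-bearing point --- what passes for the main obstacle here --- is the use of Lemma~\ref{lemm.equivar.reduces.supp} in clause~(1): the hypothesis that $\rho$ and $\rho'$ agree on $\supp(x)$ must be transported to agreement on $\supp(F(x))$ before clause~(1) can be reinvoked at $F(x)$, and without support shrinking under equivariant maps this would simply fail. Everything else is the standard ``close the generating relation under an equivalence relation and appeal to minimality'' template, with no serious calculation involved; equivariance of $\Ren F$ itself, if wanted, is then immediate from the defining equation $\Ren F(\rho\bigact x)=\rho\bigact F(x)$.
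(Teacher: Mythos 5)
Your proof is correct and is essentially the paper's own argument: the paper does an ``induction on the derivation of $(\rho,x)\sim(\rho',x')$'' and checks exactly your two generator cases, using support-shrinkage under equivariant maps (Lemma~\ref{lemm.equivar.reduces.supp}) for clause~(1) and equivariance of $F$ for clause~(2). Your packaging via the auxiliary equivalence relation $R$ and minimality of $\sim$ is just a more explicit rendering of that same induction principle, so there is nothing substantive to distinguish the two.
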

\begin{proof}
Induction on the derivation that $(\rho,x){\sim}(\rho',x')$.
We consider the two base cases:
\begin{itemize*}
\item
\emph{The case $\rho(a)=\rho'(a)$ for every $a\in\supp(x)$.}\quad
By part~2 of Lemma~\ref{lemm.equivar.reduces.supp} also $\rho(a)=\rho'(a)$ for every $a\in\supp(F(x))$.
\item
\emph{The case $(\rho\circ\pi,x)\sim (\rho,\pi\act x)$.}\quad
Then also $(\rho\circ\pi,F(x))\sim (\rho,\pi\act F(x))$ and by equivariance $\pi\act F(x)=F(\pi\act x)$.
\qedhere\end{itemize*}
\end{proof}

\begin{rmrk}
\label{rmrk.alpha}
Rules~2 and~1 of Definition~\ref{defn.free.ren} can be viewed as $\alpha$-conversion and garbage-collection respectively.
Thus in $\rho\bigact x\in\Ren{\ns X}$ we may without loss of generality (using rule~2) assume that $\dom(\rho)\cap S=\varnothing$ for any permission set $S$, and we may also assume (using rule~1) that $\dom(\rho)\subseteq\supp(x)$.
\end{rmrk}

\begin{lemm}
\label{lemm.BA.ren.isos}
\begin{enumerate*}
\item
$\Ren{\mathbb B}$ (for $\mathbb B$ considered a set with the trivial permutation action) is isomorphic to $\mathbb B$ (for $\mathbb B$ considered a set with a trivial renaming action).
\item
$\Ren{\mathbb A_\nu}$ (for $\mathbb A_\nu$ with its natural permutation action) is isomorphic to $\mathbb A_\nu$ (for $\mathbb A_\nu$ with its natural renaming action).
\end{enumerate*}
\end{lemm}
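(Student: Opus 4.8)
The plan is to exhibit in each case an equivariant bijection between the two renaming sets. This suffices, because the inverse of an equivariant bijection of renaming sets is again equivariant: if $\phi$ is an equivariant bijection, then applying $\phi$ to both sides of the putative equation $\phi^\mone(\rho\bigact y)=\rho\bigact\phi^\mone(y)$ and using equivariance of $\phi$ shows the two sides have the same image, so injectivity of $\phi$ finishes. In both parts the natural candidate for $\phi$ is the \emph{evaluation} map that sends a class $\rho\bigact x\in\Ren{\ns X}$ (that is, $[(\rho,x)]_\sim$) to the element obtained by actually letting $\rho$ act on $x$ inside the renaming set $\mathbb B$ or $\mathbb A_\nu$ on the right. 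So the work reduces to checking this map is well-defined on $\sim$-classes, bijective, and equivariant.

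For part~1 I would first note that $\supp(x)=\varnothing$ for every $x\in|\mathbb B|$, since the permutation action on $\mathbb B$ is trivial. Hence rule~1 of Definition~\ref{defn.free.ren} identifies $(\rho,x)$ with $(\rho',x)$ for all renamings (its hypothesis is vacuous), while rule~2 adds nothing since $\pi\act x=x$. So $\Ren{\mathbb B}$ has exactly two $\sim$-classes and, by the same instance of rule~1, its renaming action is trivial. The evaluation map $[(\rho,x)]_\sim\mapsto x$ is then a bijection onto $\{0,1\}=|\mathbb B|$, and it is equivariant because both renaming actions in sight are trivial. This part is essentially bookkeeping.

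For part~2, take $\mathbb A_\nu$ with its natural permutation action, so $\supp(a)=\{a\}$ by Lemma~\ref{lemm.properties.of.support}, and define $\phi([(\rho,a)]_\sim)=\rho(a)$. Well-definedness against rule~1 is immediate; against rule~2 it is the identity $(\rho\circ\pi)(a)=\rho(\pi(a))$, which is exactly $\phi$ applied to the two sides of $(\rho\circ\pi,a)\sim(\rho,\pi\act a)$. Surjectivity is witnessed by $\phi([([a\ssm c],a)]_\sim)=c$ (using $\id$ in place of $[a\ssm c]$ when $c=a$). Equivariance is a one-line unwinding: $\phi(\rho\bigact[(\rho',a)]_\sim)=\phi([(\rho\circ\rho',a)]_\sim)=\rho(\rho'(a))=\rho\bigact\phi([(\rho',a)]_\sim)$.

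The one step that needs genuine care, and which I expect to be the main obstacle, is \emph{injectivity} of $\phi$ in part~2: from $\rho(a)=\rho'(b)$ I must deduce $(\rho,a)\sim(\rho',b)$. The idea is to use both generating rules of $\sim$. First, pick a permutation $\pi$ with $\pi(a)=b$ --- the swapping $(a\ b)$, or $\id$ if $a=b$ --- and use rule~2 (``$\alpha$-conversion'') to rewrite $(\rho',b)=(\rho',\pi\act a)\sim(\rho'\circ\pi,a)$. Then $(\rho'\circ\pi)(a)=\rho'(b)=\rho(a)$, so $\rho$ and $\rho'\circ\pi$ agree on $\supp(a)=\{a\}$, and rule~1 (``garbage-collection'') gives $(\rho,a)\sim(\rho'\circ\pi,a)$. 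Chaining, $(\rho,a)\sim(\rho',b)$, as required. Everything else is a direct unwinding of Definition~\ref{defn.free.ren}, of the actions on $\mathbb B$ and $\mathbb A_\nu$ from Definition~\ref{defn.bool}, and of equivariance from Definition~\ref{defn.equivariant}.
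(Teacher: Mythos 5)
Your proposal is correct and takes essentially the same route as the paper: the paper's (part-2-only) proof is exactly the normal-form computation $(\rho,a)\sim((\rho(a)\ a),a)\sim(\id,\rho(a))$ via rule~1 then rule~2, which is the same rule-1/rule-2 chain you use for injectivity of the evaluation map. You simply spell out the remaining bookkeeping (well-definedness, surjectivity, equivariance, the trivial part~1) that the paper leaves implicit.
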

\begin{proof}
We consider only the second part.
This follows if we note that according to the rules for $\sim$ in Definition~\ref{defn.free.ren},\ 
$$
(\rho,a)\stackrel{\text{\it rule~1}}{\sim} ((\rho(a)\ a),a)\stackrel{\text{\it rule~2}}{\sim} (\id,\rho(a)).
$$
\end{proof}

Where we are dealing with more than zero or one atoms at a time, isomorphisms like those in Lemma~\ref{lemm.BA.ren.isos} may fail:
\begin{lemm}
$\Ren{\mathbb A_\nu{\times}\mathbb A_\nu}$ is not isomorphic to $\Ren{\mathbb A_\nu}{\times}\Ren{\mathbb A_\nu}$ (which is isomorphic to $\mathbb A_\nu{\times}\mathbb A_\nu$).
\end{lemm}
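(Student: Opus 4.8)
The plan is to isolate a property of renaming sets that is manifestly preserved by isomorphism, holds for $\Ren{\mathbb A_\nu{\times}\mathbb A_\nu}$, and fails for $\mathbb A_\nu{\times}\mathbb A_\nu$. First note the parenthetical: by Lemma~\ref{lemm.BA.ren.isos}(2) we have $\Ren{\mathbb A_\nu}\cong\mathbb A_\nu$, and products in \theory{PmsRen} are computed pointwise (Definition~\ref{defn.times}), so $\Ren{\mathbb A_\nu}{\times}\Ren{\mathbb A_\nu}\cong\mathbb A_\nu{\times}\mathbb A_\nu$ with renaming action $\rho\bigact(a,b)=(\rho(a),\rho(b))$. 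The property $P$ I will use of a renaming set $\rs X$ is: \emph{there is an element $x$ with $|\supp(x)|=1$ such that $x\neq\rho\bigact y$ for every renaming $\rho$ and every $y\in|\rs X|$ with $|\supp(y)|=2$.} If $\phi$ is an isomorphism of renaming sets then $\phi$ and $\phi^{\mone}$ are equivariant, so $\supp(\phi(x))=\supp(x)$ by Lemma~\ref{lemm.equivar.reduces.supp}, and $\phi(\rho\bigact y)=\rho\bigact\phi(y)$; hence $\rs X$ satisfies $P$ if and only if $\rs Y$ does.

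Next I would check that $\mathbb A_\nu{\times}\mathbb A_\nu$ fails $P$. By Lemma~\ref{lemm.properties.of.support}, $\supp((a,b))=\{a\}\cup\{b\}$, so the elements of support size $1$ are exactly the diagonal pairs $(a,a)$, and each is a collapse $(a,a)=[b\ssm a]\bigact(a,b)$ of the pair $(a,b)$, which has support $\{a,b\}$ of size $2$. So $\mathbb A_\nu{\times}\mathbb A_\nu$ has no support-$1$ element of the kind $P$ requires.

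The substance is showing $\Ren{\mathbb A_\nu{\times}\mathbb A_\nu}$ \emph{does} satisfy $P$, witnessed by $d:=\id\bigact(a,a)$, i.e. $[(\id,(a,a))]_\sim$, the element coming from the diagonal pair. For this I would first record the closed form of $\sim$ from Definition~\ref{defn.free.ren} for $\ns X=\mathbb A_\nu{\times}\mathbb A_\nu$: $(\rho,x)\sim(\rho',x')$ iff there is a permutation $\pi$ with $x'=\pi\act x$ and $\rho'(\pi(a))=\rho(a)$ for all $a\in\supp(x)$; it is routine to verify that this relation is an equivalence relation containing both generating clauses and is contained in any equivalence relation that does, hence equals $\sim$. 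Three consequences then finish the argument. (i) $\supp(d)=\{a\}$: rule~1 shows $\{a\}$ supports $d$, and $[a\ssm b]\bigact d=[([a\ssm b],(a,a))]_\sim\neq[(\id,(a,a))]_\sim=d$ since no permutation $\pi$ can satisfy both $\pi(a)=a$ and $\pi(a)=[a\ssm b](a)=b$. (ii) Every element is $[(\sigma,(c,e))]_\sim$ with, by Remark~\ref{rmrk.alpha}, $\dom(\sigma)\subseteq\supp((c,e))$; if $c=e$ then $\supp([(\sigma,(c,e))]_\sim)\subseteq\sigma\bigact\{c\}$ by Lemma~\ref{lemm.supp.subsets}, of size $\leq 1$, so $|\supp(y)|=2$ forces $c\neq e$. (iii) For such $y$, $\rho\bigact y=[(\rho\circ\sigma,(c,e))]_\sim$, and if this equalled $d=[(\id,(a,a))]_\sim$ there would be a permutation $\pi$ with $(\pi(c),\pi(e))=(a,a)$, contradicting injectivity of $\pi$ since $c\neq e$. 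Thus $d$ witnesses $P$, and combining with the two previous paragraphs gives that $\Ren{\mathbb A_\nu{\times}\mathbb A_\nu}$ and $\mathbb A_\nu{\times}\mathbb A_\nu$ are not isomorphic.

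The main obstacle is getting concrete control of $\Ren{\mathbb A_\nu{\times}\mathbb A_\nu}$: nailing down the closed form of $\sim$ and then using it to compute $\supp(d)$ and, crucially, to rule out $d$ being the collapse of any support-$2$ element. The conceptual point behind all of this is that the free extension remembers the distinction between the genuine diagonal and the collapse of a pair of distinct atoms, while the pointwise product $\mathbb A_\nu{\times}\mathbb A_\nu$ identifies them; the invariance of $P$ and the computation for $\mathbb A_\nu{\times}\mathbb A_\nu$ are routine by comparison.
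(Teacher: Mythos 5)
Your proof is correct, and it rests on exactly the phenomenon the paper's one-line proof gestures at: the paper says only ``consider the element $[a\ssm b]\bigact(a,b)$'', i.e.\ the free construction keeps the collapse of a distinct pair separate from the genuine diagonal, while the pointwise product identifies them. What you add is the part the paper leaves implicit --- a bona fide isomorphism invariant (your property $P$, that some support-$1$ element is not a renaming image of any support-$2$ element) together with the closed form of $\sim$ needed to verify it --- which is necessary to rule out \emph{all} isomorphisms rather than just the natural map; your witness $\id\bigact(a,a)$ is the complementary side of the same coin as the paper's $[a\ssm b]\bigact(a,b)$, and the verification is sound.
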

\begin{proof}
Consider the element $[a\ssm b]\bigact(a,b)$.
\end{proof}

\section{Interpretation of permissive-nominal logic}
\label{sect.permissive-nominal.sets}

\subsection{Interpretation of signatures} 
 
\begin{defn}
\label{defn.interpretation}
Suppose $(\mathcal A,\mathcal B)$ is a sort-signature (Definition~\ref{defn.sort.sig}).

A \deffont{PNL interpretation} $\mathcal I$ for $(\mathcal A,\mathcal B)$ consists of an assignment of a nonempty supported permutation set $\basesort^\iden$ to each $\basesort\in\mathcal B$.

We extend an interpretation $\mathcal I$ to sorts by:
\begin{frameqn}
\begin{array}{r@{\ }l@{\qquad}r@{\ }l}
\model{\basesort}=&\basesort^\iden
&
\model{(\alpha_1,\ldots,\alpha_n)}=&\model{\alpha_1}\times\ldots\times\model{\alpha_n}
\\
\model{\nu}=&\mathbb A_\nu
&
\model{[\nu]\alpha}=&[\mathbb A_\nu]\model{\alpha}
\end{array}
\end{frameqn}
\end{defn}

\begin{defn}
\label{defn.interpret.I}
Suppose $\mathcal S=(\mathcal A,\mathcal B,\mathcal F,\mathcal P,\f{ar},\mathcal X)$ is a signature (Definition~\ref{defn.signature}).

A \deffont{(non-equivariant) PNL interpretation} $\mathcal I$ for $\mathcal S$ consists of the following data:
\begin{itemize*}
\item
An interpretation for the sort-signature $(\mathcal A,\mathcal B)$ (Definition~\ref{defn.interpretation}).
\item
For every $\tf f\in\mathcal F$ with $\f{ar}(\tf f)=(\alpha')\alpha$ an equivariant function $\tf f^\iden$ from $\model{\alpha'}$ to $\model{\alpha}$ (Definition~\ref{defn.equivariant}).
\item
For every $\tf P\in\mathcal P$ with $\f{ar}(\tf P)=\alpha$ a supported function $\tf P^\iden$ from $\model{\alpha}$ to $\{0,1\}$. 
\end{itemize*}
If every $\tf P^\iden$ is equivariant, then call $\mathcal I$ a \deffont{fully equivariant} interpretation.\footnote{A non-equivariant PNL interpretation still interprets term-formers equivariantly. Only the predicates might not be equivariant.  
We do this in order to completely model \rulefont{Ax^\nopi} from Figure~\ref{rSeq}, so that $\tf P(r)\not\liff\tf P(\pi\act r)$; see Theorem~\ref{thrm.reduced.pnl.completeness}.
Of course it is possible to imagine a notion of non-equivariant interpretation where term-formers are interpreted as non-equivariant functions.  This would correspond to something else: namely, to losing the property that $\pi\act\tf f(r)=\tf f(\pi\act r)$.} 
\end{defn}

\subsection{Interpretation of terms}
\label{subsect.interpret.pnl.terms}

\begin{defn}
\label{defn.valuation}
Suppose $\mathcal I$ is an interpretation for $\mathcal S$.
A \deffont{valuation} $\varsigma$ to $\mathcal I$ is a map on unknowns such that for each unknown $X$,\ 
\begin{itemize*}
\item
$\varsigma(X)\in\model{\sort(X)}$,\ and\ 
\item
$\f{supp}(\varsigma(X))\subseteq \pmss(X)$.
\end{itemize*}
$\varsigma$ will range over valuations.
\end{defn}

\begin{defn}
\label{defn.interpret.terms}
Suppose $\mathcal I$ is an interpretation of a signature $\mathcal S$.
Suppose $\varsigma$ is a valuation to $\mathcal I$.

Define an \deffont{interpretation} 
$\denot{\mathcal I}{\varsigma}{r}$ in $\mathcal S$ by:
\begin{frameqn}
\begin{array}{r@{\ }l@{\qquad}r@{\ }l}
\denot{\mathcal I}{\varsigma}{a} =& a
&
\denot{\mathcal I}{\varsigma}{[a]r} =& [a]\denot{\mathcal I}{\varsigma}{r}
\\
\denot{\mathcal I}{\varsigma}{\tf f(r)} =& 
\tf f^\iden(\denot{\mathcal I}{\varsigma}{r})
&
\denot{\mathcal I}{\varsigma}{\pi\act X} =& \pi\act\varsigma(X)
\\
\denot{\mathcal I}{\varsigma}{(r_1,\ldots,r_n)} =& 
(\denot{\mathcal I}{\varsigma}{r_1},\ldots,\denot{\mathcal I}{\varsigma}{r_n})
\end{array}
\end{frameqn}
\end{defn}

\begin{lemm}
\label{lemm.sort.r}
If $r:\alpha$ then $\denot{\mathcal I}{\varsigma}{r}\in\model{\alpha}$.
\end{lemm}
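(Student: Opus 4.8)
The plan is to proceed by a routine induction on the structure of the term $r$ (equivalently, on the derivation of $r:\alpha$). Each clause of Definition~\ref{defn.interpret.terms} matches exactly one typing rule, so the induction is essentially mechanical: in each case I apply the inductive hypotheses to the immediate subterms and then check that the corresponding semantic operation preserves membership in the interpreted sort, using the clauses for $\model{\text{-}}$ from Definition~\ref{defn.interpretation}.

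In detail, I would treat the cases as follows. For $r\equiv a$ with $a\in\mathbb A_\nu$, the only applicable rule gives $\alpha=\nu$, and $\denot{\mathcal I}{\varsigma}{a}=a\in\mathbb A_\nu=\model{\nu}$ by Definitions~\ref{defn.interpretation} and~\ref{defn.bool}. For $r\equiv\pi\act X$ with $\sort(X)=\alpha$, since $\varsigma$ is a valuation (Definition~\ref{defn.valuation}) we have $\varsigma(X)\in\model{\sort(X)}=\model{\alpha}$, and because $\model{\alpha}$ is the underlying set of a permutation set it is closed under the permutation action, so $\denot{\mathcal I}{\varsigma}{\pi\act X}=\pi\act\varsigma(X)\in\model{\alpha}$. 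For $r\equiv(r_1,\ldots,r_n)$ with $r_i:\alpha_i$, the inductive hypotheses give $\denot{\mathcal I}{\varsigma}{r_i}\in\model{\alpha_i}$, hence the tuple lies in $\model{\alpha_1}\times\cdots\times\model{\alpha_n}=\model{(\alpha_1,\ldots,\alpha_n)}$ by Definition~\ref{defn.times}. For $r\equiv\tf f(r')$ with $\f{ar}(\tf f)=(\alpha')\tau$ and $r':\alpha'$, the inductive hypothesis gives $\denot{\mathcal I}{\varsigma}{r'}\in\model{\alpha'}$, so $\tf f^\iden(\denot{\mathcal I}{\varsigma}{r'})\in\model{\tau}$ because $\tf f^\iden$ is a function from $\model{\alpha'}$ to $\model{\tau}$ (Definition~\ref{defn.interpret.I}). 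For $r\equiv[a]r'$ with $a\in\mathbb A_\nu$ and $r':\alpha'$, the inductive hypothesis gives $\denot{\mathcal I}{\varsigma}{r'}\in\model{\alpha'}$, whence $[a]\denot{\mathcal I}{\varsigma}{r'}\in[\mathbb A_\nu]\model{\alpha'}=\model{[\nu]\alpha'}$ using Definition~\ref{defn.abstraction.sets} and the sort clause $\model{[\nu]\alpha}=[\mathbb A_\nu]\model{\alpha}$.

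I do not expect any real obstacle; the statement is a sanity check that the recursion in Definition~\ref{defn.interpret.terms} is well-sorted. The only point worth a moment's attention is the unknown case, where one must invoke the defining conditions on valuations — really just $\varsigma(X)\in\model{\sort(X)}$ — and recall that permutation sets are by definition closed under their action; the support condition $\supp(\varsigma(X))\subseteq\pmss(X)$ plays no role here and becomes relevant only later. It would also be natural to prove, by the same induction, the companion fact $\supp(\denot{\mathcal I}{\varsigma}{r})\subseteq\fa(r)\cup\bigcup_{X\in\f{fV}(r)}\supp(\varsigma(X))$ using Lemmas~\ref{lemm.properties.of.support} and~\ref{lemm.equivar.reduces.supp}, but that is not needed for the statement as given.
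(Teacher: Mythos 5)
Your proof is correct and is exactly the argument the paper intends: the paper's own proof is the single line ``By a routine induction on $r$,'' and your case analysis is the standard expansion of that induction, with each clause of Definition~\ref{defn.interpret.terms} checked against the corresponding clause of Definition~\ref{defn.interpretation}. No discrepancies.
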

\begin{proof}
By a routine induction on $r$.
\end{proof} 

\begin{lemm}
\label{lemm.pi.r.model}
$\pi\act\denot{\mathcal I}{\varsigma}{r} = \denot{\mathcal I}{\varsigma}{\pi\act r}$.
\end{lemm}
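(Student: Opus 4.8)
The plan is a straightforward structural induction on the PNL term $r$, reading the clauses of Definition~\ref{defn.interpret.terms} against the level~1 permutation action on syntax from Definition~\ref{defn.permutation.action} and the permutation actions on the semantic objects. Since terms are taken modulo $\alpha$-equivalence, I would work with a representative at each step; the fact that $\denot{\mathcal I}{\varsigma}{\text{-}}$ is already well-defined on $\alpha$-equivalence classes (implicit in the development around Lemma~\ref{lemm.sort.r}) is what lets the representative-level argument descend to the quotient.

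First the base cases. For $r=a$ we compute $\pi\act\denot{\mathcal I}{\varsigma}{a}=\pi\act a=\pi(a)=\denot{\mathcal I}{\varsigma}{\pi(a)}=\denot{\mathcal I}{\varsigma}{\pi\act a}$, using that $\model{\nu}=\mathbb A_\nu$ carries the natural action $\pi\act a=\pi(a)$ (Definition~\ref{defn.bool}). For $r=\pi'\act X$ we have $\pi\act\denot{\mathcal I}{\varsigma}{\pi'\act X}=\pi\act(\pi'\act\varsigma(X))=(\pi\circ\pi')\act\varsigma(X)$ by the group-action law (Definition~\ref{defn.perm.set}), and this equals $\denot{\mathcal I}{\varsigma}{(\pi\circ\pi')\act X}=\denot{\mathcal I}{\varsigma}{\pi\act(\pi'\act X)}$ by Definition~\ref{defn.permutation.action}.

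Then the inductive steps. For a tuple $(r_1,\ldots,r_n)$ I push $\pi$ through the pointwise product action of Definition~\ref{defn.times} and apply the inductive hypothesis in each coordinate. For $\tf f(r)$ I use that $\tf f^\iden$ is equivariant (Definition~\ref{defn.interpret.I}) to get $\pi\act\tf f^\iden(\denot{\mathcal I}{\varsigma}{r})=\tf f^\iden(\pi\act\denot{\mathcal I}{\varsigma}{r})$, then apply the inductive hypothesis inside. For abstraction $[a]r$ I use $\pi\act[a]x=[\pi(a)]\pi\act x$ from Definition~\ref{defn.abstraction.sets} together with the inductive hypothesis: $\pi\act[a]\denot{\mathcal I}{\varsigma}{r}=[\pi(a)]\pi\act\denot{\mathcal I}{\varsigma}{r}=[\pi(a)]\denot{\mathcal I}{\varsigma}{\pi\act r}$, which is $\denot{\mathcal I}{\varsigma}{[\pi(a)](\pi\act r)}=\denot{\mathcal I}{\varsigma}{\pi\act([a]r)}$.

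I do not anticipate a genuine obstacle: the $\tf f(r)$ case is precisely where the equivariance hypothesis on term-formers is consumed, and the $[a]r$ case is unproblematic because atoms-abstraction is built, not destructed, here, so the partiality noted in Remark~\ref{rmrk.total-partial} never intervenes. The only point requiring a word of care is reconciling the representative-level computation with the $\alpha$-quotient in the $[a]r$ and $\pi'\act X$ cases, and that is subsumed by well-definedness of the interpretation. A slicker but essentially equivalent alternative would be to remark that every defining clause of $\denot{\mathcal I}{\varsigma}{\text{-}}$ is assembled from equivariant operations and hence commutes with permutations; the direct induction is the shortest route.
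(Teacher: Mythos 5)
Your proof is correct and is essentially the paper's own argument: a routine structural induction on $r$, with the $\pi'\act X$ case resolved exactly as in the paper via the group-action law $\pi\act(\pi'\act\varsigma(X))=(\pi\circ\pi')\act\varsigma(X)$ matched against the syntactic identity $\pi\act(\pi'\act X)\equiv(\pi\circ\pi')\act X$. The remaining cases you spell out (equivariance of $\tf f^\iden$, the pointwise product action, and $\pi\act[a]x=[\pi(a)]\pi\act x$) are the same routine steps the paper leaves implicit.
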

\begin{proof}
By a routine induction on $r$. 
We consider one case:
\begin{itemize}
\item
\emph{The case $\pi'\act X$.}\quad
By Definition~\ref{defn.interpret.terms} 
$\denot{\mathcal I}{\varsigma}{\pi'\act X} = \pi'\act \varsigma(X)$.
Therefore $\pi\act
\denot{\mathcal I}{\varsigma}{\pi'\act X} = \pi\act(\pi'\act \varsigma(X))$.
It is a fact of the group action (Definition~\ref{defn.perm.set}) that $\pi\act(\pi'\act\varsigma(X))=(\pi\circ\pi')\act\varsigma(X)$, and of the permutation action (Definition~\ref{defn.permutation.action}) that $\pi\act(\pi'\act X)\equiv (\pi\circ\pi')\act X$.
The result follows.
\qedhere
\end{itemize}
\end{proof}

\begin{lemm}
\label{lemm.supp.r}
$\f{supp}(\denot{\mathcal I}{\varsigma}{r})\subseteq\fa(r)$.
\end{lemm}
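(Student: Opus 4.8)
The plan is to prove the statement by a straightforward structural induction on $r$, following the clauses of Definition~\ref{defn.interpret.terms} and matching them against the clauses of Definition~\ref{defn.fa} for $\fa$. For each term constructor I would compute $\supp(\denot{\mathcal I}{\varsigma}{r})$ using the support facts already established, and check it is contained in the correspondingly-computed $\fa(r)$.

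First I would do the base cases. For $r\equiv a$, we have $\denot{\mathcal I}{\varsigma}{a}=a$ and $\supp(a)=\{a\}=\fa(a)$ by Lemma~\ref{lemm.properties.of.support}, so equality (hence inclusion) holds. For $r\equiv\pi\act X$, we have $\denot{\mathcal I}{\varsigma}{\pi\act X}=\pi\act\varsigma(X)$ and $\fa(\pi\act X)=\pi\act\pmss(X)$; here the key ingredient is the valuation condition $\supp(\varsigma(X))\subseteq\pmss(X)$ from Definition~\ref{defn.valuation}, combined with $\supp(\pi\act\varsigma(X))=\pi\act\supp(\varsigma(X))$ from Lemma~\ref{lemm.supp.subsets}, giving $\supp(\pi\act\varsigma(X))=\pi\act\supp(\varsigma(X))\subseteq\pi\act\pmss(X)$.

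Then I would handle the inductive cases. For $r\equiv\tf f(r')$, since $\tf f^\iden$ is equivariant, Lemma~\ref{lemm.equivar.reduces.supp}(1) gives $\supp(\tf f^\iden(\denot{\mathcal I}{\varsigma}{r'}))\subseteq\supp(\denot{\mathcal I}{\varsigma}{r'})\subseteq\fa(r')=\fa(\tf f(r'))$, using the inductive hypothesis. For $r\equiv(r_1,\ldots,r_n)$, Lemma~\ref{lemm.properties.of.support} gives $\supp$ of a tuple as the union of the $\supp$ of the components, and the result follows from the inductive hypotheses and the corresponding clause $\fa((r_1,\ldots,r_n))=\bigcup\fa(r_i)$. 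For $r\equiv[a]r'$, Lemma~\ref{lemm.properties.of.support} gives $\supp([a]\denot{\mathcal I}{\varsigma}{r'})=\supp(\denot{\mathcal I}{\varsigma}{r'})\setminus\{a\}$, which by the inductive hypothesis is contained in $\fa(r')\setminus\{a\}=\fa([a]r')$.

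There is no real obstacle here: the proof is routine given the support calculus already developed (Lemmas~\ref{lemm.supp.subsets}, \ref{lemm.equivar.reduces.supp}, \ref{lemm.properties.of.support}) and the constraints built into valuations. The only point that requires a moment's attention is making sure equivariance of term-former interpretations is genuinely used in the $\tf f(r')$ case (a non-equivariant $\tf f^\iden$ could in principle enlarge support), and that the $\pi\act X$ case leans on the permission-set constraint on valuations rather than on any inductive hypothesis.
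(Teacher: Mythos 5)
Your proof is correct and follows essentially the same route as the paper: a structural induction using Lemma~\ref{lemm.properties.of.support} for the atom, tuple and abstraction cases, Lemma~\ref{lemm.equivar.reduces.supp} for term-formers, and the valuation condition $\supp(\varsigma(X))\subseteq\pmss(X)$ for the $\pi\act X$ case. Your explicit appeal to Lemma~\ref{lemm.supp.subsets} to push $\pi$ through $\supp$ in that case is a detail the paper leaves implicit, but it is the right justification.
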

\begin{proof}
By a routine induction on $r$.
We consider one case in detail:
\begin{itemize}
\item
\emph{The case $\pi\act X$.}\quad
$\fa(\pi\act X)=\pi\act\pmss(X)$ by Definition~\ref{defn.fa}.
By assumption in Definition~\ref{defn.valuation} $\f{supp}(\varsigma(X))\subseteq\pmss(X)$.
\end{itemize} 
The cases of $a$, $[a]r$, and $[a]r$ use parts~1, 2, and 3 of Lemma~\ref{lemm.properties.of.support}.
The case of $\tf f$ uses part~1 of Lemma~\ref{lemm.equivar.reduces.supp}.
\end{proof}


\subsection{Interpretation of propositions} 
\label{subsect.interpret.prop}

\begin{defn}
\label{defn.varsigma.sub}
Suppose $\varsigma$ is a valuation to an interpretation $\mathcal I$.
Suppose $X$ is an unknown and $x\in \model{\sort(X)}$ is such that $\f{supp}(x)\subseteq \pmss(X)$.
Define $\varsigma[X\ssm x]$ by
$$
(\varsigma[X\ssm x])(Y)=\varsigma(Y)
\quad\text{and}\quad
(\varsigma[X\ssm x])(X)=x .
$$
\end{defn}
It is easy to verify that $\varsigma[X\ssm x]$ is also a valuation to $\mathcal I$.

\begin{defn}
\label{defn.truth}
Suppose $\mathcal I$ is an interpretation.
Define an \deffont{interpretation of propositions} by:
\begin{frameqn}
\begin{array}{r@{\ }l}
\denot{\mathcal I}{\varsigma}{\tf P(r)} =&
\tf P^\iden(\denot{\mathcal I}{\varsigma}{r})
\\
\denot{\mathcal I}{\varsigma}{\bot}=& 0
\\
\denot{\mathcal I}{\varsigma}{\phi\limp\psi}=& 
\f{max}\{1{-}\denot{\mathcal I}{\varsigma}{\phi},\denot{\mathcal I}{\varsigma}{\psi}\}
\\
\denot{\mathcal I}{\varsigma}{\Forall{X}\phi}=&\f{min}\{\denot{\mathcal I}{\varsigma[X{\ssm} x]}{\phi}\mid x{\in} \model{\sort(X)},\, \f{supp}(x){\subseteq} \pmss(X)
\}
\end{array}
\end{frameqn}
\end{defn}

We may identify $\denot{\mathcal I}{}{\phi}$ with a set of valuations $\{\varsigma\mid\denot{\mathcal I}{\varsigma}{\phi}=1\}$.
We discuss soundness and completeness in Appendix~\ref{sect.completeness}.

\begin{lemm}
\label{lemm.denotsub}
\begin{itemize*}
\item
$\denot{\mathcal I}{\varsigma[X\ssm \denot{\mathcal I}{\varsigma}{r'}]}{r}
=\denot{\mathcal I}{\varsigma}{r[X\ssm r']}$.
\item
$\denot{\mathcal I}{\varsigma[X\ssm \denot{\mathcal I}{\varsigma}{r'}]}{\phi} =
\denot{\mathcal I}{\varsigma}{\phi[X\ssm r']}$.
\end{itemize*}
\end{lemm}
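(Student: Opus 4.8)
The plan is a simultaneous induction on the structure of $r$ and $\phi$, following exactly the clause structure of Definitions~\ref{defn.interpret.terms} and~\ref{defn.truth}. Before starting I would make explicit the hypotheses implicit in writing $r[X\ssm r']$, namely $r':\sort(X)$ and $\fa(r')\subseteq\pmss(X)$; together with Lemmas~\ref{lemm.sort.r} and~\ref{lemm.supp.r} these give $\denot{\mathcal I}{\varsigma}{r'}\in\model{\sort(X)}$ and $\supp(\denot{\mathcal I}{\varsigma}{r'})\subseteq\pmss(X)$, so that $\varsigma[X\ssm\denot{\mathcal I}{\varsigma}{r'}]$ really is a valuation (Definitions~\ref{defn.valuation} and~\ref{defn.varsigma.sub}) and both equations are well-formed. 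For terms, the cases $a$, $(r_1,\ldots,r_n)$ and $\tf f(r)$ are immediate by unfolding Definitions~\ref{defn.subst.action} and~\ref{defn.interpret.terms} and applying the inductive hypothesis componentwise; equivariance of $\tf f^\iden$ plays no role, it is simply applied to both sides. For $[a]r$, level~$2$ substitution is capturing, so $([a]r)[X\ssm r']=[a](r[X\ssm r'])$, and on the semantic side atoms-abstraction $[a](\text{-})$ likewise just binds $a$ regardless of what it is applied to; the abstraction therefore slides past the denotation bracket on both sides and the inductive hypothesis closes the case with no freshness side-conditions. The key term case is $\pi\act X$: here $(\pi\act X)[X\ssm r']=\pi\act r'$, the left-hand side of the lemma evaluates to $\pi\act\denot{\mathcal I}{\varsigma}{r'}$, and Lemma~\ref{lemm.pi.r.model} rewrites this as $\denot{\mathcal I}{\varsigma}{\pi\act r'}$; for $\pi\act Y$ with $Y\not\equiv X$ both sides collapse to $\pi\act\varsigma(Y)$.

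For propositions, $\tf P(r)$ reduces to the term part already proved, $\bot$ is trivial, and $\phi\limp\psi$ follows from the inductive hypothesis together with the $\max$ clause. The case that I expect to carry the real content is $\Forall{Y}\phi$. First I would choose the representative of the $\alpha$-equivalence class with $Y\not\equiv X$ and $Y\notin\f{fV}(r')$ — legitimate since we work with terms and propositions modulo $\alpha$-equivalence (Definition~\ref{defn.terms.and.propositions}) and, by Definition~\ref{defn.term.signature}, there are countably infinitely many unknowns of the sort and permission set of $Y$ — so that $(\Forall{Y}\phi)[X\ssm r']=\Forall{Y}(\phi[X\ssm r'])$ holds by Definition~\ref{defn.subst.action}. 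Expanding the $\min$ clause of Definition~\ref{defn.truth} on both sides, the quantification ranges over the same set $\{y\in\model{\sort(Y)}\mid\supp(y)\subseteq\pmss(Y)\}$, so it is enough to match bodies for each such $y$. That uses three routine facts: the reordering identity $(\varsigma[X\ssm x])[Y\ssm y]=(\varsigma[Y\ssm y])[X\ssm x]$ for $X\not\equiv Y$; a coincidence lemma that $\denot{\mathcal I}{\varsigma}{r'}$ depends only on the values of $\varsigma$ on $\f{fV}(r')$, whence $\denot{\mathcal I}{\varsigma[Y\ssm y]}{r'}=\denot{\mathcal I}{\varsigma}{r'}$ since $Y\notin\f{fV}(r')$; and the inductive hypothesis applied with $\varsigma[Y\ssm y]$ in place of $\varsigma$. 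Composing these gives $\denot{\mathcal I}{(\varsigma[X\ssm\denot{\mathcal I}{\varsigma}{r'}])[Y\ssm y]}{\phi}=\denot{\mathcal I}{\varsigma[Y\ssm y]}{\phi[X\ssm r']}$ for each $y$, and taking the minimum over $y$ finishes the case.

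The one genuine obstacle is thus the $\Forall{Y}$ case, and within it the coincidence lemma for term denotations: if it has not been established earlier I would prove it first by a short induction on $r'$, the only interesting case being $\pi\act Z$ (where $\denot{\mathcal I}{\varsigma}{\pi\act Z}=\pi\act\varsigma(Z)$ manifestly depends only on $\varsigma(Z)$ and $\f{fV}(\pi\act Z)=\{Z\}$), since everything else in the proof is bookkeeping and unfolding of the defining clauses.
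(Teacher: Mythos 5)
Your proof is correct and follows essentially the same route as the paper's: a simultaneous induction on the defining clauses of Definitions~\ref{defn.interpret.terms} and~\ref{defn.truth}, with the $\pi\act X$ case discharged via Lemma~\ref{lemm.pi.r.model} exactly as the paper does. The paper labels the remaining cases (including $\Forall{Y}\phi$) as routine and omits them; your expansion of the quantifier case is sound, and the coincidence lemma you invoke is precisely Lemma~\ref{lemm.fV.denot}, which the paper states immediately afterwards and whose proof does not depend on the present result, so your decision to establish it first is the right fix for the ordering.
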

\begin{proof}
By routine inductions on the definitions of 
$\denot{\mathcal I}{\varsigma}{r}$ and 
$\denot{\mathcal I}{\varsigma}{\phi}$ in 
Definitions~\ref{defn.interpret.terms} and~\ref{defn.truth}.
We consider two cases:
\begin{itemize*}
\item
The case of $\denot{\mathcal I}{\varsigma[X\ssm r']}{\pi\act X}$.\quad
We reason as follows:
\begin{tab3}
\denot{\mathcal I}{\varsigma[X\ssm \denot{\mathcal I}{\varsigma}{r'}]}{\pi\act X}
=& \pi\act \denot{\mathcal I}{\varsigma}{r'} 
&\text{Definition~\ref{defn.interpret.terms}}
\\
=& \denot{\mathcal I}{\varsigma}{\pi\act r'}
&\text{Lemma~\ref{lemm.pi.r.model}}
\\
=& \denot{\mathcal I}{\varsigma}{(\pi\act X)[X\ssm r']}
&\text{Definition~\ref{defn.subst.action}} .
\end{tab3}
\item
The case of $\denot{\mathcal I}{\varsigma[X\ssm r']}{\tf P(r)}$.
\quad
We reason as follows:
\begin{tab3}
\denot{\mathcal I}{\varsigma[X\ssm \denot{\mathcal I}{\varsigma}{r'}]}{\tf P(r)}
=&
{\tf P}^\iden(\denot{\mathcal I}{\varsigma[X\ssm \denot{\mathcal I}{\varsigma}{r'}]}{r})
&\text{Definition~\ref{defn.truth}}
\\
=&
{\tf P}^\iden(\denot{\mathcal I}{\varsigma}{r[X\ssm r']}) 
&\text{Part~1 of this result}
\\
=&
\denot{\mathcal I}{\varsigma}{\tf P(r)[X\ssm r']} 
&\text{Definition~\ref{defn.truth}} .
\end{tab3}
\end{itemize*}
\end{proof}

\begin{lemm}
\label{lemm.fV.denot}
If $\varsigma(X)=\varsigma'(X)$ for all $X\in\f{fV}(r)$ then $\denot{\mathcal I}{\varsigma}{r}=\denot{\mathcal I}{\varsigma'}{r}$, and similarly for $\phi$.
\end{lemm}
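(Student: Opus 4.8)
The plan is to prove both statements by a single structural induction on $r$ and $\phi$, following the shape of Definitions~\ref{defn.interpret.terms} and~\ref{defn.truth}. Since $\f{fV}$ is $\alpha$-invariant (Lemma~\ref{lemm.fa.pi.r}) and $\denot{\mathcal I}{\varsigma}{\text{-}}$ is well-defined on $\alpha$-equivalence classes, the induction may be carried out on conveniently chosen representatives, so there is no difficulty with the quotient by $\alpha$-equivalence from Definition~\ref{defn.terms.and.propositions}.

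The base cases are immediate. For $a\in\mathbb A_\nu$ there are no free unknowns and $\denot{\mathcal I}{\varsigma}{a}=a=\denot{\mathcal I}{\varsigma'}{a}$. For $\pi\act X$ we have $\f{fV}(\pi\act X)=\{X\}$, so the hypothesis gives $\varsigma(X)=\varsigma'(X)$, and then $\denot{\mathcal I}{\varsigma}{\pi\act X}=\pi\act\varsigma(X)=\pi\act\varsigma'(X)=\denot{\mathcal I}{\varsigma'}{\pi\act X}$ by Definition~\ref{defn.interpret.terms}. In each of the remaining compound term cases ($[a]r$, $\tf f(r)$, $(r_1,\ldots,r_n)$) and the propositional cases $\tf P(r)$, $\bot$, and $\phi\limp\psi$, the relevant clause of Definition~\ref{defn.fa} shows that the free unknowns of the immediate subterms/subformulas are contained in those of the whole term; hence $\varsigma$ and $\varsigma'$ still agree on them, and applying the inductive hypothesis inside the matching clause of Definition~\ref{defn.interpret.terms} or~\ref{defn.truth} closes the case.

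The only case requiring any thought — and hence the main (mild) obstacle — is $\Forall{X}\phi$, because of the bound unknown. Here $\f{fV}(\Forall{X}\phi)=\f{fV}(\phi)\setminus\{X\}$, so for every $x\in\model{\sort(X)}$ with $\f{supp}(x)\subseteq\pmss(X)$ the two updated valuations $\varsigma[X\ssm x]$ and $\varsigma'[X\ssm x]$ agree on all of $\f{fV}(\phi)$: they agree on $X$ (both send it to $x$) and on $\f{fV}(\phi)\setminus\{X\}$ by the hypothesis on $\varsigma$ and $\varsigma'$. The inductive hypothesis for $\phi$ then gives $\denot{\mathcal I}{\varsigma[X\ssm x]}{\phi}=\denot{\mathcal I}{\varsigma'[X\ssm x]}{\phi}$ for every such $x$, and since the $\f{min}$ in Definition~\ref{defn.truth} ranges over the same index set on both sides, we conclude $\denot{\mathcal I}{\varsigma}{\Forall{X}\phi}=\denot{\mathcal I}{\varsigma'}{\Forall{X}\phi}$, completing the induction.
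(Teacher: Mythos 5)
Your proof is correct and is exactly the argument the paper intends: the paper's own proof is simply ``By a routine induction on $r$ and $\phi$,'' and your write-up is that routine induction carried out in full, with the only non-immediate case ($\Forall{X}\phi$, handled via the updated valuations agreeing on $\f{fV}(\phi)$) treated correctly.
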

\begin{proof}
By a routine induction on $r$ and $\phi$.
\end{proof}

\section{Interpretation of HOL}
\label{sect.interp.hol}

For this section fix some PNL interpretation $\mathcal I$ of a PNL signature $\mathcal S$.
Recall from Definition~\ref{defn.TS} the definition of the corresponding HOL signature $\mathcal T_{\mathcal S}$. 

We have our interpretation of PNL and we have from Definition~\ref{defn.translation} a translation of PNL syntax to HOL syntax.
We also have a functor from nominal sets to renaming sets (Definition~\ref{defn.free.ren}).
It remains to interpret HOL in renaming sets consistent with these interpretations and translations.
This is Definitions~\ref{defn.hol.interpretation} and~\ref{defn.hol.interpret.terms}, and the key technical result Lemma~\ref{lemm.commuting.square}.
Completeness follows quickly as a corollary (Theorem~\ref{thrm.PNL.HOL.complete}).

Note that in the interpretation (Definition~\ref{defn.hol.interpretation}) the type $\mu_\nu\to\beta$ is not necessarily interpreted as the set of all functions; it may be interpreted as a small subset of this function space.
This is an old idea: since Henkin, models of HOL have been constructed to cut down on the full function-space (e.g. to create a complete semantics \cite[Section~55]{andrews:intmlt}; see also \cite{benzmuller:higose} for a survey of non-standard semantics for HOL).

What we need to prove completeness of the syntactic translation $\hol{D}{\text{-}}$ is the existence of \emph{some} interpretation of HOL with certain properties.
This should not be mistaken as a commitment of nominal techniques to using this model of HOL always (unless we want to).

\subsection{Interpretation of types}

Recall the definition of a valuation $\varsigma$ (Definition~\ref{defn.valuation}) to an intepretation $\mathcal I$ for the PNL signature $\mathcal S$. 
Recall the definition of $\varsigma[X\ssm x]$ (Definition~\ref{defn.varsigma.sub}), and the interpretations of terms $\denot{\mathcal I}{\varsigma}{r}$ (Definition~\ref{defn.interpret.terms}) and propositions $\denot{\mathcal I}{\varsigma}{\phi}$ (Definition~\ref{defn.truth}).

We give similar definitions for HOL and renaming sets, culminating with Theorem~\ref{thrm.hol.soundness} (soundness).

\begin{defn}
\label{defn.hol.interpretation}
We provide an interpretation $\mathcal H$ of $\mathcal T_{\mathcal S}$ by:
\begin{frameqn}
\begin{array}{r@{\ }l@{\qquad}l}
\holmodel{\hol{}{\alpha}}=&\Ren{\model{\alpha}}
\\
\holmodel{o}=&\mathbb B
\\
\holmodel{(\beta_1,\ldots,\beta_n)}=&\holmodel{\beta_1}\times\ldots\times\holmodel{\beta_n}
& (\beta_i\text{ not of the form }\hol{}{\alpha}\text{ for at least one }i) 
\\
\holmodel{\beta'\to\beta}=&\holmodel{\beta'}\Rightarrow\holmodel{\beta} 
& (\beta'\text{ or }\beta\text{ not of the form }\hol{}{\alpha}) 
\end{array}
\end{frameqn}
\end{defn}
Recall $\rs X\Rightarrow\rs Y$ from Definition~\ref{defn.frs.exp} and $\rs X\times\rs Y$ from Definition~\ref{defn.times}.

\begin{rmrk}
\label{rmrk.case-split}
Not all function types are interpreted equally by Definition~\ref{defn.hol.interpretation}. 

If a type is the image of a PNL sort then we handle it using the first clause by wrapping it up in $\Ren{\text{-}}$.
Otherwise the interpretation is as standard: pairs to product; function types to the (supported) function set.
This case-split makes Lemma~\ref{lemm.commuting.square} work, which is central to Corollary~\ref{corr.notsubseteq} and to Completeness (Theorem~\ref{thrm.PNL.HOL.complete}).

Why Lemma~\ref{lemm.commuting.square} \emph{could not} work if we did not do this, is indicated in Lemma~\ref{lemm.non-iso}.
Briefly, 
$\mathbb A_\nu\Rightarrow\text{-}$ contains `exotic elements' making it bigger than $[\mathbb A_\nu]\text{-}$, which readers familiar with higher-order abstract syntax would expect \cite[\emph{exotic terms}]{despeyroux94higherorder}.
Perhaps less familiar from Lemma~\ref{lemm.commuting.square} is that 
$\Ren{\text{-}}$ does not commute with atoms-abstraction or even with cartesian product. 
That is, even e.g. $\mathbb A_\nu\times\mathbb A_\nu$ in \theory{PmsRen} has an `exotic element'. 
\end{rmrk}

\begin{lemm}
\label{lemm.non-iso}
\begin{enumerate*}
\item
The natural map from $\Ren{\mathbb A_\nu}$ to $\mathbb A_\nu$ mapping $\rho\bigact a$ to $\rho(a)$, is a bijection (cf. Lemma~\ref{lemm.BA.ren.isos}).
\item
The natural map from $\Ren{\ns X\times\ns Y}$ to $\Ren{\ns X}\times\Ren{\ns Y}$ mapping $\rho\bigact(x,y)$ to $(\rho\bigact x,\rho\bigact y)$ is neither surjective nor injective.
\item
The natural map from $\Ren{[\mathbb A_\nu]\ns X}$ to $[\mathbb A_\nu]\Ren{\ns X}$ mapping $\rho\bigact [a]x$ where $a\not\in\f{nontriv}(\rho)$ to $[a]\rho\bigact x$, is not surjective.
\item
The natural map from  $[\mathbb A_\nu]\rs Y$ to $\mathbb A_\nu\Rightarrow\rs Y$ mapping $[a]x$ to $\lam{a}x$ (Definition~\ref{defn.lambda.a}), 
is not surjective.
\end{enumerate*}
\end{lemm}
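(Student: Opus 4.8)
The plan is to treat the four parts separately; each asserts that a canonical comparison map fails to be an isomorphism, and in each case the proof is by exhibiting a concrete witness — for the non-surjectivity clauses an ``exotic'' element of the codomain with no preimage, and for the non-injectivity clause a pair of distinct elements with the same image. For (1) I would simply observe that this is Lemma~\ref{lemm.BA.ren.isos}(2) with its isomorphism made explicit: the assignment $\rho\bigact a\mapsto\rho(a)$ is well defined because both generating clauses of $\sim$ in Definition~\ref{defn.free.ren} preserve the atom $\rho(a)$ (clause~1 since $\supp(a)=\{a\}$; clause~2 since $(\rho\circ\pi)(a)=\rho(\pi(a))$); it is surjective since an atom $a'$ is hit by $(a'\ a)\bigact a$; and it is injective by the two-step rewrite $(\rho,a)\sim((\rho(a)\ a),a)\sim(\id,\rho(a))$ (clause~1 then clause~2), exactly as in the proof of Lemma~\ref{lemm.BA.ren.isos}.

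For (2) I would take $\ns X=\ns Y=\mathbb A_\nu$, so that by part~(1) the codomain $\Ren{\mathbb A_\nu}\times\Ren{\mathbb A_\nu}$ is $\mathbb A_\nu\times\mathbb A_\nu$ up to iso. For non-injectivity, compare the exotic element $[a\ssm b]\bigact(a,b)$ with $\id\bigact(b,b)$ in $\Ren{\mathbb A_\nu\times\mathbb A_\nu}$: by Lemma~\ref{lemm.supp.subsets} together with the computation in Lemma~\ref{lemm.BA.ren.isos}, both are sent to the pair $(b,b)$, yet they are distinct in $\Ren{\mathbb A_\nu\times\mathbb A_\nu}$ because the predicate ``$p=q$'' on pairs $(\rho,(p,q))$ is invariant under both clauses of $\sim$ (clause~1 leaves $(p,q)$ untouched; clause~2 replaces them by $\pi(p),\pi(q)$ and $\pi$ is injective), hence descends to a well-defined map $\Ren{\mathbb A_\nu\times\mathbb A_\nu}\to\mathbb B$ which separates them. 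For non-surjectivity I would choose a pair of components of $\Ren{\ns X}\times\Ren{\ns Y}$ whose realisations impose incompatible demands on a single renaming, exploiting that in the permissive setting every support is an infinite and co-infinite permission set, so that any two supports share a cofinite subset of $\atomsdown$ and $\alpha$-renaming (clause~2) cannot disentangle the atoms controlled by the two components.

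For (3) I would take $\ns X=\mathbb A_\nu\times\mathbb A_\nu$; by the analysis in (2), for each atom $w$ the renaming set $\Ren{\mathbb A_\nu\times\mathbb A_\nu}$ contains a unique exotic element $\epsilon_w$ realising two distinct atoms collapsed onto $w$, with $\supp(\epsilon_w)=\{w\}$ (the inclusion $\subseteq\{w\}$ is Lemma~\ref{lemm.supp.subsets}, and equality holds because applying a renaming that moves $w$ changes the ``image-set'' invariant used above). I would then show that $[b]\epsilon_b\in[\mathbb A_\nu]\Ren{\mathbb A_\nu\times\mathbb A_\nu}$ has no preimage: a general element of the image has the form $[a]\rho\bigact x$ with $x\in|\mathbb A_\nu\times\mathbb A_\nu|$ and $a\notin\nontriv(\rho)$, and equating this with $[b]\epsilon_b$ via Lemma~\ref{lemm.supp.abstraction'} forces $\rho$ (or $[a\ssm b]\circ\rho$) to merge two atoms onto $b$ while keeping $b$ (respectively $a$) out of its nontrivial set, which is impossible. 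Part~(4) is handled the same way in spirit: one exhibits, for a suitable $\rs Y$, a supported $f\in\mathbb A_\nu\Rightarrow\rs Y$ that is not of the form $\lam{a}x$ for any $a,x$ — the renaming-set incarnation of the ``exotic terms'' phenomenon familiar from higher-order abstract syntax — so that $[a]x\mapsto\lam{a}x$ misses $f$.

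The main obstacle is the verification of the non-surjectivity claims in (2), (3) and especially (4): in each case one must argue that the proposed witness genuinely escapes the image, which calls for careful bookkeeping of how renamings interact with supports — using that permission-set supports are infinite and co-infinite, so that clause~2 of Definition~\ref{defn.free.ren} cannot separate the atoms controlled by two different elements, and using the precise notion of ``supported function'' in Definition~\ref{defn.exp.ren} to decide exactly which functions out of $\mathbb A_\nu$ are admissible. By contrast, part~(1) and the non-injectivity argument in~(2) are short and essentially computational, and I would dispatch them first.
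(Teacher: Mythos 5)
Your parts (1) and (3), and the non-injectivity half of (2), are essentially the paper's own argument with the bare assertions filled in: part (1) is the computation already displayed in Lemma~\ref{lemm.BA.ren.isos}; your non-injectivity witness $[a\ssm b]\bigact(a,b)\neq\id\bigact(b,b)$ is exactly the paper's, and your observation that the predicate ``the two components are equal'' is invariant under both generating clauses of $\sim$ is a legitimate (and more explicit) way to see that the two classes are distinct; for (3) the paper likewise takes $\ns X=\mathbb A_\nu\times\mathbb A_\nu$ and exhibits the abstraction of an exotic element ($[a][a\ssm b]\bigact(a,b)$ in the paper's notation), and your argument via Lemma~\ref{lemm.supp.abstraction'} --- that a preimage would force $\rho$ to collapse two atoms onto the very atom that is abstracted while that atom is excluded from $\f{nontriv}(\rho)$ --- is the right reason it escapes the image.

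The genuine gaps are the two remaining non-surjectivity claims, where you describe a strategy but produce no witness. For (2) this matters because the setup you fix at the start of that part, $\ns X=\ns Y=\mathbb A_\nu$, provably cannot work: by part (1) the codomain is then isomorphic to $\mathbb A_\nu\times\mathbb A_\nu$ and every pair $(a',b')$ is hit by $\id\bigact(a',b')$, so the map is surjective there; your appeal to supports sharing a cofinite part of $\atomsdown$ is also beside the point, since all supports in play are finite. The paper instead takes $\ns X=\ns Y=\mathbb A_\nu\times\mathbb A_\nu$ and offers $([a\ssm b]\bigact(a,b),\id\bigact(b,b))$ as the element claimed to lie outside the image; whichever witness you use, the delicate step (which your sketch does not engage with) is that a putative preimage $\rho\bigact((p,q),(r,s))$ lets you re-choose the underlying atoms $p,q,r,s$ and the renaming $\rho$ freely, so ``incompatible demands on a single renaming'' must be refuted against \emph{all} representatives of the two components, not just the displayed ones. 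For (4) the missing witness is elementary and you should simply write it down, as the paper does: take $\rs Y=\mathbb A_\nu$; every $\lam{a}x$ with $x\in\mathbb A_\nu$ is either the identity function (if $x=a$) or the constant function with value $x$ (if $x\neq a$), whereas the supported function $[a\ssm b]\in|\mathbb A_\nu\Rightarrow\mathbb A_\nu|$, sending $a$ to $b$ and fixing all other atoms, is neither injective nor constant and so is not of the form $\lam{a'}x$ for any $a'$ and $x$.
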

\begin{proof}
\begin{enumerate*}
\item
By rule~2 of Definition~\ref{defn.free.ren}.
\item
Take $\ns X=\ns Y=\mathbb A_\nu$.
The natural map from $\Ren{\ns X\times\ns Y}$ to $\Ren{\ns X}\times\Ren{\ns Y}$ takes $\id\bigact (a,b)$ to $(\id\bigact a,\id\bigact b)$.
By equivariance it must map $[a\ssm b]\bigact(a,b)$ to $(\id\bigact b,\id\bigact b)$.
But then it is not injective, since $[a\ssm b]\bigact(a,b)\neq \id\bigact(b,b)$ in $\Ren{\ns X\times\ns Y}$.

Now take $\ns X=\ns Y=\mathbb A_\nu\times\mathbb A_\nu$.
It is not hard to see that $([a\ssm b]\bigact(a,b),\id\bigact(b,b))$ is not in the image of the natural map, so the map is also not surjective.
\item
Take $\ns X=\mathbb A_\nu\times\mathbb A_\nu$ and consider $[a][a\ssm b]\bigact(a,b)\in[\mathbb A_\nu]\Ren{\ns X}$.
\item
Take $\rs X=\rs Y=\mathbb A_\nu$ for $\mathbb A_\nu$ considered a renaming set as in Definition~\ref{defn.bool}.
Consider the function $[a\ssm b]\in\mathbb A_\nu\Rightarrow\mathbb A_\nu$, mapping $a$ to $b$, $b$ to $b$, and all other $c$ to $c$.
\qedhere\end{enumerate*}
\end{proof}

\subsection{Interpretation of terms}

\begin{defn}
\label{defn.hol.valuation}
A \deffont{(HOL) valuation} $\varrho$ to $\mathcal H$ is a map on variables $X:\beta$ 
such that 
$\varrho(X)\in\holmodel{\beta}$. 
$\varrho$ will range over valuations.
\end{defn}

\begin{defn}
\label{defn.varrho.update}
Suppose $\varrho$ is a valuation.
Suppose $X$ is a variable and $x\in \holmodel{\type(X)}$. 
Define a function $\varrho[X\ssm x]$ by:
\begin{frameqn}
(\varrho[X\ssm x])(b)=\varrho(b)
\qquad
(\varrho[X\ssm x])(Y)=\varrho(Y)
\quad\text{and}\quad
(\varrho[X\ssm x])(X)=x 
\end{frameqn}
\end{defn}
It is easy to verify that $\varrho[X\ssm x]$ is also a valuation to $\mathcal H$.

\begin{defn}
\label{defn.hol.interpret.terms} 
Extend $\mathcal H$ to terms as follows:
\begin{itemize*}
\item
$\holmodel{a}(\varrho) = \varrho(a)$.
\item
$\holmodel{X}(\varrho) = \varrho(X)$.
\item
$\holmodel{\tf g_{\smtf f}} = \Ren{\tf f^\iden}$ and $\holmodel{\tf g_{\smtf P}} = \Ren{\tf P^\iden}$ (Definition~\ref{defn.free.ren}).
\item
$\holmodel{\bot}(\varrho)=0$.
\item
$\holmodel{\limp}(\varrho)=\lam{x\in\mathbb B,y\in\mathbb B}\f{max}\{1{-}x,y\}$.
\item
$\holmodel{\forall_{\beta})}(\varrho)=
\lam{x\in\holmodel{\beta\Rightarrow\mathbb B}}\f{min}\{xy\mid y\in\holmodel{\beta}\}$.
\item
$\holmodel{\lam{a}t}(\varrho)=\rho\bigact [a]x$ where $\holmodel{t}(\varrho[a\ssm a])=\rho\bigact x$ provided that $t:\hol{}{\alpha}$ for some PNL sort $\alpha$ and $a\in\mathbb A_\nu$ for some name sort $\nu$ and ($\alpha$-converting if necessary) $a\not\in\bigcup_{X\in\f{fv}(t)\setminus\{a\}}\supp(\varrho(X))$.
%
\item
$\holmodel{\lam{X}t}(\varrho) = \lam{x}\holmodel{t}(\varrho[X\ssm x])$ provided that $\lam{X}t:\beta'\to\beta$ where $\beta'\to\beta$ is not equal to $\hol{}{[\mathbb A_\nu]\alpha}$ for any $\nu$ or $\alpha$.
\item
$\holmodel{tu}(\varrho) = ([a\ssm b]\circ\rho)\bigact x$ provided that $t:\hol{}{\alpha}$ for some PNL sort $\alpha$, where $\holmodel{u}(\varrho)=\id\bigact b$ (by construction some such $b$ always exists) and $\holmodel{t}(\varrho)=\rho\bigact [a]x$, and (renaming if necessary) $a\not\in\f{nontriv}(\rho)\cup\{b\}$. 
\item
$\holmodel{tu}(\varrho) = \holmodel{t}(\varrho)\holmodel{u}(\varrho)$ provided that $t:\beta$ for $\beta$ not equal to $\hol{}{\alpha}$ for any PNL sort $\alpha$. 
\item
$\holmodel{(t_1,\ldots,t_n)}(\varrho) = (\bigcup \rho_i)\bigact (x_1,\ldots,x_n)$ provided that $t_i:\hol{}{\alpha_i}$ for $1\leq i\leq n$, where $\holmodel{t_i}=\rho_i\bigact x_i$, and we choose represenatives such that $\dom(\rho_i)\cap\dom(\rho_j)=\varnothing$ for all $1\leq i\neq j\leq n$. 
\item
$\holmodel{(t_1,\ldots,t_n)}(\varrho) = (\holmodel{t_1}(\varrho),\ldots,\holmodel{t_n}(\varrho))$ provided that there exists some $i$ and $\beta$ such that $t_i:\beta$ and $\beta$ is not equal to $\hol{}{\alpha}$ for any PNL sort $\alpha$.
\end{itemize*}
\end{defn}

\begin{rmrk}
\label{rmrk.outline}
Definition~\ref{defn.hol.interpret.terms} propagates to terms the case-split noted in Remark~\ref{rmrk.case-split}. 
We treat terms differently depending on whether they populate the translation of a PNL sort, or not.
We must do this because of how we interpreted types in Definition~\ref{defn.hol.interpretation}.

Just to locate where we are, here is an schematic of the overall structure of the proof of completeness:
$$
\xymatrix@=6em{
\text{PNL syntax}  \ar[r]^{\hol{D}{\text{-}}}\ar[d]_{\denot{\mathcal I}{\varsigma}{\text{-}}} & \text{HOL syntax} \ar[d]^{\holmodel{\text{-}}(D(\varsigma))}\ar@{-->}[dl]^{\text{\em not possible}} 
\\
\theory{PmsPrm}  \ar[r]_{\Ren{\text{-}}} & \theory{PmsRen} 
}
$$
We translated PNL to HOL using $\hol{D}{\text{-}}$ in Definition~\ref{defn.translation}.
Ideally, to prove completeness we would give HOL a denotation directly in \theory{PmsRen}.
Unfortunately this is not possible (the dashed arrow) because $[a]r$ translates to $\lam{a}\hol{D}{r}$ and has nominal denotation as an atoms-abstraction $[a]\denot{\mathcal I}{\varsigma}{r}$; atoms-abstraction (Definition~\ref{defn.abstraction.sets}) is the graph of a partial function, whereas $\lam{a}\hol{D}{r}$ `wants' to take denotation as a total function.
So we use a commuting square as illustrated, and in \theory{PmsRen} atoms-abstraction can be viewed as a total function, as noted in Remark~\ref{rmrk.total-partial}.
Definition~\ref{defn.hol.interpret.terms} uses this, and fills in the right-hand arrow. 

Note that by forming this diagram we give a new semantics to PNL in \theory{PmsRen}, and thus in particular give a semantics to nominal atoms-abstraction in which it becomes interpreted as a total function. 

The top arrow is Definition~\ref{defn.translation}; the left-hand arrow is Definition~\ref{defn.interpret.terms}; and the bottom arrow is Definition~\ref{defn.free.ren}.

Lemma~\ref{lemm.abs.conc.pi} proves commutativity of the square.
\end{rmrk}

\begin{lemm}
\label{lemm.hol.denotren}
Suppose $a\in\mathbb A_\nu$ and $b\in\mathbb A_\nu$. 
Suppose $a\not\in\supp(\varrho(X))$ for every $X\in\f{fv}(r)\setminus\{a\}$ (including $b$). 
Then $\holmodel{t}(\varrho[a\ssm \id\bigact b]) =[a\ssm b]\bigact(\holmodel{t}(\varrho))$.
\end{lemm}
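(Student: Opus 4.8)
The plan is to prove the statement by induction on the structure of the HOL term (the `$r$' in the hypothesis should read `$t$'), following the clause-by-clause case split of Definitions~\ref{defn.hol.interpretation} and~\ref{defn.hol.interpret.terms}. Write $\varrho' = \varrho[a\ssm \id\bigact b]$. Two facts get used everywhere: (i) since $[a\ssm b]$ agrees with $\id$ on every atom other than $a$, any supported $x$ with $a\notin\supp(x)$ satisfies $[a\ssm b]\bigact x = x$ (Definition~\ref{defn.finsupp}); and (ii) the standing convention --- implicit in the identification of $\mathbb A_\nu$ with a set of HOL variables from Definition~\ref{defn.TS}, and in the notation $\varrho[a\ssm a]$ of Definition~\ref{defn.hol.interpret.terms} --- that a valuation sends each atom $c$ to $\id\bigact c\in\Ren{\mathbb A_\nu}$, which under $\Ren{\mathbb A_\nu}\cong\mathbb A_\nu$ (Lemma~\ref{lemm.non-iso}(1)) is just the atom $c$.

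First I would dispatch the leaves. For a constant $\tf g$, $\holmodel{\tf g}(\varrho)$ is either an equivariant element ($\holmodel{\bot}$, $\holmodel{\limp}$, $\holmodel{\forall_\beta}$ all have empty support) or $\Ren{\tf f^\iden}$ / $\Ren{\tf P^\iden}$ for an equivariant PNL map, hence equivariant, so $[a\ssm b]$ fixes it and $\varrho$ versus $\varrho'$ is irrelevant. For a variable $X$ we split: if $X\neq a$ then $\holmodel{X}(\varrho')=\varrho(X)$ and the hypothesis $a\notin\supp(\varrho(X))$ gives $[a\ssm b]\bigact\varrho(X)=\varrho(X)$ by (i); if $X=a$ then $\holmodel{a}(\varrho')=\id\bigact b$ and $[a\ssm b]\bigact\holmodel{a}(\varrho)=[a\ssm b]\bigact(\id\bigact a)=\id\bigact b$ by (ii). This variable case is precisely where the freshness hypothesis on $\varrho$ earns its keep.

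For the compound cases I would propagate the inductive hypothesis through each clause of Definition~\ref{defn.hol.interpret.terms}. The `standard' clauses --- products, applications and $\lambda X$-abstractions whose type is not in the image of a PNL sort --- go through directly, using Lemma~\ref{lemm.renaming.distribute} to push $[a\ssm b]$ across an application $\holmodel{t}(\varrho)\holmodel{u}(\varrho)$ and the componentwise renaming action on $\rs X\times\rs Y$ (Definition~\ref{defn.times}). For the clauses wrapped in $\Ren{\text{-}}$ --- the $\lam{a'}t$, the PNL-sorted application $tu$, and PNL-sorted tupling --- one unfolds the chosen representatives $\rho_i\bigact x_i$, applies the inductive hypothesis to the $x_i$, and re-packages via $\rho\bigact(\rho'\bigact x)=(\rho\circ\rho')\bigact x$ together with Lemmas~\ref{lemm.supp.subsets} and~\ref{lemm.supp.abstraction'}; the freshness side-conditions built into those clauses are discharged by first $\alpha$-converting the bound atom away from $\{a,b\}\cup\bigcup_{X\in\f{fv}(t)}\supp(\varrho(X))$, which the hypothesis on $\varrho$ permits.

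The main obstacle is the $\lam{a'}t$ clause (and, for the same reason, application into a PNL-sorted term): the interpretation there already performs an $\alpha$-conversion and a freshening, and one must check that a single choice of fresh representative computes both $\holmodel{t}(\varrho')$ and $[a\ssm b]\bigact\holmodel{t}(\varrho)$ --- this is where the failure of $\Ren{\text{-}}$ to commute with atoms-abstraction and with products (Lemma~\ref{lemm.non-iso}) has to be navigated carefully. Keeping the bound atom $a'$ disjoint from $\{a,b\}$ throughout, and commuting $[a\ssm b]$ past the inner abstraction using $\rho\bigact[a']x=[a']\rho\bigact x$ for $a'\notin\f{nontriv}(\rho)$ (Definition~\ref{defn.abstraction.sets'}), is the delicate step; the rest is routine bookkeeping.
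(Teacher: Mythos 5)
Your proposal is correct and follows essentially the same route as the paper: a structural induction on $t$ whose only substantive cases are $t=a$ (where $\id\bigact b=[a\ssm b]\bigact a$ in $\mathbb A_\nu$) and $t=X$ a non-atom variable (where $a\notin\supp(\varrho(X))$ gives $[a\ssm b]\bigact\varrho(X)=\varrho(X)$), with the remaining clauses propagated routinely. The paper states only these two cases and leaves the rest as routine; your extra care over the $\Ren{\text{-}}$-wrapped clauses is a sound elaboration of what the paper omits.
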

\begin{proof}
By a routine induction on $t$.
We mention two cases:
\begin{itemize*}
\item
The case $t$ is $a$.\quad
Using the fact that $\id\bigact b=[a\ssm b]\bigact a$ in $\mathbb A_\nu$ with the action described in Definition~\ref{defn.bool}.
\item
The case $t$ is $X$ for some HOL variable that is not an atom.\quad
By assumption $a\not\in\supp(\varrho(X))$ and so by Definition~\ref{defn.finsupp},\ $\varrho(X)=[a\ssm b]\bigact\varrho(X)$.
The result follows. 
\qedhere
\end{itemize*}
\end{proof}

\begin{rmrk}
Lemma~\ref{lemm.hol.denotren} may fail if $a\in\supp(\varrho(X))$.
For instance, if $\varrho(X)=a$ where $a\in\mathbb A_\nu$ and $\type(X)=\mu_\nu$ and $X$ is not itself an atom, then $\holmodel{X}(\varrho[a\ssm \id\bigact b])=\id\bigact a$ yet $[a\ssm b]\bigact(\holmodel{X}(\varrho))=[a\ssm b]\bigact(\id\bigact a)=\id\bigact b$. 
\end{rmrk}

We need to check that the denotation of terms populates the denotation of their types, and that $\beta$-equivalent terms receive equal denotations. 
\begin{lemm}
\label{lemm.type.t}
If $t:\beta$ then $\holmodel{t}(\varrho)\in\holmodel{\beta}$.
\end{lemm}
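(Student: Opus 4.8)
The plan is to argue by induction on the HOL term $t$ (equivalently, on its typing derivation, which is syntax-directed), following precisely the case-split of Definition~\ref{defn.hol.interpret.terms}. Because that definition is only meaningful once one knows that the denotations of subterms lie in the expected sets --- for instance the application clauses select a representative $\holmodel{u}(\varrho)=\id\bigact b$ with $b\in\mathbb A_\nu$, which presupposes $\holmodel{u}(\varrho)\in\Ren{\mathbb A_\nu}\cong\mathbb A_\nu$ (Lemma~\ref{lemm.non-iso}(1)), and the $\lam{a}t$ clause presupposes $\holmodel{t}(\varrho[a\ssm a])\in\Ren{\model{\alpha}}$ --- I would in fact run this induction \emph{simultaneously} with the verification that Definition~\ref{defn.hol.interpret.terms} is well-defined. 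At each step the task is to check that the value prescribed by the relevant clause of Definition~\ref{defn.hol.interpret.terms} lies in the set computed by the corresponding clause of Definition~\ref{defn.hol.interpretation}.

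For the leaf cases: if $t$ is an atom $a:\nu$ or a variable $X:\beta$, then $\holmodel{t}(\varrho)$ is $\varrho(a)$ or $\varrho(X)$, which lies in $\holmodel{\beta}$ by definition of a valuation (Definition~\ref{defn.hol.valuation}). For the logical constants, $\holmodel{\bot}(\varrho)=0\in\mathbb B=\holmodel{o}$, while $\holmodel{\limp}(\varrho)$ and $\holmodel{\forall_\beta}(\varrho)$ are manifestly equivariant, hence supported by $\varnothing$ (Lemma~\ref{lemm.OK}), so they are genuine elements of the exponentials $\mathbb B\Rightarrow(\mathbb B\Rightarrow\mathbb B)=\holmodel{o\to o\to o}$ and $(\holmodel{\beta}\Rightarrow\mathbb B)\Rightarrow\mathbb B=\holmodel{(\beta\to o)\to o}$. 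For a term-former constant $\tf g_{\smtf f}$ with $\tf f:(\alpha)\tau$ we have $\holmodel{\tf g_{\smtf f}}=\Ren{\tf f^\iden}$, and since $\Ren{\text{-}}$ is a functor it sends the equivariant $\tf f^\iden$ to an equivariant map $\Ren{\model{\alpha}}\to\Ren{\model{\tau}}$, which (being supported by $\varnothing$) lies in $\Ren{\model{\alpha}}\Rightarrow\Ren{\model{\tau}}=\holmodel{\hol{}{\alpha}\to\hol{}{\tau}}$; the only subtle sub-case is the borderline arity $\tf f:(\nu)\tau$, for which $\hol{}{\nu}\to\hol{}{\tau}=\hol{}{[\nu]\tau}$ is interpreted by Definition~\ref{defn.hol.interpretation} as $\Ren{[\mathbb A_\nu]\model{\tau}}$, and here one uses that an equivariant function $\mathbb A_\nu\to\model{\tau}$ corresponds to the atoms-abstraction $[a]\tf f^\iden(a)$; proposition-former constants are handled identically with $\mathbb B$ in place of $\Ren{\model{\tau}}$.

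For the inductive cases, each reduces to one of the two clauses of Definition~\ref{defn.hol.interpretation}. For $\lam{a}t$ with $t:\hol{}{\alpha}$: by induction $\holmodel{t}(\varrho[a\ssm a])\in\Ren{\model{\alpha}}$, say equal to $\rho\bigact x$ with $x\in\model{\alpha}$; then $[a]x\in[\mathbb A_\nu]\model{\alpha}$ by Lemma~\ref{lemm.supp.abstraction}, so $\rho\bigact[a]x\in\Ren{[\mathbb A_\nu]\model{\alpha}}=\holmodel{\hol{}{[\nu]\alpha}}$ (the freshness hypothesis on $a$ and Remark~\ref{rmrk.alpha} enter only to see that the class $\rho\bigact[a]x$ is independent of the chosen representative). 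For $\lam{X}t:\beta'\to\beta$ not of the form $\hol{}{[\nu]\alpha}$: by induction $\holmodel{t}(\varrho[X\ssm x])\in\holmodel{\beta}$ for each $x\in\holmodel{\beta'}$, so it remains only to check that $x\mapsto\holmodel{t}(\varrho[X\ssm x])$ is \emph{supported} in the sense of Definition~\ref{defn.exp.ren} and hence lies in $\holmodel{\beta'}\Rightarrow\holmodel{\beta}$; this follows from the routine fact that $\holmodel{t}(\varrho)$ depends on $\varrho$ only through the finitely many variables in $\f{fv}(\lam{X}t)$, each carrying finite support (using Lemma~\ref{lemm.hol.denotren} and its atom-level analogue for the renaming action on $\varrho$), so the denotation function is supported by $\bigcup_{Y\in\f{fv}(\lam{X}t)}\supp(\varrho(Y))$. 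For an application $tu$: if $t:\hol{}{\alpha}$ then $\alpha$ is forced to be an abstraction sort $[\nu]\alpha'$, by induction $\holmodel{t}(\varrho)\in\Ren{[\mathbb A_\nu]\model{\alpha'}}$ decomposes as $\rho\bigact[a]x$ with $x\in\model{\alpha'}$ and $\holmodel{u}(\varrho)=\id\bigact b$, and then $([a\ssm b]\circ\rho)\bigact x\in\Ren{\model{\alpha'}}=\holmodel{\hol{}{\alpha'}}$; otherwise the type of $t$ is a function type interpreted by $\Rightarrow$, and the ordinary application of a supported function to an element of the domain lands in the codomain. The two tuple cases are analogous, using Definition~\ref{defn.times} and the product clauses, and (in the $\Ren{\text{-}}$ sub-case) the choice of representatives with pairwise disjoint domains.

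The main obstacle I expect is not any individual case but the bookkeeping required to keep two case-splits synchronised: the one in Definition~\ref{defn.hol.interpret.terms} governing which denotation clause fires, and the one in Definition~\ref{defn.hol.interpretation} governing how the type is interpreted. Concretely one must verify, for each term shape, that the type of that term is of the form $\hol{}{\alpha}$ exactly when the $\Ren{\text{-}}$-style clause applies, and not of that form exactly when the standard clause applies; the delicate point is the identity $\hol{}{[\nu]\tau}=\nu\to\hol{}{\tau}$, which means certain function types with a name-sort source are still images of PNL sorts. The only genuinely non-mechanical step is the $\lam{X}t$ case, where one must produce a support bound witnessing membership in the restricted exponential $\holmodel{\beta'}\Rightarrow\holmodel{\beta}$ of Definition~\ref{defn.frs.exp} rather than merely in the full set-theoretic function space; this is standard given the finite-support structure but should be written out carefully.
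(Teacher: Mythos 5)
The paper states this lemma without proof, treating it as a routine induction on typing derivations; your proposal is exactly that induction, correctly carried out, so there is no divergence of approach to report. Your write-up is in fact more careful than the paper requires, and the two points you flag as delicate are the right ones: the need to exhibit a supporting permission set in the $\lam{X}t$ case so that the denotation lands in the restricted exponential of Definition~\ref{defn.frs.exp} rather than the full function space, and the synchronisation of the case-split in Definition~\ref{defn.hol.interpret.terms} with that in Definition~\ref{defn.hol.interpretation} around types of the form $\hol{}{[\nu]\alpha}=\nu\to\hol{}{\alpha}$.
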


\begin{thrm}
$\holmodel{(\lam{X}t)u}(\varrho) = \holmodel{t}(\varrho[X\ssm\holmodel{u}(\varrho)])$.
\end{thrm}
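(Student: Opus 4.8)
The plan is to unfold Definition~\ref{defn.hol.interpret.terms} for the term $\lam X t$ and for the application $(\lam X t)u$, and then match the two sides. Definition~\ref{defn.hol.interpret.terms} handles $\lambda$-abstractions by two guarded clauses, so I would split on whether the type of $\lam X t$ is of the form $\hol{}{[\mathbb A_\nu]\alpha}$ — i.e.\ the image of a PNL abstraction sort — or not, and similarly the application clause bifurcates on whether the function part has a type that is the image of a PNL sort.

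\emph{The generic case.} Suppose $\lam X t:\beta'\to\beta$ with $\beta'\to\beta$ not of the form $\hol{}{[\mathbb A_\nu]\alpha}$. Then $\holmodel{\lam X t}(\varrho)=\lam{x}\holmodel{t}(\varrho[X\ssm x])$ directly, and since such a function type is in particular not of the form $\hol{}{\alpha}$ for any PNL sort, the application clause for a non-translated function part gives $\holmodel{(\lam X t)u}(\varrho)=\holmodel{\lam X t}(\varrho)\,\holmodel{u}(\varrho)$. Application in the exponential $\holmodel{\beta'}\Rightarrow\holmodel{\beta}$ is ordinary function application, so this equals $\bigl(\lam{x}\holmodel{t}(\varrho[X\ssm x])\bigr)\holmodel{u}(\varrho)=\holmodel{t}(\varrho[X\ssm\holmodel{u}(\varrho)])$, where Lemma~\ref{lemm.type.t} guarantees $\holmodel{u}(\varrho)\in\holmodel{\beta'}$ so that the update is a legal valuation. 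This case is a one-line calculation.

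\emph{The abstraction-sort case.} Suppose $\lam X t:\hol{}{[\mathbb A_\nu]\alpha}=\mu_\nu\to\hol{}{\alpha}$. Since $\holmodel{\mu_\nu}=\Ren{\mathbb A_\nu}$ is in bijection with $\mathbb A_\nu$ (Lemma~\ref{lemm.non-iso}, part~1), we have $\holmodel{u}(\varrho)=\id\bigact b$ for a unique atom $b$. I would first reduce, by $\alpha$-conversion (and the evident renaming of valuations), to the case that the bound variable $X$ is itself an atom $a$, chosen fresh in the sense that $a\notin\{b\}\cup\bigcup_{Y\in\f{fv}(\lam a t)}\supp(\varrho(Y))$. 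Unfolding the $\lam a$-clause, $\holmodel{\lam a t}(\varrho)=\rho\bigact[a]x$ where $\holmodel{t}(\varrho[a\ssm\id\bigact a])=\rho\bigact x$ and, appealing to Remark~\ref{rmrk.alpha} for the choice of representative, $a\notin\nontriv(\rho)$. Since $(\lam a t)u:\hol{}{\alpha}$, the application clause for a translated-sort function part applies; taking its fresh atom to be our $a$ (legitimate as $a\notin\nontriv(\rho)\cup\{b\}$) yields $\holmodel{(\lam a t)u}(\varrho)=([a\ssm b]\circ\rho)\bigact x$. For the right-hand side, $\holmodel{t}(\varrho[X\ssm\holmodel{u}(\varrho)])=\holmodel{t}(\varrho[a\ssm\id\bigact b])$, and Lemma~\ref{lemm.hol.denotren} — whose hypothesis is precisely the freshness of $a$ just arranged — rewrites this as $[a\ssm b]\bigact\holmodel{t}(\varrho[a\ssm\id\bigact a])=[a\ssm b]\bigact(\rho\bigact x)=([a\ssm b]\circ\rho)\bigact x$, using the monoid action of $\Ren{\text{-}}$ (Definition~\ref{defn.free.ren}). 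The two sides coincide.

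The main obstacle is the freshness and representative bookkeeping in the second case: checking that the several ``($\alpha$-converting if necessary)'' and ``(renaming if necessary)'' provisos of Definition~\ref{defn.hol.interpret.terms} can all be satisfied simultaneously — in particular that $\holmodel{t}(\varrho[a\ssm\id\bigact a])$ admits a representative $\rho\bigact x$ with $a\notin\nontriv(\rho)$, and that the bound atom $a$ can be taken distinct from the already-fixed value $b$ of $\holmodel{u}(\varrho)$. This is routine but fiddly, relying on Remark~\ref{rmrk.alpha} (garbage-collection and $\alpha$-conversion inside $\Ren{\text{-}}$) and on there being cofinitely many admissible choices of $a$. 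Everything else is direct unwinding of the definitions.
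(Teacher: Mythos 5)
Your proof is correct and follows essentially the same route as the paper's: the same case split on whether $\lam{X}t$ has a type of the form $\hol{}{[\mathbb A_\nu]\alpha}$, the generic case by direct unfolding, and the abstraction-sort case by reducing to a fresh bound atom $a$ and invoking Lemma~\ref{lemm.hol.denotren} to identify $\holmodel{t}(\varrho[a\ssm \id\bigact b])$ with $([a\ssm b]\circ\rho)\bigact x$. You simply spell out the freshness and representative bookkeeping that the paper leaves implicit.
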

\begin{proof}
There are two cases, depending on whether $\lam{X}t:\hol{}{[\mathbb A_\nu]\alpha}$ for some PNL sort, or not.
\begin{itemize*}
\item
\emph{The case $t:\hol{}{\alpha}$.}\quad
By Definition~\ref{defn.hol.interpret.terms} $\holmodel{u}(\varrho)=\id\bigact b$ and $\holmodel{\lam{X}t}(\varrho) =\rho\bigact [a]x$, for some $b$, $a$, and $x$. 
$\alpha$-converting if necessary assume $X$ is equal to $a$ which we choose fresh (so $a\not\in\f{nontriv}(\rho)\cup\{b\}$ and $a\not\in\supp(\varrho(Y))$ for every $Y\in\f{fv}(t)\setminus\{a\}$).
Then also by definition $\holmodel{(\lam{a}t)u}(\varrho) = ([a\ssm b]\circ\rho)\bigact x$.

Thus it suffices to check that $([a\ssm b]\circ\rho)\bigact x=\holmodel{t}(\varrho[a\ssm b])$.
This follows using Lemma~\ref{lemm.hol.denotren}.
\item
\emph{The case $t:\beta$ where $\beta$ is not equal to $\hol{}{\alpha}$ for any PNL sort $\alpha$.}\quad
This is as standard.
\qedhere
\end{itemize*}
\end{proof}

\subsection{Soundness}

\begin{lemm}
\label{lemm.fV.hol.denot}
If $\varrho(X)=\varrho'(X)$ for all $X\in\f{fV}(t)$ then $\holmodel{t}(\varrho)=\holmodel{t}(\varrho')$.
\end{lemm}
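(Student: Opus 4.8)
The plan is to prove this by a routine structural induction on $t$, following the case-split of Definition~\ref{defn.hol.interpret.terms}. Throughout, write $\f{fv}(t)$ for the free variables of $t$ (Definition~\ref{defn.hol.perm}); the content of the lemma is that $\holmodel{t}(\varrho)$ depends on $\varrho$ only through its restriction to $\f{fv}(t)$. The only subtlety is that several clauses of Definition~\ref{defn.hol.interpret.terms} make a choice --- an $\alpha$-renaming of a bound atom, or a choice of representative of an equivalence class in $\Ren{\text{-}}$ --- and we must check that one and the same choice works for both $\varrho$ and $\varrho'$. This is automatic, because $\varrho$ and $\varrho'$ agree on every free variable of $t$, hence on every free variable of every immediate subterm, modulo the bound variable which is updated to the same value on both sides.

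The base and ``standard'' cases are immediate. If $t$ is a variable $X$ (atoms $a$ are the special case $X=a$) then $X\in\f{fv}(t)$, so $\holmodel{X}(\varrho)=\varrho(X)=\varrho'(X)=\holmodel{X}(\varrho')$. If $t$ is a constant (including $\bot$, $\limp$, $\forall_\beta$, $\tf g_{\smtf f}$, $\tf g_{\smtf P}$) its denotation does not mention the valuation. For application $tu$, for tuples $(t_1,\ldots,t_n)$, and for $\lam{X}t$, whenever the clause of Definition~\ref{defn.hol.interpret.terms} that applies is the standard Henkin-style one, the result follows from the inductive hypothesis applied to the immediate subterms, using $\f{fv}(u),\f{fv}(t_i)\subseteq\f{fv}(t)$ for the relevant $t$ and, for $\lam{X}t$, the observation that $\varrho[X\ssm x]$ and $\varrho'[X\ssm x]$ agree on $\f{fv}(t)$ for every $x$ (they agree on $\f{fv}(t)\setminus\{X\}=\f{fv}(\lam{X}t)$ and both send $X$ to $x$).

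The interesting cases are those where $t$ lands in the translation $\hol{}{\alpha}$ of a PNL sort. Consider $\lam{a}t$ with $t:\hol{}{\alpha}$: Definition~\ref{defn.hol.interpret.terms} sets $\holmodel{\lam{a}t}(\varrho)=\rho\bigact[a]x$ where $\holmodel{t}(\varrho[a\ssm a])=\rho\bigact x$, after $\alpha$-converting so that $a\notin\bigcup_{X\in\f{fv}(t)\setminus\{a\}}\supp(\varrho(X))$. Since $\f{fv}(t)\setminus\{a\}=\f{fv}(\lam{a}t)$ and $\varrho,\varrho'$ agree there, this union --- and hence the freshness constraint --- is literally the same for $\varrho$ and $\varrho'$, so we may fix one legal choice of $a$. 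Then $\varrho[a\ssm a]$ and $\varrho'[a\ssm a]$ agree on all of $\f{fv}(t)$, so by the inductive hypothesis $\holmodel{t}(\varrho[a\ssm a])=\holmodel{t}(\varrho'[a\ssm a])$; taking this common value as $\rho\bigact x$ gives $\holmodel{\lam{a}t}(\varrho)=\rho\bigact[a]x=\holmodel{\lam{a}t}(\varrho')$. The PNL-sorted application and tuple clauses are handled identically: by the inductive hypothesis the denotations of the immediate subterms under $\varrho$ and under $\varrho'$ coincide as elements of the relevant $\Ren{\text{-}}$ objects, so one may pick \emph{the same} representatives ($\id\bigact b$, $\rho\bigact[a]x$ with $a\notin\f{nontriv}(\rho)\cup\{b\}$, or $\rho_i\bigact x_i$ with pairwise disjoint domains, according to the clause) satisfying the stated side-conditions, whence the resulting element is the same on both sides.

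The main obstacle is really just this bookkeeping: one must verify that the freshness and disjointness side-conditions in the case-split clauses of Definition~\ref{defn.hol.interpret.terms} can be met uniformly for $\varrho$ and $\varrho'$. This goes through because those side-conditions constrain only $\supp(\varrho(X))$ (respectively $\supp(\varrho'(X))$) for $X$ free in the subterm under consideration, and $\varrho,\varrho'$ agree on exactly those variables; everything else follows from the inductive hypothesis.
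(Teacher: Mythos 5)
Your proof is correct and takes the same approach as the paper, which simply states ``by a routine induction on terms''; your write-up supplies the bookkeeping (uniform choices of fresh atoms and of representatives in $\Ren{\text{-}}$) that the paper leaves implicit.
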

\begin{proof}
By a routine induction on terms.
\end{proof}

\begin{lemm}
\label{lemm.hol.denotsub}
$\holmodel{\rawt}(\varrho[X\ssm \holmodel{\rawu}(\varrho)]) =\holmodel{\rawt[X\ssm \rawu]}(\varrho)$.
\end{lemm}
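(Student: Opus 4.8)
The plan is to prove Lemma~\ref{lemm.hol.denotsub} by induction on the structure of the HOL term $\rawt$, following the case analysis of Definition~\ref{defn.hol.interpret.terms}. I may assume throughout that $X\in\f{fv}(\rawt)$, since otherwise $\rawt[X\ssm\rawu]=\rawt$ and Lemma~\ref{lemm.fV.hol.denot} makes both sides equal to $\holmodel{\rawt}(\varrho)$; the right-hand side is well-posed because $\rawu{:}\type(X)$, so $\holmodel{\rawu}(\varrho)\in\holmodel{\type(X)}$ by Lemma~\ref{lemm.type.t} and $\varrho[X\ssm\holmodel{\rawu}(\varrho)]$ is a valuation. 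I will also $\alpha$-convert $\rawt$ so that no atom or variable bound in $\rawt$ lies in $\f{fv}(\rawu)\cup\{X\}$; then HOL substitution distributes over the outermost term-former of $\rawt$, and since HOL substitution preserves types, $\rawt[X\ssm\rawu]$ falls under the same clause of Definition~\ref{defn.hol.interpret.terms} as $\rawt$.

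The base and ``standard'' cases are routine. If $\rawt$ is the variable $X$ then both sides are $\holmodel{\rawu}(\varrho)$ by Definition~\ref{defn.varrho.update}; the case $\rawt=Y\neq X$ is covered by the assumption above; and if $\rawt$ is any constant (including $\bot$, $\limp$, $\forall_\beta$, $\tf g_{\smtf f}$, $\tf g_{\smtf P}$) then $\holmodel{\rawt}$ does not depend on the valuation. For the ``standard'' application clause $\holmodel{\rawt'\rawt''}(\varrho)=\holmodel{\rawt'}(\varrho)\holmodel{\rawt''}(\varrho)$, the ``standard'' tuple clause, and the ``standard'' abstraction clause $\holmodel{\lam{Y}\rawt''}(\varrho)=\lam{y}\holmodel{\rawt''}(\varrho[Y\ssm y])$ (where $Y$ may be taken $\neq X$ and $\notin\f{fv}(\rawu)$), one pushes the substitution inward and applies the inductive hypothesis componentwise; in the $\lam{Y}$ case one additionally uses Lemma~\ref{lemm.fV.hol.denot} to see $\holmodel{\rawu}(\varrho[Y\ssm y])=\holmodel{\rawu}(\varrho)$, and $X\neq Y$ to commute the two valuation updates.

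The work is in the two ``PNL-sort'' clauses of Definition~\ref{defn.hol.interpret.terms}: atom-abstraction $\holmodel{\lam{a}\rawt''}(\varrho)$ with $\rawt''{:}\hol{}{\alpha}$, and application $\holmodel{\rawt'\rawt''}(\varrho)$ with $\rawt'{:}\hol{}{\alpha}$. Here the denotation is read off a chosen representative $\rho\bigact x$ and, in the abstraction case, a chosen fresh atom $a$, so one must check the computation is stable under the valuation update. In the abstraction case, further $\alpha$-convert so that $a\neq X$, $a\notin\f{fv}(\rawu)$, and $a\notin\supp(\holmodel{\rawu}(\varrho))$; then $\holmodel{\rawu}(\varrho[a\ssm\id\bigact a])=\holmodel{\rawu}(\varrho)$ by Lemma~\ref{lemm.fV.hol.denot}, and $(\varrho[a\ssm\id\bigact a])[X\ssm\holmodel{\rawu}(\varrho)]=(\varrho[X\ssm\holmodel{\rawu}(\varrho)])[a\ssm\id\bigact a]$ since $a\neq X$, so the inductive hypothesis applied to $\rawt''$ at valuation $\varrho[a\ssm\id\bigact a]$ shows that the representative used to compute $\holmodel{(\lam{a}\rawt'')[X\ssm\rawu]}(\varrho)$ coincides with the one used to compute $\holmodel{\lam{a}\rawt''}(\varrho[X\ssm\holmodel{\rawu}(\varrho)])$, the freshness of $a$ for $\supp(\holmodel{\rawu}(\varrho))$ being exactly what makes the clause applicable to the right-hand side; the clause then returns the same $\rho\bigact[a]x$ on both sides. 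The application case is analogous: the inductive hypotheses on $\rawt'$ and $\rawt''$ identify the data $\holmodel{\rawt''}(\varrho)=\id\bigact b$ and $\holmodel{\rawt'}(\varrho)=\rho\bigact[a]x$ on the two sides, after which well-definedness of the clause in the choice of $a\notin\f{nontriv}(\rho)\cup\{b\}$ gives $([a\ssm b]\circ\rho)\bigact x$ on both. The two ``PNL-sort'' tuple clauses are handled the same way, choosing representatives $\rho_i\bigact x_i$ with pairwise-disjoint $\dom(\rho_i)$.

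I expect the main obstacle to be exactly this bookkeeping in the PNL-sort clauses: arranging by $\alpha$-conversion that the bound/abstracted atom is simultaneously fresh for $\supp(\holmodel{\rawu}(\varrho))$ and for the supports appearing in $\varrho$, so that the left- and right-hand denotations are genuinely computed by the same clause with matching data, and appealing to well-definedness of Definition~\ref{defn.hol.interpret.terms} (for which Lemma~\ref{lemm.hol.denotren} is the essential ingredient, controlling how the interpretation interacts with renaming an atom in the valuation) for independence of the choices of representative and fresh atom. Everything else is a mechanical induction.
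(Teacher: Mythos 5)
Your proposal is correct and follows essentially the same route as the paper: a structural induction on $\rawt$ driven by the clauses of Definition~\ref{defn.hol.interpret.terms}, with the variable/atom case and the $\lambda$-abstraction case (after $\alpha$-converting the bound variable away from $\f{fv}(\rawu)$) as the cases worth spelling out. The paper simply labels the induction ``routine'' and records only those two cases, whereas you additionally work through the representative-choice bookkeeping in the PNL-sort clauses; that extra detail is consistent with, and a legitimate filling-in of, what the paper leaves implicit.
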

\begin{proof}
By a routine induction on $\rawt$.
We mention two cases (bearing in mind that in HOL, a variable $X:\nu$ may be an atom in $\mathbb A_\nu$):
\begin{itemize*}
\item
\emph{The case $\rawt$ equals $X$ equals $a\in\mathbb A_\nu$ for some atom $a$.}\quad

By Definition~\ref{defn.hol.interpret.terms},\ $\holmodel{a}(\varrho[a\ssm\holmodel{\rawu}(\varrho)])= \holmodel{\rawu}(\varrho)$.
\item
\emph{The case $\rawt$ equals $\lam{Y}\rawt'$.}\quad

We assume ${Y\not\in\f{fv}(\rawu)}$, so $(\lam{Y}\rawt')[X\ssm \rawu]=\lam{Y}(\rawt'[X\ssm \rawu])$,\ and use the inductive hypothesis.
\qedhere
\end{itemize*}
\end{proof}

\begin{defn}[Validity]
\label{defn.hol.ment}
Call the proposition $\xi$ \deffont{valid} in ${\mathcal H}$ when 
$\holmodel{\xi}(\varrho) = 1$ for all $\varrho$. 

Call the sequent $\xi_1, ..., \xi_n \holcent \chi_1, ..., \chi_p$ \deffont{valid} 
in ${\mathcal H}$ when 
$(\xi_1 \wedge ... \wedge \xi_n) \Rightarrow 
(\chi_1 \vee ... \vee \chi_p)$ is valid.

If this is true for all ${\mathcal H}$ then write $\xi_1,\dots,\xi_n\holment\chi_1,\dots,\chi_p$. 
\end{defn}

\begin{thrm}[Soundness]
\label{thrm.hol.soundness}
If $\Xi\holcent\Chi$ is derivable then $\Xi\holment\Chi$.
\end{thrm}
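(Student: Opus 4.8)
The plan is the textbook soundness argument for a sequent calculus: unfold \emph{validity} (Definition~\ref{defn.hol.ment}) to its pointwise form and proceed by induction on the derivation of $\Xi\holcent\Chi$, checking that each rule of Figure~\ref{hol.Seq} preserves validity. Concretely, it suffices to show that $\Xi\holcent\Chi$ is valid in \emph{every} interpretation $\mathcal H$, i.e.\ that for every valuation $\varrho$ to $\mathcal H$, if $\holmodel{\xi}(\varrho)=1$ for all $\xi\in\Xi$ then $\holmodel{\chi}(\varrho)=1$ for some $\chi\in\Chi$. Fixing $\mathcal H$, the inductive hypothesis is that this holds for the premises of the last rule used.

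The two axioms are immediate: for \rulefont{hAx} the displayed $\xi$ appears on both sides, and for \rulefont{h\bot L} the clause $\holmodel{\bot}(\varrho)=0$ from Definition~\ref{defn.hol.interpret.terms} means the left-hand set can never be all-true. For the propositional rules \rulefont{h{\limp}L} and \rulefont{h{\limp}R} one first records that $\holmodel{\xi\limp\chi}(\varrho)=\f{max}\{1{-}\holmodel{\xi}(\varrho),\holmodel{\chi}(\varrho)\}$: since $\xi\limp\chi$ is the constant $\limp$ applied twice and the type $o$ is never of the form $\hol{}{\alpha}$ (it is not in the image of the PNL-to-HOL sort translation of Definition~\ref{defn.TS}), the \emph{standard} application clause of Definition~\ref{defn.hol.interpret.terms} fires, together with $\holmodel{\limp}(\varrho)=\lam{x\in\mathbb B,y\in\mathbb B}\f{max}\{1{-}x,y\}$. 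Given this, each rule follows by a case split on whether $\holmodel{\xi}(\varrho)$ is $0$ or $1$ and an appeal to the inductive hypothesis on the appropriate premise, exactly as for ordinary first-order sequent calculus.

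The quantifier rules are where the machinery of the paper is genuinely used. The point to nail down is that $\Forall{X}\xi$ is sugar for $\forall\,\lam{X}\xi$ and that, again because $o$ is never $\hol{}{\alpha}$, the \emph{standard} clauses of Definition~\ref{defn.hol.interpret.terms} apply both to the abstraction $\lam{X}\xi$ and to the outer application, so that $\holmodel{\Forall{X}\xi}(\varrho)=\f{min}\{\holmodel{\xi}(\varrho[X\ssm x])\mid x\in\holmodel{\type(X)}\}$ via $\holmodel{\forall_\beta}$. For \rulefont{h{\forall}L}, whose side-condition is $\rawt:\type(X)$, Lemma~\ref{lemm.type.t} gives $\holmodel{\rawt}(\varrho)\in\holmodel{\type(X)}$; instantiating the $\f{min}$ at $x=\holmodel{\rawt}(\varrho)$ and rewriting with Lemma~\ref{lemm.hol.denotsub} yields $\holmodel{\xi[X\ssm\rawt]}(\varrho)=1$, and the inductive hypothesis on the premise then supplies the required $\chi\in\Chi$. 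For \rulefont{h{\forall}R}, fix an arbitrary $x\in\holmodel{\type(X)}$; the eigenvariable condition $X\notin\f{fV}(\Xi,\Chi)$ together with Lemma~\ref{lemm.fV.hol.denot} gives $\holmodel{\eta}(\varrho[X\ssm x])=\holmodel{\eta}(\varrho)$ for every $\eta\in\Xi\cup\Chi$, so (unless some $\chi\in\Chi$ is already true under $\varrho$, which is the conclusion anyway) the premise's validity applied at $\varrho[X\ssm x]$ forces $\holmodel{\xi}(\varrho[X\ssm x])=1$; as $x$ was arbitrary the $\f{min}$ is $1$, i.e.\ $\holmodel{\Forall{X}\xi}(\varrho)=1$.

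The main obstacle is purely bookkeeping: Definition~\ref{defn.hol.interpret.terms} splits $\holmodel{\text{-}}$ according to whether a type is the image of a PNL sort, so one must verify that the logical connectives, the quantifier constants, and every application built from them fall in the \emph{standard} branch, which holds because their result types are built from $o$ and $o\neq\hol{}{\alpha}$ for every PNL sort $\alpha$. Once that is observed, every inductive step is routine.
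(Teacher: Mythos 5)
Your proof is correct and follows essentially the same route as the paper's: induction on derivations, with \rulefont{h\forall L} handled by instantiating the $\f{min}$ at $\holmodel{\rawt}(\varrho)$ and rewriting via Lemma~\ref{lemm.hol.denotsub}, and \rulefont{h\forall R} via the eigenvariable condition and Lemma~\ref{lemm.fV.hol.denot}. You also correctly identify the one subtlety the paper flags, namely that the case-split in Definition~\ref{defn.hol.interpret.terms} resolves to the standard clauses because types built from $o$ are never of the form $\hol{}{\alpha}$.
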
 
\begin{proof}
Fix some interpretation $\mathcal H$.
We work by induction on derivations (Figure~\ref{rSeq}).
We sketch the two non-trivial cases:
\begin{itemize*}
\item
\emph{The case of \rulefont{h\forall L}.}\quad
We check that $u:\type(X)$ implies $\holmodel{\Forall{X}\xi}(\varrho)\leq \holmodel{\xi[X\ssm u]}(\varrho)$.
We reason as follows:
$$
\begin{array}{r@{\ }l@{\quad}l}
\holmodel{\Forall{X}\xi}(\varrho)=&\f{min}\{\holmodel{\lam{X}\xi}(\varrho)y \mid y\in\holmodel{\type(X)}\}
&\text{Definition~\ref{defn.hol.interpret.terms}}
\\
=&\f{min}\{\holmodel{\xi}(\varrho[X\ssm y]) \mid y\in\holmodel{\type(X)}\}
&\text{Definition~\ref{defn.hol.interpret.terms}}
\\
\leq&\holmodel{\xi}(\varrho[X\ssm\holmodel{u}(\varrho)])
&\text{Fact}
\\
=&\holmodel{\xi[X\ssm u]}(\varrho)
&\text{Lemma~\ref{lemm.hol.denotsub}}
\end{array}
$$
In the second use of Definition~\ref{defn.hol.interpret.terms} above, note that $[\mathbb A_\nu]o$ is never of the form $\hol{}{[\mathbb A_\nu]\alpha}$ for any $\alpha$.
\item
\emph{The case of \rulefont{h\forall R}.}\quad
We use Lemma~\ref{lemm.fV.hol.denot} and routine calculations on truth-values.
\qedhere
\end{itemize*}
\end{proof}

\section{Completeness of the translation of PNL to HOL}
\label{sect.pnl.hol.complete}

We are now ready to prove completeness (Theorem~\ref{thrm.PNL.HOL.complete}) of the translation from Definition~\ref{defn.translation}.
The proof is subtle; notably Lemma~\ref{lemm.rho.varrho} and the case of $\Forall{X}\phi$ in Lemma~\ref{lemm.commuting.square} are non-trivial.
Some mathematical action also takes place in Lemma~\ref{lemm.abs.conc.pi} and the case of $\pi\act X$ in Lemma~\ref{lemm.commuting.square}.

\subsection{Renamings and HOL propositions}

We need a few technical observations about how renamings interact with the denotations of HOL propositions:
\begin{lemm}
\label{lemm.equivar.to.triv}
Suppose $G:\rs X\longrightarrow\mathbb B$.
Then for every $\rho$, $G(x)=1$ implies $G(\rho\bigact x)=1$. 
\end{lemm}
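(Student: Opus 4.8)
The plan is to unpack the two pieces of structure in play: $G$ is an arrow in \theory{PmsRen}, hence equivariant in the sense of Definition~\ref{defn.equivariant}, and $\mathbb B$ carries the trivial renaming action of Definition~\ref{defn.bool}. First I would record that equivariance of $G$ gives, for any $x\in|\rs X|$ and any renaming $\rho$, the identity
$$
G(\rho\bigact x) = \rho\bigact G(x).
$$
Next I would observe that $G(x)\in|\mathbb B|$ and that the renaming action on $\mathbb B$ is trivial, so $\rho\bigact G(x) = G(x)$. Combining these two equalities yields $G(\rho\bigact x) = G(x)$ for every $\rho$.

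From here the statement is immediate: if $G(x)=1$ then $G(\rho\bigact x) = G(x) = 1$, as required. (One could equally phrase the conclusion as: $G$ is constant along renaming-orbits, of which the displayed implication is the one-directional consequence actually needed later.)

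There is no real obstacle here; the only thing to be careful about is not to conflate the two roles of $\mathbb B$ — it appears both as the target renaming set (with trivial action) and, implicitly, as the two-element set of truth-values — but since the action is trivial this causes no difficulty. The lemma is essentially a one-line consequence of equivariance together with triviality of the action on $\mathbb B$, and I would present it as such, citing Definitions~\ref{defn.equivariant} and~\ref{defn.bool}.
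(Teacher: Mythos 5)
Your proof is correct and is exactly the paper's argument: equivariance of $G$ gives $G(\rho\bigact x)=\rho\bigact G(x)$, and the trivial renaming action on $\mathbb B$ collapses this to $G(x)$. The paper states the same thing in one line (``from equivariance and the fact that $\rho\bigact 1=1$ in $\mathbb B$''); you have merely spelled it out.
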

\begin{proof}
From equivariance and the fact that $\rho\bigact 1=1$ in $\mathbb B$.
\end{proof}

\begin{corr}
\label{corr.unren.prop}
Suppose $F:\ns X\longrightarrow\mathbb B$.
Then $\Ren{F}(\rho\bigact x)=F(x)$.
\end{corr}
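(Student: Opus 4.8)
The plan is to unfold the definition of $\Ren{F}$ on arrows and then observe that, because the target $\mathbb B$ carries the trivial action, the extra renaming data collapses. Concretely, first I would recall from Definition~\ref{defn.free.ren} that $\Ren{F}$ is specified by $\Ren{F}(\rho\bigact x) = \rho\bigact F(x)$, where on the left $\rho\bigact x$ is the typical element $[(\rho,x)]_\sim$ of $\Ren{\ns X}$ and on the right $\rho\bigact F(x) = [(\rho,F(x))]_\sim$ lives in $\Ren{\mathbb B}$. So the content of the corollary is that this element of $\Ren{\mathbb B}$, transported along the isomorphism $\Ren{\mathbb B}\cong\mathbb B$ of Lemma~\ref{lemm.BA.ren.isos}(1), equals $F(x)\in\mathbb B$.

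Next I would note that $F(x)\in|\mathbb B|$, and every element of $\mathbb B$ is equivariant, so $\supp(F(x))=\varnothing$ (this is Definition~\ref{defn.equivariant.element}, or directly: the renaming action on $\mathbb B$ is trivial). Hence rule~1 of the relation $\sim$ in Definition~\ref{defn.free.ren}, applied with the empty supporting set, gives $(\rho,F(x))\sim(\id,F(x))$, i.e. $\rho\bigact F(x) = \id\bigact F(x)$ in $\Ren{\mathbb B}$. Under the isomorphism of Lemma~\ref{lemm.BA.ren.isos}(1) the class $\id\bigact F(x)$ corresponds to $F(x)$, and therefore $\Ren{F}(\rho\bigact x) = F(x)$, as required. (Alternatively, one may view $\Ren{F}$ as an equivariant map $\Ren{\ns X}\longrightarrow\mathbb B$ via this isomorphism and appeal to Lemma~\ref{lemm.equivar.to.triv} together with the analogous collapse for the value $0$, but the direct computation above is cleaner.)

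I do not anticipate a genuine obstacle here: the only point requiring a little care is the bookkeeping around the identification $\Ren{\mathbb B}\cong\mathbb B$, i.e. making sure that ``$\Ren{F}(\rho\bigact x)$'' in the statement is read through that isomorphism; once that is pinned down, the result is an immediate consequence of $\mathbb B$ being equivariant and rule~1 of Definition~\ref{defn.free.ren}.
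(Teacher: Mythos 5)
Your proposal is correct and follows essentially the route the paper intends: unfold $\Ren{F}(\rho\bigact x)=\rho\bigact F(x)$ from Definition~\ref{defn.free.ren}, then collapse the renaming using the triviality of the action on $\mathbb B$ (equivalently, rule~1 of $\sim$ with $\supp(F(x))=\varnothing$) and the identification $\Ren{\mathbb B}\cong\mathbb B$ of Lemma~\ref{lemm.BA.ren.isos}. The paper leaves this argument implicit, presenting the statement as an immediate corollary, and your explicit bookkeeping through the isomorphism is exactly the right way to make it precise.
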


\begin{nttn}
Write $\rho\bigact\varrho$ for the valuation mapping $X$ to $\rho\bigact\varrho(X)$. 
\end{nttn}

\begin{lemm}
\label{lemm.rho.varrho}
Suppose $\xi$ is a HOL proposition.
Then 
\begin{itemize*}
\item
$\holmodel{\xi}(\rho\bigact\varrho)=\holmodel{\xi}(\varrho)$ for every $\rho$ and $\varrho$, and
\item
as a corollary, if $X:\beta$ and $x\in\holmodel{\beta}$
then $\holmodel{\xi}(\varrho[X\ssm x])=\holmodel{\xi}(\varrho[X\ssm\rho\bigact x])$.
\end{itemize*}
\end{lemm}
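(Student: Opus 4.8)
The plan is to prove the first item by structural induction on HOL terms, with a statement strengthened enough to survive at non-propositional subterms, and then read off the corollary. The guiding intuition: a HOL proposition ultimately produces an element of $\holmodel{o}=\mathbb B$, whose renaming action is trivial, and every way a HOL term ``reaches'' $\mathbb B$ — through $\bot$, $\limp$, $\forall_\beta$, or a predicate constant $\tf g_{\smtf P}$ — is insensitive to renamings of its inputs: for $\tf g_{\smtf P}$ because $\holmodel{\tf g_{\smtf P}}=\Ren{\tf P^\iden}$ satisfies $\Ren{\tf P^\iden}(\rho\bigact z)=\Ren{\tf P^\iden}(z)$ (Corollary~\ref{corr.unren.prop}), and for $\bot,\limp,\forall_\beta$ because their interpretations are equivariant and $\mathbb B$ is trivial. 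So a renaming sprinkled over the valuation can be pushed down through the term and absorbed at the leaves.

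Concretely I would establish, by induction on the HOL term $t$, the following \emph{local} statement: if $\varrho'$ agrees with $\varrho$ except that $\varrho'(X)=\rho\bigact\varrho(X)$ for a single variable $X$, then $\holmodel{t}(\varrho')=\sigma\bigact\holmodel{t}(\varrho)$ for some renaming $\sigma$, and moreover $\sigma=\id$ may be taken whenever $t:o$. The variable and constant cases are immediate (constants are interpreted by equivariant elements; for atom-variables one uses $\Ren{\mathbb A_\nu}\cong\mathbb A_\nu$ from Lemma~\ref{lemm.non-iso}(1)). The interesting cases follow the case-split of Definition~\ref{defn.hol.interpret.terms}: for the clauses that pass through $\Ren{\text{-}}$ (application, abstraction, and pairing at ``PNL types'') one re-chooses $\sim$-representatives — using the $\alpha$-conversion and garbage-collection available in $\Ren{\text{-}}$ (Remark~\ref{rmrk.alpha}), together with Lemmas~\ref{lemm.supp.abstraction'} and~\ref{lemm.renaming.distribute} — so as to push the renaming past the abstracted or fresh atoms, which produces the witnessing $\sigma$; for the ``standard'' application and pairing clauses one uses the inductive hypothesis together with Lemma~\ref{lemm.renaming.distribute}; and for $\tf g_{\smtf P}\,r$ one combines the inductive hypothesis on $r$ with Corollary~\ref{corr.unren.prop}, which cancels the residual renaming and yields $\sigma=\id$, as needed for a term of type $o$.

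The quantifier case $\forall_\beta(\lam{Y}\xi)$ is the one that might look dangerous but is in fact painless in this local formulation: if $Y=X$ the update at $X$ is overwritten by the bound variable, so $\holmodel{\forall_\beta(\lam{X}\xi)}(\varrho')=\holmodel{\forall_\beta(\lam{X}\xi)}(\varrho)$ outright; and if $Y\neq X$ then $\varrho'[Y\ssm y]$ and $\varrho[Y\ssm y]$ still differ only at $X$ by $\rho$, so the inductive hypothesis (the $t:o$ clause, applied to $\xi$) gives $\holmodel{\xi}(\varrho'[Y\ssm y])=\holmodel{\xi}(\varrho[Y\ssm y])$ for every $y$, whence the two $\f{min}$'s coincide. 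The first item of the Lemma then follows by applying the local statement finitely many times, once for each $X\in\f{fv}(\xi)$ (which suffices by Lemma~\ref{lemm.fV.hol.denot}), thereby moving $\varrho$ to a valuation that agrees with $\rho\bigact\varrho$ on $\f{fv}(\xi)$; and the corollary is precisely the $t=\xi:o$ instance of the local statement, taken with the valuation $\varrho[X\ssm x]$ (so that $\varrho'=\varrho[X\ssm\rho\bigact x]$).

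I expect the real work, and the main obstacle, to be the bookkeeping in the $\Ren{\text{-}}$-clauses of Definition~\ref{defn.hol.interpret.terms}: keeping the several freshenings and $\alpha$-conversions mutually coherent so that the witnessing renamings genuinely cancel — the same kind of delicate manipulation that, as the paper itself flags, makes the $\Forall{X}\phi$ case of Lemma~\ref{lemm.commuting.square} non-trivial — and ensuring the appeal to Corollary~\ref{corr.unren.prop} (i.e.\ to the renaming-invariance of the predicate interpretations) is legitimate in the case at hand.
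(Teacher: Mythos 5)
Your strategy agrees with the paper's in its essentials---the renaming-invariance of $\Ren{\tf P^\iden}$ (Corollary~\ref{corr.unren.prop}) does the work at the leaves, $\bot$ and $\limp$ are trivial because $\mathbb B$ carries the trivial action, and the quantifier case goes through the one-variable form of the statement---but your decomposition is genuinely different. The paper inducts on the \emph{proposition} $\xi$ only and proves the two bullet points by a mutual induction: at each $\xi$ the corollary is deduced from the first part by a freshening-pair argument (Definition~\ref{defn.freshening.pair}), and the $\forall$ case of the first part then invokes the corollary at the body. You instead strengthen to a term-level invariant (perturbing $\varrho$ at a single variable by $\rho$ perturbs $\holmodel{t}(\varrho)$ by a single outer renaming $\sigma$, with $\sigma=\id$ at type $o$), prove that by induction over all HOL terms, and read off both bullets, finishing with Lemma~\ref{lemm.fV.hol.denot}. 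Your formulation makes explicit something the paper leaves implicit: the case $\xi=\tf g_{\smtf P}\,t$ ``by Corollary~\ref{corr.unren.prop}'' silently needs exactly your invariant for the subterm $t$, which is not itself a proposition.

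The step you defer as ``bookkeeping'' is, however, the one that can actually fail, so it is a genuine gap as the proposal stands. The claim that a renaming applied to one component of a tuple can be re-expressed as a single outer renaming of the tuple's denotation is precisely where the failure of $\Ren{\text{-}}$ to commute with products (Lemma~\ref{lemm.non-iso}, part~2) bites. Concretely: take $\varrho(X)=\id\bigact a$ and $\varrho(Y)=\id\bigact b$ with $\rho=[a\ssm b]$; then $\rho\bigact\varrho(X)=\id\bigact b$ in $\Ren{\mathbb A_\nu}$, and the pairing clause of Definition~\ref{defn.hol.interpret.terms} applied to the normalised representatives $\id\bigact b$ and $\id\bigact b$ (whose domains are empty, hence disjoint, so this choice is permitted) returns $\id\bigact(b,b)$, which is \emph{not} of the form $\sigma\bigact(a,b)$ for any renaming $\sigma$---every representative of $\sigma\bigact(a,b)$ has a permutation-image of $(a,b)$, hence distinct coordinates, in its second component. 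Your invariant survives only if one chooses representatives such as $[c\ssm b]\bigact c$ and $[d\ssm b]\bigact d$ with fresh disjoint domains, which recombine to $[a\ssm b]\bigact(a,b)=\rho\bigact\holmodel{(X,Y)}(\varrho)$. So the induction needs a definite, stated policy for choosing $\sim$-representatives in the pairing and application clauses that keeps the incoming renaming suspended rather than normalised away, together with a check that the denotation is independent of that policy; absent this, the product case of your local statement is false under a reading of Definition~\ref{defn.hol.interpret.terms} that the definition permits. With that pinned down, your argument is sound and in fact more informative than the paper's sketch.
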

\begin{proof}
We work by induction on $\xi$. 
For each $\xi$ the corollary follows from the first part using a freshening pair of renamings (see Definition~\ref{defn.freshening.pair}).
For the first part, the case of $\tf g_{\smtf P}$ is by Corollary~\ref{corr.unren.prop}.
The case of $\forall$ follows using the second part and some routine calculations.
The cases of $\bot$ and $\limp$ are immediate. 
\end{proof}

\begin{rmrk}
Lemma~\ref{lemm.rho.varrho} expresses that $\holmodel{\xi}$ does not examine atoms for inequality across its arguments (if it did then Lemma~\ref{lemm.rho.varrho} could not hold, because $\rho$ can identify atoms---make them become equal---in the denotations of variables in $\xi$). 
The corollary is even more powerful: we can even apply renamings to the denotations of individual free variables, and still not affect validity.

We use this in the case of $\Forall{X}\phi$ in Lemma~\ref{lemm.commuting.square} to `jettison' unwanted $\rho$ in the denotation of the quantified variable.
\end{rmrk}


\subsection{The completeness proof}

\begin{nttn}
\label{nttn.D}
Suppose $D=[d_1,\ldots,d_n]$ is a finite list of distinct atoms in $\mathbb A_{\nu_1}$, \ldots, $\mathbb A_{\nu_n}$ respectively.
Suppose $r:\alpha$ is a PNL term.
Then:
\begin{itemize*}
\item
Write $[D]r$ for the PNL term $[d_1]\ldots[d_n]r$.
\item
Write $[\mathbb A_D]\alpha$ for the PNL sort $[\mathbb A_{\nu_1}]\ldots[\mathbb A_{\nu_n}]\alpha$. 
\end{itemize*}
\end{nttn}

\begin{defn}
\label{defn.epsilond}
Given a finite list of distinct atoms $D$, map a PNL valuation $\varsigma$ to a HOL valuation $D(\varsigma)$ defined by 
\begin{frameqn}
D(\varsigma)\quad\text{maps}\quad
\begin{array}[t]{l@{\quad\text{to}\quad}l} 
X:\alpha & \id\bigact [\GammaX]\varsigma(X)\in\holmodel{\hol{}{[\mathbb A_{\GammaX}]\alpha}}\quad\text{and}
\\
a:\nu & a\in\mathbb A_\nu
\end{array}
\end{frameqn}
\end{defn}

\begin{lemm}
\label{lemm.always.id}
Suppose $D\cent r$.
Then $\holmodel{\hol{D}{r}}(D(\varsigma))=\id\bigact x$ for some $x\in\holmodel{\hol{}{\sort(r)}}$.\footnote{The point here is that $\holmodel{\hol{D}{r}}(D(\varsigma))$ is \emph{not} equal to $\rho\bigact x$ for any $\rho$ that is non-injective on $\supp(x)$.}
\end{lemm}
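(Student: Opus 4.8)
The plan is to prove, by induction on the structure of $r$, the slightly more general statement that whenever $D\cent r:A$ for \emph{some} finite set of atoms $A$, then $\holmodel{\hol{D}{r}}(D(\varsigma))=\id\bigact x$ for some $x\in\holmodel{\hol{}{\sort(r)}}$. Generalising over $A$ is free, since neither $\hol{D}{r}$ nor $D(\varsigma)$ depends on $A$, and it is forced on us because the rule deriving $D\cent[a]r':A$ has premise $D\cent r':A\cup\{a\}$. For each shape of $r$ the derivation of $D\cent r:A$ must end with the matching rule of Figure~\ref{fig.capture.typings}, which hands us the premises for the inductive hypothesis. Two standing facts do most of the routine work: $\Ren{F}(\id\bigact y)=\id\bigact F(y)$ for an equivariant $F$ (Definition~\ref{defn.free.ren}); and, if a renaming $\rho$ is injective on $\supp(z)$, then choosing a permutation agreeing with $\rho$ on $\supp(z)$ and applying rules~1 and~2 of Definition~\ref{defn.free.ren} shows $\rho\bigact z=\id\bigact z'$ for a suitable $z'$ in the underlying permutation set.

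The cases $r\equiv a$, $r\equiv(r_1,\dots,r_n)$, $r\equiv\tf f(r')$ and $r\equiv[a]r'$ are routine. For $a$ we have $\hol{D}{a}=a$ and $D(\varsigma)(a)=a$, which is $\id\bigact a$ under $\holmodel{\hol{}{\nu}}=\Ren{\mathbb A_\nu}$. For a tuple the inductive hypothesis gives $\holmodel{\hol{D}{r_i}}(D(\varsigma))=\id\bigact x_i$, so every renaming $\rho_i$ in the product clause of Definition~\ref{defn.hol.interpret.terms} is $\id$ and the answer is $\id\bigact(x_1,\dots,x_n)$. For $\tf f(r')$ we use $\holmodel{\tf g_{\smtf f}}=\Ren{\tf f^\iden}$ and the first standing fact; in the subcase where $\tf f$ takes a name-sort argument, so that the application is interpreted through atoms-abstraction, the atomic renaming that appears acts only on the support of $\tf f^\iden$ at an atom, which has at most one element, and does so injectively, so it is absorbed by the second standing fact. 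For $[a]r'$ we have $\hol{D}{[a]r'}=\lam{a}\hol{D}{r'}$ and $\hol{D}{r'}:\hol{}{\alpha}$ (Proposition~\ref{prop.typable.hol.gamma.r}), so the relevant clause of Definition~\ref{defn.hol.interpret.terms} returns $\rho\bigact[a]x$ where $\holmodel{\hol{D}{r'}}(D(\varsigma)[a\ssm a])=\rho\bigact x$; since $D(\varsigma)[a\ssm a]=D(\varsigma)$ and $D\cent r':A\cup\{a\}$, the inductive hypothesis forces $\rho=\id$, giving $\id\bigact[a]x$---the freshness side-condition in that clause being automatic, as $D\cent[a]r':A$ forces $a\in\GammaX$ whenever $a$ occurs in $\pmss(X)$, so $a\notin\supp(D(\varsigma)(X_D))$.

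The substantive case is $r\equiv\pi\act X$, where $\hol{D}{\pi\act X}=X_D\,(\pi\act\GammaX)$ and the capture typing gives $(\f{nontriv}(\pi)\cup A)\cap\pmss(X)\subseteq D$, hence in particular $\f{nontriv}(\pi)\cap\pmss(X)\subseteq\GammaX$. Here I would compute the iterated application of $D(\varsigma)(X_D)=\id\bigact[\GammaX]\varsigma(X)$ to the atoms listed in $\pi\act\GammaX$, using the application clause of Definition~\ref{defn.hol.interpret.terms} and the reading of atoms-abstraction in \theory{PmsRen} as a total function (Remark~\ref{rmrk.total-partial}), $\alpha$-converting an inner abstraction to a fresh atom whenever an incoming $\pi(d_i)$ clashes with a $d_j$ not yet consumed. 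The outcome is $\rho^*\bigact\varsigma(X)$ for an accumulated renaming $\rho^*$ assembled from the atomic renamings $[d_i\ssm\pi(d_i)]$ and swaps with fresh atoms. The key point---and this is where the capture typing is used---is that $\pi$ fixes $\pmss(X)\setminus\GammaX$ and hence fixes $\supp(\varsigma(X))\setminus\GammaX$, that $\rho^*$ agrees with $\pi$ on $\GammaX$, and that no $\pi(d_i)$ can land in $\supp(\varsigma(X))\setminus\GammaX$; together these give that $\rho^*$ agrees with the bijection $\pi$ on all of $\supp(\varsigma(X))$, so is injective there, and the second standing fact turns $\rho^*\bigact\varsigma(X)$ into $\id\bigact(\pi\act\varsigma(X))$ with $\pi\act\varsigma(X)\in\model{\sort(X)}$, hence $\id\bigact(\pi\act\varsigma(X))\in\holmodel{\hol{}{\sort(X)}}$. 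I expect the main obstacle to be precisely this case: carefully tracking the accumulated renaming through the $\alpha$-conversions forced by clashing atoms, and then verifying---using every part of the capture-typing side-condition---that it acts injectively on $\supp(\varsigma(X))$ so that it can be absorbed into the value.
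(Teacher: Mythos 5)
Your proof is correct and follows essentially the same route as the paper, which simply records this as ``a routine induction on Definition~\ref{defn.hol.interpret.terms} using Definition~\ref{defn.epsilond} for the case that $r$ is a variable''; your generalisation over the annotation $A$ and your detailed treatment of the $\pi\act X$ case (which in effect re-derives the computation packaged separately as Lemma~\ref{lemm.abs.conc.pi}) are exactly the details the paper leaves implicit.
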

\begin{proof}
By a routine induction on Definition~\ref{defn.hol.interpret.terms} using Definition~\ref{defn.epsilond} for the case that $r$ is a variable $X$.
\end{proof}

Compare Lemma~\ref{lemm.abs.conc.pi} with Lemma~\ref{lemm.hol.pi}:
\begin{lemm}
\label{lemm.abs.conc.pi}
If $\f{nontriv}(\pi)\cap\supp(x)\subseteq D'$ then $(\id\bigact[D']x)\pi\act D' =\id\bigact \pi\act x$.
\end{lemm}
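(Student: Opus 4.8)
The plan is to prove this by induction on the length $n$ of $D'$, after first reducing to the case where $\pi$ sends the atoms of $D'$ to \emph{fresh} atoms, which removes all the clash-handling from the induction itself.

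For the reduction: given $\pi$ and $D'=[d_1,\dots,d_n]$ with $\f{nontriv}(\pi)\cap\supp(x)\subseteq D'$, choose fresh distinct atoms $E=[e_1,\dots,e_n]$ of the matching sorts, avoiding $D'$, $\supp(x)$ and $\f{nontriv}(\pi)$; let $\pi_1=(d_1\ e_1)\cdots(d_n\ e_n)$ so that $\pi_1\act D'=E$, and let $\pi_2=\pi\circ\pi_1$, so $\pi=\pi_2\circ\pi_1$. A short calculation from the hypothesis (using that $\pi(d)\ne d$ forces $\pi(d)\in\f{nontriv}(\pi)$, so $\pi(D')\cap(\supp(x)\setminus D')=\varnothing$) shows that $\pi_2$ is the identity on $\supp(x)\setminus D'$. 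Since $\id\bigact[D']x$, regarded via Definition~\ref{defn.lambda.a} and the natural maps of Lemma~\ref{lemm.non-iso} as an iterated (supported) function, is supported by $\supp([D']x)=\supp(x)\setminus D'$ (Lemma~\ref{lemm.properties.of.support}), Definition~\ref{defn.exp.ren} and Lemma~\ref{lemm.renaming.distribute} give $(\id\bigact[D']x)(\pi\act D')=\pi_2\bigact\bigl((\id\bigact[D']x)(\pi_1\act D')\bigr)$. So it suffices to prove the clean case $(\id\bigact[D']x)(\pi_1\act D')=\id\bigact\pi_1\act x$, and then conclude with $\pi_2\bigact(\id\bigact\pi_1\act x)=(\pi_2\circ\pi_1)\bigact x=\id\bigact\pi\act x$ by rule~2 of Definition~\ref{defn.free.ren}.

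For the clean case I induct on $n$. When $n=0$ then $\pi_1=\id$ and both sides are $\id\bigact x$. For the step, apply $\id\bigact[D']x=\id\bigact[d_1]([d_2]\cdots[d_n]x)$ to $e_1$ using the application clause of Definition~\ref{defn.hol.interpret.terms}: since $e_1\notin\{d_1,\dots,d_n\}$ this yields $[d_1\ssm e_1]\bigact([d_2]\cdots[d_n]x)$. Because $e_1$ is fresh for $x$, the renaming $[d_1\ssm e_1]$ and the swapping $(d_1\ e_1)$ agree on $\supp(x)$, hence on $\supp([d_2]\cdots[d_n]x)$, so rule~1 of Definition~\ref{defn.free.ren} replaces the renaming by the permutation, and rule~2 then rewrites the element as $\id\bigact(d_1\ e_1)\act([d_2]\cdots[d_n]x)=\id\bigact[d_2]\cdots[d_n]((d_1\ e_1)\act x)$, using that $(d_1\ e_1)$ fixes $d_2,\dots,d_n$. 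Now the inductive hypothesis applies to $[d_2]\cdots[d_n]y$ with $y=(d_1\ e_1)\act x$, the shorter list $[d_2,\dots,d_n]$ and the permutation $(d_2\ e_2)\cdots(d_n\ e_n)$ — its side condition survives because $\supp(y)=(d_1\ e_1)\act\supp(x)$ and the remaining $e_i$ are still fresh — and gives $\id\bigact(d_2\ e_2)\cdots(d_n\ e_n)\act(d_1\ e_1)\act x=\id\bigact\pi_1\act x$, as required.

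The main obstacle is the bookkeeping in the single-application step: peeling abstractions one at a time can make the head atom $\pi(d_1)$ collide with a remaining bound atom or with $d_1$ itself, and Definition~\ref{defn.hol.interpret.terms} (or Definition~\ref{defn.abstraction.sets'}) only lets one push an atomic renaming through an abstraction when there is no such clash. Reducing to fresh $E$ first means $e_1$ is always fresh, so the clash never arises, and rules~1 and~2 of Definition~\ref{defn.free.ren} — garbage collection and $\alpha$-conversion in the sense of Remark~\ref{rmrk.alpha} — do the real work; the rest is routine manipulation of the conditions of Definitions~\ref{defn.finsupp}, \ref{defn.abstraction.sets'} and~\ref{defn.lambda.a}. (This mirrors the purely syntactic Lemma~\ref{lemm.hol.pi}, whose proof follows the same pattern one $\beta$-redex at a time.)
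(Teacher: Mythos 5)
Your proof is correct, and it is essentially the paper's argument written out in full: the paper's own proof is just the citation ``From Definition~\ref{defn.hol.interpret.terms} and rule~2 of Definition~\ref{defn.free.ren}'', and your induction unfolds exactly that---apply the abstraction to one atom at a time via the application clause, then use rules~1 and~2 of Definition~\ref{defn.free.ren} to turn the resulting atomic renaming back into a permutation acting on $x$. The up-front freshening reduction is a sensible way to organise the clash-handling, and your observation that rule~1 is also needed (to pass from $[d\ssm e]$ to $(d\ e)$ before rule~2 can apply) fills in a step the paper's one-line citation glosses over.
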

\begin{proof}
From Definition~\ref{defn.hol.interpret.terms} and rule~2 of Definition~\ref{defn.free.ren}.
\end{proof}

Lemma~\ref{lemm.commuting.square} proves that the schematic diagram of Remark~\ref{rmrk.outline} does indeed commute:
\begin{lemm}
\label{lemm.commuting.square}
Suppose $r:\alpha$ and $\phi:\alpha$.
Then:
\begin{itemize*}
\item
If $D\cent r$ then $\holmodel{\hol{D}{r}}(D(\varsigma))=\id\bigact\model{r}(\varsigma)$.
\item
If $D\cent\phi$ then $\holmodel{\hol{D}{\phi}}(D(\varsigma))=\model{\phi}(\varsigma)$.
\end{itemize*}
\end{lemm}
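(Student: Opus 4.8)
The plan is to argue by simultaneous induction on the structure of $r$ and $\phi$ --- equivalently, on the derivation of the capture typing. A useful preliminary is that capture typing is monotone in its set argument: if $D\cent r:A$ and $A'\subseteq A$ then $D\cent r:A'$, by an easy induction on the side-condition $(\nontriv(\pi)\cup A)\cap\pmss(X)\subseteq D$ of the rule for $\pi\act X$. Thus the premises of every capture-typing rule supply $D\cent r_i$ (empty annotation) for the immediate subterms, so the inductive hypothesis applies; in the course of the proof I will also use Lemma~\ref{lemm.always.id} to take the $\Ren{\text{-}}$-representative of $\holmodel{\hol{D}{r}}(D(\varsigma))$ with renaming component $\id$, and Lemma~\ref{lemm.type.t} to check the case-splits of Definition~\ref{defn.hol.interpret.terms}. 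The routine cases are quick: for an atom $a$, both sides are $\id\bigact a\in\Ren{\mathbb A_\nu}$ (Definition~\ref{defn.epsilond} and Lemma~\ref{lemm.BA.ren.isos}); for $(r_1,\ldots,r_n)$ the inductive hypothesis and Lemma~\ref{lemm.always.id} give representatives with empty renaming components, so the product clause returns $\id\bigact(\model{r_1}(\varsigma),\ldots,\model{r_n}(\varsigma))$; for $\tf f(r)$ and $\tf P(r)$ one uses $\holmodel{\tf g_{\smtf f}}=\Ren{\tf f^\iden}$, $\holmodel{\tf g_{\smtf P}}=\Ren{\tf P^\iden}$ together with $\Ren{\tf f^\iden}(\id\bigact y)=\id\bigact\tf f^\iden(y)$ and $\Ren{\tf P^\iden}(\id\bigact y)=\tf P^\iden(y)$ (Corollary~\ref{corr.unren.prop}); and $\bot$, $\phi\limp\psi$ are immediate from the matching clauses.

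For the abstraction $[a]r$ we use the $\lambda$-over-atom clause of Definition~\ref{defn.hol.interpret.terms}. Its freshness side-condition $a\notin\bigcup_{Y\in\f{fv}(\hol{D}{r})\setminus\{a\}}\supp(D(\varsigma)(Y))$ follows from $D\cent[a]r$, i.e.\ $D\cent r:\{a\}$: an atom variable $b\neq a$ has $\supp(b)=\{b\}$, while for $X_D$ the $\pi\act X$ side-condition with annotation $\{a\}$ forces $a\in\pmss(X)\Rightarrow a\in D\cap\pmss(X)=\GammaX$, so $a\notin\supp(\id\bigact[\GammaX]\varsigma(X))$ by Lemma~\ref{lemm.properties.of.support}. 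Since $a$ is an atom the update $D(\varsigma)[a\ssm a]$ is trivial, so the inductive hypothesis gives $\holmodel{\hol{D}{r}}(D(\varsigma)[a\ssm a])=\id\bigact\model{r}(\varsigma)$ and the clause returns $\id\bigact[a]\model{r}(\varsigma)=\id\bigact\model{[a]r}(\varsigma)$. For $\pi\act X$, where $\hol{D}{\pi\act X}=X_D\,\pi\act\GammaX$ and $D(\varsigma)(X_D)=\id\bigact[\GammaX]\varsigma(X)$, I would read off from $D\cent\pi\act X$ that $\nontriv(\pi)\cap\pmss(X)\subseteq\GammaX$, hence (since $\supp(\varsigma(X))\subseteq\pmss(X)$) that $\nontriv(\pi)\cap\supp(\varsigma(X))\subseteq\GammaX$, and then apply Lemma~\ref{lemm.abs.conc.pi} with $D'=\GammaX$ and $x=\varsigma(X)$ to get $(\id\bigact[\GammaX]\varsigma(X))\,\pi\act\GammaX=\id\bigact\pi\act\varsigma(X)=\id\bigact\model{\pi\act X}(\varsigma)$ --- which is exactly the content of that lemma (iterating the application clause of Definition~\ref{defn.hol.interpret.terms} one atom at a time).

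The case $\Forall{X}\phi$ is where the work concentrates. Unfolding the clauses for $\forall_\beta$ ($\beta$ the type of $X_D$), $\lambda$ and application in Definition~\ref{defn.hol.interpret.terms} turns $\holmodel{\hol{D}{\Forall{X}\phi}}(D(\varsigma))$ into $\f{min}\{\holmodel{\hol{D}{\phi}}(D(\varsigma)[X_D\ssm z])\mid z\in\holmodel{\beta}\}$, while $\model{\Forall{X}\phi}(\varsigma)=\f{min}\{\model{\phi}(\varsigma[X\ssm x])\mid \supp(x)\subseteq\pmss(X)\}$. One inequality is easy: each valid $x$ contributes $\id\bigact[\GammaX]x\in\holmodel{\beta}$ with $D(\varsigma[X\ssm x])=D(\varsigma)[X_D\ssm\id\bigact[\GammaX]x]$, so by the inductive hypothesis $\holmodel{\hol{D}{\phi}}(D(\varsigma)[X_D\ssm\id\bigact[\GammaX]x])=\model{\phi}(\varsigma[X\ssm x])$, and hence the min on the left is dominated, term by term, by the one on the right. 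For the reverse I would take an arbitrary $z\in\holmodel{\beta}$, use Remark~\ref{rmrk.alpha} to write $z=\rho\bigact w$ with $w\in[\mathbb A_{\GammaX}]\model{\sort(X)}$, and then apply the corollary of Lemma~\ref{lemm.rho.varrho} twice: once to jettison $\rho$ (replacing $z$ by $\id\bigact w$ without changing $\holmodel{\hol{D}{\phi}}(D(\varsigma)[X_D\ssm\cdot])$), and once more to apply a permutation carrying any atoms of $\supp(w)$ lying outside $\pmss(X)$ to fresh atoms inside it --- legitimate since permission sets are infinite and co-infinite (Definition~\ref{defn.atoms}) --- so that we may assume $z=\id\bigact[\GammaX]\hat x$ with $\supp(\hat x)\subseteq\pmss(X)$. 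Then $\varsigma[X\ssm\hat x]$ is a genuine valuation, $D(\varsigma[X\ssm\hat x])=D(\varsigma)[X_D\ssm z]$, and the inductive hypothesis gives $\holmodel{\hol{D}{\phi}}(D(\varsigma)[X_D\ssm z])=\model{\phi}(\varsigma[X\ssm\hat x])$, which is at least the min on the right. Equating the two minima closes the case.

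I expect the main obstacle to be precisely this reverse inequality: one has to be confident that $\holmodel{\hol{D}{\phi}}$ only ever probes the quantified variable on $\GammaX$-shaped arguments and never tests atoms for (in)equality across its arguments --- the intuition behind Lemma~\ref{lemm.rho.varrho} --- and one has to keep the interaction of the normalising renamings with the co-infiniteness of permission sets honest. The $\pi\act X$ case is the secondary point requiring care, namely matching the capture-typing side-condition to the hypothesis of Lemma~\ref{lemm.abs.conc.pi}.
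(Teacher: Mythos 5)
Your proposal is correct and follows essentially the same route as the paper's proof: induction on $r$ and $\phi$, Lemma~\ref{lemm.abs.conc.pi} for the $\pi\act X$ case, Lemma~\ref{lemm.always.id} and the freshness of $a$ for abstraction, Corollary~\ref{corr.unren.prop} for $\tf P(r)$, and a double application of Lemma~\ref{lemm.rho.varrho} to normalise the quantification domain in the $\Forall{X}\phi$ case (which you phrase as two inequalities where the paper rewrites the minimum over progressively restricted sets --- a cosmetic difference). Your explicit monotonicity observation for capture typings and your verification of the freshness side-condition in the $[a]r$ case fill in details the paper leaves implicit, but do not change the argument.
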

\begin{proof}
By inductions on $r$ and $\phi$.
\begin{itemize*}
\item
\emph{The case $\pi\act X$.}\quad
We reason as follows, where $\alpha=\sort(X)$ and $S=\pmss(X)$:
\begin{tab3}
\holmodel{\hol{D}{\pi\act X}}(D(\varsigma))
=& \holmodel{X\pi\act\GammaX}(D(\varsigma))
&\text{Definition~\ref{defn.translation}}
\\
=& D(\varsigma)(X)\pi\act\GammaX
&\text{Definition~\ref{defn.hol.interpret.terms}}
\\
=& (\id\bigact[\GammaX]\varsigma(X))\pi\act\GammaX
&\text{Definition~\ref{defn.epsilond}}
\\
=& \id\bigact\pi\act\varsigma(X)
&\text{Lemma~\ref{lemm.abs.conc.pi}},\ \supp(\varsigma(X)){\subseteq} S
\\
=& \id\bigact \model{\pi\act X}(\varsigma)
&\text{Definition~\ref{defn.interpret.terms}}
\end{tab3}
Note of the penultimate step that by assumption $D\cent r$, so by Definition~\ref{defn.capture.typing} $\f{nontriv}(\pi)\cap S\subseteq \GammaX=D\cap S$.
\item
\emph{The case $[a]r$.}\quad
We reason as follows: 
\begin{tab3}
\holmodel{\hol{D}{[a]r}}(D(\varsigma)) 
=&
\holmodel{\lam{a}\hol{D}{r}}(D(\varsigma)) 
&
\text{Definition~\ref{defn.translation}}
\\
=&\rho\bigact [a]x
&
\text{Definition~\ref{defn.hol.interpret.terms}},\ a\text{ fresh},
\\
&&
\quad\rho\bigact x = \holmodel{\hol{D}{r}}(D(\varsigma)[a\ssm a])
\\
=&\id\bigact [a]x
&\text{Wlog }\rho=\id\text{ by Lemma~\ref{lemm.always.id}}
\\
=&\id\bigact [a]\model{r}(\varsigma)
&\text{ind. hyp.}
\\
=&\id\bigact \model{[a]r}(\varsigma)
&\text{Definition~\ref{defn.interpret.terms}}
\end{tab3}
\item
\emph{The case $\tf P(r)$.}\quad
We reason as follows:
\begin{tab3}
\holmodel{\hol{D}{\tf P(r)}}(D(\varsigma)) 
=& \holmodel{\tf g_{\smtf P}(\hol{D}{r})}(D(\varsigma))
&\text{Definition~\ref{defn.translation}}
\\
=& \tf g_{\smtf P}^\hiden(\holmodel{\hol{D}{r}}(D(\varsigma))) 
&\text{Definition~\ref{defn.truth}}
\\
=& \tf g_{\smtf P}^\hiden(\id\bigact\model{r}(\varsigma))
&\text{part~1}
\\
=& \Ren{\tf P^\iden}(\id\bigact\model{r}(\varsigma))
&\text{Definition~\ref{defn.hol.interpret.terms}} 
\\
=& \tf P^\iden(\model{r}(\varsigma))
&\text{Corollary~\ref{corr.unren.prop}}
\\
=& \model{\tf P(r)}(\varsigma)
&\text{Definition~\ref{defn.truth}}
\end{tab3}
\item
\emph{The case $\Forall{X}\phi$.}\quad
Write $\alpha=\sort(X)$ and $S=\pmss(X)$. 
From Definition~\ref{defn.hol.interpret.terms}
$$
\holmodel{\hol{D}{\Forall{X}\phi}}(D(\varsigma)) 
=\f{min}\{\holmodel{\hol{D}{\phi}}(D(\varsigma)[X\ssm x])\mid x\in\holmodel{\hol{}{[\mathbb A_{\GammaX}]\alpha}}\} 
$$
By construction in Definition~\ref{defn.hol.interpretation} every $x\in\holmodel{\hol{}{[\mathbb A_{\GammaX}]\alpha}}$ has the form $\rho\bigact x'$ for $x'\in [\GammaX]\model{\alpha}$.
By Lemma~\ref{lemm.rho.varrho} we have
\begin{multline*}
\f{min}\{\holmodel{\hol{D}{\phi}}(D(\varsigma)[X\ssm x])\mid x\in\holmodel{\hol{}{[\mathbb A_{\GammaX}]\alpha}}\} 
\\
=
\f{min}\{\holmodel{\hol{D}{\phi}}(D(\varsigma)[X\ssm \id\bigact x'])\mid x'\in\model{[\mathbb A_{\GammaX}]\alpha}\}  
\end{multline*}
Using Lemma~\ref{lemm.rho.varrho} again we assume without loss of generality that $\supp([\GammaX]x')\subseteq\pmss(X)\setminus\GammaX$, and so:
\begin{multline*}
\f{min}\{\holmodel{\hol{D}{\phi}}(D(\varsigma)[X\ssm \id\bigact [\GammaX]x'])\mid x'\in\model{[\mathbb A_{\GammaX}]\alpha}\}  
\\
=\f{min}\{\holmodel{\hol{D}{\phi}}(D(\varsigma)[X\ssm \id\bigact x''])\mid x''\in\model{\alpha},\ \supp(x''){\subseteq}\pmss(X)\}  
\end{multline*}
Now we unfold definitions and use the inductive hypothesis that $D(\varsigma)[X\ssm \id\bigact [\GammaX]x'']=D(\varsigma[X\ssm x''])$, and we obtain:
$$
\hspace{-2em}\begin{array}{r@{}l}
\f{min}\{\holmodel{\hol{D}{\phi}}(D(\varsigma)[X\ssm \id&\bigact [\GammaX]x''])\mid x''\in\model{\alpha},\ \supp(x''){\subseteq}\pmss(X)\}  
\\
&=\f{min}\{\holmodel{\hol{D}{\phi}}(D(\varsigma[X\ssm x'']))\mid x''\in\model{\alpha},\ \supp(x''){\subseteq}\pmss(X)\}  
\\
&=\f{min}\{\model{\phi}(\varsigma[X\ssm x''])\mid x''\in\model{\alpha},\ \supp(x''){\subseteq}\pmss(X)\}  
\\
&=\model{\Forall{X}\phi}(\varsigma)
\end{array}
$$
\end{itemize*}
\end{proof}

\begin{corr}
\label{corr.notsubseteq}
Suppose $\Phi=\{\phi_1,\dots,\phi_n\}$ and $\Psi=\{\psi_1,\dots,\psi_p\}$ and $D\cent\Phi$, and $D\cent\Psi$ (Definition~\ref{defn.capture.typing}).
Suppose $\mathcal I$ is a PNL interpretation and suppose $\phi_1,\dots,\phi_n\nopicent\psi_1,\dots,\psi_p$ is not valid in $\mathcal I$.

Then $\mathcal H$ from Definition~\ref{defn.hol.interpretation} is a HOL interpretation and
$\hol{D}{\phi_1},\dots,\hol{D}{\phi_n}\holcent\hol{D}{\psi_1},\dots,\hol{D}{\psi_p}$ is not valid in $\mathcal H$.
\end{corr}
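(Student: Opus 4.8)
The plan is to lift a countermodel from restricted PNL up to HOL across the translation $\hol{D}{\text{-}}$, using the commuting square of Lemma~\ref{lemm.commuting.square}; essentially all of the substantial work has already been front-loaded into that lemma, so what remains is bookkeeping.

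First I would record that $\mathcal H$ of Definition~\ref{defn.hol.interpretation} really is a well-defined HOL interpretation. Every base type of $\mathcal T_{\mathcal S}$ has the form $\mu_\nu$ or $\mu_\tau$, and these are interpreted by $\Ren{\model{\nu}}=\Ren{\mathbb A_\nu}$ and $\Ren{\model{\tau}}$ respectively, which are nonempty supported renaming sets (the $\model{\tau}$ are nonempty by Definition~\ref{defn.interpretation}, and $\Ren{\text{-}}$ preserves nonemptiness); by cartesian closedness of \theory{PmsRen} (Section~\ref{subsect.exp}), products and exponentials of such are again nonempty supported renaming sets. The mandatory constants $\bot$, $\limp$, and $\forall_\beta$ receive the type-correct denotations fixed in Definition~\ref{defn.hol.interpret.terms}, and Lemma~\ref{lemm.type.t} guarantees that every term denotes an element of the denotation of its type; hence validity in $\mathcal H$ in the sense of Definition~\ref{defn.hol.ment} is meaningful.

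Next, since $\phi_1,\dots,\phi_n\nopicent\psi_1,\dots,\psi_p$ is not valid in $\mathcal I$, unwinding the definition of validity for restricted PNL (Appendix~\ref{sect.completeness}; recall also the identification of $\model{\phi}$ with the set of valuations making $\phi$ true, from Section~\ref{subsect.interpret.prop}) there is a single valuation $\varsigma$ to $\mathcal I$ with $\model{\phi_i}(\varsigma)=1$ for $1\le i\le n$ and $\model{\psi_j}(\varsigma)=0$ for $1\le j\le p$. Put $\varrho=D(\varsigma)$, the HOL valuation of Definition~\ref{defn.epsilond}. Since $D\cent\Phi$ and $D\cent\Psi$ unpack as $D\cent\phi_i$ and $D\cent\psi_j$ for all $i,j$, the propositions clause of Lemma~\ref{lemm.commuting.square} yields $\holmodel{\hol{D}{\phi_i}}(\varrho)=\model{\phi_i}(\varsigma)=1$ and $\holmodel{\hol{D}{\psi_j}}(\varrho)=\model{\psi_j}(\varsigma)=0$. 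As $\holmodel{\limp}$, and the connectives $\wedge$ and $\vee$ derived from $\bot$ and $\limp$, are the classical Boolean operations on $\mathbb B$, the formula $(\hol{D}{\phi_1}\wedge\dots\wedge\hol{D}{\phi_n})\Rightarrow(\hol{D}{\psi_1}\vee\dots\vee\hol{D}{\psi_p})$ evaluates at $\varrho$ as a conjunction of ones (denoting $1$, also in the degenerate case $n=0$) implying a disjunction of zeros (denoting $0$, also when $p=0$), that is to $1\Rightarrow 0=0$. So that formula is not valid in $\mathcal H$, which is exactly what the corollary asserts.

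The only steps requiring care, and they are purely bookkeeping, are: (i) extracting from the non-validity of the sequent in $\mathcal I$ one valuation $\varsigma$ that simultaneously makes every antecedent true and every succedent false; and (ii) checking that the Boolean connectives occurring in the sequent-formula of Definition~\ref{defn.hol.ment} are interpreted by the corresponding classical operations on $\mathbb B$, so that the single countermodel $\varrho=D(\varsigma)$ indeed refutes the whole implication. Everything with real content --- the behaviour of $\Ren{\text{-}}$ on atoms-abstraction, products, and quantifiers, and the passage between the two notions of valuation --- is already discharged inside Lemma~\ref{lemm.commuting.square} and the lemmas feeding it (notably Lemmas~\ref{lemm.rho.varrho} and~\ref{lemm.abs.conc.pi}).
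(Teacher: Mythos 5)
Your proof is correct and follows essentially the same route as the paper's own (very terse) argument: extract a single PNL valuation $\varsigma$ refuting the sequent in $\mathcal I$, pass to the HOL valuation $D(\varsigma)$ of Definition~\ref{defn.epsilond}, and apply Lemma~\ref{lemm.commuting.square} to transfer the truth values across the translation. The additional well-definedness check on $\mathcal H$ and the explicit Boolean bookkeeping are fine elaborations of what the paper leaves implicit.
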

\begin{proof}
Suppose $\varsigma$ is such that $\model{\phi_1\land\dots\land\phi_n}(\varsigma)=1$ and $\model{\psi_1\lor\dots\lor\psi_p}(\varsigma)=0$.
We use Lemma~\ref{lemm.commuting.square} for $D(\varsigma)$ (Definition~\ref{defn.epsilond}).
\end{proof}

\begin{thrm}[Completeness]
\label{thrm.PNL.HOL.complete}
Suppose $D\cent\Phi$ and $D\cent\Psi$.
If $\Phi\not\nopicent\Psi$ then $\hol{D}{\Phi}\not\holcent\hol{D}{\Psi}$.
\end{thrm}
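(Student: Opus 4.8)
The plan is to assemble the theorem from three ingredients that are (or will be) available by this point: completeness of restricted PNL with respect to non-equivariant models in \theory{PmsPrm} (Appendix~\ref{sect.completeness}, Theorem~\ref{thrm.reduced.pnl.completeness}), the commuting-square result of the previous subsection (Corollary~\ref{corr.notsubseteq}, itself resting on Lemma~\ref{lemm.commuting.square}), and soundness of HOL (Theorem~\ref{thrm.hol.soundness}). The argument is a contrapositive chase: turn underivability in restricted PNL into a countermodel in \theory{PmsPrm}, push that countermodel across the translation into a HOL countermodel in \theory{PmsRen}, and then invoke HOL soundness to conclude underivability in HOL.

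First I would apply completeness of restricted PNL. Since $\Phi\not\nopicent\Psi$, there is a non-equivariant PNL interpretation $\mathcal I$ of $\mathcal S$ and a valuation $\varsigma$ to $\mathcal I$ with $\denot{\mathcal I}{\varsigma}{\phi}=1$ for every $\phi\in\Phi$ and $\denot{\mathcal I}{\varsigma}{\psi}=0$ for every $\psi\in\Psi$; that is, $\Phi\nopicent\Psi$ is not valid in $\mathcal I$. It is exactly here that the restriction to \emph{restricted} PNL and to possibly non-equivariant interpretations of proposition-formers is essential: Lemma~\ref{lemm.it.has.to.be} shows the corresponding statement fails for full PNL, and the literature only supplies completeness for full PNL against equivariant models, which is why the appendix result is needed.

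Next I would transport the countermodel. By hypothesis $D\cent\Phi$ and $D\cent\Psi$, so Corollary~\ref{corr.notsubseteq} applies directly: the HOL interpretation $\mathcal H$ obtained from $\mathcal I$ in Definition~\ref{defn.hol.interpretation} is a legitimate HOL interpretation, and $\hol{D}{\Phi}\holcent\hol{D}{\Psi}$ is not valid in $\mathcal H$. Concretely the witnessing HOL valuation is $D(\varsigma)$ from Definition~\ref{defn.epsilond}, and the reason it works is Lemma~\ref{lemm.commuting.square}, which gives $\holmodel{\hol{D}{\phi}}(D(\varsigma))=\model{\phi}(\varsigma)$ for any $\phi$ with $D\cent\phi$, so the truth-values of all the $\hol{D}{\phi}$ and $\hol{D}{\psi}$ under $D(\varsigma)$ match those of the $\phi$ and $\psi$ under $\varsigma$.

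Finally I would invoke soundness of HOL (Theorem~\ref{thrm.hol.soundness}): if $\hol{D}{\Phi}\holcent\hol{D}{\Psi}$ were derivable it would be valid in every HOL interpretation, in particular in $\mathcal H$, contradicting the previous step; hence $\hol{D}{\Phi}\not\holcent\hol{D}{\Psi}$. The conceptual difficulty of completeness has by this stage already been discharged inside Lemma~\ref{lemm.commuting.square} (and its supporting Lemmas~\ref{lemm.rho.varrho}, \ref{lemm.always.id}, \ref{lemm.abs.conc.pi}, together with the case-split of Definitions~\ref{defn.hol.interpretation} and~\ref{defn.hol.interpret.terms}); the main obstacle at the level of \emph{this} statement is simply making sure the hypotheses line up — that $D\cent\Phi$, $D\cent\Psi$ are precisely what Corollary~\ref{corr.notsubseteq} and Lemma~\ref{lemm.commuting.square} demand — after which the theorem is a three-line assembly.
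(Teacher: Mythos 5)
Your proposal is correct and follows exactly the paper's own argument: the contrapositive of completeness of restricted PNL (Theorem~\ref{thrm.reduced.pnl.completeness}) to obtain a non-equivariant countermodel, then Corollary~\ref{corr.notsubseteq} to transport it to $\mathcal H$ via $D(\varsigma)$, then the contrapositive of HOL soundness (Theorem~\ref{thrm.hol.soundness}). Your elaboration of why the hypotheses $D\cent\Phi$ and $D\cent\Psi$ are needed and where the real work was already done matches the paper's intent.
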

\begin{proof}
We use the contrapositive of completeness of restricted PNL (Theorem~\ref{thrm.reduced.pnl.completeness}), then Corollary~\ref{corr.notsubseteq}, then the contrapositive of HOL soundness (Theorem~\ref{thrm.hol.soundness}).
\end{proof} 

\section{Conclusions}

We have translated a logic with its own proof-theory, syntax, and sound and complete semantics.
Any formal theory specified in the PNL fragment of this paper can be systematically, soundly, and completely translated to HOL.

For the reader interested in nominal techniques, the main contribution of this paper is that in 
proving completeness of the translation, we have given another 
semantics of permissive nominal logic, besides the `obvious' one
in nominal sets. In this new semantics, a term of
the form $[a]t$ is interpreted as a function, like $\lam{a}t$ would be in higher-order logic. 
This shows at the semantic level an implicit similarity between PNL and HOL (we discuss presheaves in the next Subsection).

For the reader interested in higher-order logic, this paper is of interest because its image is readily identified with the \emph{higher-order patterns} developed by Miller \cite{miller:logpll} (so that, intuitively, restricted PNL could be thought of as a compact first-order logic and nominal semantics for higher-order patterns).

In this semantics the sort $[\mathbb A]\alpha$ is not interpreted as the set of all
functions from atoms to the interpretation of $\alpha$,
but as a small subset of this function space. 
This is an old idea: since Henkin, models of HOL have been constructed to cut down on the full function-space (e.g. to create a complete semantics \cite[Section~55]{andrews:intmlt}).
Moreover in weak HOAS to avoid so-called \emph{exotic terms}, function 
existence axioms must be weakened in HOL: for instance, the
description axiom that entails the existence of a function for all
functional relations has to be dropped (an alternative is to introduce an explicit modality \cite{despeyroux:prirh-jv}).
We now have a new view of these `smaller' function-spaces as being the image of nominal atoms-abstractions via the semantic operations considered in this paper.

\subsection{Permissive nominal logic in perspective}

Permissive-nominal logic is the endpoint---so far---of an evolution as follows: 
\begin{itemize*}
\item
Fraenkel-Mostowski set theory and a first-order axiomatisation by Pitts introduced and described the underlying nominal sets models in first-order logic \cite{gabbay:newaas-jv,pitts:nomlfo-jv}.
\item
Nominal terms introduced a dedicated syntax with two-levels of variable and freshness side-conditions \cite{gabbay:nomu-jv}.
\item
Nominal algebra and $\alpha$Prolog inserted nominal terms syntax into formal reasoning systems \cite{gabbay:nomuae,cheney:alppl}.
\item
Permissive-nominal terms introduced permission sets \cite{gabbay:perntu-jv}.
\item
PNL introduced a proof-theory and universal quantifier for nominal terms unknowns \cite{gabbay:pernl,gabbay:pernl-jv}. 
\end{itemize*}
Meanwhile in the semantics
\begin{itemize*}
\item
Nominal renaming sets extended nominal sets from a permutation action to a renaming action \cite{gabbay:nomrs}.
\item
A permissive version of nominal algebra (an equality fragment of PNL) was given semantics in \theory{PmsPrm} and theories were translated from HOL \cite{gabbay:unialt}, but this was done purely syntactically without using nominal renaming sets and without considering universal quantification.
\end{itemize*}

The categories \theory{PmsPrm} and \theory{PmsRen} from Definition~\ref{defn.fps} are identical to the categories of nominal sets and nominal renaming sets from \cite{gabbay:newaas-jv} and \cite{gabbay:nomrs}, except that here we insist on supporting \emph{permission} sets instead of supporting \emph{finite} sets.
 
The reader familiar with presheaf techniques will see in \theory{PmsRen} the category $\mathsf{Sets}^{\mathbb F}$ (presheaves over the category of finite sets and functions between them).
\theory{PmsRen} corresponds to presheaves (not quite over $\mathbb F$, as discussed in the previous paragraph) that preserve pullbacks of pairs of monos \cite{gabbay:nomrs} and because of this it admits an arguably preferable sets-based presentation.
(In the same sense, \theory{PmsPrm} corresponds to $\mathsf{Sets}^{\mathbb I}$.) 

If for the sake of argument we set aside the issues of finiteness and preserving pullbacks of monos, then this paper can be summed up as follows: PNL, and thus nominal terms, can be given a semantics in something that looks like $\mathsf{Sets}^{\mathbb F}$.
This semantics is functional in that atoms-abstractions in $\mathsf{Sets}^{\mathbb F}$ can be naturally identified with total functions, though not all of them, which is good.
HOL can also be given a semantics in something that looks like $\mathsf{Sets}^{\mathbb F}$, and in such a way that it overlaps with the semantics of PNL, as described in Definition~\ref{defn.hol.interpret.terms} and~\ref{lemm.commuting.square}.
We describe and exploit that overlap, in this paper.

\theory{PmsRen} from Definition~\ref{defn.fps} is related to the category of (finitely-supported) nominal renaming sets from \cite{gabbay:nomrs}.
Here, the difference that $x\in|\rs X|$ need not have finite support is significant because it is impossible with a finite renaming to rename $\supp(x)$ to be entirely disjoint for some other permission set $S$.
The definitions and proofs in Subsection~\ref{subsect.exp} are delicately revised with respect to those in \cite[Section~3]{gabbay:nomrs}.
Thus this paper contributes to the use of non-finitely-supported objects in nominal techniques, building on \cite{gabbay:nomrs} and also on Cheney's and the second author's considerations of infinitely supported permutation sets \cite{cheney:comhtn,gabbay:genmn}.

A similar construction as in Subsection~\ref{subsect.free.ext} has been considered, also in the context of names, though tersely, in Fiore and Turi's paper on the semantics of name and value passing \cite{fiore:semnvp}.
The reader can compare for example the final two paragraphs of Subsection~1.3 in \cite{fiore:semnvp} with Definition~\ref{defn.free.ren} from Subsection~\ref{subsect.free.ext}.
Fiore and Turi want substitutions to model bisimulation in the presence of name-generation and message-passing; we want renamings to model function application on names.
The underlying technical demands overlap and are similar.

Fiore and Turi's framework includes the possibility of arbitrary substitutions for atoms (not just what we call renamings: substitution of atoms for atoms).
This was apparent in \cite{fiore:semnvp} and is developed greatly in subsequent work by Fiore and Hur \cite{fiore:secoel}.
We hypothesise that from the point of view of PNL, their logic and semantics correspond to PNL enriched with substitution actions like those in \cite{gabbay:pernl,gabbay:capasn}, but this remains to be checked.\footnote{Conversely, Fiore and Hur would view PNL as a restriction of their logic \emph{without} substitution.  The two points of view are consistent with each other, of course, and it is interesting that different authors are converging on similar systems.  
It might be worth mentioning that \emph{deduction modulo} by the first author with Hardin and Kirchner was designed to mediate between these kinds of design decisions while retaining proof-theory \cite{dowek:dedm}.}

Levy and Villaret translated nominal unification problems to higher-order unification problems \cite{levy:nomufh}.
A similar but more detailed analysis, translating solutions and introducing the same notion of capturable atoms as used in the capture typings in this paper, appears in the paper which introduced permissive nominal terms \cite{gabbay:perntu-jv}.
See also a journal version of Levy and Villaret's paper \cite{levy:nomufh-jv}, which expanded on their previous work by eliminating freshness contexts (in a similar spirit to PNL, we feel, though the details are different).
This paper can be viewed as a very considerable extension, refinement, and generalisation of these works: this paper is their grandchild, so to speak, via two other papers \cite{gabbay:pernl,gabbay:unialt}.

The extension of nominal sets to nominal renaming sets is free.
This is touched on in Lemma~\ref{lemm.non-iso} when we note that $[a\ssm b]\bigact(a,b)$ and $\id\bigact(b,b)$ are distinct elements in $\Ren{\mathbb A_\nu\times\mathbb A_\nu}$ in \theory{PmsRen}; this happens because the free construction `suspends the non-injectivity' of $[a\ssm b]$ on $(a,b)$.
This is as things should be, in order to obtain completeness.
The second author has considered a more radical non-free construction \cite{gabbay:stusun}, which has the effect of extending atoms-abstraction to a total function and in which $[a\ssm b]\act x$ really does identify $a$ with $b$ in $x$ in a suitable sense.

As we have emphasised, we translate a fragment of PNL to HOL.
In \cite{gabbay:pernl} we considered full PNL with \emph{equivariance}, which corresponds to strengthening the axiom rule \rulefont{Ax^{\nopi}} in Figure~\ref{rSeq} from 
$\begin{prooftree}
\justifies
\Phi,\,\phi\nopicent \phi,\,\Psi
\end{prooftree}
$
to
$
\begin{prooftree}
\justifies
\Phi,\,\phi\cent \pi\act\phi,\,\Psi 
\end{prooftree}
$
as illustrated in Figure~\ref{Seq}.
This internalises the equivariance assumed in Definition~\ref{defn.interpret.I} and allows us to derive e.g. $\tf P(a) \cent \tf P(b)$.

In the journal version \cite{gabbay:pernl-jv} of \cite{gabbay:pernl} we strengthen PNL further by allowing a \emph{shift}-permutation.
This is a non-finitely-supported bijection on $\mathbb A$ similar to a \emph{de Bruijn shift function} $\uparrow$ \cite[Subsection~2.2]{abadi:exps}. 
Its effect in this paper is to make all permission sets isomorphic up to bijection (e.g. $\atomsdown\cup\{a\}=\pi\act\atomsdown$ for some $\pi$, where $a\not\in\atomsdown$) and this deals with a subtle restriction in the power of universal quantification discussed for instance in \cite[Example~2.29]{gabbay:pernl}.
Briefly, \emph{shift} lets us derive $\Forall{X}\tf P(X)\cent \tf P(Z)$ where $\pmss(X)=\atomsdown$ and $\pmss(Z)=\atomsdown\cup\{a\}$ where $a\not\in\atomsdown$, which was not possible in the PNL from \cite{gabbay:pernl}. 

Neither equivariance nor \emph{shift} are translated to HOL in this paper; more on this in the next subsection.

\subsection{Future work}

We have translated Permissive-Nominal Logic
to Higher-Order Logic. The translation is not surjective: all variables are at most
second-order; all constants are at most third-order; higher types
are not used; and in fact all terms in the image of the translation are \emph{$\lambda$-patterns} \cite{miller:logpll}.
In addition, the translation is not total: we have dropped equivariance.

This is with good reason.
We have not been able to simulate equivariance in HOL---not without `cheating' by simply adding it (and causing a blowup in the size of propositions). 
We have not proved this impossible, but we hypothesise that it cannot be done.
We further hypothesise (based on preliminary calculations not included in this paper) that HOL augmented with the $\nabla$-quantifier from \cite{Miller:protgj} would allow us to express equivariance. 

It is not currently clear how to extend HOL with a \emph{shift}-like
permutation as discussed in \cite{gabbay:pernl-jv,gabbay:nomtnl}.
This seems reasonable since $\f{shift}$ would correspond to an infinite renaming. 

Some natural theories in PNL might correspond to other fragments of HOL.
Notably, it is not known what relation exists between HOL and PNL with the theory of atoms-substitution from \cite{gabbay:capasn-jv,gabbay:pernl-jv}.

\hyphenation{Mathe-ma-ti-sche}

\appendix

\section{Soundness and completeness of restricted PNL with respect to non-equivariant models}
\label{sect.completeness}

\subsection{Validity and soundness} 

\begin{defn}[Validity]
\label{defn.pnl.ment}
Suppose $\mathcal I$ is a non-equivariant interpretation of a signature $\mathcal S$ (Definition~\ref{defn.interpret.I}).
Call the proposition $\phi$ \deffont{valid} in ${\mathcal I}$ when 
$\denot{\mathcal I}{\varsigma}{\phi} = 1$ for all $\varsigma$. 

Call the sequent 
$\phi_1, ..., \phi_n \cent \psi_1, ..., \psi_p$ 
\deffont{valid} 
in ${\mathcal I}$ when 
$(\phi_1 \wedge ... \wedge \phi_n) \Rightarrow (\psi_1 \vee ... \vee \psi_p)$ is valid.

If this is true for all non-equivariant ${\mathcal I}$ then write $\phi_1,\dots,\phi_n\nopiment\psi_1,\dots,\psi_p$. 
If this is true for all equivariant ${\mathcal I}$ then write $\phi_1,\dots,\phi_n\ment\psi_1,\dots,\psi_p$. 
\end{defn}

\begin{thrm}[Soundness]
\label{thrm.pnl.soundness}
\begin{enumerate*}
\item
If $\Phi\nopicent\Psi$ is derivable then $\Phi\nopiment\Psi$.
\item
If $\Phi\cent\Psi$ is derivable then $\Phi\ment\Psi$.
\end{enumerate*}
\end{thrm}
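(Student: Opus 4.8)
The plan is to prove both parts by induction on the structure of the derivation (the rules of Figures~\ref{rSeq} and~\ref{Seq} respectively), showing at each rule that validity of the premises entails validity of the conclusion, where validity is as in Definition~\ref{defn.pnl.ment} and the interpretation of propositions is Definition~\ref{defn.truth}.

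For Part~1 I would fix a non-equivariant interpretation $\mathcal I$ and work rule by rule through Figure~\ref{rSeq}. The propositional rules \rulefont{Ax^{\nopi}}, \rulefont{\bot L}, \rulefont{{\limp}L} and \rulefont{{\limp}R} are routine calculations with the truth-value operations $\f{max}$, $\f{min}$ and complement, the only content being $\denot{\mathcal I}{\varsigma}{\bot}=0$. The first rule with real content is \rulefont{{\forall}L}: from the side condition $\fa(r)\subseteq\pmss(X)$ and Lemma~\ref{lemm.supp.r} I obtain $\supp(\denot{\mathcal I}{\varsigma}{r})\subseteq\pmss(X)$, so $\denot{\mathcal I}{\varsigma}{r}$ is legitimately one of the elements minimised over in Definition~\ref{defn.truth}, whence $\denot{\mathcal I}{\varsigma}{\Forall{X}\phi}\leq\denot{\mathcal I}{\varsigma[X\ssm\denot{\mathcal I}{\varsigma}{r}]}{\phi}$, and Lemma~\ref{lemm.denotsub} rewrites the right-hand side as $\denot{\mathcal I}{\varsigma}{\phi[X\ssm r]}$. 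For \rulefont{{\forall}R} I would use the side condition $X\notin\f{fV}(\Phi,\Psi)$ together with Lemma~\ref{lemm.fV.denot}, so that replacing $\varsigma(X)$ by an arbitrary $x$ with $\supp(x)\subseteq\pmss(X)$ leaves the denotations of all propositions in $\Phi$ and $\Psi$ unchanged; validity of the premise at each such updated valuation then yields, via the minimum in Definition~\ref{defn.truth}, validity of the conclusion.

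For Part~2 I would note that every equivariant interpretation is in particular a non-equivariant one, so every rule of Figure~\ref{Seq} other than the axiom is handled verbatim as in Part~1 (now quantifying only over equivariant $\mathcal I$). The one genuinely new case is the stronger axiom \rulefont{Ax}, whose conclusion is $\Phi,\phi\cent\pi\act\phi,\Psi$. For this I would first establish, for equivariant $\mathcal I$, the auxiliary fact $\denot{\mathcal I}{\varsigma}{\pi\act\phi}=\denot{\mathcal I}{\varsigma}{\phi}$ by induction on $\phi$: the case $\tf P(r)$ is exactly where equivariance of $\tf P^\iden$ is used, combined with Lemma~\ref{lemm.pi.r.model} and the fact that truth values are fixed by every permutation; the case $\Forall{X}\phi$ is routine, since $\pi$ acts only on atoms and hence $\pi\act\Forall{X}\phi=\Forall{X}\pi\act\phi$; and $\bot$ and $\limp$ are immediate. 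Given this fact, if $\denot{\mathcal I}{\varsigma}{\phi}=1$ then $\denot{\mathcal I}{\varsigma}{\pi\act\phi}=1$, so the axiom is sound.

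The main obstacle --- and the only point where the two parts diverge --- is precisely this auxiliary permutation-invariance lemma for propositions: it holds only when proposition-formers are interpreted equivariantly, which is the semantic reason full PNL needs equivariant models while restricted PNL does not. The one thing to be careful about is that Lemma~\ref{lemm.pi.r.model}, Lemma~\ref{lemm.supp.r} and Lemma~\ref{lemm.denotsub} all remain available in the non-equivariant setting, because term-formers are interpreted equivariantly even in a non-equivariant interpretation; granting that, Part~1 is a routine adaptation of the existing soundness proof for full PNL, and Part~2 adds only the axiom case above.
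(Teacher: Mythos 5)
Your proposal is correct and follows essentially the same route as the paper's own proof: induction on derivations, with Lemma~\ref{lemm.denotsub} for \rulefont{{\forall}L}, Lemma~\ref{lemm.fV.denot} for \rulefont{{\forall}R}, and the auxiliary fact $\denot{\mathcal I}{\varsigma}{\pi\act\phi}=\denot{\mathcal I}{\varsigma}{\phi}$ for equivariant $\mathcal I$ to handle \rulefont{Ax}. You merely fill in more detail than the paper's sketch (e.g.\ invoking Lemma~\ref{lemm.supp.r} to check the side condition in \rulefont{{\forall}L}), which is fine.
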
 
\begin{proof}
Fix some interpretation $\mathcal I$.
We work by induction on derivations.
The case of \rulefont{\forall L} uses Lemma~\ref{lemm.denotsub}.
The case of \rulefont{\forall R} uses Lemma~\ref{lemm.fV.denot}.
Other rules are routine by unpacking definitions.

If the interpretation $\mathcal I$ is fully equivariant then it can further be proved that $\denot{\mathcal I}{\varsigma}{\phi}=\denot{\mathcal I}{\varsigma}{\pi\act\phi}$ always, so that \rulefont{Ax} is valid.  
If $\mathcal I$ is not fully equivariant, then just \rulefont{Ax^{\nopi}} is valid.
\end{proof}

\begin{thrm}
\label{thrm.rPNL.cut}
\rulefont{Cut} is admissible in both full and restricted PNL.
\end{thrm}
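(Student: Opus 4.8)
The plan is the standard Gentzen-style cut-elimination argument, organised as a lexicographic induction on the pair consisting of (i) the number of logical connectives $\bot,\limp,\forall$ occurring in the cut formula $\phi$, and (ii) the sum of the heights of the two derivations of the premises $\Phi\cent\phi,\Psi$ and $\Phi,\phi\cent\Psi$ of the topmost cut. For full PNL this is (up to presentational details) the cut-elimination argument already carried out in \cite{gabbay:pernl-jv} (see also \cite{gabbay:nomtnl}); the claim for restricted PNL is obtained by specialising that argument, and, as we note below, its axiom cases in fact become strictly simpler because \rulefont{Ax^{\nopi}} carries no permutation.

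Before the main induction I would record the auxiliary admissibility facts it consumes. First, \emph{weakening}: if $\Phi\cent\Psi$ then $\Phi,\Phi'\cent\Psi,\Psi'$, and it is height-preserving; this is a trivial induction on derivations. Second, admissibility of the level-2 substitution action on derivations: if $\Phi\cent\Psi$ and $r:\sort(X)$ with $\fa(r)\subseteq\pmss(X)$, then $\Phi[X\ssm r]\cent\Psi[X\ssm r]$, again by induction on derivations, using compositionality of substitution and $\alpha$-renaming of unknowns to keep the side-condition $X'\not\in\f{fV}(\Phi,\Psi)$ of \rulefont{\forall R} satisfied; this is what drives the principal $\forall$-case. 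Third, and only needed for full PNL, admissibility of the level-1 permutation action on derivations: if $\Phi\cent\Psi$ then $\pi\act\Phi\cent\pi\act\Psi$; this is the ingredient that lets a cut commute past \rulefont{Ax}.

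The main induction then splits on the last rules of the two premise derivations. If either premise ends in \rulefont{\bot L} or is an axiom, the conclusion follows directly without a cut --- and here the restricted system is easier, since for \rulefont{Ax^{\nopi}} the active formula is literally the cut formula or already appears in $\Phi$ or in $\Psi$ (so the conclusion is already an axiom, or is literally the right premise read as a set-indexed sequent), whereas for \rulefont{Ax} one must invoke permutation-admissibility. If the cut formula is principal in both premises, one replaces the cut by cuts on its immediate subformulae: on $\phi_1$ and then on $\phi_2$ when $\phi\equiv\phi_1\limp\phi_2$, and on $\phi'[X\ssm r]$ when $\phi\equiv\Forall{X}\phi'$ (applying substitution-admissibility to the \rulefont{\forall R} premise and using $X\not\in\f{fV}(\Phi,\Psi)$); in each case the new cuts are on formulae with strictly fewer logical connectives, so the first component of the measure decreases. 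Finally, if the cut formula is not principal in, say, the left premise, one permutes the cut above the last rule of that premise (weakening the other premise as needed), leaving the cut formula unchanged but strictly decreasing the sum of heights; the only delicate subcase is when that rule is \rulefont{\forall R}, where an $\alpha$-renaming of the quantified unknown avoids a clash with the other premise.

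The step I expect to be the main obstacle is the principal $\forall$-case, and specifically the justification that $\phi'[X\ssm r]$ counts as \emph{simpler} than $\Forall{X}\phi'$. This is false for raw syntactic size, since $r$ can be arbitrarily large, so the first component of the induction measure must be logical complexity (the count of $\bot,\limp,\forall$) rather than size. This is legitimate precisely because in PNL propositions never occur inside terms: unknowns carry sorts, not propositions, and a level-2 substitution $[X\ssm r]$ only ever replaces term-level material, so $\phi'[X\ssm r]$ has exactly the logical complexity of $\phi'$, which is strictly less than that of $\Forall{X}\phi'$. Granting this, together with the permutation-closure lemma that resolves the axiom-versus-cut interaction in full PNL, the remaining case analysis is routine and essentially identical for the two systems, which is why restricted cut-admissibility is a modest adaptation of the full-PNL proof rather than an independent development.
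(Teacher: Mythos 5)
Your proposal is correct and follows essentially the same route as the paper: the paper simply cites the standard first-order-style cut-elimination for full PNL from \cite{gabbay:pernl-jv,gabbay:nomtnl} and observes that the reductions never introduce a permutation into an axiom rule, so they specialise verbatim to the restricted system. Your write-up fleshes out that standard Gentzen argument (measure, auxiliary admissibilities, case analysis) and makes the same key observation about \rulefont{Ax} versus \rulefont{Ax^{\nopi}}, so it is a more detailed rendering of the same proof rather than a different one.
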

\begin{proof}
The proof for full PNL is in \cite[Section~7]{gabbay:pernl-jv} or \cite[Subsection~11.2]{gabbay:nomtnl}; the derivation rules are almost exactly those of first-order logic, and so is the proof of cut-elimination.
The argument for restricted PNL is identical; we note that none of the cut-eliminating transformations add $\pi$ to axiom rules unless they are already there, so the same reductions on derivations work also for the restricted system.
\end{proof}

\subsection{Completeness}

In \cite{gabbay:pernl-jv,gabbay:nomtnl} we prove completeness of full PNL with respect to equivariant models, by means of a Herbrand construction (a model built out of syntax). 
We can leverage this result to concisely prove completeness of restricted PNL with respect to non-equivariant models, without having to repeat the model constructions.

For this subsection, fix the following data:
\begin{itemize*}
\item
A signature $\mathcal S=(\mathcal A,\mathcal B,\mathcal F,\mathcal P,\f{ar},\mathcal X)$.
\item
A formula $\phi$ such that $\not\nopicent\phi$.
\end{itemize*}

\begin{defn}
\label{defn.S.pi}
Define a new signature $\mathcal S^\pi$ as follows:
\begin{itemize*}
\item
$\mathcal A^\pi=\mathcal A$ and $\mathcal B^\pi=\mathcal B\cup\{\tau^\pi\}$ (so we have the same atom sorts and the same base sorts, plus one extra base sort $\tau^\pi$).
\item
$\mathcal F^\pi=\mathcal F$ and $\mathcal P^\pi=\mathcal P$ (so we have the same term- and proposition-formers).
\item
If $\tf f\in\mathcal F$ then $\f{ar}^\pi(\tf f)=\f{ar}(\tf f)$ (the term-formers are identical).
\item
If $\tf P\in\mathcal P$ and $\f{ar}(\tf P)=\alpha$ then $\f{ar}^\pi(\tf P)=(\tau^\pi,\alpha)$ (so proposition-formers take one extra argument of sort $\tau^\pi$). 
\item
$\mathcal X^\pi=\mathcal X\cup\{Z_{i,S}^\pi\mid i\in\mathbb N,\ S\text{ a permission set}\}$ where $\sort(Z_{i,S}^\pi)=\tau^\pi$ (so we add unknowns of sort $\tau^\pi$).
\end{itemize*}
Now fix some particular unknown $Z^\pi$ with $\sort(Z^\pi)=\tau^\pi$ and such that $\fa(\phi)\subseteq\pmss(Z^\pi)$.
\end{defn}

\begin{defn}
Define a translation $\text{-}^\pi$ from PNL propositions in the signature $\mathcal S$ to PNL propositions in the signature $\mathcal S^\pi$ by mapping $\tf P(r)$ to $\tf P(Z^\pi,r)$ and extending this in the natural way to all predicates.
\end{defn}

Our proof depends on the following technical lemma about restricted PNL:
\begin{lemm}
\label{lemm.fa.restrict}
If $\Phi\cent\Psi$ is derivable in full PNL then there exists a derivation $\Pi$ such that every sequent $\Phi'\cent\Psi'$ in $\Pi$ satisfies $\fa(\Phi')\cup\fa(\Psi')\subseteq\fa(\Phi)\cup\fa(\Psi)$.
\end{lemm}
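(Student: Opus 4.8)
The plan is to reduce the lemma to a rule-local claim: for every rule of Figure~\ref{Seq}, the set of free atoms of each premise is contained in the set of free atoms of the conclusion. Once this is in hand the lemma is immediate --- indeed \emph{every} derivation of $\Phi\cent\Psi$ works, so the existential in the statement is not really needed. A derivation $\Pi$ uses only the rules of Figure~\ref{Seq} (these contain no \rulefont{Cut}; if one prefers to work in a system that admits \rulefont{Cut}, one first invokes Theorem~\ref{thrm.rPNL.cut} to pass to a cut-free $\Pi$), so one shows by induction on the distance of a sequent from the root of $\Pi$ that its free atoms lie in $\fa(\Phi)\cup\fa(\Psi)$: the root is the base case, and each step down the tree is controlled by the rule-local claim.

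For the rule-local claim itself: \rulefont{Ax} and \rulefont{\bot L} have no premises; for \rulefont{{\limp}R} and \rulefont{{\forall}R} the premise and conclusion have exactly the same free atoms (using $\fa(\Forall{X}\phi)=\fa(\phi)$ from Definition~\ref{defn.fa}); and for \rulefont{{\limp}L} one just uses $\fa(\phi),\fa(\psi)\subseteq\fa(\phi\limp\psi)$. The only rule requiring a genuine argument is \rulefont{{\forall}L}, where the premise $\Phi,\phi[X\ssm r]\cent\Psi$ is formed under the side-condition $\fa(r)\subseteq\pmss(X)$ and one must check $\fa(\phi[X\ssm r])\subseteq\fa(\Forall{X}\phi)=\fa(\phi)$. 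For this I would first prove the sub-lemma that $\fa(r)\subseteq\pmss(X)$ implies $\fa(s[X\ssm r])\subseteq\fa(s)$ for every term or proposition $s$, by induction on $s$. The crucial clause is $s=\pi\act X$, where $\fa((\pi\act X)[X\ssm r])=\fa(\pi\act r)=\pi\act\fa(r)\subseteq\pi\act\pmss(X)=\fa(\pi\act X)$, using Definition~\ref{defn.subst.action} and Lemma~\ref{lemm.fa.pi.r}; the clauses for $[a]s'$ and $\Forall{Y}\phi$ turn on the fact that the bound name is removed uniformly from both sides (for $\Forall{Y}\phi$ one also notes that the $\alpha$-renaming of $Y$ implicit in applying the substitution clause leaves $\fa$ unchanged), and all remaining clauses are compositional.

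The argument is almost entirely bookkeeping. The only place needing care is the sub-lemma's handling of the \emph{capturing} level-2 substitution underneath level-1 abstractions and underneath quantifiers, together with the check that the side-condition $\fa(r)\subseteq\pmss(X)$ is exactly strong enough to dominate the $\pi\act X$ case after a permutation has acted on $r$. I do not expect any real obstacle.
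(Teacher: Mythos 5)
Your proposal is correct and follows essentially the same route as the paper's proof: pass to a cut-free derivation (or observe the rules of Figure~\ref{Seq} contain no \rulefont{Cut}) and check rule by rule that free atoms do not increase from conclusion to premises, with \rulefont{{\forall}L} being the only case that uses the side-condition $\fa(r)\subseteq\pmss(X)$. The paper merely asserts this rule-local fact, whereas you spell out the substitution sub-lemma and the $\pi\act X$ case explicitly; that is a welcome elaboration, not a different argument.
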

\begin{proof}
By cut-elimination of restricted PNL (Theorem~\ref{thrm.rPNL.cut}) if a derivation of $\Phi\cent\Psi$ exists then a cut-free derivation exists.
We now examine the derivation rules in Figure~\ref{rSeq} and the definition of free atoms in Definition~\ref{defn.fa} and note that the rules \rulefont{{\limp}L}, \rulefont{{\limp}R}, \rulefont{\forall L}, and \rulefont{\forall R} do not increase the free atoms moving from below the line to above the line.\footnote{\rulefont{\forall R} and \rulefont{\forall L} can increase the free \emph{unknowns}---but not the free atoms.} 
\end{proof}

\begin{lemm}
\label{lemm.pi.r.fa}
$\pi\act r=\pi'\act r$ if and only if $\pi(a)=\pi'(a)$ for every $a\in\fa(r)$, and similarly for $\phi$.
\end{lemm}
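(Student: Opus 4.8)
The plan is to reduce to a single permutation and then induct. Since the level-1 permutation action descends to $\alpha$-equivalence classes and obeys the group laws, $\pi\act r=\pi'\act r$ holds if and only if $(\pi'^{\mone}{\circ}\pi)\act r=r$, and $\pi(a)=\pi'(a)$ for all $a\in\fa(r)$ holds if and only if $(\pi'^{\mone}{\circ}\pi)(a)=a$ for all $a\in\fa(r)$ (and likewise for $\phi$). So it suffices to prove, for every permutation $\sigma$, that $\sigma\act r=r$ if and only if $\sigma(a)=a$ for every $a\in\fa(r)$, and similarly for $\phi$; equivalently, that $\fa(r)$ is the support of $r$ viewed as an element of the nominal set of $\alpha$-equivalence classes of terms.

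I would prove this by induction on $r$, then on $\phi$. The cases of an atom $a$, a tuple $(r_1,\ldots,r_n)$, an application $\tf f(r)$, $\bot$, $\phi\limp\psi$, $\tf P(r)$, and $\Forall{X}\phi$ are routine: in each the permutation action is computed componentwise and $\fa$ is the union of the $\fa$ of the immediate subterms (with $\fa(\Forall{X}\phi)=\fa(\phi)$, the level-1 $\sigma$ leaving the bound unknown $X$ alone), so the equivalence follows from the inductive hypothesis together with the routine fact that $\alpha$-equivalence preserves the outermost constructor and, componentwise, the immediate subterms. The case $[a]r$ uses the characterisation of $\aeq$ on abstractions (the syntactic counterpart of Lemma~\ref{lemm.supp.abstraction}(3)): here $\sigma\act[a]r=[\sigma(a)]\sigma\act r$; if $\sigma(a)=a$ the claim is immediate from the inductive hypothesis for $r$, and if $\sigma(a)=b\neq a$ one checks, using $\fa(\sigma\act r)=\sigma\act\fa(r)$ (Lemma~\ref{lemm.fa.pi.r}), that $b\notin\fa(r)$ whenever $\sigma$ fixes $\fa(r)\setminus\{a\}$ pointwise, and then that the abstraction-equality conditions ``$a\notin\fa(\sigma\act r)$ and $((b\ a){\circ}\sigma)\act r=r$'' --- which by the inductive hypothesis just say that $(b\ a){\circ}\sigma$ fixes $\fa(r)$ pointwise --- hold if and only if $\sigma$ fixes $\fa(r)\setminus\{a\}$ pointwise; this is a short computation with swappings.

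The main obstacle is the base case $r=\pi''\act X$, since it is a leaf of the term structure and everything rests on knowing exactly when two moderated unknowns are $\alpha$-equivalent --- namely, $\mu\act X=\mu'\act X$ if and only if $\mu$ and $\mu'$ agree on $\pmss(X)$ (note $\fa(\pi''\act X)=\pi''\act\pmss(X)$, so ``$\sigma$ fixes $\fa(\pi''\act X)$ pointwise'' is exactly the agreement of $\sigma{\circ}\pi''$ with $\pi''$ on $\pmss(X)$). This is the standard basic fact about permissive-nominal terms from \cite{gabbay:perntu-jv} (and \cite{gabbay:nomu-jv} in the non-permissive setting), which I would either cite or reprove as follows. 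For the ``if'' direction, write the permutation $\mu'{\circ}\mu^{\mone}$ --- which fixes $\mu\act\pmss(X)$ pointwise --- as a product of swappings of atoms lying in its finite non-trivial support, hence outside $\fa(\mu\act X)$, and apply the first $\alpha$-rule repeatedly, using $\alpha$-invariance of $\fa$ to keep the side-conditions met at each step. For the ``only if'' direction, argue by rule induction on the generation of $\aeq$ that the property ``whenever both sides have the form $\mu\act X$ and $\mu'\act X$, then $\mu$ and $\mu'$ restrict equally to $\pmss(X)$'' is preserved by reflexivity, symmetry, transitivity, the congruence rules and the two $\alpha$-rules; the crucial observation is that an instance of the first $\alpha$-rule with subject $\mu\act X$ only swaps atoms outside $\mu\act\pmss(X)$ and hence does not change the restriction of the underlying permutation to $\pmss(X)$, while $\aeq$ cannot relate $\mu\act X$ to a term of a different outermost shape or built from a different unknown.
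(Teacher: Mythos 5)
The paper gives no proof of this lemma at all --- it simply points to \cite[Lemma~3.2.9]{gabbay:nomtnl} and \cite[Lemma~4.15]{gabbay:perntu-jv} --- so there is no in-paper argument to compare yours against; what you have written is essentially the standard proof that those references give. Your reduction to a single permutation $\sigma=\pi'^{\mone}\circ\pi$ and the induction characterising $\fa(r)$ as the syntactic support are correct, and you correctly identify the two places where the real content lies: the moderated-unknown base case ($\mu\act X\aeq\mu'\act X$ iff $\mu$ and $\mu'$ agree on $\pmss(X)$) and the abstraction case via the syntactic analogue of Lemma~\ref{lemm.supp.abstraction}(3); the swapping computation you sketch there does close. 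The only ingredients you leave implicit are the inversion principles for $\aeq$ (that it respects outermost constructors, relates moderated unknowns only to moderated unknowns of the same unknown, and is componentwise on immediate subterms), which themselves need a rule induction on the generation of $\aeq$; flagging these as ``routine'' is consistent with the level of detail at which the paper itself treats this lemma.
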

See \cite[Lemma~3.2.9]{gabbay:nomtnl} or \cite[Lemma~4.15]{gabbay:perntu-jv}.

\begin{prop}
\label{prop.completeness.lemma}
If $\Phi^\pi\cent\Psi^\pi$ in PNL and $\fa(\Phi)\cup\fa(\Psi)\subseteq \pmss(Z^\pi)$ then $\Phi\nopicent\Psi$. 
\end{prop}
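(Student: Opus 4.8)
The plan is to transform a full-PNL derivation of $\Phi^\pi\cent\Psi^\pi$ into a restricted-PNL derivation of $\Phi\nopicent\Psi$, by erasing the $Z^\pi$-tags inserted by $(\text{-})^\pi$. First I would fix, using Lemma~\ref{lemm.fa.restrict} together with cut-elimination (Theorem~\ref{thrm.rPNL.cut}), a cut-free derivation $\Pi$ of $\Phi^\pi\cent\Psi^\pi$ all of whose sequents $\Phi'\cent\Psi'$ satisfy $\fa(\Phi')\cup\fa(\Psi')\subseteq\fa(\Phi^\pi)\cup\fa(\Psi^\pi)$; and this last set is contained in $\pmss(Z^\pi)$ by the hypothesis $\fa(\Phi)\cup\fa(\Psi)\subseteq\pmss(Z^\pi)$ and the shape of $(\text{-})^\pi$ (the only atoms $(\text{-})^\pi$ introduces are those of $\fa(Z^\pi)=\pmss(Z^\pi)$). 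Let $(\text{-})^\natural$ denote the operation on $\mathcal S^\pi$-terms and $\mathcal S^\pi$-propositions that deletes the first argument of every predicate symbol; it is a left inverse to $(\text{-})^\pi$ (so $(\chi^\pi)^\natural\equiv\chi$), it commutes with $\bot$, $\limp$, $\forall$ and with the permutation action, it does not enlarge $\f{fV}$, and it commutes with substitution of terms of $\mathcal S$ for unknowns not of sort $\tau^\pi$.

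The crux is an invariant on $\Pi$, proved by induction on $\Pi$ propagating the invariant from the conclusion of each rule to its premises: in every sequent of $\Pi$, (i) no formula quantifies over an unknown of sort $\tau^\pi$, (ii) every $\rulefont{{\forall}L}$-instantiating term is a term of the signature $\mathcal S$, and (iii) the first argument of every predicate occurrence is $\id\act Z^\pi$. The end-sequent $\Phi^\pi\cent\Psi^\pi$ satisfies (i)--(iii) by the definition of $(\text{-})^\pi$; the cases $\rulefont{{\limp}L}$, $\rulefont{{\limp}R}$, $\rulefont{{\forall}R}$ are immediate since each premise-formula is a subformula of a conclusion-formula; for $\rulefont{{\forall}L}$ one uses that $\sort(X)$ mentions no $\tau^\pi$, hence (by a routine sorting argument, since $\tau^\pi$ occurs in no term-former arity) the instantiating term $r$ is over $\mathcal S$ and contains no $\tau^\pi$-unknown, in particular no $Z^\pi$, so the substitution leaves every first argument equal to $\id\act Z^\pi$. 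For the base case $\rulefont{Ax}$, the leaf sequent $\Phi',\phi\cent\pi\act\phi,\Psi'$ is a premise of the rule directly below it, so it inherits (i)--(iii) from that rule's conclusion; thus both $\phi$ and $\pi\act\phi$ have all predicate-first-arguments equal to $\id\act Z^\pi$, whereas those of $\pi\act\phi$ are $\pi\act Z^\pi$, so $\pi\act Z^\pi\equiv\id\act Z^\pi$, i.e. $\pi$ is the identity on $\pmss(Z^\pi)$ by Lemma~\ref{lemm.pi.r.fa} (if $\phi$ contains no predicate this is vacuous, but then $\fa(\phi)=\varnothing$ and already $\pi\act\phi\equiv\phi$).

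Hence at each $\rulefont{Ax}$ node of $\Pi$ the permutation $\pi$ fixes $\pmss(Z^\pi)$, which contains $\fa(\phi)$ by the bound on $\Pi$, so $\pi\act\phi\equiv\phi$ by Lemma~\ref{lemm.pi.r.fa}, and therefore $(\pi\act\phi)^\natural\equiv\phi^\natural$. A straightforward induction on $\Pi$ then shows that for every sequent $\Phi'\cent\Psi'$ of $\Pi$ the sequent $(\Phi')^\natural\nopicent(\Psi')^\natural$ is derivable in restricted PNL, replacing each rule of $\Pi$ with the same-named rule of Figure~\ref{rSeq}: $\rulefont{Ax}$ becomes $\rulefont{Ax^{\nopi}}$ by the previous sentence, $\rulefont{\bot L}$, $\rulefont{{\limp}L}$, $\rulefont{{\limp}R}$ translate directly (since $(\text{-})^\natural$ commutes with $\limp$), and $\rulefont{{\forall}L}$, $\rulefont{{\forall}R}$ translate using invariants (i),(ii) — so that $(\text{-})^\natural$ commutes with the relevant substitution — together with the fact that the side-conditions $\fa(r)\subseteq\pmss(X)$, $r{:}\sort(X)$, $X\notin\f{fV}(\Phi',\Psi')$ are preserved by $(\text{-})^\natural$. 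Applying this at the end-sequent and using $(\chi^\pi)^\natural\equiv\chi$ gives $\Phi\nopicent\Psi$.

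I expect the main obstacle to be the invariant of the second paragraph — more precisely, the observation that the tag $Z^\pi$ forces every permutation appearing in an axiom of $\Pi$ to fix $\pmss(Z^\pi)$, and hence (since by the choice of $Z^\pi$ and Lemma~\ref{lemm.fa.restrict} $\pmss(Z^\pi)$ already contains all free atoms occurring in $\Pi$) to act trivially on the active formula, collapsing $\rulefont{Ax}$ to $\rulefont{Ax^{\nopi}}$. Once that is in place, the rest is routine bookkeeping about how $(\text{-})^\natural$ interacts with the inductively-defined syntax and with the derivation rules.
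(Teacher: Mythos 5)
Your proposal is correct and follows essentially the same route as the paper: take a cut-free derivation, use Lemma~\ref{lemm.fa.restrict} to bound all free atoms by $\pmss(Z^\pi)$, observe that $Z^\pi$ cannot be instantiated, erase it, and handle \rulefont{Ax} by noting that the tag forces $\pi\act Z^\pi=\id\act Z^\pi$ and hence (by Lemma~\ref{lemm.pi.r.fa}, applied twice) $\pi\act\phi=\phi$. Your explicit invariant (i)--(iii) just spells out what the paper compresses into ``because of the form of the derivation rules, $\Pi$ cannot instantiate $Z^\pi$''.
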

\begin{proof}
Using cut-elimination of full PNL (Theorem~\ref{thrm.rPNL.cut}) assume a cut-free PNL derivation $\Pi$ of $\Phi^\pi\cent\Psi^\pi$. 
Because of Lemma~\ref{lemm.fa.restrict}, the condition on free atoms holds of every sequent in $\Pi$.
Because of the form of the derivation rules in Figure~\ref{Seq}, $\Pi$ cannot instantiate $Z^\pi$.

So we can go through the entire syntax of $\Pi$ and delete $Z^\pi$ to obtain a structure that is a candidate for being a derivation in restricted PNL of $\Phi\nopicent\Psi$.

The only non-trivial thing to check is that valid instances of \rulefont{Ax} are transformed to valid instances of \rulefont{Ax^\pi}.
Suppose we deduce $\Phi^\pi,\psi^\pi\cent\pi'\act\psi^\pi,\Psi^\pi$ using \rulefont{Ax}.
By assumption $\pi'\act\psi^\pi={\psi'}^\pi$ for some $\psi'$.
It follows that $\pi'\act Z^\pi=\id\act Z^\pi$ (recall from Subsection~\ref{subsect.aeq} that we quotient by $\alpha$-equivalence) and so by Lemma~\ref{lemm.pi.r.fa} that $\pi'(a)=a$ for all $a\in\pmss(Z^\pi)$.
By assumption $\fa(\Phi)\cup\fa(\Psi)\cup\fa(\psi)\cup\fa(\psi')\subseteq \pmss(Z^\pi)$ and so by Lemma~\ref{lemm.pi.r.fa} $\psi=\psi'$, and we are done. 
\end{proof}

\begin{thrm}
\label{thrm.reduced.pnl.completeness}
If $\Phi\nopiment\Psi$ then $\Phi\nopicent\Psi$.
\end{thrm}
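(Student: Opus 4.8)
The plan is to prove the contrapositive: from $\Phi\not\nopicent\Psi$ we will build a non-equivariant interpretation $\mathcal I$ of $\mathcal S$ in which $\Phi\cent\Psi$ is not valid, which is precisely $\Phi\not\nopiment\Psi$. First we reuse the auxiliary-signature construction of Definition~\ref{defn.S.pi}, which applies verbatim to the sequent $\Phi\cent\Psi$ in place of a single formula $\phi$ once we fix an unknown $Z^\pi$ of sort $\tau^\pi$ with $\fa(\Phi)\cup\fa(\Psi)\subseteq\pmss(Z^\pi)$; such a $Z^\pi$ is available because $\fa(\Phi)\cup\fa(\Psi)$ is contained in some permission set, being assembled from finitely many sets of the form $\pi\act\pmss(X)$ together with finitely many individual atoms. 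Now Proposition~\ref{prop.completeness.lemma}, read contrapositively, turns $\Phi\not\nopicent\Psi$ into $\Phi^\pi\not\cent\Psi^\pi$ in full PNL, and completeness of full PNL with respect to equivariant models \cite{gabbay:pernl-jv,gabbay:nomtnl} then supplies an equivariant interpretation $\mathcal J$ of $\mathcal S^\pi$ (so in particular each $\tf P^{\mathcal J}$ is equivariant) together with a valuation $\varsigma$ to $\mathcal J$ with $\denot{\mathcal J}{\varsigma}{\phi^\pi}=1$ for every $\phi\in\Phi$ and $\denot{\mathcal J}{\varsigma}{\psi^\pi}=0$ for every $\psi\in\Psi$.

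The heart of the argument is to transfer $\mathcal J$ back down to a non-equivariant interpretation of $\mathcal S$ by ``freezing'' the extra $\tau^\pi$-argument at $\varsigma(Z^\pi)$. Since $\mathcal A^\pi=\mathcal A$, $\mathcal B\subseteq\mathcal B^\pi$, and $\mathcal S$ and $\mathcal S^\pi$ have the same term-formers (Definition~\ref{defn.S.pi}), we let $\mathcal I$ interpret each base sort in $\mathcal B$ and each term-former exactly as $\mathcal J$ does, and we put $\tf P^{\mathcal I}(x)=\tf P^{\mathcal J}(\varsigma(Z^\pi),x)$ for each proposition-former $\tf P$, recalling that $\f{ar}^\pi(\tf P)=(\tau^\pi,\f{ar}(\tf P))$. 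We then check that $\mathcal I$ is a legitimate non-equivariant interpretation in the sense of Definition~\ref{defn.interpret.I}: base sorts are nonempty supported permutation sets and term-formers are interpreted equivariantly because $\mathcal J$ is an interpretation, while $\tf P^{\mathcal I}$ is a function into $\{0,1\}$ that is supported, with $\pmss(Z^\pi)$ as a witnessing permission set, because $\supp(\varsigma(Z^\pi))\subseteq\pmss(Z^\pi)$ and $\tf P^{\mathcal J}$ is equivariant; this freezing is exactly where predicate equivariance is lost, as intended. A routine simultaneous induction on PNL terms and propositions of $\mathcal S$ then establishes, for every valuation $\varsigma_0$ to $\mathcal J$ with $\varsigma_0(Z^\pi)=\varsigma(Z^\pi)$ --- read on the left-hand side as a valuation to $\mathcal I$ by restriction to the $\mathcal S$-unknowns, which is legitimate since their sorts lie in the old sort-language where $\mathcal I$ and $\mathcal J$ agree --- the identities $\denot{\mathcal I}{\varsigma_0}{r}=\denot{\mathcal J}{\varsigma_0}{r}$ and $\denot{\mathcal I}{\varsigma_0}{\phi}=\denot{\mathcal J}{\varsigma_0}{\phi^\pi}$. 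The only non-routine clause is $\tf P(r)$, which unfolds through the definitions of $\mathcal I$ and of $\text{-}^\pi$ to $\tf P^{\mathcal J}(\varsigma_0(Z^\pi),\denot{\mathcal J}{\varsigma_0}{r})=\denot{\mathcal J}{\varsigma_0}{\tf P(Z^\pi,r)}$; in the clause $\Forall{X}\phi$ the bound unknown $X$ has a sort in the old sort-language and is therefore distinct from $Z^\pi$, so the side-condition $\varsigma_0(Z^\pi)=\varsigma(Z^\pi)$ survives each update $\varsigma_0[X\ssm x]$ and the quantifier ranges over the same set $\{x\in\model{\sort(X)}\mid\supp(x)\subseteq\pmss(X)\}$ in $\mathcal I$ and in $\mathcal J$. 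Instantiating this with $\varsigma_0=\varsigma$ gives $\denot{\mathcal I}{\varsigma}{\phi}=1$ for $\phi\in\Phi$ and $\denot{\mathcal I}{\varsigma}{\psi}=0$ for $\psi\in\Psi$, so $\Phi\cent\Psi$ fails in the non-equivariant interpretation $\mathcal I$ and hence $\Phi\not\nopiment\Psi$.

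We expect the main obstacle to lie not in any step above but already inside Proposition~\ref{prop.completeness.lemma}: the reason the added $\tau^\pi$-argument can never be rewired by an equivariant instance of \rulefont{Ax} is the combination of cut-elimination for restricted PNL (Theorem~\ref{thrm.rPNL.cut}) and the free-atom confinement of Lemma~\ref{lemm.fa.restrict}, which is the genuinely delicate proof-theoretic bookkeeping. In the part that is new here --- the ``frozen-predicate'' interpretation $\mathcal I$ together with the commutation identities of the preceding paragraph --- the two points requiring care are (i) verifying the support clause on $\tf P^{\mathcal I}$ so that $\mathcal I$ meets Definition~\ref{defn.interpret.I}, and (ii) carrying the value $\varsigma(Z^\pi)$ correctly through the quantifier case of the induction; both are short, but they are the steps most easily mishandled.
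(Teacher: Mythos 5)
Your proof is correct and follows essentially the same route as the paper's: contrapositive, pass to $\mathcal S^\pi$ and apply Proposition~\ref{prop.completeness.lemma}, invoke completeness of full PNL for equivariant models, and then freeze the extra $\tau^\pi$-argument at $\varsigma(Z^\pi)$ to obtain a non-equivariant countermodel via $\tf P^{\mathcal I}(x)=\tf P^{\mathcal J}(\varsigma(Z^\pi),x)$. The only difference is that you spell out the support check for $\tf P^{\mathcal I}$ and the commutation induction which the paper dismisses as routine; these details are accurate.
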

\begin{proof}
We prove the contrapositive, that if $\Phi\not\nopicent\Psi$ then $\Phi\not\nopiment\Psi$.
Suppose $\Phi\not\nopicent\Psi$.
Using the constructions above we augment to a signature $\mathcal S^\pi$ (Definition~\ref{defn.S.pi}) with some $Z^\pi$ with $\fa(\Phi)\cup\fa(\Psi)\subseteq\pmss(Z^\pi)$.
Thus by Proposition~\ref{prop.completeness.lemma} $\Phi^\pi\not\cent\Psi^\pi$.
 
By completeness of full PNL with respect to equivariant models (\cite[Theorem~3.45]{gabbay:pernl-jv}, \cite[Theorem~9.4.15]{gabbay:nomtnl}) we have that $\Phi^\pi\not\ment\Psi^\pi$.
So there exists an equivariant model $\mathcal I$ and valuation $\varsigma$ to $\mathcal I$ such that $\denot{\mathcal I}{\varsigma}{\Phi}=1$ and $\denot{\mathcal I}{\varsigma}{\Psi}=0$. 
It is now routine to convert $\mathcal I$ into a non-equivariant model of the original signature $\mathcal S$ by taking $\tf P^\hden(x)=\tf P^\iden(\varsigma(Z^\pi),x)$. 
\end{proof}

\end{document}